\newif\ifdraft
\newcommand{\subversion}[1]{}
  \theoremstyle{plain}
  \newtheorem{theorem}{Theorem}[section]
  \newtheorem{lemma}[theorem]{Lemma}
  \newtheorem{definition}[theorem]{Definition}
  \newtheorem{remark}[theorem]{Remark}
  \newtheorem{claim}[theorem]{Claim}
 \newtheorem{construction}{Construction}
 \newtheorem*{theorem*}{Theorem}
 \newtheorem{conjecture}{Conjecture}
\newcommand{\GenTrap}{\mathsf{GenTrap}}
\newcommand{\algo}{\mathcal}
\newcommand{\Zq}{\mathbb{Z}_q}
\newcommand{\mxA}{\mathbf{A}}
\newcommand{\bfA}{\mxA}
\newcommand{\td}{\mathsf{td}}
\newcommand{\bfs}{{\bf s}}
\newcommand{\bfe}{{\bf e}}
\newcommand{\bfu}{{\bf u}}
\renewcommand{\vec}[1]{\mathbf{#1}}  
\newcommand{\setup}{\mathsf{KeyGen}}
\newcommand{\Eval}{\mathsf{Eval}}
\newcommand{\SIS}{\mathsf{SIS}}
\newcommand{\ISIS}{\mathsf{ISIS}}
\newcommand{\LWE}{\mathsf{LWE}}
\newcommand{\enc}{\mathsf{Enc}}
\newcommand{\PKE}{\ensuremath{\mathsf{PKE}}\xspace}
\newcommand{\pqPRF}{\mathsf{pqPRF}}
\newcommand{\PRF}{\mathsf{PRF}}
\newcommand{\rPRF}{\mathsf{rPRF}}
\newcommand{\Revoke}{\mathsf{Revoke}}
\DeclareMathAlphabet\mathbfcal{OMS}{cmsy}{b}{n}
\newcommand{\FHE}{\ensuremath{\mathsf{FHE}}\xspace}
\newcommand{\proj}[1]{\ensuremath{|#1\rangle \langle #1|}}
\newcommand{\setS}{{\cal S}}
\newcommand{\dist}{{\sf Dist}} 
\newcommand{\lwe}{{\sf lwe}}
\newcommand{\unif}{{\sf unif}}
\newcommand{\simult}{{\sf simult}}
\newcommand{\lemkg}{{\cal K}{\cal G}}
\newcommand{\lemeval}{{\cal E}}
\newcommand{\zerof}{{\cal Z}}
\newcommand{\lemkey}{\kappa}
\newcommand{\bfS}{{\bf S}} 
\newcommand{\bfE}{{\bf E}}
\newcommand{\bindecomp}{{\sf bindecomp}}
\newcommand{\keygen}{\mathsf{KeyGen}}
\newcommand{\KeyGen}{\mathsf{KeyGen}}
\newcommand{\PPT}{\mathsf{PPT}}
\newcommand{\Invert}{\mathsf{Invert}}
\newcommand{\QPT}{\mathsf{QPT}}
\newcommand{\CPTP}{\mathsf{CPTP}}
\newcommand{\aux}{\mathsf{aux}}
\newcommand{\FT}{\mathsf{FT}}
\newcommand{\dec}{\mathsf{Dec}}
\newcommand{\pk}{\mathsf{PK}}
\newcommand{\Mod}[1]{\ (\mathrm{mod}\ #1)}
\newcommand{\sk}{\rho_{\mathsf{SK}}}
\newcommand{\msk}{\mathsf{MSK}}
\newcommand{\rand}{\raisebox{-1pt}{\ensuremath{\,\xleftarrow{\raisebox{-1pt}{$\scriptscriptstyle\$$}}\,}}}
\newcommand{\gen}{{\sf Gen}}
\newcommand{\prf}{{\sf PRF}}
\newcommand{\bfH}{{\bf H}}
\newcommand{\vectorize}{{\sf vec}}
\newcommand{\bfM}{{\sf M}}
\newcommand{\bfD}{{\bf D}}
\newcommand{\ct}{\mathsf{CT}}
\newcommand{\expt}{\mathsf{Expt}}
\newcommand{\delete}{\mathsf{Revoke}}
\newcommand{\revoke}{\mathsf{Revoke}}
\newcommand{\valid}{\mathsf{Valid}}
\newcommand{\invalid}{\mathsf{Invalid}}
\newcommand{\negl}{\mathsf{negl}}
\newcommand{\bfv}{\mathbf{v}}
\newcommand{\bfR}{\mathbf{R}}
\newcommand{\bfB}{\mathbf{B}}
\newcommand{\sampleleft}{\mathsf{SampleLeft}}
\newcommand{\sampleright}{\mathsf{SampleRight}}
\newcommand{\Z}{\mathbb{Z}}
\newcommand{\bfT}{\mathbf{T}}
\newcommand{\bfx}{\mathbf{x}}
\newcommand{\bft}{\mathbf{t}}
\newcommand{\distr}{\mathcal{D}}
\newcommand{\prob}{{\sf Pr}}
\newcommand{\evalct}{{\sf Eval}_{\sf ct}}
\newcommand{\evalpk}{{\sf Eval}_{\sf pk}}
\newcommand{\bfy}{\mathbf{y}}
\newcommand{\eval}{{\sf Eval}}
\newcommand{\ineffrevoke}{{\sf IneffRevoke}}
\newcommand{\alex}[1]{{\color{blue} Alex: #1 }}
\newcommand{\pnote}[1]{{\color{blue} Prab: #1}}
\newcommand{\vnote}[1]{{\color{red} V: #1}}
\newcommand{\rc}[1]{{\color{red} #1}}
\newcommand{\alex}[1]{}
\newcommand{\pnote}[1]{}
\newcommand{\vnote}[1]{}
\newcommand{\rc}[1]{{\color{red} #1}}
\title{Revocable Cryptography from Learning with Errors \vspace{3.5mm}}
\author{Prabhanjan Ananth\footnote{\texttt{prabhanjan@cs.ucsb.edu}}\\UCSB \and 
Alexander Poremba\footnote{\texttt{aporemba@caltech.edu}}\\Caltech \and 
Vinod Vaikuntanathan\footnote{\texttt{vinodv@mit.edu}}\\MIT
}
\date{}
\newcommand{\poly}{\mathrm{poly}}
\newcommand{\hybrid}{\mathsf{H}}
\newcommand{\adversary}{\mathcal{A}}
\newcommand{\ketbra}[2]{\ket{#1}\!\bra{#2}}
\renewcommand{\cal}[1]{\mathcal{#1}}
\newcommand{\N}{\mathbb{N}}
\newcommand{\Tr}{\mathrm{Tr}}
\newcommand{\reg}[1]{\textsc{#1}}
\newcommand{\eps}{\varepsilon}
\newcommand{\Delete}{\mathsf{Delete}}
\newcommand{\sati}{{\sf ATI}}
\newcommand{\cP}{{\cal P}}
\newcommand{\Hs}{{\cal H}}
\newcommand{\bracketsC}[1]{\left\{ #1\right\}}
\newcommand{\Enc}{\mathsf{Enc}}
\newcommand{\Dec}{\mathsf{Dec}}
\DeclareMathOperator{\TD}{TD}
\newcommand{\bit}{{\{0, 1\}}}
\newcommand{\secparam}{\lambda}
\DeclareFontFamily{U}{skulls}{}
\DeclareFontShape{U}{skulls}{m}{n}{ <-> skull }{}
\begin{document}
\maketitle

\begin{abstract}
\noindent Quantum cryptography leverages many unique features of quantum information in order to construct cryptographic primitives that are oftentimes impossible classically. In this work, we build on the no-cloning principle of quantum mechanics and design cryptographic schemes with \emph{key-revocation capabilities}. We consider schemes where secret keys are represented as quantum states with the guarantee that, once the secret key is successfully revoked from a user, they no longer have the ability to perform the same functionality as before.

We define and construct several fundamental cryptographic primitives with {\em key-revocation capabilities}, namely pseudorandom functions, secret-key and public-key encryption, and even fully homomorphic encryption. Our constructions either assume the quantum subexponential hardness of the learning with errors problem or are based on new conjectures. Central to all our constructions is our approach for making the Dual-Regev encryption scheme (Gentry, Peikert and Vaikuntanathan, STOC 2008) revocable.
\end{abstract}

\newpage 
\tableofcontents
\newpage

\section{Introduction}

\noindent Quantum computing presents exciting new opportunities for cryptography, using remarkable properties of quantum information to construct cryptographic primitives that are unattainable classically. At the heart of quantum cryptography lies the \emph{no-cloning principle}~\cite{WZ82,Dieks82} of quantum information which stipulates that it is fundamentally impossible to copy an unknown quantum state.
Indeed, Wiesner~\cite{Wiesner83} in his seminal work from the 1970s used the no-cloning principle to construct a quantum money scheme, wherein quantum states are used to construct banknotes that can be verified to be authentic (using a secret key) but cannot be counterfeited. 
Ever since this watershed moment, and especially so in the recent years, a wide variety of primitives referred to as {\em unclonable} primitives have been studied and constructed in the context of encryption~\cite{Got02,BL20,BI20,GZ20}, digital signatures~\cite{LLLZ22} and pseudorandom functions~\cite{CLLZ21}.

\paragraph{Our Work: Revocable Cryptography.} 
Delegation and revocation of privilege are problems of great importance in cryptography. Indeed, the problem of revocation in the context of digital signatures and certificates in the classical world is a thorny problem~\cite{stubble,Riv98b}. In this work, we undertake a systematic study of {\em revocable (quantum) cryptography} which allows us to delegate and revoke privileges in the context of several fundamental cryptographic primitives. This continues a recent line of work in quantum cryptography dealing with revoking (or certifiably deleting) states such as quantum ciphertexts or simple quantum programs~\cite{Unruh2013,BI20,GZ20,ALP21,hiroka2021certified,Poremba22,BK22}.
In this framework, revocation is to be understood from the perspective of the recipient of the quantum state; namely, the recipient can certify the loss of certain privileges by producing a certificate (either classical or quantum) which can be verified by another party.

As a motivating example, consider the setting of an employee at a company who takes a vacation and wishes to authorize a colleague to perform certain tasks on her behalf, tasks that involve handling sensitive data. Since the sensitive data is (required to be) encrypted, the employee must necessarily share her decryption keys with her colleague. When she returns from vacation, she would like to have her decryption key back; naturally, one would like to ensure that her colleague should not be able to decrypt future ciphertexts (which are encrypted under the same public key) once the key is ``returned''. Evidently, if the decryption key is a classical object, this is impossible to achieve as classical keys can be copied at will.
     
In revocable (quantum) cryptography, we associate a cryptographic functionality, such as decryption using a secret key, with a quantum state in such a way that a user can compute this functionality if and only if they are in possession of the quantum state. We then design a revocation algorithm which enables the user to certifiably return the quantum state to the owner. Security requires that once the user returns the state (via our revocation algorithm), they should not have the ability to evaluate the functionality (e.g. decrypt ciphertexts) anymore. We refer to this new security notion as {\em revocation security}. 

Another, possibly non-obvious, application is to detecting malware attacks. Consider a malicious party who hacks into an electronic device and manages to steal a user's decryption keys. If cryptographic keys are  represented by classical bits, it is inherently challenging to detect such attacks that compromise user keys. For all we know, the intruder could have stolen the user's decryption keys without leaving a trace. Indeed, a few years ago, decryption keys which were used to protect cell-phone communications~\cite{SimHeist} were successfully stolen by spies without being detected.\footnote{The attack would indeed have gone undetected but for the Snowden revelations.} With revocable cryptography, a malicious user successfully stealing a user key would invariably revoke the decryption capability from the user. This latter event can be detected. 

     
\paragraph{Our Results in a Nutshell.} We construct revocable cryptographic objects under standard cryptographic assumptions. Our first main result constructs a key-revocable public-key encryption scheme, and our second main result constructs a key-revocable pseudorandom function. We obtain several corollaries and extensions, including key-revocable secret-key encryption and key-revocable fully homomorphic encryption. In all these primitives, secret keys are represented as quantum states that retain the functionality of the original secret keys. We design revocation procedures and guarantee that once a user successfully passes the procedure, they cannot compute the functionality any more.  

All our constructions are secure under the quantum subexponential hardness of learning with errors~\cite{Regev05}---provided that revocation succeeds with high probability. At the heart of all of our contributions lies our result which shows that the Dual-Regev public-key encryption scheme of \cite{cryptoeprint:2007:432} satisfies revocation security.

\paragraph{Related Notions.} 
There are several recent notions in quantum cryptography that are related to revocability. Of particular relevance is the stronger notion of copy-protection introduced by Aaronson~\cite{Aar09}. Breaking the revocable security of a task gives the adversary a way to make two copies of a (possibly different) state both of which are capable of computing the same functionality. Thus, copy-protection is a stronger notion. However, the only known constructions of copy-protection schemes~\cite{CLLZ21,LLLZ22} rely on the heavy hammer of {\em post-quantum secure} indistinguishability obfuscation for which there are no known constructions based on well-studied assumptions. Our constructions, in contrast, rely on the post-quantum hardness of the standard learning with errors problem. A different related notion is the significantly weaker definition of secure software leasing~\cite{ALP21} which guarantees that once the quantum state computing a functionality is returned, the {\em honest evaluation algorithm} cannot compute the original functionality. 
Yet another orthogonal notion is that of certifiably deleting {\em ciphertexts}, originating from the works of Unruh~\cite{Unruh2013} and Broadbent and Islam~\cite{BI20}.
In contrast, our goal is to delegate and revoke {\em cryptographic capabilities} enabled by private keys. 
For detailed comparisons, we refer the reader to Section~\ref{sec:related}.

\subsection{Our Contributions in More Detail}
We present our results in more detail below. First, we introduce the notion of key-revocable public-key encryption. Our main result is that the Dual-Regev public-key encryption scheme~\cite{cryptoeprint:2007:432} satisfies revocation security. After that, we study revocation security in the context of fully homomorphic encryption and pseudorandom functions. 

\paragraph{Key-Revocable Public-Key Encryption.} We consider public-key encryption schemes where the decryption key, modeled as
a quantum state, can be delegated to a third party and can later be revoked~\cite{GZ20}. The syntax of a key-revocable public-key scheme (\Cref{def:key-revocable-PKE}) is as follows:

\begin{itemize}
    \item $\keygen(1^\lambda)$: this is a setup procedure which outputs a public key $\pk$, a master secret key $\msk$ and a decryption key $\sk$. While the master secret key is typically a classical string, the decryption key is modeled as a quantum state. 
    (The use cases of $\msk$ and $\sk$ are different, as will be evident below.)
    \item $\enc(\pk,x)$: this is the regular classical encryption algorithm which outputs a ciphertext $\ct$. 
    
    \item $\dec(\sk,\ct)$: this is a quantum algorithm which takes as input the quantum decryption key $\sk$ and a classical ciphertext, and produces a plaintext.
    
    \item $\delete(\pk,\msk,\sigma)$: this is the revocation procedure that outputs $\valid$ or $\invalid$. If $\sigma$ equals the decryption key $\sk$, then $\delete$ outputs $\valid$ with high probability.
\end{itemize}

\noindent After the decryption key is returned,
we require that the sender loses its ability to decrypt ciphertexts. This is formalized as follows (see \Cref{def:krpke:security}): conditioned on revocation being successful, no adversary can distinguish whether it is given an encryption of a message versus the uniform distribution over the ciphertext space with advantage better than $\negl(\lambda)$. Moreover, we require that revocation succeeds with a probability negligibly close to 1 (more on this later).

\noindent We prove the following in \Cref{thm:security-Dual-Regev}.

\begin{theorem*}[Informal]
\label{thm:intro}
Assuming that the $\LWE$ and $\SIS$ problems with subexponential modulus are hard against quantum adversaries running in subexponential time (see \Cref{sec:lattices}), there exists a key-revocable public-key encryption scheme. 
\end{theorem*}

\noindent Due to the quantum reduction from $\SIS$ to $\LWE$~\cite{cryptoeprint:2009/285}, the two assumptions are, in some sense, equivalent. Therefore, we can in principle rely on the subexponential hardness of $\LWE$ alone.
\par Our work improves upon prior works, which either use post-quantum secure indistinguishability obfuscation~\cite{GZ20,CLLZ21} or consider the weaker private-key setting~\cite{KN22}. 

\paragraph{Key-Revocable Public-Key Encryption with Classical Revocation.}

Our previous notion of key-revocable encryption schemes models the key as 
a quantum state which can later be ``returned.'' One may therefore reasonably ask whether it is possible to achieve a classical notion of revocation, similar to the idea of deletion certificates in the context of quantum encryption~\cite{BI20,Hiroka_2021,BK22,Poremba22,hiroka2021certified}. Rather then return a quantum decryption key, the recipient would simply apply an appropriate measurement (as specified by a procedure $\Delete$), and output a classical certificate which can be verified using $\revoke$. We also consider key-revocable public-key encryption scheme with classical revocation (see \Cref{def:classical-key-revocable-PKE}) which has the same syntax as a regular key-revocable public-key scheme, except that revocation process occurs via the following two procedures:
\begin{itemize}
    \item $\Delete(\sk)$: this takes as input a quantum decryption key $\sk$, and produces a classical revocation certificate $\pi$.
    \item $\delete(\pk,\msk,\pi)$: this takes as input the (classical) master secret key $\msk$ and a (classical) certificate $\pi$, and outputs $\valid$ or $\invalid$. If $\pi$ is the output of $\Delete(\sk)$, then $\delete$ outputs $\valid$ with high probability.
\end{itemize}
Our notion of security (see \Cref{def:krpke:security-classical}) is essentially the same as for a regular key-revocable public-key encryption scheme where revocation is quantum. We prove the following in \Cref{thm:dual-regev-classical}.

\begin{theorem*}[Informal]
\label{thm:intro}
Assuming that the $\LWE$ and $\SIS$ problems with subexponential modulus are hard against quantum adversaries running in subexponential time (see \Cref{sec:lattices}), there exists a key-revocable public-key encryption scheme with classical revocation. 
\end{theorem*}

The assumptions required for the theorem are essentially the same as for our previous Dual-Regev scheme.

\paragraph{Key-Revocable Fully Homomorphic Encryption.}

 We go beyond the traditional public-key setting and design the first {\em fully homomorphic encryption} (\textsf{FHE}) scheme~\cite{Gentry09,BV14} with key-revocation capabilities. Our construction is based on a variant of the (leveled) \textsf{FHE} scheme of Gentry, Sahai and Waters~\cite{GSW2013}, which we extend to a key-revocable encryption scheme using Gaussian superpositions.
 The syntax of a key-revocable \textsf{FHE} scheme is the same as in the key-revocable public-key setting from before (\Cref{def:key-revocable-PKE}), except for the additional algorithm \textsf{Eval} which is the same as in a regular \textsf{FHE} scheme.
 We prove the following in \Cref{thm:security-Dual-GSW}. 

\begin{theorem*}[Informal]
\label{thm:intro}
Assuming that the $\LWE$ and $\SIS$ problems with subexponential modulus are hard against quantum adversaries running in subexponential time (see \Cref{sec:lattices}), there exists a key-revocable (leveled) fully homomorphic encryption scheme. 
\end{theorem*}

We prove the theorem by invoking the security of our key-revocable Dual-Regev public-key encryption scheme in \Cref{sec:dual-regev}. Similar to the case of revocable $\PKE$ with classical revocation, we can also consider revocable $\FHE$ with classical revocation, which can be achieved based on the same assumptions as above. 

\paragraph{(Key-)Revocable Pseudorandom Functions.} We consider other cryptographic primitives with key-revocation capabilities that go beyond decryption functionalities; specifically, we introduce the notion of
\emph{key-revocable} pseudorandom functions (\textsf{PRFs}) with the following syntax:
\begin{itemize}
    \item $\mathsf{Gen}(1^\lambda)$: outputs a \textsf{PRF} key $k$, a quantum key $\rho_k$ and a master secret key $\msk$. 
    \item $\prf(k;x)$: on key $k$ and input $x$, output a value $y$. This is a deterministic algorithm. 
    \item $\eval(\rho_k,x)$: on input a state $\rho_k$ and an input $x$, output a value $y$. 
    \item $\revoke(\msk,\sigma)$: on input a verification key $\msk$ and state $\sigma$, outputs $\valid$ or $\invalid$. 
\end{itemize}
After the quantum key $\rho_k$ is successfully returned,
we require that the sender loses its ability to evaluate the $\prf$. This is formalized as follows (see \Cref{def:revocable-PRF}): no efficient adversary can simultaneously pass the revocation phase and succeed in distinguishing the output of a pseudorandom function on a random challenge input $x^*$ versus uniform with advantage better than $\negl(\lambda)$. In fact, we consider a more general definition where the adversary receives many challenge inputs instead of just one. 

We give the first construction of key-revocable pseudorandom functions (\textsf{PRFs}) from standard assumptions.
Previous schemes implicit in~\cite{CLLZ21} either require indistinguishability obfuscation, or considered weaker notions of revocable \textsf{PRFs} in the form of \emph{secure software leasing}~\cite{ALP21,Kitagawa_2021}, which merely prevents the possiblity of \emph{honestly} evaluating the \textsf{PRF} once the key is revoked.
\par Since in the context of pseudorandom functions, it is clear what is being revoked, we instead simply call the notion revocable pseudorandom functions.

\begin{theorem*}[Informal]
\label{thm:intro:prfs}
Assuming that the $\LWE$ and $\SIS$ problems with subexponential modulus are hard against quantum adversaries running in subexponential time (see \Cref{sec:lattices}), there exist key-revocable pseudorandom functions. 
\end{theorem*}

\noindent Revocable pseudorandom functions immediately give us key-revocable (many-time secure) secret-key encryption schemes. We also revocable $\PRF$s with classical revocation can be achieved based on the same assumptions as above. 

\newcommand{\drconj}{dualRegevExtraction}
\newcommand{\sdre}{$\sf{SDRE}$}
\newcommand{\qsdre}{${\sf qSDRE}$}
\newcommand{\csdre}{${\sf cSDRE}$}
\paragraph{Inverse-polynomial revocation based on conjectures.} In all the results above, we assume that the probability of revocation is negligibly close to 1. Even in this restrictive setting, our proofs turn out to be highly non-trivial and require careful use of a diverse set of techniques! Moreover, to date, no constructions of key-revocable $\PRF$s or $\FHE$ were known based on assumptions weaker than post-quantum iO. 
\par A natural question to explore is whether we can achieve the following stronger security notion of revocable public-key encryption: if the adversary successfully revokes the decryption key with inverse-polynomial probability, then semantic security of revocable $\PKE$ still holds. If we achieved this stronger notion of revocable $\PKE$, then we would also achieve the corresponding stronger notions of revocable $\PRF$s and $\FHE$ based on the same computational assumptions. 
\par We show how to achieve these stronger results based on a plausible conjecture (see \Cref{conjecture-I}) which we can phrase in the language of \emph{asymmetric} cloning games~\cite{AKL23}. 
Informally, our conjecture states the following: suppose that our Dual-Regev $\PKE$ scheme is not key-revocable (according to the above stronger definition). Then, there exists a triplet of algorithms $(\algo A,\algo B,\algo C)$, where
\begin{itemize}
    \item $\adversary$ is a QPT algorithm which receives as input $(\bfA,\bfy,\ket{\psi_{\bfy}})$, where $\bfA \xleftarrow{\$} \Zq^{n \times m}$ and $\ket{\psi_{\bfy}}$ is a Gaussian superposition of all the short vectors mapping $\bfA$ to $\bfy$. It produces a bipartite state on two registers $\textsc{B}$ and $\textsc{C}$. 

    \item $\algo B$ is a fixed and inefficient revocation algorithm which receives as input $(\vec A,\vec y,\reg B)$ and applies the (inefficient) projective measurement 
   $\big\{\proj{\psi_{\vec y}}, I - \proj{\psi_{\vec y}}\big\}$ to register $\textsc{B}$. 
    \item $\algo C$ is a QPT algorithm which on input $(\vec A,\vec y,\reg C)$ distinguishes Dual-Regev ciphertexts (with respect to $\vec A$ and $\vec y$) from random\footnote{Strictly speaking, the challenge input here consists of a vector which resembles a Dual-Regev ciphertext but which implicitly also depends on a fixed Gaussian vector $\vec x_0$ such that $\vec A \vec x_0 = \vec y \Mod{q}$. More on this in \Cref{sec:overview}.} with inverse-polynomial advantage---conditioned on the event that the revocation algorithm $\algo B$ succeeds on register $\reg B$.
\end{itemize}
The difficulty in proving the conjecture lies in the fact that one needs to invoke the $\LWE$ assumption with respect to $\algo C$ who holds $\reg{C}$, while at the same time guaranteeing that an inefficient projective measurement succeeds on a separate register $\reg B$.
We leave proving (or refuting) the above conjecture to future works. 
\par We also consider another variant of the above conjecture (see \Cref{conjecture-II}) in the context of classical revocation. Here, $\adversary$ is instead given $\ket{\psi_{\bfy}^{\bfv}}$, which is essentially $\ket{\psi_{\bfy}}$, except that a phase of $\omega_q^{\langle \bfx, \lfloor \frac{q}{\nu} \rfloor \cdot \bfv \rangle}$ is planted for the term $\ket{\bfx}$ in the superposition. Moreover, we replace $\algo B$ with a fixed classical revocation algorithm which can detect $\vec v$ within ciphertexts $\ct \approx \bfs^\intercal \bfA + \lfloor \frac{q}{\nu} \rfloor \cdot \bfv^\intercal $, for some $\bfs \in \Zq^{n}$, given the register $\textsc{B}$ and an appropriate \emph{trapdoor} for $\vec A$.

\paragraph{Discussion: Unclonable Cryptography from \textsf{LWE}.}\ Over the years, the existence of many fundamental cryptographic primitives such as pseudorandom functions~\cite{BPR12}, fully homomorphic encryption~\cite{BV11}, attribute-based encryption~\cite{cryptoeprint:2014/356} and succinct argument systems~\cite{CJJ22} have been based on the existence of learning with errors. In fact, as far as we know, there are only a few foundational primitives remaining (indistinguishability obfuscation is one such example) whose existence is not (yet) known to be based on learning with errors. 
\par The situation is quite different in the world of unclonable cryptography. Most of the prominent results have information-theoretic guarantees but restricted functionalities~\cite{BI20,BL20} or are based on the existence of post-quantum indistinguishability obfuscation~\cite{10.1007/s00145-020-09372-x,CLLZ21}. While there are works~\cite{KNY21} that do propose lattice-based constructions of unclonable primitives, there are still many primitives, such as quantum money and quantum copy-protection, whose feasibility we would like to establish based on the post-quantum hardness of the learning with errors (or other such relatively well-studied assumptions). We hope that our work presents a new toolkit towards building more unclonable primitives from $\LWE$. 

\paragraph{Independent and Concurrent Work.} Independently and concurrently\footnote{Both of our works were posted online around the same time.}, Agrawal et al.~\cite{AKNYY23} explored the notion of public-key encryption with secure leasing which is related to key-revocable public-key encryption. We highlight the differences below:
\begin{itemize}
    \item {\em Advanced notions}: We obtain key-revocable
\emph{fully homomorphic encryption} and key-revocable \emph{pseudorandom functions} which are unique to our work. They explore other notions of advanced encryption with secure leasing including attribute-based encryption and functional encryption, which are not explored in our work.  
    \item {\em Security definition}: We consider the stronger notion of classical revocation\footnote{The stronger notion was updated in our paper subsequent to posting of our and their work.} whereas they consider quantum revocation. 
    \item {\em Public-key encryption}: They achieve a generic construction based on any post-quantum secure public-key encryption\footnote{Their construction achieves the stronger definition where the revocation only needs to succeed with inverse polynomial probability.} whereas our notion is based on the post-quantum hardness of the learning with errors problem, or on newly introduced conjectures (when revocation does not succeed with high probability). Their construction of revocable public-key encryption involves many complex abstractions whereas our construction is based on the versatile Dual-Regev public-key encryption scheme.

\end{itemize}

\subsection{Overview}\label{sec:overview}

We now give a technical overview of our constructions and their high level proof ideas. 
We begin with the key-revocable public-key encryption construction. A natural idea would be to start with Regev's public-key encryption scheme~\cite{Regev05} and to then upgrade the construction in order to make it revocable. However, natural attempts to associate an unclonable quantum state with the decryption key fail and thus, we instead consider the Dual-Regev public-key encryption scheme and make it key-revocable. We describe the scheme below.

\paragraph{Key-Revocable Dual-Regev Public-Key Encryption.} 

Our first construction is based on the \emph{Dual-Regev} public-key encryption scheme~\cite{cryptoeprint:2007:432} and makes use of Gaussian superpositions which serve as a quantum decryption key.
We give an overview of \Cref{cons:dual-regev} below. 
\begin{itemize}
    \item $\keygen(1^n)$: sample a matrix $\vec A \in \Zq^{n \times m}$ along with a \emph{short trapdoor basis} $\td_{\vec A}$. 
    To generate the decryption key, we employ the following procedure\footnote{In \Cref{sec:GenGauss}, this is formalized as the procedure ${\sf GenGauss}$ (see Algorithm \ref{alg:GenGauss}).}: Using the matrix $\vec A$ as input, first create a Gaussian superposition of short vectors in $\Z^m \cap (-\frac{q}{2},\frac{q}{2}]^m$, denoted by\footnote{Note that the state is not normalized for convenience.} 
    $$ \ket{\psi} = \sum_{\vec x \in \Z_q^m} \rho_\sigma(\vec x) \ket{\vec x} \otimes \ket{\vec A \cdot \vec x \Mod{q}}
    $$
    where $\rho_\sigma(\vec x)=\exp(-\pi \|\vec x\|^2/\sigma^2)$ is the Gaussian measure, for some $\sigma > 0$. Next, measure the second register which partially collapses the superposition and results in the \emph{coset state} $$ \ket{\psi_{\vec y}} = \sum_{\substack{\vec x \in \Z_q^m:\\ \vec A \vec x= \vec y \Mod{q}}} \rho_\sigma(\vec x) \ket{\vec x}
    $$
    for some outcome $\vec y \in \Z_q^n$.
    Finally, we let $\ket{\psi_{\bf y}}$ be the decryption key $\sk$, $({\vec A},{\vec y})$ be the public key $\pk$, and we let the trapdoor $\td_{\vec A}$ serve as the master secret key $\msk$.

    \item $\enc(\pk,\mu)$ is identical to Dual-Regev encryption. To encrypt a bit $\mu \in \bit$, sample a random string ${\vec s} \xleftarrow{\$} \Zq^{n}$ together with discrete Gaussian errors ${\vec e} \in \Z^{m}$ and $e' \in \Z$, and output the (classical) ciphertext $\ct$ given by
    $$
    \ct = (\vec s^\intercal \vec A + \vec e^\intercal, \vec s^\intercal \vec y + e' + \mu \cdot \lfloor \frac{q}{2} \rfloor) \,\,\, \in \Z_q^m \times \Z_q.
    $$
    \item $\dec(\sk,\ct)$: to decrypt a ciphertext $\ct$ using the decryption key $\sk=\ket{\psi_{\vec y}}$, first apply the unitary $U: \ket{\vec x}\ket{0} \rightarrow \ket{\vec x}\ket{\ct \cdot (-\vec x,1)^\intercal}$ on input $\ket{\psi_{\vec y}}\ket{0}$, and then measure the second register in the computational basis. Because $\ket{\psi_{\vec y}}$ is a superposition of short vectors ${\vec x}$ subject to ${\vec A} \cdot {\vec x} = {\vec y} \Mod{q}$, we obtain an approximation of $\mu \cdot \lfloor \frac{q}{2} \rfloor$ from which we can recover $\mu$.\footnote{For approriate choices of parameters, decryption via rounding succeeds at outputting $\mu$ with overwhelming probability and hence we can invoke the ``almost as good as new'' lemma~\cite{aaronson2016complexity} to recover the original state $\ket{\psi_{\vec y}}$.}  
    \item $\revoke(\pk,\msk,\rho)$: to verify the returned state $\rho$ given as input the public key $(\vec A,\vec y)$ and master secret key $\td_{{\vec A}}$, apply the projective measurement $\{\proj{\psi_{\vec y}}, I - \proj{\psi_{\vec y}}\}$ onto $\rho$. Output $\valid$, if the measurement succeeds, and output $\invalid$, otherwise. 
\end{itemize}

\paragraph{Implementing revocation, efficiently.}  Note that performing a projective measurement onto a fixed Gaussian state $\ket{\psi_{\vec y}}$ is, in general, computationally infeasible. In fact, if it were to be possible to efficiently perform this projection using $(\bfA,\bfy)$ alone, then one could easily use such a procedure to solve the short integer solution ($\SIS$) problem. Fortunately, we additionally have the trapdoor for $\bfA$ at our disposal in order to perform such a projection. 
\par One of our contributions is to design a \emph{quantum discrete Gaussian sampler for $q$-ary lattices}\footnote{In \Cref{sec:QSampGauss}, this is formalized as the procedure {\sf QSampGauss} (see Algorithm \ref{alg:SampGauss}).} which, given as input $({\vec A},\vec y,\td_{{\vec A}},\sigma)$, implements a unitary that efficiently prepares the Gaussian superposition $\ket{\psi_{\vec y}}$ from scratch with access to the trapdoor $\td_{{\vec A}}$. At a high level, our Gaussian sampler can be alternately thought of as an explicit quantum reduction from the \emph{inhomogenous} $\SIS$ problem~\cite{DBLP:conf/stoc/Ajtai96} to the search variant of the $\LWE$ problem (see \Cref{sec:QSampGauss}).

\paragraph{Insight: Reduction to SIS.} Our goal is to use the state returned by the adversary and to leverage the indistinguishability guarantee in order to break some computational problem. It should seem suspicious whether such a reduction is even possible: after all the adversary is returning the state we gave them! {\em How could this possibly help?} Our main insight lies in the following observation: while the adversary does eventually return the state we give them, the only way it can later succeed in breaking the semantic security of dual Regev PKE is if it retains useful information about the state. If we could somehow extract this information from the adversary, then using the extracted information alongside the returned state, we could hope to break some computational assumption. For instance, suppose we can extract a short vector $\bfx$ such that $\bfA \cdot \bfx = \bfy \Mod{q}$. By measuring the state returned by the adversary, we could then hope to get a second short vector $\bfx'$ such that $\bfA \cdot \bfx' = \bfy \Mod{q}$, and from this, we can recover a short solution in the kernel of $\bfA \in \Z_q^{n \times m}$. 
\par Even if, for a moment, we disregard the issue of being able to extract $\bfx$ from the adversary, there are still some important missing steps in the above proof template: 
\begin{itemize}
    \item Firstly, measuring the returned state should give a vector different from $\bfx$ with non-negligible probability. In order to prove this, we need to argue that the squared ampltidue of every term is bounded away from 1. We prove this statement (\Cref{lem:max:ampl:bound}) holds as long as $\bfA$ is full rank. 
    \item Secondly, the reduction to $\SIS$ would only get as input $\bfA$ and not a trapdoor for $\bfA$. This means that it will no longer be possible for the reduction to actually check whether the state returned by the adversary is valid. We observe that, instead of first verifying whether the returned state is valid and then measuring in the computational basis, we can in fact skip verification and immediately go ahead and measure the state in the computational basis; this is implicit in the analysis in the proof of~\Cref{clm:distinct:preimageext}. 
    \item Finally, the adversary could have entangled the returned state with its residual state in such a way that measuring the returned state always yields the same vector $\bfx$ as the one extracted from the adversary. In the same analysis in the proof of~\Cref{clm:distinct:preimageext}, we prove that, even if the adversary entangles its state with the returned state, with non-negligible probability we get two distinct short vectors mapping $\bfA$ to $\bfy$. 
\end{itemize}
\par All that is left is to argue that it is possible to extract $\bfx$ from the adversary while simultaneously verifying whether the returned state is correct or not. To show that we can indeed extract another short pre-image from the adversary's quantum side information, we make use of what we call a \emph{simultaneous search-to-decision reduction with quantum auxiliary input} for the Dual-Regev scheme.

\paragraph{Main contribution: Simultaneous search-to-decision reduction with quantum advice.}
 Informally, our theorem says the following:
any successful Dual-Regev distinguisher with access to quantum side information \textsc{Aux} (which depends on the decryption key)
can be converted into a successful extractor that finds a key on input \textsc{Aux} -- even conditioned on $\revoke$ 
succeeding on a seperate register $\reg R$. We now present some intuition behind our proof.

Suppose there exists a successful Dual-Regev distinguisher $\algo D$ (as part of the adversary $\algo A$) that, given quantum auxiliary information \textsc{Aux}, can distinguish between $(\vec s^\intercal \vec A + \vec e^\intercal, \vec s^\intercal \vec y + e')$ and uniform $(\vec u,r) \in \Z_q^m \times \Z_q$ with advantage $\epsilon$.\\

\noindent {\em Ignoring register $\reg R$}: For now, let us ignore the fact that $\revoke$ is simultaneously applied on system $R$. Inspired by techniques from the \emph{leakage resilience} literature~\cite{Dodis10}, we now make the following observation. Letting $\vec y =\vec A \cdot \vec x_0 \Mod{q}$, for some Gaussian vector $\vec x_0$ with distribution proportional to $\rho_\sigma(\vec x_0)$, the former sample can be written as $(\vec s^\intercal \vec A + \vec e^\intercal, (\vec s^\intercal \vec A  + \vec e^\intercal) \cdot \vec x_0 + e')$. Here, we assume a \emph{noise flooding} regime in which the noise magnitude of $e'$ is significantly larger than that of $\vec e^\intercal \cdot \vec x_0$. Because the distributions are statistically close, the distinguisher $\algo D$ must succeed at distinguishing the sample from uniform with probability negligibly close to $\epsilon$. Finally, we invoke the $\LWE$ assumption and claim that the same distinguishing advantage persists, even if we replace $(\vec s^\intercal \vec A + \vec e^\intercal)$ with a random string $\vec u \in \Z_q^m$. Here, we rely on the fact that the underlying $\LWE$ sample is, in some sense, independent of the auxiliary input \textsc{Aux} handed to the distinguisher $\algo D$. To show that this is the case, we need to argue that the reduction can generate the appropriate inputs to $\algo D$ on input $\vec A$; in particular it should be able to generate the auxiliary input \textsc{Aux} (which depends on a state $\ket{\psi_{\vec y}}$), while simultaneously producing a Gaussian vector $\vec x_0$ such that $\vec A \cdot \vec x_0 = \vec y \Mod{q}$. Note that this seems to violate the $\SIS$ assumption, since the ability to produce both a superposition $\ket{\psi_{\vec y}}$ of pre-images and a single pre-image $\vec x_0$ would allow one to obtain a collision for $\vec y$.\\

\noindent {\em Invoking Gaussian-collapsing}: To overcome this issue, we ask the reduction to generate the quantum auxiliary input in a different way; rather than computing \textsc{Aux} as a function of $\ket{\psi_{\vec y}}$, we compute it as a function of $\ket{{\vec x_0}}$, where ${\vec x_0}$ results from \emph{collapsing} the state $\ket{\psi_{\vec y}}$ via a measurement in the computational basis. By invoking the \emph{Gaussian collapsing property}~\cite{Poremba22}, we can show that the auxiliary information computed using $\ket{\psi_{\vec y}}$ is computationally indistinguishable from the auxiliary information computed using $\ket{{\vec x_0}}$. Once we invoke the collapsed version of $\ket{\psi_{\vec y}}$, we can carry out the reduction and conclude that $\algo D$ can distinguish between the samples $(\vec u,\vec u^\intercal \vec x_0)$ and $(\vec u,r)$, where $\vec u$ and $r$ are random and $\vec x_0$ is Gaussian, with advantage negligibly close to $\epsilon$.\footnote{Technically, $\algo D$ can distinguish between $(\vec u,\vec u^\intercal \vec x_0 + e')$ and $(\vec u,r)$ for a Gaussian error $e'$. However, by defining a distinguisher $\tilde{\algo D}$ that first shifts $\vec u$ by a Gaussian vector $e'$ and then runs $\algo D$, we obtain the desired distinguisher.}
Notice that $\algo D$ now resembles a so-called \emph{Goldreich-Levin} distinguisher~\cite{10.1145/73007.73010}.\\

\noindent {\em Reduction to Goldreich-Levin}: Assuming the existence of a quantum Goldreich-Levin theorem for the field $\Zq$, one could then convert $\algo D$ into an extractor that extracts ${\vec x_0}$ with high probability. Prior to our work, a quantum Goldreich-Levin theorem was only known for $\Z_2$~\cite{AdcockCleve2002,CLLZ21}. In particular, it is unclear how to extend prior work towards higher order fields $\Z_q$ because the interference pattern in the analysis of the quantum extractor does not seem to generalize beyond the case when $q=2$.
Fortunately, we can rely on the
\emph{classical} Goldreich-Levin theorem for finite fields due to Dodis et al.~\cite{Dodis10}, as well as recent work by Bitansky, Brakerski and Kalai.~\cite{BBK22} which shows that a large class of classical reductions can be generically converted into a quantum reductions. This allows us to obtain a quantum Goldreich-Levin theorem for large fields, which we prove in \Cref{sec:QGL}. Specifically, we can show that a distinguisher $\algo D$ that, given auxiliary input \textsc{Aux}, can distinguish between $(\vec u,\vec u^\intercal \vec x_0)$ and $(\vec u,r)$ with advantage $\eps$ can be converted into a quantum extractor that can extract ${\vec x_0}$ given \textsc{Aux} in time $\poly(1/\eps,q)$ with probability $\poly(\eps,1/q)$.\\

\noindent \noindent {\em Incorporating the revoked register $\reg R$}: To complete the security proof behind our key-revocable Dual-Regev scheme, we need to show something \emph{stronger}; namely, we need to argue that the Goldreich-Levin extractor succeeds on input \textsc{Aux} -- even conditioned on the fact that $\revoke$ outputs $\valid$ when applied on a separate register $\reg R$ (which may be entangled with \textsc{Aux}). We can consider two cases based on the security definition. 
\begin{itemize}
    \item Revocation succeeds with probability negligibly close to 1: in this case, applying the revocation or not does not make a difference since the state before applying revocation is negligibly close (in trace distance) to the state after applying revocation. Thus, the analysis is essentially the same as the setting where we ignore the register $\reg R$. 
    \item Revocation is only required to succeed with probability $1/\poly(\lambda)$: in this case, we do not know how to formally prove that the extractor and $\revoke$ simultaneously succeed with probability $1/\poly(\lambda)$. Thus, we state this as a conjecture  (see \Cref{thm:search-to-decision}) and leave the investigation of this conjecture to future works.  
\end{itemize}

\subsection{Key-Revocable Dual-Regev Encryption with Classical Revocation}

Recall that our key-revocable Dual-Regev public-key encryption scheme requires that a quantum state is \emph{returned} as part of revocation. In \Cref{cons:dual-regev-classical-revoc}, we give a  Dual-Regev encryption scheme with \emph{classical key revocation}.
The idea behind our scheme is the following: We use the same Gaussian decryption key from before, except that we also plant an appropriate phase into the state
$$
\ket{\psi_{\vec y}^{\vec v}} = \vec Z_q^{\lfloor \frac{q}{\nu} \rfloor \cdot \vec v} \ket{\psi_{\vec y}}= \sum_{\substack{\vec x \in \Z_q^m\\ \vec A \vec x = \vec y}}\rho_{\sigma}(\vec x)\, \omega_q^{\langle \vec x,\lfloor \frac{q}{\nu} \rfloor \cdot \vec v \rangle} 
 \, \ket{\vec x},
$$
where $\vec Z_q$ is the generalized $q$-ary Pauli-Z operator, $\vec v \rand \bit^m$ is a random string and $\nu$ is a sufficiently large integer with $\nu \ll q$ (to be determined later). The reason the complex phase is useful is the following: To revoke, we simply ask the recipient of the state to apply the Fourier transform and to measure in the computational basis. This results in a shifted $\LWE$ sample 
$$
\vec w = \hat{\vec s}^\intercal \vec A + \lfloor \frac{q}{\nu} \rfloor \cdot \vec v^\intercal  + \hat{\vec e}^\intercal \in \Z_q^m,
$$
where $\hat{\vec s}$ is random and $\hat{\vec e}$ is Gaussian. Note that $\vec w$ can easily be decrypted with access to a short trapdoor basis for $\vec A \in \Z_q^{n \times m}$. In particular, we will accept the \emph{classical} revocation certificate $\vec w$ if and only if it yields the string $\vec v$ once we decrypt it. This way there is no need to return a quantum state as part of revocation, and we can essentially “force” the adversary to forget the decryption key. Here, we crucially rely on the fact that Dual-Regev decryption keys are encoded in the \emph{computational basis}, whereas the revocation certificate can only be obtained via a measurement in the incompatible \emph{Fourier basis}. Intuitively, it seems impossible for any computationally bounded adversary to recover both pieces of information \emph{at the same time}.

To prove security, we follow a similar proof as in our previous construction from \Cref{sec:dual-regev}. 
First, we observe that any successful distinguisher that can distinguish Dual-Regev from uniform with quantum auxiliary input \textsc{Aux} (once revocation has taken place) can be converted into an extractor that obtains a valid pre-image of $\vec y$. Here, we use our Goldreich-Levin search-to-decision reduction with quantum auxiliary input. However, because the information returned to the challenger is not in fact a superposition of pre-images anymore, we cannot hope to break $\SIS$ as we did before.
Instead, we make the following observation: if the adversary simultaneously succeeds at passing revocation as well as distinguishing Dual-Regev ciphertexts from uniform, this means the adversary
\begin{itemize}
    \item knows an $\LWE$ encryption $\vec w \in \Z_q^m$ that yields $\vec v \in \bit^m$ when decrypted using a short trapdoor basis for $\vec A \in \Z_q^{n \times m}$, and

    \item can extract a short pre-image of $\vec y$ from its auxiliary input Aux.
\end{itemize}
In other words, the adversary simultaneously knows information about the computational basis as well as the Fourier domain of the Gaussian superposition state. We show that this violates a generic collapsing-type property of the Ajtai hash function recently proven by Bartusek, Khurana and Poremba~\cite[Theorem 5.5]{bartusek2023publiclyverifiable}, which relies on the hardness of subexponential $\LWE$ and $\SIS$.

\subsection{Applications}
\noindent We leverage our result of key-revocable Dual-Regev encryption to get key-revocable fully homomorphic encryption and revocable pseudorandom functions. While our constructions can easily
be adapted to enable \emph{classical revocation} (via our Dual-Regev scheme with classical key revocation in \Cref{cons:dual-regev-classical-revoc}), we choose to focus on the quantum revocation setting for simplicity.

\paragraph{Key-Revocable Dual-Regev Fully Homomorphic Encryption.}

Our first application of our key-revocable public-key encryption concerns fully homomorphic encryption schemes. We extend our key-revocable Dual-Regev scheme towards a (leveled) $\FHE$ scheme in \Cref{cons:DualGSW} by using the \textsf{DualGSW} variant of the $\FHE$ scheme by Gentry, Sahai and Waters~\cite{GSW2013,mahadev2018classical}.

To encrypt a bit $\mu \in \bit$ with respect to the public-key $(\vec A,\vec y)$, sample a matrix $\vec S \rand \Z_q^{n \times N}$ together with a Gaussian error matrix $\vec E \in \Z^{m\times N}$ and row vector $\vec e \in \Z^{N}$, and
output the ciphertext 
$$
\ct= \left[\substack{
\vec A^\intercal \vec S + \vec E\vspace{1mm}\\
\hline\vspace{1mm}\\
\vec y^\intercal  \vec S + \vec e}
 \right] 
+ \mu \cdot \vec G \Mod{q} \,\in\, \Z_q^{(m+1)\times N}.
$$
Here, $\vec G$ is the \emph{gadget matrix} which converts a binary vector in into its field representation over $\Z_q$. As before, the decryption key consists of a Gaussian superposition $\ket{\psi_{\vec y}}$ of pre-images of $\vec y$.

Note that the \textsf{DualGSW} ciphertext can be thought of as a column-wise concatenation of $N$-many independent Dual-Regev ciphertexts.
In \Cref{thm:security-Dual-GSW}, we prove the security of our construction by invoking the security of our key-revocable Dual-Regev scheme.

\newcommand{\bfSc}{{\bf S}}
\paragraph{Revocable Pseudorandom Functions.} Our next focus is on leveraging the techniques behind key-revocable public-key encryption to obtain revocable pseudorandom functions. Recall that the revocation security of pseudorandom functions stipulates the following: any efficient adversary (after successfully revoking the state that enables it to evaluate pseudorandom functions) cannot distinguish whether it receives pseudorandom outputs on many challenge inputs versus strings picked uniformly at random with advantage better than $\negl(\secparam)$. An astute reader might notice that revocation security does not even imply the traditional pseudorandomness guarantee! Hence, we need to additionally impose the requirement that a revocable pseudorandom function should also satisfy the traditional pseudorandomness guarantee.
\par Towards realizing a construction satisfying our definitions, we consider the following template: 
\begin{enumerate}
    \item First show that there exists a $\mu$-revocable pseudorandom function for $\mu=1$. Here, $\mu$-revocation security means the adversary receives $\mu$-many random inputs after revocation. 
    \item Next, we show that any 1-revocable pseudorandom function also satisfies the stronger notion of revocation security where there is no a priori bound on the number of challenge inputs received by the adversary.  
    \item Finally, we show that we can generically upgrade any revocable $\PRF$ in such a way that it also satisfies the traditional pseudorandomness property. 
\end{enumerate}
The second bullet is proven using a hybrid argument. The third bullet is realized by combining a revocable $\PRF$ with a post-quantum secure $\PRF$ (not necessarily satisfying revocation security). 
\par Hence, we focus the rest of our attention on proving the first bullet. \\

\noindent {\em 1-revocation security.} We start with the following warmup construction. The secret key $k$ comprises of matrices $\bfA,\{\bfSc_{i,0}, \bfSc_{i,1}\}_{i \in [\ell],b \in \{0,1\}}$, where $\bfA \xleftarrow{\$} \Zq^{n \times m}$, $\bfSc_{i,b} \in \Zq^{n \times n}$ such that all $\bfSc_{i,b}$ are sampled from some error distribution and the output of the pseudorandom function on $x$ is denoted to be $\lfloor \sum_{i \in [\ell]} \bfSc_{i,x_i} \bfA \rceil_p$, where $q \gg p$ and $\lfloor \cdot \rceil_p$ refers to a particular rounding operation modulo $p$.
\par In addition to handing out a regular $\PRF$ key $k$, we also need to generate a quantum key $\rho_k$ such that, given $\rho_k$ and any input $x$, we can efficiently compute $\prf(k,x)$. Moreover, $\rho_k$ can be revoked such that any efficient adversary after revocation loses the ability to evaluate the pseudorandom function. To enable the generation of $\rho_k$, we first modify the above construction. We generate $\bfy \in \Zq^n$ and include this as part of the key. The modified pseudorandom function, on input $x$, outputs $\lfloor \sum_{i \in [\ell]} \bfSc_{i,x_i} \bfy \rceil_p$. We denote $\sum_{i \in [\ell]} \bfSc_{i,x_i}$ by $\bfSc_{x}$ and, with this new notation, the output of the pseudorandom function can be written as $\lfloor \vec S_{x} \bfy \rceil_p$. 
\par With this modified construction, we now describe the elements as part of the quantum key $\rho_k$: 
\begin{itemize}
    \item For every $i \in [\ell]$, include $\bfS_{i,b} \bfA + \bfE_{i,b}$ in $\rho_k$, where $i \in [\ell]$ and $b \in \{0,1\}$. We sample $\bfS_{i,b}$ and $\bfE_{i,b}$ from a discrete Gaussian distribution with appropriate standard deviation $\sigma >0$.  
    \item Include $\ket{\psi_{\bfy}}$ which, as defined in the key-revocable Dual-Regev construction, is a Gaussian superposition of short solutions mapping $\bfA$ to $\bfy$. 
\end{itemize} 
To evaluate on an input $x$ using $\rho_k$, compute $\sum_{i} \bfS_{i,x_i} \bfA + \bfE_{i,x_i}$ and then using the state $\ket{\psi_{\bfy}}$, map this to $\sum_{i} \bfS_{i,x_i} \bfy + \bfE_{i,x_i}$. Finally, perform the rounding operation to get the desired result. 
\par Our goal is to show that after the adversary revokes $\ket{\psi_{\bfy}}$, on input a challenge input $x^*$ picked uniformly at random, it cannot predict whether it has received $\lfloor \sum_{i \in [N]} \bfSc_{i,x_i^*} \bfy \rceil_p$ or a uniformly random vector in $\Z_p^{n}$. \\

\noindent {\em Challenges in proving security}: We would like to argue that when the state $\ket{\psi_{\bfy}}$ is revoked, the adversary loses its ability to evaluate the pseudorandom function. Unfortunately, this is not completely true. For all we know, the adversary could have computed the pseudorandom function on many inputs of its choice before the revocation phase and it could leverage this to break the security after revocation. For instance, suppose say the input is of length $O(\log \secparam)$ then in this case, the adversary could evaluate the pseudorandom function on all possible inputs before revocation. After revocation, on any challenge input $x^*$, the adversary can then successfully predict whether it receives a pseudorandom output or a uniformly chosen random output. Indeed, a pseudorandom function with $O(\log \secparam)$-length input is learnable and hence, there should be no hope of proving it to be key-revocable. This suggests that, at the very least, we need to explicitly incorporate the fact that $x^*$ is of length $\omega(\log \secparam)$, and more importantly, should have enough entropy, in order to prove security. \\

\noindent {\em Our insight}: Our insight is to reduce the security of revocable pseudorandom function to the security of key-revocable Dual-Regev public-key encryption. At a high level, our goal is to set up the parameters in such a way that the following holds:
\begin{itemize}
    \item $(\bfA,\bfy)$, defined above, is set to be the public key corresponding to the Dual-Regev public-key encryption scheme, 
    \item $\ket{\psi_{\bfy}}$, which is part of the pseudorandom function key, is set to be the decryption key of the Dual Regev scheme,
    \item Suppose that the challenge ciphertext, denoted by $\ct^*$, comprises of two parts: $\ct^*_1 \in \Zq^{n \times m}$ and $\ct^*_2 \in \Zq^{n}$. Note that if $\ct_1^* \approx \bfs^\intercal \bfA$ and $\ct_2^* \approx \bfs^\intercal \bfy$, for some $\LWE$ secret vector $\bfs$, then $\ct_1^*$ can be mapped onto $\ct^*_2$ using the state $\ket{\psi_{\bfy}}$. We use $\ct^*_1$ to set the challenge input $x^*$ in such a way that $\ct^*_2$ is the output of the pseudorandom function on $x^*$. This implicitly resolves the entropy issue we discussed above; by the semantic security of Dual-Regev, there should be enough entropy in $\ct_1^*$ which translates to the entropy of $x^*$. 
\end{itemize}
\noindent It turns out that our goal is quite ambitious: in particular, it is unclear how to set up the parameters such that the output of the pseudorandom function on $x$ is exactly $\ct^*_2$. Fortunately, this will not be a deterrant, we can set up the parameters such that the output is $\approx \ct_2^* + \vec u$, where $\vec u$ is a vector that is known to reduction. 
\par Once we set up the parameters, we can then reduce the security of revocable pseudorandom functions to revocable Dual Regev. \\

\noindent {\em Implementation details}: So far we established the proof template should work but the implementation details of the proof need to be fleshed out. Firstly, we set up the parameters in such a way that $\ell = nm\lceil \log q \rceil$. This means that there is a bijective function mapping $[n] \times [m] \times [\lceil \log q\rceil]$ to $[\ell]$. As a result, the quantum key $\rho_k$ can be alternately viewed as follows: 
\begin{itemize}
    \item For every $i \in [n],j \in [m],\tau \in [\lceil \log q \rceil], b \in \{0,1\}$, include $\bfS_{b}^{(i,j,\tau)} \bfA + \bfE_{b}^{(i,j,\tau)}$ in $\rho_k$. We sample $\bfS_{b}^{(i,j,\tau)}$ and $\bfE_{b}^{(i,j,\tau)}$ from a discrete Gaussian with appropriate standard deviation $\sigma >0$.
\end{itemize}
The output of the pseudorandom function on input $x$ can now be interpreted as 
$$
\PRF(k,x) = \left\lceil \sum_{\substack{i \in [n],j \in [m]\\ \tau \in [\lceil \log q \rceil ]}} \bfS^{(i,j,\tau)}_{x_i} \bfy \right\rceil_p
$$
\par Next, we modify $\rho_k$ as follows: instead of generating, $\bfS_{b}^{(i,j,\tau)} \bfA + \bfE_{b}^{(i,j,\tau)}$, we instead generate $\bfS_{b}^{(i,j,\tau)} \bfA + \bfE_{b}^{(i,j,\tau)} + \bfM^{(i,j,k)}_b$, for any set of matrices $\{\bfM^{(i,j,\tau)}_b \}$. The change should be undetectable to a computationally bounded adversary, thanks to the quantum hardness of learning with errors. In the security proof, we set up the challenge input $x^*$ in such a way that summing up the matrices $\bfM_{x^*_i}^{(i,j,\tau)}$ corresponds to $\ct_1^*$, where $\ct_1^*$ is part of the key-revocable Dual-Regev challenge ciphertext as discussed above. With this modification, when $\rho_k$ is evaluated on $x^*$, we get an output that is close to $\ct_2^* + \bfu$, where $\bfu \approx \sum_{i \in [n],j \in [m], \tau \in [\lceil \log(q) \rceil]} \bfS^{(i,j,\tau)}_{x_i} \bfy$ is known to the reduction (as discussed above) -- thereby violating the security of key-revocable Dual-Regev scheme.


\subsection{Related Work}
\label{sec:related}

\paragraph{Copy-Protection.} Of particular relevance to our work is the notion of copy-protection introduced by Aaronson~\cite{Aar09}. Informally speaking, a copy-protection scheme is a
compiler that transforms programs into quantum states in such a way that using the resulting states, one can run the original program. Yet, the security guarantee stipulates that any adversary given one copy of the state cannot produce a bipartite state wherein both parts compute the original program. 
\par While copy-protection is known to be impossible for arbitrary unlearnable functions~\cite{ALP21,AK22}, identifying interesting functionalities for which copy-protection is feasible has been an active research direction~\cite{coladangelo2020quantum,cryptoeprint:2022/884,AKL23}. Of particular significance is the problem of copy-protecting cryptographic functionalities, such as decryption and signing functionalities. Coladangelo et al.~\cite{CLLZ21} took the first step in this direction and showed that it is feasible to copy-protect decryption functionalities and pseudorandom functions assuming the existence of post-quantum indistinguishability obfuscation. While a very significant first step, the assumption of post-quantum iO is unsatisfactory: there have been numerous post-quantum iO candidate proposals~\cite{BMSZ16,CVW18,BDGM20,DQVW21,GP21,WW21}, but not one of them have been based on well-studied assumptions\footnote{We remark that, there do exist {post-quantum-insecure} iO schemes based on well-founded assumptions~\cite{JLS21}.}. 
\par Our work can be viewed as copy-protecting cryptographic functionalities based on the post-quantum hardness of the learning with errors problem under a weaker yet meaningful security guarantee.

\paragraph{Secure Software Leasing.} Another primitive relevant to revocable cryptography is secure software leasing~\cite{ALP21}. The notion of secure software leasing states that any program can be compiled into a functionally equivalent program, represented as a quantum state, in such a way that once the compiled program is returned\footnote{Acording to the terminology of~\cite{ALP21}, this refers to finite term secure software leasing.}, the honest evaluation algorithm on the residual state cannot compute the original functionality. Key-revocable encryption can be viewed as secure software leasing for decryption algorithms. However, unlike secure software leasing, key-revocable encryption satisfies a much stronger security guarantee, where there is no restriction on the adversary to run honestly after returning back the software. Secure leasing for different functionalities, namely, point functions~\cite{coladangelo2020quantum,BJLPS21}, evasive functions~\cite{ALP21,KNY21} and pseudorandom functions~\cite{ALLZZ21} have been studied by recent works.

\paragraph{Encryption Schemes with Revocable Ciphertexts.} 
Unruh \cite{Unruh2013} proposed a (private-key) quantum timed-release
encryption scheme that is \emph{revocable}, i.e. it allows a user to \emph{return} the ciphertext of a quantum timed-release encryption scheme, thereby losing all access to the data. Unruh's scheme uses conjugate coding~\cite{Wiesner83,BB84} and relies on the \emph{monogamy of entanglement} in order to guarantee that revocation necessarily erases information about the plaintext. 
Broadbent and Islam~\cite{BI20} introduced the notion of \emph{certified deletion}\footnote{This notion is incomparable with another related notion called unclonable encryption~\cite{BL20,AK21,cryptoeprint:2022/884}, which informally guarantees that it should be infeasible to clone quantum ciphertexts without losing information about the encrypted message.} and constructed a private-key quantum encryption scheme with the aforementioned feature which is inspired by the quantum key distribution protocol~\cite{BB84,Tomamichel2017largelyself}. 
In contrast with Unruh's~\cite{Unruh2013} notion of revocable quantum ciphertexts which are eventually returned and verified, Broadbent and Islam~\cite{BI20} consider certificates which are entirely classical. Moreover, the security definition requires that, once the certificate is successfully verified, the plaintext remains hidden even if the secret key is later revealed.

Using a hybrid encryption scheme, Hiroka, Morimae, Nishimaki and Yamakawa~\cite{hiroka2021quantum} extended the scheme in~\cite{Broadbent_2020} to both public-key and attribute-based encryption with certified deletion via \emph{receiver non-committing} encryption~\cite{10.5555/1756169.1756191,10.1145/237814.238015}. As a complementary result, the authors also gave a public-key encryption scheme with certified deletion which is \emph{publicly verifiable} assuming the
existence of one-shot signatures and extractable witness encryption. Using 
\emph{Gaussian superpositions},
Poremba~\cite{Poremba22} proposed \emph{Dual-Regev}-based public-key and fully homomorphic encryption schemes with certified deletion which are publicly verifiable and proven secure assuming a \emph{strong Gaussian-collapsing conjecture} --- a strengthening of the collapsing property of the Ajtai hash. Bartusek and Khurana~\cite{BK22} revisited the notion of certified deletion and presented a unified approach for how to generically convert any public-key, attribute-based, fully-homomorphic, timed-release or witness encryption scheme into an equivalent quantum encryption scheme with certified deletion. In particular, they considered a stronger notion called \emph{certified everlasting security} which allows the adversary to be computationally unbounded once a valid deletion certificate is submitted. 

\section*{Acknowledgements}
We thank Fatih Kaleoglu and Ryo Nishimaki for several insightful discussions.
\par This work was done (in part) while the authors were visiting the Simons Institute for the Theory of Computing. P.A. is supported by a research gift from Cisco. A.P. is partially supported by AFOSR YIP (award number FA9550-16-1-0495), the Institute for Quantum Information and Matter (an NSF Physics Frontiers Center; NSF Grant PHY-1733907) and by a grant from the Simons 
Foundation (828076, TV).
V.V. is supported by DARPA under Agreement No. HR00112020023, NSF CNS-2154149 and a Thornton Family Faculty Research
Innovation Fellowship.

\section{Preliminaries}
\noindent Let $\secparam \in \N$ denote the security parameter throughout this work. We assume that the reader is familiar with the fundamental cryptographic concepts. 

\subsection{Quantum Computing} For a comprehensive background on quantum computation, we refer to \cite{NielsenChuang11,Wilde13}. We denote a finite-dimensional complex Hilbert space by $\mathcal{H}$, and we use subscripts to distinguish between different systems (or registers). For example, we let $\mathcal{H}_{A}$ be the Hilbert space corresponding to a system $A$. 
The tensor product of two Hilbert spaces $\algo H_A$ and $\algo H_B$ is another Hilbert space denoted by $\algo H_{AB} = \algo H_A \otimes \algo H_B$.
Let $\algo L(\algo H)$
denote the set of linear operators over $\algo H$. A quantum system over the $2$-dimensional Hilbert space $\mathcal{H} = \mathbb{C}^2$ is called a \emph{qubit}. For $n \in \mathbb{N}$, we refer to quantum registers over the Hilbert space $\mathcal{H} = \big(\mathbb{C}^2\big)^{\otimes n}$ as $n$-qubit states. More generally, we associate \emph{qudits} of dimension $d \geq 2$ with a $d$-dimensional Hilbert space $\mathcal{H} = \mathbb{C}^d$. 
For brevity, we write $\algo H_d^n = \algo H_d^{\otimes n}$, where $\algo H_d$ is $d$-dimensional. We use the word \emph{quantum state} to refer to both pure states (unit vectors $\ket{\psi} \in \mathcal{H}$) and density matrices $\rho \in \mathcal{D}(\mathcal{H)}$, where we use the notation $\mathcal{D}(\mathcal{H)}$ to refer to the space of positive semidefinite matrices of unit trace acting on $\algo H$. 
Occasionally, we consider \emph{subnormalized states}, i.e. states in the space of positive semidefinite operators over $\algo H$ with trace norm not exceeding $1$.

The \emph{trace distance} of two density matrices $\rho,\sigma \in \mathcal{D}(\mathcal{H)}$ is given by
$$
\TD(\rho,\sigma) = \frac{1}{2} \Tr\left[ \sqrt{ (\rho - \sigma)^\dag (\rho - \sigma)}\right].
$$

Let $q \geq 2$ be a modulus and $n \in \N$ and let $\omega_q = e^{ \frac{2 \pi i}{q}} \in \mathbb{C}$ denote the primitive $q$-th root of unity.
The $n$-qudit \emph{$q$-ary quantum Fourier transform} over the ring $\Z_q^n$ is defined by the operation,
$$
\FT_q : \quad \ket{\vec x} \quad \mapsto \quad \sqrt{q^{-n}} \displaystyle\sum_{\vec y \in \Z_q^n} \omega_q^{\langle \vec x,\vec y\rangle} \ket{\vec y}, \quad\quad \forall \vec x \in \Z_q^n.
$$
The $q$-ary quantum Fourier transform is \emph{unitary} and can be efficiently performed on a quantum computer for any modulus $q \geq 2$~\cite{892139}.

A quantum channel $\Phi: \algo L(\algo H_A) \rightarrow \algo L(\algo H_B)$ is a linear map between linear operators over the Hilbert spaces $\algo H_A$ and $\algo H_B$. Oftentimes, we use the compact notation $\Phi_{A \rightarrow B}$ to denote a quantum channel between $\algo L(\algo H_A)$ and $\algo L(\algo H_B)$. We say that a channel $\Phi$ is \emph{completely positive} if, for a reference system $R$ of arbitrary size, the induced map $I_R \otimes \Phi$ is positive, and we call it \emph{trace-preserving} if $\Tr[\Phi(X)] = \Tr[X]$, for all $X \in \algo L(\algo H)$. A quantum channel that is both completely positive and trace-preserving is called a quantum $\CPTP$ channel. 

A polynomial-time \emph{uniform} quantum algorithm (or $\QPT$ algorithm) is a polynomial-time family of quantum circuits given by $\algo C = \{C_\lambda\}_{\lambda \in \N}$, where each circuit $C \in \algo C$ is described by a sequence of unitary gates and measurements; moreover, for each $\lambda \in \N$, there exists a deterministic polynomial-time Turing machine that, on input $1^\lambda$, outputs a circuit description of $C_\lambda$. Similarly, we also define (classical) probabilistic polynomial-time $(\PPT)$ algorithms. A quantum algorithm may, in general, receive (mixed) quantum states as inputs and produce (mixed) quantum states as outputs. Occasionally, we restrict $\QPT$ algorithms implicitly; for example, if we write $\Pr[\mathcal{A}(1^{\lambda}) = 1]$ for a $\QPT$ algorithm $\mathcal{A}$, it is implicit that $\mathcal{A}$ is a $\QPT$ algorithm that outputs a single classical bit.

A polynomial-time \emph{non-uniform} quantum algorithm is a family $\{(C_\lambda,\nu_\lambda)\}_{\lambda \in \N}$, where $\{C_\lambda \}_{\lambda \in \N}$ is a polynomial-size (not necessarily uniformly generated) family of circuits where, for each $\lambda \in \N$, a subset of input qubits to $C_\lambda$ consists of a polynomial-size auxiliary density matrix $\nu_\lambda$.
We use the following notion of indistinguishability of quantum states
in the presence of auxiliary inputs.

\begin{definition}[Indistinguishability of ensembles of quantum states, \cite{https://doi.org/10.48550/arxiv.quant-ph/0511020}]
\label{def: indistinguishability}
Let $p: \mathbb{N} \rightarrow \mathbb{N}$ be a polynomially bounded function,
and let $\rho_\lambda$ and $\sigma_\lambda$
be $p(\lambda)$-qubit quantum states. We say that $\{\rho_{\lambda}\}_{\lambda \in \mathbb{N}}$ and $\{\sigma_\lambda\}_{\lambda \in \mathbb{N}}$ are quantum computationally indistinguishable ensembles of quantum states, denoted by $\rho_{\lambda} \approx_c \sigma_{\lambda}\,,$
if, for any $\QPT$ distinguisher $\mathcal{D}$ with single-bit output, any polynomially bounded $q: \mathbb{N} \rightarrow \mathbb{N}$, any family of $q(\lambda)$-qubit auxiliary states $\{\nu_{\lambda}\}_{\lambda \in \mathbb{N}}$, and every $\lambda \in \mathbb{N}$,
$$ \big| \Pr[\mathcal{D}(\rho_{\lambda} \otimes \nu_{\lambda})=1] - \Pr[\mathcal{D}(\sigma_{\lambda} \otimes \nu_{\lambda})=1] \big| \leq \negl(\lambda) \,.$$
\end{definition}

\begin{lemma}["Almost As Good As New" Lemma, \cite{aaronson2016complexity}]\label{lem:almost} Let $\rho \in \algo D(\algo H)$ be a density matrix over a Hilbert space $\algo H$. Let $U$ be an arbitrary unitary and let $(\bm{\Pi}_0,\bm{\Pi}_1 = \vec I - \bm{\Pi}_0)$ be projectors acting on $\algo H \otimes \algo H_\aux$. We interpret $(U,\bm{\Pi}_0,\bm{\Pi}_1)$ as a measurement performed by appending an ancillary system in the state $\ketbra{0}{0}_\aux$, applying the unitary $U$ and subsequently performing the two-outcome measurement $\{\bm{\Pi}_0,\bm{\Pi}_1\}$ on the larger system. Suppose that the outcome corresponding to $\bm{\Pi}_0$ occurs with probability $1-\eps$, for some $\eps \in [0,1]$. In other words, it holds that $\Tr[\boldsymbol{\Pi}_0(U \rho \otimes \ketbra{0}{0}_\aux U^\dag)] = 1- \eps$. Then,
$$
\TD(\rho,\widetilde{\rho}) \leq \sqrt{\eps},
$$
where $\widetilde{\rho}$ is the state after performing the measurement and applying $U^\dag$, and after tracing out $\algo H_\aux$:
$$
\widetilde{\rho} = \Tr_\aux\left[U^\dag \left( 
\bm{\Pi}_0 U (\rho \otimes \ketbra{0}{0}_\aux)U^\dag \bm{\Pi}_0 + \bm{\Pi}_1 U (\rho \otimes  \ketbra{0}{0}_\aux)U^\dag \bm{\Pi}_1 
\right)U \right].
$$
\end{lemma}

\begin{lemma}[Quantum Union Bound, \cite{Gao15}]\label{lem:union} Let $\rho \in \algo D(\algo H)$ be a state and let $\bm \Pi_1,\dots,\bm \Pi_n \geq 0$ be sequence of (orthogonal) projections acting on $\algo H$. Suppose that, for every $i \in [n]$, it holds that $\Tr[\bm \Pi_i \rho] = 1 - \eps_i$, for $\eps_i \in [0,1]$. Then, if we sequentially measure $\rho$ with projective measurements $\{\bm \Pi_1, \vec I - \bm \Pi_1\}, \dots, \{\bm \Pi_n, \vec I - \bm \Pi_n\}$, the probability that all measurements succeed is at least
$$
\Tr[\bm \Pi_n \cdots \bm \Pi_1 \rho \bm \Pi_1 \cdots \bm \Pi_n ] \geq 1- 4 \sum_{i =1}^n \eps_i. 
$$

\end{lemma}

\subsection{Lattices and Cryptography}\label{sec:lattices} Let $n,m,p,q \in \N$ be positive integers. The rounding operation for $q \geq p \geq 2$ is the function
$$
\lfloor \cdot \rfloor_p \,\, : \,\, \Z_q \rightarrow \Z_p \,\, : \,\, x \mapsto \lfloor (p/q) \cdot x \rfloor \Mod{p}.
$$
A $\emph{lattice}$ $\Lambda \subset \mathbb{R}^m$ is a discrete subgroup of $\mathbb{R}^m$. 
Given a lattice $\Lambda \subset \mathbb{R}^m$ and a vector $\vec t \in \mathbb{R}^m$, we define the coset with respect to vector $\vec t $ as the lattice shift $\Lambda - \vec t = \{\vec x \in \mathbb{R}^m :\, \vec x + \vec t \in \Lambda\}$. Note that many different shifts $\vec t$ can define the same coset. The \emph{dual} of a lattice $\Lambda \subset \mathbb{R}^m$, denoted by $\Lambda^*$, is the lattice of all $y \in \mathbb{R}^m$ that satisfy $\langle \vec y,\vec x \rangle \in \Z$, for every $\vec x \in \Lambda$. In other words, we let
$$
\Lambda^* = \left\{ \vec y \in \mathbb{R}^m \, : \, \langle \vec y,\vec x \rangle \in \Z, \text{ for all }  \vec x \in \Lambda\right\}.
$$

In this work, we mainly consider \emph{$q$-ary lattices} $\Lambda$ that that satisfy $q\Z^m \subseteq \Lambda \subseteq \Z^m$, for some integer modulus $q \geq 2$. Specifically, we consider the lattice generated by a matrix $\vec A \in \Z_q^{n \times m}$ for some $n,m \in \N$ that consists of all vectors which are perpendicular to the rows of $\vec A$, namely
$$
\Lambda_q^\bot(\vec A) = \{\vec x \in \Z^m: \, \vec A \cdot \vec x = \vec 0 \Mod{q}\}.
$$
For any \emph{syndrome} $\vec y \in \Z_q^n$ in the column span of $\vec A$, we also consider the coset $\Lambda_q^{\vec y}(\vec A)$ given by
$$
\Lambda_q^{\vec y}(\vec A) = \{\vec x \in \Z^m: \, \vec A \cdot \vec x = \vec y \Mod{q}\} = \Lambda_q^\bot(\vec A) + \vec c,
$$
where $\vec c\in \Z^m$ is an arbitrary integer solution to the equation $\vec A \vec c = \vec y \Mod{q}$.

We use the following facts due to Gentry, Peikert and Vaikuntanathan~\cite{cryptoeprint:2007:432}.

\begin{lemma}[\cite{cryptoeprint:2007:432}, Lemma 5.1]\label{lem:full-rank}
Let $n\in \N$ and let $q \geq 2$ be a prime modulus with $m \geq 2 n \log q$. Then, for all but a $q^{-n}$ fraction of $\vec A \in \Z_q^{n \times m}$, the subset-sums of the columns of $\vec A$ generate $\Z_q^n$. In other words, a uniformly random matrix $\vec A \rand \Z_q^{n \times m}$ is full-rank with overwhelming probability.
\end{lemma}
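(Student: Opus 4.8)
The plan is to reduce the statement to a short counting argument over the field $\Z_q$. First I would observe that the subgroup of $\Z_q^n$ generated by all subset-sums of the columns of $\vec A$ coincides with the subgroup generated by the columns themselves: each column is the subset-sum of a singleton, and conversely every subset-sum already lies in the column span. Since $q$ is prime, $\Z_q$ is a field, so this subgroup is in fact a $\Z_q$-linear subspace; hence it equals all of $\Z_q^n$ if and only if the columns of $\vec A$ span $\Z_q^n$, i.e.\ $\vec A$ has full row rank $n$ over $\Z_q$. It therefore suffices to upper bound the probability that a uniformly random $\vec A \rand \Z_q^{n\times m}$ is rank-deficient.

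Next I would characterize rank-deficiency by duality: $\vec A$ fails to have full row rank precisely when there exists a nonzero $\vec s \in \Z_q^n$ with $\vec s^\intercal \vec A = \vec 0 \Mod q$. For a fixed nonzero $\vec s$ and a uniformly random column $\vec a \rand \Z_q^n$, primality of $q$ guarantees that the linear functional $\vec a \mapsto \langle \vec s, \vec a\rangle$ is surjective onto $\Z_q$, so $\langle \vec s, \vec a\rangle$ is uniform in $\Z_q$ and vanishes with probability exactly $1/q$. The $m$ columns of $\vec A$ are independent, so $\Pr_{\vec A}[\vec s^\intercal \vec A = \vec 0] = q^{-m}$.

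Finally I would take a union bound over the at most $q^n$ nonzero vectors $\vec s$, which gives
\[
\Pr_{\vec A \rand \Z_q^{n\times m}}\big[\vec A \text{ is not full rank}\big] \;\le\; (q^n - 1)\, q^{-m} \;<\; q^{\,n-m}.
\]
Substituting the hypothesis $m \ge 2 n \log q$ yields $q^{\,n-m} \le q^{\,n - 2n\log q} = q^{-n(2\log q - 1)} \le q^{-n}$, where the last step uses $q \ge 2$, so that $2\log q - 1 \ge 1$. This is exactly the claimed bound of a $q^{-n}$ fraction of bad matrices, and since $q^{-n}$ decays (sub)exponentially, a uniformly random $\vec A$ is full-rank with overwhelming probability.

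There is no genuine obstacle here; the only steps that deserve care are the opening reduction from ``subset-sums generate'' to ``columns span'' — which relies on $q$ being prime so that the generated subgroup is automatically a linear subspace — and checking the final arithmetic so that the exponent is at least $n$ under the assumption $m \ge 2n\log q$.
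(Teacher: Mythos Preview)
Your argument is correct and is the standard proof of this fact. Note that the paper does not actually supply its own proof: the lemma is simply quoted from \cite{cryptoeprint:2007:432} (Lemma 5.1), so there is nothing to compare against beyond observing that your counting-plus-union-bound argument is exactly what underlies the cited result. The only minor point worth flagging is that your final inequality $2\log q - 1 \ge 1$ implicitly assumes $\log$ is base $2$ (as is standard in this literature); with that convention the arithmetic goes through as written.
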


\paragraph{Gaussian Distribution.} 

The \emph{Gaussian measure} $\rho_\sigma$ with parameter $\sigma > 0$ is defined as
\begin{align*}
\rho_\sigma(\vec x) = \exp(-\pi \|\vec x \|^2/ \sigma^2), \quad \,\, \forall \vec x \in \mathbb{R}^m.    
\end{align*}
Let $\Lambda \subset \mathbb{R}^m$ be a lattice and let $\vec t \in \mathbb{R}^m$. We define the \emph{Gaussian mass} of $\Lambda - \vec t$ as the quantity
\begin{align*}
\rho_\sigma(\Lambda - \vec t) = \sum_{\vec y \in \Lambda}\rho_\sigma(\vec y- \vec t).
\end{align*}

The \emph{discrete Gaussian distribution} $D_{\Lambda - \vec t,\sigma}$ assigns probability proportional to $e^{-\pi \|\vec x \|^2/ \sigma^2}$ to every vector $\vec x \in \Lambda - \vec t$. In other words, we have
$$
D_{\Lambda - \vec t,\sigma} (\vec x)= \frac{\rho_\sigma(\vec x)}{\rho_\sigma(\Lambda - \vec t)}, \quad \,\, \forall \vec x \in \Lambda - \vec t.
$$
In particular, for any coset $\Lambda_q^{\vec y}(\vec A)$ with $\vec y \in \Z_q^n$, the discrete Gaussian  $D_{\Lambda_q^{\vec y}(\vec A), \sigma}$ (centered around the origin) assigns probability proportional to $e^{-\pi \|\vec x \|^2/ \sigma^2}$ to every vector $\vec x \in \Lambda_q^{\vec y}(\vec A)$, and $0$ otherwise.

The following lemma follows from \cite[Lemma 2.11]{10.1007/11681878_8} and \cite[Lemma 5.3]{cryptoeprint:2007:432}.

\begin{lemma}\label{lem:Gaussian1}
Let $n \in \N$ and let $q$ be a prime with $m \geq 2 n\log q$. Let $\vec A \in \Z_q^{n \times m}$ be a matrix whose columns generate $\Z_q^n$. Let $\vec y \in \Z_q^n$ be arbitrary. Then, for any $\sigma \geq \omega(\sqrt{\log m})$, there exists a negligible function $\eps(m)$ such that
$$
D_{\Lambda_q^{\vec y}(\vec A),\sigma}(\vec x) \, \leq \, 2^{-m} \cdot \frac{1+\eps}{1-\eps}, \quad\quad \forall \, \vec x \in \Lambda_q^{\vec y}(\vec A).
$$
\end{lemma}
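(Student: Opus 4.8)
The plan is to prove the pointwise bound directly from the definition of the discrete Gaussian as a ratio, controlling numerator and denominator separately. For any $\vec x$ we have $D_{\Lambda_q^{\vec y}(\vec A),\sigma}(\vec x) = \rho_\sigma(\vec x)/\rho_\sigma(\Lambda_q^{\vec y}(\vec A))$ when $\vec x \in \Lambda_q^{\vec y}(\vec A)$, and $D_{\Lambda_q^{\vec y}(\vec A),\sigma}(\vec x)=0$ otherwise, so it suffices to treat $\vec x$ lying in the coset. The numerator satisfies $\rho_\sigma(\vec x) \leq \rho_\sigma(\vec 0) = 1$ since $\rho_\sigma \leq 1$ everywhere, so everything reduces to establishing a lower bound of the form $\rho_\sigma(\Lambda_q^{\vec y}(\vec A)) \geq 2^m \cdot \frac{1-\eps}{1+\eps}$ for a suitable negligible $\eps = \eps(m)$; the claimed inequality $D_{\Lambda_q^{\vec y}(\vec A),\sigma}(\vec x) \leq 2^{-m}\cdot\frac{1+\eps}{1-\eps}$ then follows at once.

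For the lower bound I would first fix $\eps = \eps(m)$ negligible with $\omega(\sqrt{\log m}) \geq \eta_\eps(\Lambda_q^\bot(\vec A))$, where $\eta_\eps$ denotes the smoothing parameter; that such an $\eps$ exists is the standard bound on the smoothing parameter of $q$-ary lattices for prime $q$ and $m \geq 2n\log q$, which is exactly where \cite[Lemma 5.3]{cryptoeprint:2007:432}, building on \cite[Lemma 2.11]{10.1007/11681878_8}, is used. Writing $\Lambda_q^{\vec y}(\vec A) = \Lambda_q^\bot(\vec A) + \vec c$ for an arbitrary integer solution $\vec c$ of $\vec A \vec c = \vec y \Mod{q}$, the smoothing lemma \cite[Lemma 2.11]{10.1007/11681878_8} gives
$$
\rho_\sigma(\Lambda_q^{\vec y}(\vec A)) \;=\; \rho_\sigma\bigl(\Lambda_q^\bot(\vec A) + \vec c\bigr) \;\geq\; \tfrac{1-\eps}{1+\eps}\,\rho_\sigma\bigl(\Lambda_q^\bot(\vec A)\bigr).
$$
Next I would lower bound $\rho_\sigma(\Lambda_q^\bot(\vec A))$ by Poisson summation: since the Fourier transform of $\rho_\sigma$ is $\sigma^m \rho_{1/\sigma}$, we get $\rho_\sigma(\Lambda_q^\bot(\vec A)) = \frac{\sigma^m}{\det \Lambda_q^\bot(\vec A)}\sum_{\vec w \in (\Lambda_q^\bot(\vec A))^*} \rho_{1/\sigma}(\vec w) \geq \frac{\sigma^m}{\det \Lambda_q^\bot(\vec A)}$, keeping only the $\vec w = \vec 0$ term (no smoothing hypothesis is needed for this direction). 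Because the columns of $\vec A$ generate $\Z_q^n$, the index of $\Lambda_q^\bot(\vec A)$ in $\Z^m$ equals $q^n$, so $\det \Lambda_q^\bot(\vec A) = q^n$, and $m \geq 2n\log q$ gives $q^n = 2^{n\log q} \leq 2^{m/2}$. Hence $\rho_\sigma(\Lambda_q^\bot(\vec A)) \geq \sigma^m/2^{m/2} = (\sigma/\sqrt 2)^m \geq 2^m$, the last step using $\sigma = \omega(\sqrt{\log m}) \geq 2\sqrt 2$ for all sufficiently large $m$. Chaining the inequalities yields $\rho_\sigma(\Lambda_q^{\vec y}(\vec A)) \geq 2^m \cdot \frac{1-\eps}{1+\eps}$, which is what we wanted.

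I expect the only genuine subtlety — the remaining steps being entirely mechanical — is the first one, namely securing $\eta_\eps(\Lambda_q^\bot(\vec A)) \leq \omega(\sqrt{\log m})$ with $\eps$ negligible. Note that ``the columns of $\vec A$ generate $\Z_q^n$'' is not literally enough on its own: a full-rank $\vec A$ with two nearly-equal rows has a very short dual lattice vector and hence a large smoothing parameter, which would break the coset step. The intended hypothesis, consistent with how the lemma is applied in this paper, is that $\vec A$ is drawn (statistically close to) uniformly, in which case the cited results give $\eta_\eps(\Lambda_q^\bot(\vec A)) \leq \omega(\sqrt{\log m})$ with overwhelming probability for prime $q$ and $m \geq 2n\log q$. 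With that in hand, the ratio expansion, the bound $\rho_\sigma \leq 1$, the coset/smoothing inequality, Poisson summation, and the crude estimate $q^n \leq 2^{m/2}$ finish the argument; in fact they show that any Gaussian parameter that is simultaneously super-constant and at least $\eta_\eps(\Lambda_q^\bot(\vec A))$ suffices.
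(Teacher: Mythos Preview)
Your argument is correct and is precisely the route the paper takes by citation: it does not spell out a proof but defers to the two referenced lemmas, which together furnish exactly the smoothing-parameter bound $\eta_\eps(\Lambda_q^\bot(\vec A)) \leq \omega(\sqrt{\log m})$ and the coset-mass comparison $\rho_\sigma(\Lambda_q^\bot(\vec A)+\vec c) \geq \tfrac{1-\eps}{1+\eps}\,\rho_\sigma(\Lambda_q^\bot(\vec A))$ that you invoke, with your Poisson-summation lower bound and the arithmetic $q^n \leq 2^{m/2}$, $\sigma \geq 2\sqrt 2$ completing the estimate. Your caveat that ``columns generate $\Z_q^n$'' is not by itself enough to control the smoothing parameter --- one really needs (near\nobreakdash-)uniform $\vec A$, which is indeed how the lemma is applied throughout the paper --- is correct and points to an imprecision in the stated hypothesis rather than a gap in your reasoning.
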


Let $\algo B^m(\vec 0, r) = \{ \vec x \in \mathbb{R}^m \, : \, \|\vec x\| \leq r\}$ denote the $m$-dimensional ball of radius $r > 0$.
We use of the following tail bound for the Gaussian mass of a lattice \cite[Lemma 1.5 (ii)]{Banaszczyk1993}.

\begin{lemma}\label{lem:gaussian-tails}
For any $m$-dimensional lattice $\Lambda$, shift $\vec t \in \mathbb{R}^m$, $\sigma>0$ and $c \geq (2 \pi)^{-\frac{1}{2}}$ it holds that
$$
\rho_\sigma \left( (\Lambda - \vec t)\setminus \algo B^m(\vec 0, c \sqrt{m} \sigma) \right) \leq (2 \pi e c^2)^{\frac{m}{2}} e^{- \pi c^2m} \rho_\sigma(\Lambda).
$$
\end{lemma}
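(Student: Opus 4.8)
The plan is to follow Banaszczyk's classical argument, which rests on two ingredients: an elementary exponential-moment (Chernoff-type) trick that trades the hard cutoff $\|\vec x\| > c\sqrt m\,\sigma$ for a slightly wider Gaussian, and Poisson summation to control the mass of that wider Gaussian on the shifted lattice. First I would normalize the width: replacing $\Lambda$ by $\sigma^{-1}\Lambda$ and $\vec t$ by $\sigma^{-1}\vec t$, and observing that $\sigma^{-1}\algo B^m(\vec 0, c\sqrt m\,\sigma) = \algo B^m(\vec 0, c\sqrt m)$ while $\rho_\sigma(\Lambda-\vec t) = \rho_1(\sigma^{-1}\Lambda - \sigma^{-1}\vec t)$, it suffices to prove the statement for $\sigma = 1$, i.e.\ that $\rho_1\big((\Lambda-\vec t)\setminus \algo B^m(\vec 0, c\sqrt m)\big) \le (2\pi e c^2)^{m/2} e^{-\pi c^2 m}\,\rho_1(\Lambda)$.

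Next, fix a free parameter $\tau \in (0,1)$. For every lattice-coset point $\vec x$ with $\|\vec x\| > c\sqrt m$ we have $1 \le e^{\pi\tau(\|\vec x\|^2 - c^2 m)}$, and therefore
$$
\rho_1\big((\Lambda-\vec t)\setminus\algo B^m(\vec 0,c\sqrt m)\big) \;\le\; e^{-\pi\tau c^2 m}\!\!\sum_{\vec x\in \Lambda-\vec t}\! e^{-\pi(1-\tau)\|\vec x\|^2} \;=\; e^{-\pi\tau c^2 m}\,\rho_{(1-\tau)^{-1/2}}(\Lambda-\vec t).
$$
Now I would invoke two standard consequences of Poisson summation for the Gaussian: (i) the shifted Gaussian mass is maximized at the origin, $\rho_s(\Lambda-\vec t)\le\rho_s(\Lambda)$ for all $\vec t$; and (ii) for any $s\ge 1$, $\rho_s(\Lambda)\le s^m\rho_1(\Lambda)$. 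Both follow by writing $\rho_s(\Lambda-\vec t) = \tfrac{s^m}{\det\Lambda}\sum_{\vec w\in\Lambda^*}\rho_{1/s}(\vec w)\,e^{2\pi i\langle\vec w,\vec t\rangle}$, taking absolute values, and comparing to the $\vec t=\vec 0$ and $s=1$ instances. Applying these with $s = (1-\tau)^{-1/2}\ge 1$ yields $\rho_{(1-\tau)^{-1/2}}(\Lambda-\vec t)\le(1-\tau)^{-m/2}\rho_1(\Lambda)$, hence $\rho_1\big((\Lambda-\vec t)\setminus\algo B^m(\vec 0,c\sqrt m)\big) \le e^{-\pi\tau c^2 m}(1-\tau)^{-m/2}\,\rho_1(\Lambda)$.

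Finally I would optimize over $\tau$: minimizing $-\pi\tau c^2 m - \tfrac m2\ln(1-\tau)$ gives $1-\tau = (2\pi c^2)^{-1}$, which lies in $(0,1)$ precisely because $c \ge (2\pi)^{-1/2}$ (at the boundary $c=(2\pi)^{-1/2}$ one has $\tau=0$ and the bound degenerates to the trivial $\rho_1(\Lambda)$, consistent with the claimed RHS). Substituting $\tau = 1-(2\pi c^2)^{-1}$ gives $e^{-\pi\tau c^2 m} = e^{m/2}e^{-\pi c^2 m}$ and $(1-\tau)^{-m/2} = (2\pi c^2)^{m/2}$, whose product is exactly $(2\pi e c^2)^{m/2}e^{-\pi c^2 m}$, proving the lemma after undoing the rescaling from the first step. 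The step I expect to need the most care is the pair of Poisson-summation facts (i)--(ii): they are textbook, but one must track the $\det\Lambda$ factor and the passage to the dual lattice correctly; everything else is the one-line tail trick and a single-variable optimization. Alternatively, the statement can simply be cited verbatim from \cite[Lemma 1.5 (ii)]{Banaszczyk1993}.
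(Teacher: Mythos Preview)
Your proof is correct and is essentially Banaszczyk's original argument. The paper itself does not prove this lemma at all; it merely states it with the citation \cite[Lemma 1.5 (ii)]{Banaszczyk1993}, which you also note as the alternative at the end of your proposal.
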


In addition, we also make use of the following tail bound for the discrete Gaussian which follows from \cite[Lemma 4.4]{1366257} and \cite[Lemma 5.3]{cryptoeprint:2007:432}.

\begin{lemma}\label{lem:Gaussian2}
Let $n \in \N$ and let $q$ be a prime with $m \geq 2 n\log q$. Let $\vec A \in \Z_q^{n \times m}$ be a matrix whose columns generate $\Z_q^n$. Let $\vec y \in \Z_q^n$ be arbitrary. Then, for any $\sigma \geq \omega(\sqrt{\log m})$, there exists a negligible function $\eps(m)$ such that
$$
\Pr_{\vec x \sim D_{\Lambda_q^{\vec y}(\vec A),\sigma}}\Big[ \|\vec x\| > \sigma \sqrt{m} \Big] \leq 2^{-m} \cdot \frac{1+\eps}{1-\eps}.
$$
\end{lemma}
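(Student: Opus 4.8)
The plan is to reduce the claim to two standard ingredients: the Banaszczyk-type Gaussian tail bound already recorded as Lemma~\ref{lem:gaussian-tails}, and the fact that, above the smoothing parameter, every coset of a lattice carries essentially the same Gaussian mass.

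First I would fix an integer solution $\vec c \in \Z^m$ of $\vec A \vec c = \vec y \Mod q$ (which exists since the columns of $\vec A$ generate $\Z_q^n$), so that $\Lambda_q^{\vec y}(\vec A) = \Lambda_q^\bot(\vec A) + \vec c$, and rewrite the tail probability straight from the definition of the discrete Gaussian:
$$
\Pr_{\vec x \sim D_{\Lambda_q^{\vec y}(\vec A),\sigma}}\!\big[\|\vec x\| > \sigma\sqrt m\big] \;=\; \frac{\rho_\sigma\!\big(\Lambda_q^{\vec y}(\vec A) \setminus \algo B^m(\vec 0, \sigma \sqrt m)\big)}{\rho_\sigma\!\big(\Lambda_q^{\vec y}(\vec A)\big)}.
$$
For the numerator I would invoke Lemma~\ref{lem:gaussian-tails} with the lattice $\Lambda_q^\bot(\vec A)$, shift $\vec t = -\vec c$, and constant $c = 1$ (admissible as $1 \ge (2\pi)^{-1/2}$), yielding the bound $(2\pi e)^{m/2} e^{-\pi m}\,\rho_\sigma(\Lambda_q^\bot(\vec A))$; since the numerical inequality $(2\pi e)^{1/2} e^{-\pi} < \tfrac{1}{2}$ holds, this is at most $2^{-m}\,\rho_\sigma(\Lambda_q^\bot(\vec A))$.

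For the denominator I need the lower bound $\rho_\sigma(\Lambda_q^\bot(\vec A) + \vec c) \ge \tfrac{1-\eps}{1+\eps}\,\rho_\sigma(\Lambda_q^\bot(\vec A))$, which is the usual Poisson-summation smoothing estimate and is valid for every shift once $\sigma \ge \eta_\eps(\Lambda_q^\bot(\vec A))$. The only substantive point — and the sole place where the hypotheses $q$ prime, $m \ge 2n\log q$, and ``columns of $\vec A$ generate $\Z_q^n$'' really enter — is certifying that any $\sigma \ge \omega(\sqrt{\log m})$ already lies above the smoothing parameter $\eta_\eps(\Lambda_q^\bot(\vec A))$ for some negligible $\eps = \eps(m)$; this is exactly \cite[Lemma 5.3]{cryptoeprint:2007:432}. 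Dividing the numerator and denominator estimates then gives $\tfrac{1+\eps}{1-\eps}\cdot 2^{-m}$, as required. I do not expect a real obstacle here; the one point to keep straight is that $D_{\Lambda_q^{\vec y}(\vec A),\sigma}$ is the origin-centered Gaussian restricted to the coset $\Lambda_q^{\vec y}(\vec A)$, so that $\|\vec x\|$ — and not $\|\vec x - \vec c\|$ — is the quantity being controlled. This also gives a shorter alternative: once $\sigma \ge \eta_\eps(\Lambda_q^\bot(\vec A))$ has been established, one can simply apply \cite[Lemma 4.4]{1366257} to the lattice $\Lambda_q^\bot(\vec A)$ with center $-\vec c$, and its length bound translates verbatim into the claimed bound on $\|\vec x\|$, bypassing the explicit mass computation above.
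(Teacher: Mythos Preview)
Your proposal is correct and matches the paper's approach: the paper simply states that the lemma follows from \cite[Lemma 4.4]{1366257} and \cite[Lemma 5.3]{cryptoeprint:2007:432}, which is precisely your ``shorter alternative'' at the end. Your more explicit first route via Lemma~\ref{lem:gaussian-tails} and the smoothing mass comparison is just an unpacking of what \cite[Lemma 4.4]{1366257} already packages, so there is no substantive difference.
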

A consequence of \Cref{lem:gaussian-tails} is that the Gaussian distribution $D_{\Z^m,\sigma}$ is essentially only supported on the finite set $\{\vec x \in \Z^m : \|\vec x\| \leq \sigma \sqrt{m}\}$, which suggests the use of \emph{truncation}.
\begin{definition}[Truncated discrete Gaussian distribution]
Let $m \in \N$, $q \geq 2$ be an integer modulus and let $\sigma >0$ be a parameter. Then, the \emph{truncated} discrete Gaussian distribution $D_{\Z_q^m,\sigma}$ with finite support $\{\vec x \in \Z^m \cap (-\frac{q}{2},\frac{q}{2}]^m : \|\vec x\| \leq \sigma \sqrt{m}\}$ is defined as the density
$$
D_{\Z_q^m,\sigma}(\vec x) = \frac{\rho_\sigma(\vec x)}{\displaystyle\sum_{\vec y \in \Z_q^m,\|\vec y\| \leq \sigma\sqrt{m} } \rho_\sigma(\vec y)}.
$$    
\end{definition}
Finally, we recall the following \emph{noise smudging} property. 

\begin{lemma}[Noise smudging, \cite{Dodis10}]\label{lem:shifted-gaussian}
Let $y,\sigma > 0$. Then, the statistical distance between the distribution $D_{\Z,\sigma}$ and $D_{\Z,\sigma+y}$ is at most $y/\sigma$.
\end{lemma}

We use the following technical lemma on the min-entropy of the truncated discrete Gaussian distribution, which we prove below.

\begin{lemma}
\label{lem:max:ampl:bound}
Let $n \in \N$ and let $q$ be a prime with $m \geq 2 n\log q$. Let $\vec A \in \Z_q^{n \times m}$ be a matrix whose columns generate $\Z_q^n$. Then, for any $\sigma \geq \omega(\sqrt{\log m})$, there exists a negligible $\eps(m)$ such that
$$
\max_{\vec y \in \Z_q^n} \,\,\max_{\substack{\vec x \in \Z_q^m, \, \|\vec x\| \leq \sigma \sqrt{m}\\
\vec A \vec x = \vec y \Mod{q}}}
\left\{
\frac{\rho_\sigma(\vec x)}{\displaystyle\sum_{\substack{\vec z \in \Z_q^m,\|\vec z\| \leq \sigma\sqrt{m}\\
\vec A \vec z = \vec y \Mod{q}}} \rho_\sigma(\vec z)} \right\} \,\,\leq \,\,  2^{-m+1} \cdot \frac{1+\eps}{1-\eps}.
$$
\end{lemma}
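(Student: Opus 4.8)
The plan is to recognize the quantity inside the maximum as the \emph{conditional} point mass that the discrete Gaussian over the coset $\Lambda_q^{\vec y}(\vec A)$ assigns to $\vec x$, given that the drawn vector is short. Lemma~\ref{lem:Gaussian1} then bounds the unconditional point mass, while Lemma~\ref{lem:Gaussian2} guarantees that the conditioning event occurs with probability $1-\negl(m)$, so that the renormalization costs only a factor of $2$.

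First I would fix an arbitrary $\vec y \in \Z_q^n$ and an arbitrary $\vec x \in \Z_q^m$ with $\|\vec x\| \leq \sigma\sqrt m$ and $\vec A \vec x = \vec y \Mod q$; it suffices to bound the displayed ratio for this fixed pair, since the bound obtained will be independent of the choice. Because the truncated discrete Gaussian is only considered for $\sigma \in (0, q/(2\sqrt m))$, every $\vec z \in \Z^m$ with $\|\vec z\| \leq \sigma\sqrt m$ has all coordinates in $(-q/2, q/2]$; hence the set $\{\vec z \in \Z_q^m : \|\vec z\| \leq \sigma\sqrt m,\ \vec A \vec z = \vec y \Mod q\}$ coincides with $S := \{\vec z \in \Lambda_q^{\vec y}(\vec A) : \|\vec z\| \leq \sigma\sqrt m\}$. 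Dividing numerator and denominator of the ratio by $\rho_\sigma(\Lambda_q^{\vec y}(\vec A)) > 0$ rewrites it as
$$
\frac{\rho_\sigma(\vec x)}{\sum_{\vec z \in S}\rho_\sigma(\vec z)} \;=\; \frac{D_{\Lambda_q^{\vec y}(\vec A),\sigma}(\vec x)}{\Pr_{\vec z \sim D_{\Lambda_q^{\vec y}(\vec A),\sigma}}\big[\vec z \in S\big]}.
$$

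Next I would bound the two factors separately. Since the columns of $\vec A$ generate $\Z_q^n$ and $\sigma \geq \omega(\sqrt{\log m})$, Lemma~\ref{lem:Gaussian1} gives $D_{\Lambda_q^{\vec y}(\vec A),\sigma}(\vec x) \leq 2^{-m}\cdot\frac{1+\eps_1}{1-\eps_1}$ for some negligible $\eps_1(m)$, and Lemma~\ref{lem:Gaussian2} gives $\Pr_{\vec z \sim D_{\Lambda_q^{\vec y}(\vec A),\sigma}}[\vec z \notin S] = \Pr[\|\vec z\| > \sigma\sqrt m] \leq 2^{-m}\cdot\frac{1+\eps_2}{1-\eps_2}$ for some negligible $\eps_2(m)$. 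Set $\eps := \max(\eps_1,\eps_2)$ and $a := 2^{-m}\cdot\frac{1+\eps}{1-\eps}$; as $t \mapsto \frac{1+t}{1-t}$ is increasing on $(0,1)$, both estimates hold with $\eps$ in place of $\eps_i$, so the ratio above is at most $a/(1-a)$ whenever $a < 1$. If $a \leq \tfrac12$, then $a/(1-a) \leq 2a = 2^{-m+1}\cdot\frac{1+\eps}{1-\eps}$; if instead $a > \tfrac12$, then $2^{-m+1}\cdot\frac{1+\eps}{1-\eps} = 2a > 1$, which trivially upper bounds the ratio since $\vec x \in S$ makes $\rho_\sigma(\vec x)$ one of the summands in the denominator. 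Either way the ratio is at most $2^{-m+1}\cdot\frac{1+\eps}{1-\eps}$, and since this bound does not depend on $\vec y$ or $\vec x$, taking both maxima completes the proof.

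Every step is a direct application of Lemmas~\ref{lem:Gaussian1} and~\ref{lem:Gaussian2} together with elementary manipulations, so I do not anticipate a genuine obstacle; the only points needing care are merging the two negligible functions $\eps_1,\eps_2$ into a single $\eps$, and observing that truncating to $\|\cdot\| \leq \sigma\sqrt m$ is exactly conditioning the coset discrete Gaussian on a probability-$(1-\negl(m))$ event — which is precisely what makes Lemma~\ref{lem:Gaussian2} enough to keep the normalizing constant bounded away from $0$.
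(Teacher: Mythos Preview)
Your proposal is correct and follows essentially the same approach as the paper: both arguments compare the truncated ratio to the untruncated coset discrete Gaussian by invoking Lemma~\ref{lem:Gaussian1} for the point mass and Lemma~\ref{lem:Gaussian2} for the tail. The only difference is organizational---the paper writes $\text{ratio} \le D_{\Lambda_q^{\vec y}(\vec A),\sigma}(\vec x) + \Pr[\|\vec v\|>\sigma\sqrt m]$ via a triangle-inequality step, whereas you write $\text{ratio} = D_{\Lambda_q^{\vec y}(\vec A),\sigma}(\vec x)/\Pr[\|\vec v\|\le\sigma\sqrt m]$ directly; both combine the same two $2^{-m}\cdot\frac{1+\eps}{1-\eps}$ bounds to get the factor of $2$.
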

\begin{proof}
Suppose that $\vec A \in \Z_q^{n \times m}$ is a matrix whose columns generate $\Z_q^n$, i.e., $\vec A$ is full-rank. Then,
\begin{align*}
&\max_{\vec y \in \Z_q^n} \,\max_{\substack{\vec x \in \Z_q^m, \, \|\vec x\| \leq \sigma \sqrt{m}\\
\vec A \vec x = \vec y \Mod{q}}}
\left\{
\frac{\rho_\sigma(\vec x)}{\displaystyle\sum_{\substack{\vec z \in \Z_q^m,\|\vec z\| \leq \sigma\sqrt{m}\\
\vec A \vec z = \vec y \Mod{q}}} \rho_\sigma(\vec z)} \right\}\\
&\leq \quad 
\max_{\vec y \in \Z_q^n}\sup_{\vec x \in \Lambda_q^{\vec y}(\vec A)}D_{\Lambda_q^{\vec y}(\vec A),\sigma}(\vec x) \\
&\quad+ \max_{\vec y \in \Z_q^n}\max_{\substack{\vec x \in \Z_q^m, \, \|\vec x\| \leq \sigma \sqrt{m}\\
\vec A \vec x = \vec y \Mod{q}}} \left| 
\frac{\rho_\sigma(\vec x)}{\displaystyle\sum_{\substack{\vec z \in \Z_q^m,\|\vec z\| \leq \sigma\sqrt{m}\\
\vec A \vec z = \vec y \Mod{q}}} \rho_\sigma(\vec z)}-
\frac{\rho_\sigma(\vec x)}{\displaystyle\sum_{\substack{\vec z \in \Z^m\\
\vec A \vec z = \vec y \Mod{q}}} \rho_\sigma(\vec z)}    \right|\\
&\leq \quad  \max_{\vec y \in \Z_q^n}\sup_{\vec x \in \Lambda_q^{\vec y}(\vec A)}D_{\Lambda_q^{\vec y}(\vec A),\sigma}(\vec x)\\
&\quad+ \max_{\vec y \in \Z_q^n}\max_{\substack{\vec x \in \Z_q^m, \, \|\vec x\| \leq \sigma \sqrt{m}\\
\vec A \vec x = \vec y \Mod{q}}}
\frac{\rho_\sigma(\vec x)}{\displaystyle\sum_{\substack{\vec z \in \Z_q^m,\|\vec z\| \leq \sigma\sqrt{m}\\
\vec A \vec z = \vec y \Mod{q}}} \rho_\sigma(\vec z)}
\cdot
\frac{\rho_\sigma(\Lambda_q^{\vec y}(\vec A)\setminus \algo B^m(\vec 0, \sigma \sqrt{m}))}{\rho_\sigma(\Lambda_q^{\vec y}(\vec A))}
\end{align*}
where $B^m(\vec 0, r) = \{ \vec x \in \mathbb{R}^m \, : \, \|\vec x\| \leq r\}$.
Using the fact that
$$
\frac{\rho_\sigma(\vec x)}{\displaystyle\sum_{\substack{\vec z \in \Z_q^m,\|\vec z\| \leq \sigma\sqrt{m}\\
\vec A \vec z = \vec y \Mod{q}}} \rho_\sigma(\vec z)} \,\,\leq \,\, 1,
$$
for $\vec x \in \Z_q^m$ with $\vec A\vec x = \vec y \Mod{q}$, and the fact that
$$
\Pr_{\vec v \sim D_{\Lambda_q^{\vec y}(\vec A),\sigma}}\Big[ \|\vec v\| > \sigma \sqrt{m} \Big] = 
\frac{\rho_\sigma(\Lambda_q^{\vec y}(\vec A)\setminus \algo B^m(\vec 0, \sigma \sqrt{m}))}{\rho_\sigma(\Lambda_q^{\vec y}(\vec A))}
$$
we get that
\begin{align*}
&\max_{\vec y \in \Z_q^n} \,\max_{\substack{\vec x \in \Z_q^m, \, \|\vec x\| \leq \sigma \sqrt{m}\\
\vec A \vec x = \vec y \Mod{q}}}
\left\{
\frac{\rho_\sigma(\vec x)}{\displaystyle\sum_{\substack{\vec z \in \Z_q^m,\|\vec z\| \leq \sigma\sqrt{m}\\
\vec A \vec z = \vec y \Mod{q}}} \rho_\sigma(\vec z)} \right\}\\
&\leq  \max_{\vec y \in \Z_q^n}\left\{  \sup_{\vec x \in \Lambda_q^{\vec y}(\vec A)} D_{\Lambda_q^{\vec y}(\vec A),\sigma}(\vec x) + \Pr_{\vec v \sim D_{\Lambda_q^{\vec y}(\vec A),\sigma}}\Big[ \|\vec v\| > \sigma \sqrt{m} \Big]
\right\}.
\end{align*}
Because $\sigma \geq \omega(\sqrt{\log m})$, the claim then follows from \Cref{lem:Gaussian1} and \Cref{lem:Gaussian2}.

\end{proof}

\paragraph{The Short Integer Solution problem.}

The \emph{Short Integer Solution} ($\SIS$) problem was introduced by Ajtai~\cite{DBLP:conf/stoc/Ajtai96} in his seminal work on average-case lattice problems. 

\begin{definition}[Short Integer Solution problem, \cite{DBLP:conf/stoc/Ajtai96}]\label{def:ISIS} Let $n,m \in \N$, $q\geq 2$ be a modulus and let $\beta >0$ be a parameter. The Short Integer Solution problem $(\SIS_{n,q,\beta}^m)$ problem is to find a short solution $\vec x \in \Z^m$ with $\|\vec x\| \leq \beta$ such that $\vec A \cdot \vec x = \vec 0 \Mod{q}$ given as input a matrix $\vec A \rand \Z_q^{n \times m}$.
\end{definition}

Micciancio and Regev~\cite{DBLP:journals/siamcomp/MicciancioR07} showed that the $\SIS$ problem is, on the average, as hard as approximating worst-case lattice problems to within small factors. Subsequently, Gentry, Peikert and Vaikuntanathan~\cite{cryptoeprint:2007:432} gave an improved reduction showing that, for parameters $m=\poly(n)$, $\beta=\poly(n)$ and prime $q \geq \beta \cdot \omega(\sqrt{n \log q})$, the average-case $\SIS_{n,q,\beta}^m$ problem is as hard as approximating the shortest independent vector problem $(\mathsf{SIVP})$ problem in the
worst case to within a factor $\gamma = \beta \cdot \tilde{O}(\sqrt{n})$.
We assume that $\SIS_{n,q,\beta}^m$, for $m=\Omega(n \log q)$, $\beta = 2^{o(n)}$ and $q=2^{o(n)}$, is hard against quantum adversaries running in time $\poly(q)$ with success probability $\poly(1/q)$.

\paragraph{The Learning with Errors problem.}

The \emph{Learning with Errors} problem was introduced by Regev~\cite{Regev05} and serves as the primary basis of hardness of post-quantum cryptosystems. The problem is defined as follows.

\begin{definition}[Learning with Errors problem, \cite{Regev05}]\label{def:decisional-lwe} Let $n,m \in \N$ be integers, let $q\geq 2$ be a modulus and let $\alpha \in (0,1)$ be a noise ratio parameter. The (decisional) Learning with Errors $(\LWE_{n,q,\alpha q}^m)$ problem is to distinguish between the following samples
$$
(\vec A \rand \Z_q^{n \times m},\vec s^\intercal \vec A+ \vec e^\intercal \Mod{q}) \quad \text{ and } \quad (\vec A \rand \Z_q^{n \times m},\vec u \rand \Z_q^m),\,\,
$$
where $\vec s \rand  \Z_q^n$ is a uniformly random vector and where $\vec e \sim D_{\Z^m,\alpha q}$ is a discrete Gaussian error vector. We rely on the quantum $\LWE_{n,q,\alpha q}^m$ assumption which states that the samples above are computationally indistinguishable for any $\QPT$ algorithm.

\end{definition}

As shown in \cite{Regev05}, the $\LWE_{n,q,\alpha q}^m$ problem with parameter $\alpha q \geq 2 \sqrt{n}$ is at least as hard as approximating the shortest independent vector problem $(\mathsf{SIVP})$ to within a factor of $\gamma = \widetilde{O}(n / \alpha)$ in worst case lattices of dimension $n$. In this work we assume the subexponential hardness of $\LWE_{n,q,\alpha q}^m$ which relies on the worst case hardness of approximating short vector problems in
lattices to within a subexponential factor. 
We assume that the $\LWE_{n,q,\alpha q}^m$ problem, for $m=\Omega(n \log q)$, $q=2^{o(n)}$, $\alpha=1/2^{o(n)}$,  is hard against quantum adversaries running in time $\poly(q)$. We note that this parameter regime implies $\SIS_{n,q,\beta}^m$~\cite{cryptoeprint:2009/285}.

\paragraph{Trapdoors for lattices.} We use the following \emph{trapdoor} property for the $\LWE$ problem.

\begin{theorem}[\cite{cryptoeprint:2011:501}, Theorem 5.1]\label{thm:gen-trap} Let  $n,m \in \N$ and $q \in \N$ be a prime with $m = \Omega(n \log q)$. There exists a randomized algorithms with the following properties:
\begin{itemize}
    \item $\GenTrap(1^n,1^m,q)$: on input $1^n, 1^m$ and $q$, returns a matrix $\vec A \in \Z_q^{n \times m}$ and a trapdoor $\mathsf{td}_{\vec A}$ such that the distribution of $\vec A$ is negligibly (in the parameter $n$) close to uniform.
    
    \item $\Invert(\vec A,\mathsf{td}_{\vec A},\vec b)$: on input $\vec A$, $\mathsf{td}_{\vec A}$ and $\vec b=\vec s^\intercal \cdot \vec A + \vec e^\intercal \Mod{q}$, where $\| \vec e \| \leq q/(C_T \sqrt{n \log q})$ and $C_T >0$ is a universal constant, returns $\vec s$ and $\vec e$ with overwhelming probability over $(\vec A,\mathsf{td}_{\vec A}) \leftarrow \GenTrap(1^n,1^m,q)$.
    \end{itemize}
\end{theorem}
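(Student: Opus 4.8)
The plan is to establish \Cref{thm:gen-trap} via the gadget-based trapdoor construction of Micciancio and Peikert, which underlies the cited reference. The central object is the \emph{gadget matrix} $\vec G = \vec I_n \otimes \vec g^\intercal \in \Zq^{n \times nk}$ with $\vec g^\intercal = (1,2,4,\dots,2^{k-1})$ and $k = \lceil \log_2 q \rceil$. The matrix $\vec G$ has an \emph{explicit} short basis of $\Lambda_q^\bot(\vec G)$, and I would first record the two consequences we need: (i) an efficient algorithm that, on input $\vec u \in \Zq^n$, outputs a short $\vec x$ with $\vec G \vec x = \vec u \Mod{q}$; and (ii) an efficient algorithm that, on input $\vec c^\intercal = \vec s^\intercal \vec G + \vec e^\intercal \Mod{q}$ with $\|\vec e\|_\infty < q/2$, recovers $\vec s$ and $\vec e$ (for the power-of-two gadget this is just coordinatewise rounding / bit extraction). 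These primitives make $\vec G$ a ``universal'' trapdoor.

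For $\GenTrap(1^n,1^m,q)$, we may assume $m$ is at least $nk$ plus a suitable $\bar m = \Theta(n\log q)$ (which is what the hypothesis $m = \Omega(n\log q)$ should be understood to provide; any further columns are appended as fresh uniform columns, which affects neither the trapdoor nor near-uniformity). Sample $\bar{\vec A} \rand \Zq^{n\times \bar m}$ and a short matrix $\vec R \in \Z^{\bar m \times nk}$ with entries from a narrow discrete Gaussian (or uniform over $\{-1,0,1\}$), and output
$$
\vec A = [\, \bar{\vec A} \,\mid\, \vec G - \bar{\vec A}\vec R \,] \in \Zq^{n\times m}, \qquad \mathsf{td}_{\vec A} = \vec R.
$$
The key identity is $\vec A \cdot \begin{bmatrix} \vec R \\ \vec I_{nk} \end{bmatrix} = \vec G \Mod{q}$, i.e. $\vec R$ reduces $\vec A$ to the gadget. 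Near-uniformity of $\vec A$ follows from a leftover-hash / regularity argument: for $\bar m$ large enough relative to $n\log q$ and $\vec R$ carrying enough min-entropy, $\bar{\vec A}\vec R \Mod{q}$ is within negligible (in $n$) statistical distance of uniform on $\Zq^{n\times nk}$ conditioned on $\bar{\vec A}$, so the distribution of $\vec A$ is negligibly close to uniform over $\Zq^{n\times m}$.

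For $\Invert(\vec A, \mathsf{td}_{\vec A}, \vec b)$: given $\vec b^\intercal = \vec s^\intercal \vec A + \vec e^\intercal \Mod{q}$, form
$$
\vec b^\intercal \begin{bmatrix} \vec R \\ \vec I_{nk} \end{bmatrix} = \vec s^\intercal \vec G + \hat{\vec e}^\intercal \Mod{q}, \qquad \hat{\vec e}^\intercal := \vec e^\intercal \begin{bmatrix} \vec R \\ \vec I_{nk} \end{bmatrix}.
$$
With overwhelming probability over $\GenTrap$ the operator norm of $\vec R$ is $O(\sqrt{\bar m})$, hence $\|\hat{\vec e}\| \le (1+\|\vec R\|)\,\|\vec e\| = O(\sqrt m)\cdot\|\vec e\|$; taking the universal constant $C_T$ in the hypothesis $\|\vec e\| \le q/(C_T \sqrt{n\log q})$ large enough forces $\|\hat{\vec e}\|_\infty < q/2$. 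Now run the gadget-inversion algorithm (ii) on $\vec s^\intercal \vec G + \hat{\vec e}^\intercal$ to recover $\vec s$, and output $\vec e := \vec b - \vec A^\intercal \vec s$. The only failure event is $\|\vec R\|$ being atypically large, so correctness holds with overwhelming probability over $(\vec A, \mathsf{td}_{\vec A}) \leftarrow \GenTrap(1^n,1^m,q)$.

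The step that needs real care — the main obstacle — is the near-uniformity claim: one must choose the Gaussian width of $\vec R$ (and $\bar m$) so that $\bar{\vec A}\vec R$ is statistically close to uniform while $\vec R$ stays small enough that $\Invert$ tolerates the stated noise level $q/(C_T\sqrt{n\log q})$; this is precisely the regularity lemma for the map $\vec R \mapsto \bar{\vec A}\vec R$, and everything else is norm bookkeeping to land on the exact bound in the statement. (An alternative, gadget-free route is the Ajtai/GPV construction via the Hermite normal form, which produces a near-uniform $\vec A$ together with a short basis of $\Lambda_q^\bot(\vec A)$ directly; I prefer the gadget approach because it is cleaner and yields $\Invert$ essentially for free.)
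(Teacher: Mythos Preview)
The paper does not actually prove this theorem: it is stated as a preliminary result and attributed verbatim to \cite{cryptoeprint:2011:501}, Theorem 5.1, with no accompanying argument. Your proposal correctly sketches the gadget-based construction from that reference (sample $\bar{\vec A}$ and a short $\vec R$, set $\vec A = [\bar{\vec A}\mid \vec G - \bar{\vec A}\vec R]$, invert by reducing to the gadget), so there is nothing to compare against here --- you have simply filled in what the paper treats as a black box.
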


\section{Quantum Discrete Gaussian Sampling for $q$-ary Lattices}\label{sec:qdgs}

In this section, we review some basic facts about Gaussian superpositions and present our \emph{quantum discrete Gaussian sampler} which is used to revoke the decryption keys for our schemes.

\subsection{Gaussian 
Superpositions}

In this section, we review some basic facts about \emph{Gaussian superpositions}.
Given $q \in \N$, $m \in \N$ and $\sqrt{8m}< \sigma <q/\sqrt{8m}$, we consider Gaussian superpositions over $\Z^m \cap (-\frac{q}{2},\frac{q}{2}]^m$ of the form
    $$
 \ket{\psi} =    \sum_{\vec x \in \Z_q^m} \rho_\sigma(\vec x) \ket{\vec x}.
    $$
Note that the state $\ket{\psi}$ is not normalized for convenience and ease of notation. The tail bound in \Cref{lem:gaussian-tails} implies that (the normalized variant of) $\ket{\psi}$ is within negligible trace distance of a \emph{truncated} discrete Gaussian superposition $ \ket{\tilde{\psi}}$
with support $\{\vec x \in \Z_q^m : \|\vec x\| \leq \sigma \sqrt{\frac{m}{2}}\}$, where
$$
\ket{\tilde{\psi}} = \sum_{\vec x \in \Z_q^m} \sqrt{D_{\Z_q^m,\frac{\sigma}{\sqrt{2}}}(\vec x)} \, \ket{\vec x}
= 
\left(\sum_{\vec z \in \Z_q^m,\|\vec z\| \leq \sigma \sqrt{\frac{m}{2}} } \rho_{\frac{\sigma}{\sqrt{2}}}(\vec z) \right)^{-\frac{1}{2}}\sum_{\vec x \in \Z_q^m : \|\vec x\| \leq \sigma \sqrt{\frac{m}{2}}}
\rho_\sigma(\vec x) \ket{\vec x}.
$$
In this work, we consider Gaussian superpositions with parameter $\sigma = \Omega(\sqrt{m})$ which can be efficiently implemented using standard quantum state preparation techniques; for example using \emph{quantum rejection sampling} and the \emph{Grover-Rudolph algorithm}~\cite{Grover2002CreatingST,Regev05,Brakerski18,brakerski2021cryptographic}.

\paragraph{Gaussian coset states.}

Our key-revocable encryption schemes in \Cref{sec:dual-regev} and \Cref{sec:dual-GSW} rely on Gaussian superpositions over $\vec x \in \Z_q^m$ subject to a constraint of the form $\vec A \cdot \vec x = \vec y \Mod{q}$, for some matrix $\vec A \in \Z_q^{n \times m}$ and image $\vec y \in \Z_q^n$. In Algorithm \ref{alg:GenGauss}, we give a procedure called $\mathsf{GenGauss}$ that, on input $\vec A$ and $\sigma >0$, generates a  Gaussian superposition state of the form
   $$
    \ket{\psi_{\vec y}} = \sum_{\substack{\vec x \in \Z_q^m:\\ \vec A \vec x= \vec y}} \rho_\sigma(\vec x) \ket{\vec x},
    $$
for some $\vec y \in \Z_q^n$ which is statistically close to uniform whenever $m \geq 2n\log q$ and $\sigma \geq \omega(\sqrt{\log m})$. Because $\ket{\psi_{\vec y}}$ corresponds to a (truncated) Gaussian superposition over a particular lattice coset,
$$\Lambda_q^{\vec y}(\vec A) = \{ \vec x \in \Z^m  :  \vec A \cdot \vec x = \vec y \Mod{q}\},$$
of the $q$-ary lattice $\Lambda_q^\bot(\vec A) = \{\vec x \in \Z^m: \, \vec A \cdot\vec x = \vec 0 \Mod{q}\}$, we refer to it as a \emph{Gaussian coset state}.

Finally, we recall an important property of Gaussian coset states.

\paragraph{Gaussian-collapsing hash functions.}
Unruh~\cite{cryptoeprint:2015/361} introduced the notion of collapsing hash functions in his seminal work on computationally binding quantum commitments. Informally, a hash function is called \emph{collapsing} if it is computationally difficult to distinguish between a superposition of pre-images and a single (measured) pre-image.

In recent work, Poremba~\cite{Poremba22} proposed a special variant of the collapsing property with respect to \emph{Gaussian superpositions}. Previously, Liu and Zhandry~\cite{cryptoeprint:2019/262} implicitly showed that the \emph{Ajtai} hash function $h_{\vec A}(\vec x) = \vec A \vec x \Mod{q}$ is collapsing -- and thus \emph{Gaussian-collapsing} -- via the notion of \emph{lossy functions} and by assuming the superpolynomial hardness of (decisional) $\LWE$.

We use the following result on the Gaussian-collapsing property of the Ajtai hash function.

\begin{theorem}[Gaussian-collapsing property, \cite{Poremba22}, Theorem 4]\label{thm:Gauss-collapsing}
Let $n\in \N$ and $q$ be a prime with $m \geq 2n \log q$, each parameterized by $\lambda \in \N$. Let $\sqrt{8m}< \sigma <q/\sqrt{8m}$.
Then,
the following samples are computationally indistinguishable assuming the quantum hardness of decisional $\mathsf{LWE}_{n,q,\alpha q}^m$, for any noise ratio $\alpha \in (0,1)$ with relative noise magnitude $1/\alpha= \sigma \cdot 2^{o(n)}:$
$$
\Bigg(\vec  A \rand \Z_q^{n \times m},\,\, \ket{\psi_{\vec y}}=\sum_{\substack{\vec x \in \Z_q^m\\ \vec A \vec x = \vec y}}\rho_{\sigma}(\vec x) \,\ket{\vec x}, \,\,\vec y\in \Z_q^n \Bigg)\,\, \approx_c \,\,\,\, \Bigg(\vec  A \rand \Z_q^{n \times m}, \,\,\ket{\vec x_0},\,\, \vec A \cdot \vec x_0 \,\in \Z_q^n\Bigg)
$$
where $(\ket{\psi_{\vec y}},\vec y) \leftarrow \mathsf{GenGauss}(\vec A,\sigma)$ and where $\vec x_0 \sim D_{\Z_q^m,\frac{\sigma}{\sqrt{2}}}$ is a discrete Gaussian error.
\end{theorem}

\subsection{Algorithm: \textsf{GenGauss}}\label{sec:GenGauss}

The state preparation procedure $\mathsf{GenGauss}(\vec A,\sigma)$ is defined as follows.
\ \\

    \begin{algorithm}[H]\label{alg:GenGauss}
\DontPrintSemicolon
\SetAlgoLined
\KwIn{Matrix $\vec A \in \Z_q^{n \times m}$ and parameter $\sigma = \Omega(\sqrt{m})$.}
\KwOut{Gaussian state $\ket{\psi_{\vec y}}$ and $\vec y \in \Z_q^n$.}

Prepare a Gaussian superposition in system $X$ with parameter $\sigma > 0$:
    $$
 \ket{\psi}_{XY} =    \sum_{\vec x \in \Z_q^m} \rho_\sigma(\vec x) \ket{\vec x}_X \otimes \ket{\vec 0}_Y.
    $$\;
Apply the unitary $U_{\vec A}: \ket{\vec x}\ket{\vec 0} \rightarrow \ket{\vec x} \ket{\vec A \cdot \vec x \Mod{q}}$ on systems $X$ and $Y$:
$$
 \ket{ \psi}_{XY} =   \sum_{\vec x \in \Z_q^m} \rho_\sigma(\vec x) \ket{\vec x}_X \otimes \ket{\vec A \cdot \vec x \Mod{q}}_Y.
  $$\;    
Measure system $Y$ in the computational basis, resulting in the state
    $$
    \ket{\psi_{\vec y}}_{XY} = \sum_{\substack{\vec x \in \Z_q^m:\\ \vec A \vec x= \vec y}} \rho_\sigma(\vec x) \ket{\vec x}_X \otimes \ket{\vec y}_Y.
    $$\;
Output the state $\ket{\psi_{\vec y}}$ in system $X$ and the outcome $\vec y \in \Z_q^n$ in system $Y$.
 \caption{$\mathsf{GenGauss}(\vec A,\sigma)$}
\end{algorithm}
\ \\

\subsection{Algorithm: \textsf{QSampGauss}}\label{sec:QSampGauss}

Recall that, in Algorithm \ref{alg:GenGauss}, we gave a procedure called $\mathsf{GenGauss}(\vec A,\sigma)$ that prepares a Gaussian coset state $\ket{\psi_{\vec y}}$,
for a randomly generated $\vec y \in \Z_q^n$. In general, however, generating a specific Gaussian coset state on input $(\vec A,\vec y)$ requires a \emph{short trapdoor basis} $\mathsf{td}_{\vec A}$ for the matrix $\vec A$. This task can be thought of as a quantum analogue of the \emph{discrete Gaussian sampling problem}~\cite{cryptoeprint:2007:432}, where the goal is to output a sample $\vec x \sim D_{Z^m,\sigma}$ such that $\vec A \cdot \vec x = \vec y \Mod{q}$ on input $(\vec A,\vec y)$ and $\sigma >0$.

In Algorithm \ref{alg:SampGauss}, we give a procedure called $\mathsf{QSampGauss}$ which, on input $(\vec A,\mathsf{td}_{\vec A},\vec y,\sigma)$ generates a specific Gaussian coset state $\ket{\psi_{\vec y}}$ of the form
$$
\ket{\psi_{\vec y}}=\sum_{\substack{\vec x \in \Z_q^m\\ \vec A \vec x = \vec y}}\rho_{\sigma}(\vec x) \,\ket{\vec x}.
$$
Our procedure $\mathsf{QSampGauss}$ in Algorithm \ref{alg:SampGauss} can be thought of as an explicit quantum reduction from $\ISIS_{n,q,\sigma\sqrt{m/2}}^m$ to $\LWE_{n,q,q/\sqrt{2}\sigma}^m$ which is inspired by the quantum reduction of Stehlé et al.~\cite{cryptoeprint:2009/285} which reduces $\SIS$ to $\LWE$. To obtain the aforementioned reduction, one simply needs to replace the procedure
$\mathsf{Invert}(\vec A,{\mathsf{td}_{\vec A}},\cdot)$ in Step $4$ in Algorithm \ref{alg:SampGauss} with a solver for the $\LWE_{n,q,q/\sqrt{2}\sigma}^m$ problem.

In \Cref{lem:qdgs}, we prove the correctness of Algorithm \ref{alg:SampGauss}. As a technical ingredient, we rely on a \emph{duality lemma}~\cite{Poremba22} that characterizes the Fourier transform of a Gaussian coset state
in terms of its dual state. Note that $\ket{\psi_{\vec y}}$ corresponds to a Gaussian superposition over a lattice coset,
$$\Lambda_q^{\vec y}(\vec A) = \{ \vec x \in \Z^m  :  \vec A \cdot \vec x = \vec y \Mod{q}\},$$
of the $q$-ary lattice $\Lambda_q^\bot(\vec A) = \{\vec x \in \Z^m: \, \vec A \cdot\vec x = \vec 0 \Mod{q}\}$. Here, the \emph{dual} of $\Lambda_q^\bot(\vec A)$ satisfies $q \cdot \Lambda_q^\bot(\vec A)^* =\Lambda_q(\vec A)$, where $\Lambda_q(\vec A)$ corresponds to the
lattice generated by $\vec A^\intercal$, i.e.
$$
\Lambda_q(\vec A) = \{\vec z \in \Z^m: \, \vec z = \vec A^\intercal \cdot \vec s \Mod{q}, \text{ for some } \vec s \in \Z^n\}.
$$
The following lemma relates the Fourier transform of $\ket{\psi_{\vec y}}$ with a superposition of
$\LWE$ samples with respect to a matrix $\vec A \in \Z_q^{n \times m}$ and a phase which depends on $\vec y$. In other words, the resulting state can be thought of as a superposition of Gaussian balls around random lattice vectors in $\Lambda_q(\vec A)$.

\begin{lemma}[\cite{Poremba22}, Lemma 16]\label{lem:duality}
Let $m \in \N$, $q \geq 2$ be a prime and $\sqrt{8m}< \sigma <q/\sqrt{8m}$.
Let $\vec A \in \Z_q^{n \times m}$ be a matrix whose columns generate $\Z_q^n$ and let $\vec y \in \Z_q^n$ be arbitrary. Then, the $q$-ary quantum Fourier transform of the (normalized variant of the) Gaussian coset state
$$
 \ket{\psi_{\vec y}} = \sum_{\substack{\vec x \in \Z_q^m\\ \vec A \vec x = \vec y \Mod{q}}}\rho_{\sigma}(\vec x) \ket{\vec x}
$$
is within negligible (in $m \in \N$) trace distance of the (normalized variant of the) Gaussian state
$$
 \ket{\hat\psi_{\vec y}} = \sum_{\vec s \in \Z_q^n} \sum_{\vec e \in \Z_q^m} \rho_{q/\sigma}(\vec e) \, \omega_q^{-\langle\vec s,\vec y \rangle} \ket{\vec s^\intercal \vec A + \vec e^\intercal \Mod{q}}.
$$
\end{lemma}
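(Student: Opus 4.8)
The plan is to reduce the statement to a single Poisson-summation computation, using the Gaussian tail bounds of \Cref{lem:gaussian-tails} and \Cref{lem:Gaussian2} to move freely between the box $\Z_q^m$, the integer lattice $\Z^m$, and $\R^m$. First I would observe that, because $\sigma < q/\sqrt{2m}$, the normalized version of $\ket{\psi_{\vec y}}$ is within negligible trace distance of the truncated coset state supported on $\{\vec x \in \Z^m : \vec A\vec x = \vec y \bmod q,\ \|\vec x\| \le \sigma\sqrt{m/2}\}$; every such $\vec x$ has entries in $(-q/2,q/2]$, so one may equivalently regard $\ket{\psi_{\vec y}}$ as a Gaussian superposition over the \emph{full} lattice coset $\Lambda_q^{\vec y}(\vec A) \subset \Z^m$, with approximate normalization $Z := \rho_{\sigma/\sqrt 2}(\Lambda_q^{\vec y}(\vec A))^{-1/2}$. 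Applying $\FT_q$ and reading off the coefficient of $\ket{\vec w}$ then reduces everything to analyzing the character sum $\sum_{\vec x \in \Lambda_q^{\vec y}(\vec A)} \rho_\sigma(\vec x)\,\omega_q^{\langle \vec x, \vec w\rangle}$, times the overall scalar $\sqrt{q^{-m}}\,Z$.

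Next I would expand the coset constraint using the identity $\mathbf 1[\vec A\vec x = \vec y \bmod q] = q^{-n}\sum_{\vec s \in \Z_q^n}\omega_q^{\langle \vec s, \vec A\vec x - \vec y\rangle}$, valid for $\vec x$ ranging over all of $\Z^m$, which rewrites the coefficient of $\ket{\vec w}$, up to the overall scalar, as $q^{-n}\sum_{\vec s}\omega_q^{-\langle\vec s,\vec y\rangle}\sum_{\vec x\in\Z^m}\rho_\sigma(\vec x)\,\omega_q^{\langle\vec x,\ \vec w + \vec A^\intercal\vec s\rangle}$. Poisson summation, together with the Fourier identity $\widehat{\rho_\sigma} = \sigma^m\rho_{1/\sigma}$ on $\R^m$, turns the inner sum into $\sigma^m\sum_{\vec k\in\Z^m}\rho_{q/\sigma}(q\vec k - \vec w - \vec A^\intercal\vec s)$. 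Now I would use the symmetry of the parameter window: since $\sigma \in (\sqrt{2m}, q/\sqrt{2m})$, also $q/\sigma \in (\sqrt{2m}, q/\sqrt{2m})$, so by \Cref{lem:gaussian-tails} the dual Gaussian $\rho_{q/\sigma}$ puts all but negligible mass on vectors of norm at most $(q/\sigma)\sqrt{m/2} < q/2$; hence for each $\vec s$ only the unique representative $\vec e \in (-q/2,q/2]^m$ of $-(\vec w + \vec A^\intercal\vec s) \bmod q$ survives, all other lattice shifts $q\vec k$ being negligible. Writing $\vec w^\intercal = -\vec s^\intercal\vec A - \vec e^\intercal$ and re-indexing $\vec s\mapsto -\vec s$, $\vec e\mapsto -\vec e$ (using that $\rho_{q/\sigma}$ is even), this collects the surviving terms into exactly a scalar multiple of $\ket{\hat\psi_{\vec y}} = \sum_{\vec s}\sum_{\vec e}\rho_{q/\sigma}(\vec e)\,\omega_q^{-\langle\vec s,\vec y\rangle}\ket{\vec s^\intercal\vec A + \vec e^\intercal}$ (the sign appearing on the phase being fixed by the chosen convention for $\FT_q$).

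The remaining work, which I expect to be the main obstacle, is the normalization and error bookkeeping. Since $\FT_q$ is unitary, $\FT_q$ applied to the normalized $\ket{\psi_{\vec y}}$ is automatically a unit vector, and the estimates above show it agrees, in $\ell_2$-norm, with a fixed scalar multiple of the unnormalized $\ket{\hat\psi_{\vec y}}$ up to negligible $\ell_2$-error; consequently that scalar must equal $\|\ket{\hat\psi_{\vec y}}\|^{-1}$ up to negligible relative error, and the two normalized states are within negligible $\ell_2$-distance, hence negligible trace distance. Making this rigorous is where the hypotheses are used: $q$ prime and the columns of $\vec A$ generating $\Z_q^n$ give $\det(\Lambda_q^\bot(\vec A)) = q^n$ and validate the character-sum identity; the lower bound $\sigma > \sqrt{2m}$ (equivalently $q/\sigma < q/\sqrt{2m}$) makes the primal truncation to the box negligible, while $\sigma < q/\sqrt{2m}$ (equivalently $q/\sigma > \sqrt{2m}$) makes the discarded non-principal Poisson terms and the dual-side truncation negligible; and smoothing-type bounds in the spirit of \Cref{lem:Gaussian1} and \Cref{lem:Gaussian2} control the two normalizations $\rho_{\sigma/\sqrt 2}(\Lambda_q^{\vec y}(\vec A))$ and $\sum_{\vec s,\vec e}\rho_{q/\sigma}(\vec e)^2$ up to negligible relative error, so that the scalar bookkeeping closes. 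Summing the constantly many negligible error contributions yields the stated bound.
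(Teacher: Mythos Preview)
The paper does not prove this lemma; it is quoted verbatim from \cite{Poremba22} (Lemma 16) and used as a black box, so there is no in-paper proof to compare against. Your outline is the standard argument for this duality statement and is correct in substance: lift from $\Z_q^m$ to the full coset $\Lambda_q^{\vec y}(\vec A)\subset\Z^m$ via the primal tail bound, expand the coset indicator with additive characters, apply Poisson summation to obtain the dual Gaussian $\rho_{q/\sigma}$ over $q\Z^m$-shifts, and then truncate back to a single fundamental domain using the dual-side tail bound. The symmetry observation $q/\sigma\in(\sqrt{2m},q/\sqrt{2m})$ is exactly what makes both truncations negligible, and the full-rank hypothesis is what makes the character-sum identity for the coset indicator valid over $\Z_q$. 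One small point to handle carefully in the write-up: when you discard the non-principal Poisson terms, you are bounding $\sum_{\vec k\neq \vec k_0}\rho_{q/\sigma}(q\vec k-\vec v)$, and the naive per-term bound $\|\cdot\|\ge q/2$ only controls the nearest shift; to kill the whole tail you should invoke \Cref{lem:gaussian-tails} on the lattice $q\Z^m$ shifted by $\vec v$ (or equivalently the Banaszczyk bound), rather than just a single-vector estimate. With that, your normalization-by-unitarity argument closes the proof.
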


The procedure $\mathsf{QSampGauss}(\vec A,\mathsf{td}_{\vec A},\vec y,\sigma)$ is defined as follows.
\ \\

    \begin{algorithm}[H]\label{alg:SampGauss}
\DontPrintSemicolon
\SetAlgoLined
\KwIn{Matrix $\vec A \in \Z_q^{n \times m}$, a trapdoor $\mathsf{td}_{\vec A}$, an image $\vec y \in \Z_q^n$ and parameter $\sigma = O(\frac{q}{\sqrt{m}})$.}
\KwOut{Gaussian state $\ket{\psi_{\vec y}}$.}

Prepare the following superposition with parameter $q/\sigma > 0$:
    $$
 \sum_{\vec s \in \Z_q^n} \ket{\vec s} \otimes   \sum_{\vec e \in \Z_q^m} \rho_{q/\sigma}(\vec e) \ket{\vec e} \otimes \ket{\vec 0}
    $$\;

Apply the generalized Pauli operator $\vec Z_q^{-\vec y}$ on the first register, resulting in the state
    $$
 \sum_{\vec s \in \Z_q^n} \omega_q^{-\langle \vec s,\vec y \rangle} \ket{\vec s} \otimes   \sum_{\vec e \in \Z_q^m} \rho_{q/\sigma}(\vec e) \ket{\vec e} \otimes \ket{\vec 0}
    $$\;

Apply the unitary $U_{\vec A}: \ket{\vec s} \ket{\vec e} \ket{\vec 0} \rightarrow \ket{\vec s} \ket{\vec e} \ket{\vec s^\intercal \vec A + \vec e^\intercal \Mod{q}}$, resulting in the state
   $$
 \sum_{\vec s \in \Z_q^n} \sum_{\vec e \in \Z_q^m} \rho_{q/\sigma}(\vec e)  \,\omega_q^{-\langle \vec s,\vec y \rangle} \ket{\vec s}  \ket{\vec e} \ket{\vec s^\intercal \vec A + \vec e^\intercal \Mod{q}}
    $$\;
Coherently run $\mathsf{Invert}(\vec A,{\mathsf{td}_{\vec A}},\cdot)$ on the third register in order to uncompute the first and the second register, resulting in a state that is close in trace distance to the following state: 
    $$ \sum_{\vec s \in \Z_q^n}\sum_{\vec e \in \Z_q^m} \rho_{q/\sigma}(\vec e) \, \omega_q^{-\langle \vec s,\vec y \rangle} 
\ket{0}\ket{0} \ket{\vec s^\intercal \vec A + \vec e^\intercal \Mod{q}}
    $$\;
Discard the first two registers. Apply the (inverse) quantum Fourier transform and output the resulting state.
 \caption{$\mathsf{QSampGauss}(\vec A,\mathsf{td}_{\vec A},\vec y,\sigma)$}
\end{algorithm}
\ \\
Let us now prove the correctness of Algorithm \ref{alg:SampGauss}.

\begin{theorem}[Quantum Discrete Gaussian Sampler]
\label{lem:qdgs}
Let $n\in \N$, $q$ be a prime with $m \geq 2n \log q$ and $\sqrt{8m}< \sigma <q/\sqrt{8m}$.
Let $(\vec A,\mathsf{td}_{\vec A}) \leftarrow \GenTrap(1^n,1^m,q)$ be sampled as in \Cref{thm:gen-trap} and let $\vec y \in \Z_q^n$ be arbitrary. Then, with overwhelming probability, $\mathsf{QSampGauss}(\vec A,\mathsf{td}_{\vec A},\vec y,\sigma)$ in Algorithm \ref{alg:SampGauss} outputs a state which is within negligible trace distance of the (normalized variant of the) state,
$$
 \ket{\psi_{\vec y}} = \sum_{\substack{\vec x \in \Z_q^m\\ \vec A \vec x = \vec y \Mod{q}}}\rho_{\sigma}(\vec x) \ket{\vec x}.
$$
\end{theorem}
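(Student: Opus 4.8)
The plan is to follow Algorithm~\ref{alg:SampGauss} register by register and then appeal to the duality lemma (\Cref{lem:duality}) at the very last step. First, the Gaussian superposition prepared in Step~1 may be taken, up to negligible trace distance (by the tail bound \Cref{lem:gaussian-tails}), to be the truncated superposition with error support $\{\vec e\in\Z_q^m : \|\vec e\|\le (q/\sigma)\sqrt{m/2}\}$; Steps~2 and~3 are honest unitaries and produce, on the nose,
$$
\ket{\Psi_3} \;=\; \sum_{\vec s\in\Z_q^n}\ \sum_{\substack{\vec e\in\Z_q^m\\ \|\vec e\|\le (q/\sigma)\sqrt{m/2}}}\rho_{q/\sigma}(\vec e)\,\omega_q^{-\langle\vec s,\vec y\rangle}\,\ket{\vec s}\ket{\vec e}\ket{\vec s^\intercal\vec A+\vec e^\intercal \Mod q}\,,
$$
so the substance of the proof is concentrated in analyzing the coherent inversion of Step~4.

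For Step~4, I would first note that \Cref{thm:gen-trap} guarantees that $(\vec A,\mathsf{td}_{\vec A})\leftarrow\GenTrap(1^n,1^m,q)$ is, with overwhelming probability, such that $\mathsf{Invert}(\vec A,\mathsf{td}_{\vec A},\cdot)$ correctly recovers $(\vec s,\vec e)$ from $\vec s^\intercal\vec A+\vec e^\intercal$ whenever $\|\vec e\|\le q/(C_T\sqrt{n\log q})$; for parameters in the stated window (in particular $\sigma=\Theta(q/\sqrt m)$ together with $m\ge 2n\log q$, which forces $(q/\sigma)\sqrt{m/2}$ below the inversion radius) this covers every error appearing in $\ket{\Psi_3}$. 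Conditioning on such a good key, run $\mathsf{Invert}$ as a reversible classical circuit and use its output to subtract off the first two registers; since the recovered pair $(\vec s,\vec e)$ is exactly the label of each branch (and, in this regime, is the unique preimage, so no branches are merged), this disentangles the ancilla, leaving a state within negligible trace distance of
$$
\ket{0}\ket{0}\otimes \ket{\hat\psi_{\vec y}}, \qquad \ket{\hat\psi_{\vec y}} \;=\; \sum_{\vec s\in\Z_q^n}\sum_{\vec e\in\Z_q^m}\rho_{q/\sigma}(\vec e)\,\omega_q^{-\langle\vec s,\vec y\rangle}\ket{\vec s^\intercal\vec A+\vec e^\intercal\Mod q}\,,
$$
where the remaining negligible error absorbs both the key-dependent failure probability of $\mathsf{Invert}$ and the re-extension of the $\vec e$-sum back to its full range (again by \Cref{lem:gaussian-tails}, using that $\vec A$ is full-rank so the folded sum has a clean norm).

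Finally, discarding the first two registers (which are in the product state $\ket{0}\ket{0}$ up to negligible trace distance) is harmless, and Step~5 applies $\FT_q^{-1}$ to the third register. Here I would invoke \Cref{lem:full-rank} — applicable since $\GenTrap$ outputs $\vec A$ negligibly close to uniform, hence full-rank with overwhelming probability — so that \Cref{lem:duality} yields $\TD(\FT_q\ket{\psi_{\vec y}},\ket{\hat\psi_{\vec y}})=\negl(m)$; since $\FT_q^{-1}$ is unitary and preserves trace distance, $\TD(\ket{\psi_{\vec y}},\FT_q^{-1}\ket{\hat\psi_{\vec y}})=\negl(m)$. Chaining the $O(1)$-many negligible-error approximations through the triangle inequality for trace distance shows the output of Algorithm~\ref{alg:SampGauss} is within negligible trace distance of $\ket{\psi_{\vec y}}$, which is the claim.

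I expect the real work to lie in Step~4: pinning down precisely which relations among $n,m,q,\sigma$ keep $\rho_{q/\sigma}$-Gaussian errors inside the inversion radius of \Cref{thm:gen-trap} with overwhelming probability, and carefully justifying that $\mathsf{Invert}$ — whose correctness is promised only over the choice of $(\vec A,\mathsf{td}_{\vec A})$ — can be executed coherently on the whole superposition once a good key has been fixed, so that the first two registers genuinely decouple into $\ket{0}\ket{0}$ on the dominant part of the state. Everything else is either a verbatim unitary or a direct appeal to \Cref{lem:gaussian-tails}, \Cref{lem:full-rank}, and \Cref{lem:duality}.
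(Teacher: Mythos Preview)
Your proposal is correct and follows essentially the same approach as the paper: verify that the coherent $\mathsf{Invert}$ in Step~4 succeeds on the truncated Gaussian error support (using \Cref{thm:gen-trap} and the tail bound), which leaves the state $\ket{\hat\psi_{\vec y}}$ up to negligible error, and then apply the inverse $\FT_q$ and invoke the duality lemma (\Cref{lem:duality}), for which the full-rank property of $\vec A$ (via \Cref{lem:full-rank} and \Cref{thm:gen-trap}) is needed. Your write-up is in fact more detailed than the paper's own proof, which compresses all of this into three sentences; the parameter check you flag as the ``real work'' is exactly what the paper leaves implicit when it asserts that $\sigma\in(\sqrt{2m},q/\sqrt{2m})$ suffices for $\mathsf{Invert}$ to succeed.
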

\begin{proof}
From \Cref{lem:full-rank} and \Cref{thm:gen-trap}, it follows that $(\vec A,\mathsf{td}_{\vec A}) \leftarrow \GenTrap(1^n,1^m,q)$ yields a matrix $\vec A \in \Z_q^{n \times m}$ be a matrix whose columns generate $\Z_q^n$ with overwhelming probability. Moreover, since $\sqrt{8m}< \sigma <q/\sqrt{8m}$, the inversion procedure
$\mathsf{Invert}(\vec A,{\mathsf{td}_{\vec A}},\cdot)$ from \Cref{thm:gen-trap} in Step $4$ in Algorithm \ref{alg:SampGauss} succeeds with overwhelming probability at generating the Gaussian state
 $$  \ket{\hat\psi_{\vec y}} =\sum_{\vec s \in \Z_q^n}\sum_{\vec e \in \Z_q^m} \rho_{q/\sigma}(\vec e) \, \omega_q^{-\langle \vec s,\vec y \rangle} 
\ket{\vec s^\intercal \vec A + \vec e^\intercal \Mod{q}}
$$
Applying the (inverse) quantum Fourier transform $\FT_q^\dag$, the claim then follows from \Cref{lem:duality}.
\end{proof}

\section{Quantum Goldreich-Levin Theorem for Large Fields}
\label{sec:QGL}

In this section, we give a proof of a quantum Goldreich-Levin theorem for large fields $\Z_q$.

\subsection{Post-Quantum Reductions and Quantum Rewinding}

We first review some recent work by Bitansky, Brakerski and Kalai~\cite{BBK22} that enables us to convert a wide range of classical reductions into post-quantum reductions (which allow for quantum auxiliary input) in a constructive manner. We first review some basic terminology from~\cite{BBK22}.

Let $\lambda \in \N$ be a parameter. A \emph{non-interactive assumption} $\mathsf{P} = (\mathsf{G},\mathsf{V},c)$ with respect to a set of polynomials $d(\lambda),n(\lambda)$ and $m(\lambda)$ is characterized as follows:
\begin{itemize}
    \item The generator $\mathsf{G}$ takes as input $1^\lambda$ and $r \in \bit^d$, and returns $x \in \bit^n$.
    \item The verifier $\mathsf{V}$ takes as input $1^\lambda$ and $(r,y) \in \bit^d \times \bit^m$, and returns a single bit output.
    \item $c(\lambda)$ is the threshold associated with the assumption.
\end{itemize}
Given a (possibly randomized) \emph{solver}, we characterize the \emph{advantage} in solving an assumption $\mathsf{P}$ in terms of the absolute distance between the solving probability (or, \emph{value}) and the threshold $c$; for example, for a \emph{decision assumption} $\mathsf{P}$ (with $m=1$) we characterize the value in solving $\mathsf{P}$ in terms of $\frac{1}{2} + \eps$, where the threshold is given by $c(\lambda) = \frac{1}{2}$ and $\eps > 0$ is corresponds to the \emph{advantage}.
We say that
a reduction is \emph{black-box} if it is oblivious to the
representation and inner workings of the solver that is being used. Moreover, we say that a reduction is \emph{non-adaptive} if all queries to the solver are known ahead of time. 
\par We use the following theorem.

\begin{theorem}[\cite{BBK22}, adapted from Theorem 7.1] \label{thm:BBK} 
    Let $c \in \mathbb{R}$. Suppose that there exists a classical reduction from solving a non-interactive assumption  ${\sf Q}$ to solving a non-interactive assumption ${\sf P}$ such that the following holds: if the ${\sf P}$-solver has advantage $\eps > 0$ then the ${\sf Q}$-solver has advantage $c$ (independent of $\eps$) with running time $\poly(1/\eps,c,\secparam)$. 
    \par Then, there exists a quantum reduction from solving $\mathsf{Q}$ to quantumly solving $\mathsf{P}$ such that the following holds: if the quantum ${\sf P}$-solver (with non-uniform quantum advice) has an advantage given by $\eps > 0$, then the ${\sf Q}$-solver has advantage $c$ (the same as the classical reduction) with running time $\poly(1/\eps,c,\secparam)$. 
\end{theorem}

\subsection{Goldreich-Levin Theorems for Large Fields}

The following result is implicit in the work of Dodis et al.~\cite{Dodis10}.

\begin{theorem}[Classical Goldreich-Levin Theorem for Finite Fields,
\cite{Dodis10}, Theorem 1]\label{thm:GL}
Let $q$ be a prime and $m \in \N$. Let $H = \{\vec x\in\Z_q^m \, : \, \|\vec x\| \leq \sigma \sqrt{m} \}$ be a subset of $ \Z_q^m$, for some $\sigma >0$. Let $f: H \rightarrow \bit^*$ be any (possibly randomized) function. Suppose there exists a distinguisher $\algo D$ that runs in time $T(\algo D)$ and has the property that
\begin{align*}
\vline \, \Pr \left[
 \mathcal{D}(\vec u,  \vec u^\intercal \vec x,\aux) =1
 \, : \, 
 \substack{
\vec u \rand \Z_q^m\\
\vec x \sim D_{\Z_q^m,\sigma}\\
\aux \leftarrow f(\vec x)
}
    \right] - \Pr \left[
 \mathcal{D}(\vec u, r,\aux) =1
 \, : \,
\substack{
\vec u \rand \Z_q^m, \, r \rand \Z_q\\
\vec x \sim D_{\Z_q^m,\sigma}\\
\aux \leftarrow f(\vec x)
}
\right]
\,\vline = \eps.
\end{align*}
Then, there exists a (classical) non-adaptive black-box extractor $\algo E$ whose running time is given by $T(\algo E) = T(\algo D) \cdot \poly(m,\sigma,1/\eps)$ and succeeds with probability at least
$$
\Pr \Big[
 \mathcal{E}\big(\aux\big) = \vec x \, : \, \aux \leftarrow f(\vec x)
    \Big] \geq  \frac{\eps^3}{512 \cdot m \cdot q^2}.
$$
\end{theorem}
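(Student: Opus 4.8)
The statement is, up to cosmetic differences, \cite[Theorem~1]{Dodis10}: a Goldreich--Levin theorem for the inner-product hard-core function over $\Zq$. The plan is to recall the three ingredients of that argument and observe that they are insensitive to the precise distribution of $\vec x$ — the distribution of $\vec x$ enters only through an averaging step — so the argument applies verbatim to $\vec x \sim D_{\Zq^m,\sigma}$ supported on $H$. The three ingredients are: (i) an averaging step fixing a ``good'' secret, (ii) a distinguisher-to-predictor reduction over $\Zq$, and (iii) local list-decoding of the $q$-ary linear code $\vec x \mapsto (\vec x^\intercal \vec r)_{\vec r \in \Zq^m}$.

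\emph{Steps (i) and (ii).} By a Markov/averaging argument over the draw of $\vec x \sim D_{\Zq^m,\sigma}$ together with the coins of $\aux$, with probability at least $\eps/2$ the pair $(\vec x,\aux(\vec x))$ is such that $\algo D$, with the string $\aux(\vec x)$ hardwired, still distinguishes $(\vec r, \vec x^\intercal \vec r)$ from $(\vec r, u)$ — over $\vec r\rand\Zq^m$, $u\rand\Zq$ and its own coins — with advantage at least $\eps/2$; fix such an $\vec x$, so that it suffices to output it with probability $\Omega(\eps^2/(m q^2))$. From such a $\algo D$ we build a randomized predictor $P$ that, given $\vec r\rand\Zq^m$ and $\aux(\vec x)$, outputs $\vec x^\intercal \vec r$ with probability at least $\tfrac1q + \Omega(\eps/q)$: on input $\vec r$, sample $v\rand\Zq$, run $\algo D(\vec r,v,\aux(\vec x))$, and output $v$ if $\algo D$ accepts and a fresh uniform element of $\Zq$ otherwise. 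A one-line computation shows this succeeds with probability exactly $\tfrac1q + \tfrac1q\cdot(\text{signed advantage of }\algo D)$, and running $\algo D$ or $1-\algo D$ with equal probability removes the dependence on the sign at the cost of a factor $2$. Write $\delta = \Omega(\eps/q)$ for the resulting bias.

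\emph{Step (iii).} The encoding $\vec x \mapsto (\vec x^\intercal \vec r)_{\vec r\in\Zq^m}$ is a $q$-ary linear code of relative distance $1-\tfrac1q$, hence by the $q$-ary Johnson bound it is list-decodable from agreement $\tfrac1q+\delta$ with list size $L = O(1/\delta^2) = O(q^2/\eps^2)$. It moreover admits a \emph{local} list-decoder — the $q$-ary analogue of the Goldreich--Levin decoder, which is the technical heart of \cite{Dodis10} — that, given oracle access to the predictor $P$ from the previous step, runs in time $T(\algo D)\cdot\poly(m,\sigma,1/\eps)$ and, with probability $\Omega(1/m)$, outputs a list of at most $L$ candidates containing $\vec x$. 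The extractor $\algo E(\aux(\vec x))$ runs this decoder (generating the queries $\vec r$ internally) and returns a uniformly random element of the list; conditioned on the good event of step (i), it outputs $\vec x$ with probability $\Omega\big(\tfrac1m\cdot\tfrac1L\big) = \Omega(\eps^2/(m q^2))$. Multiplying by the $\eps/2$ from step (i) gives overall success probability $\Omega(\eps^3/(m q^2))$, with the claimed running time.

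\emph{Main obstacle.} Steps (i) and (ii) are routine; the real work is the $q$-ary local list-decoder in step (iii). Over $\Z_2$ the decoder recovers each coordinate by a majority vote among pairwise-independent estimates, which is valid because a bias of $\delta$ towards the correct bit already forces it to be the majority. Over $\Zq$, a bias of $\delta$ towards the correct symbol need not make it the plurality among the $q$ possible symbols, so the decoder must be arranged more carefully — roughly, by guessing the inner products of $\vec x$ with the $O(\log_q(m/\delta))$ vectors of a random ``subspace'', reducing modulo these guesses, and then isolating the correct symbol rather than voting — and one must verify both that agreement $\tfrac1q+\Omega(\eps/q)$ stays inside the list-decodable regime for every $\eps>0$ and that the $q^{O(\log_q(m/\delta))}=\poly(m,1/\eps)$ guesses cost only polynomial factors. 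This is precisely the content of the large-field Goldreich--Levin analysis of \cite{Dodis10}, which we invoke; the remaining accounting of polynomial losses in $m$, $\sigma$ and $1/\eps$ is bookkeeping.
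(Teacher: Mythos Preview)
The paper does not give its own proof of this theorem; it is stated as a result ``implicit in the work of Dodis et al.~\cite{Dodis10}'' and simply cited. Your sketch is a faithful outline of that argument---the averaging over $\vec x$, the one-query distinguisher-to-predictor conversion yielding bias $\Omega(\eps/q)$, and the local list-decoding of the $q$-ary Hadamard code---and your construction is visibly classical, black-box, and non-adaptive, which is exactly what the paper needs downstream when it applies the Bitansky--Brakerski--Kalai lifting in \Cref{thm:BBK}. The only caveat is that your allocation of the probability losses (the $\Omega(1/m)$ you attribute to the decoder and the $O(q^2/\eps^2)$ list size) is a heuristic accounting rather than a precise match to the analysis in~\cite{Dodis10}; but you flag this yourself, and the final $\Omega(\eps^3/(mq^2))$ and $T(\algo D)\cdot\poly(m,\sigma,1/\eps)$ bounds are the ones the paper quotes.
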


Using the constructive post-quantum reduction from \Cref{thm:BBK}, we can convert \Cref{thm:GL} into a quantum Goldreich-Levin Theorem for finite fields, and obtain the following.

\begin{theorem}[Quantum Goldreich-Levin Theorem for Large Fields]\label{thm:QGL}
Let $q$ be a prime and $m \in \N$. Let $H = \{\vec x\in\Z_q^m \, : \, \|\vec x\| \leq \sigma \sqrt{m} \}$ be a subset of $ \Z_q^m$, for some $\sigma >0$. Let $\Phi: \algo L(\algo H_q^m) \rightarrow \algo L(\algo H_{\textsc{Aux}})$ be any $\CPTP$ map with auxiliary system $\algo H_{\textsc{Aux}}$. Suppose there exists a quantum distinguisher $\algo D$ that runs in time $T(\algo D)$ and has the property that
\begin{align*}
\vline \, \Pr \left[
 \mathcal{D}(\vec u,  \vec u^\intercal \vec x,\aux) =1
 \, : \, 
\substack{
\vec u \rand \Z_q^m\\
\vec x \sim D_{\Z_q^m,\sigma}\\
\aux\leftarrow \Phi(\proj{\vec x})
}
    \right] - \Pr \left[
 \mathcal{D}(\vec u, r,\aux) =1
 \, : \,
\substack{
\vec u \rand \Z_q^m, \, r \rand \Z_q\\
\vec x \sim D_{\Z_q^m,\sigma}\\
\aux \leftarrow \Phi(\proj{\vec x})
}
\right]
\,\vline = \eps.
\end{align*}
Then, there exists a quantum extractor $\algo E$ whose running time is given by $T(\algo E) = T(\algo D) \cdot \poly(m,\sigma,1/\eps)$ and that succeeds with probability at least
$$
\Pr \Big[ \mathcal{E}\big(\aux\big) = \vec x \, : \, \aux \leftarrow \Phi(\proj{\vec x})
    \Big] \geq  \poly(\eps,1/m,1/\sigma,1/q).
$$
\end{theorem}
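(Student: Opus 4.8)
The plan is to obtain \Cref{thm:QGL} by combining the classical Goldreich--Levin theorem for large fields (\Cref{thm:GL}) with the constructive post-quantum reduction of Bitansky, Brakerski and Kalai (\Cref{thm:BBK}). First I would recast both the hypothesis and the conclusion in the non-interactive-assumption language of~\cite{BBK22}. Define a decision assumption $\mathsf{P}=\mathsf{P}_{q,m,\sigma,\Phi}$ whose generator samples $\vec x\sim D_{\Z_q^m,\sigma}$, a challenge $\vec r\rand\Z_q^m$, a bit $b\rand\bit$ and $u\rand\Z_q$, and presents the solver with $(\vec r,\mathrm{val})$, where $\mathrm{val}=\vec x^\intercal\vec r$ if $b=0$ and $\mathrm{val}=u$ if $b=1$, together with the auxiliary state $\aux(\vec x)=\Phi(\proj{\vec x})$; the verifier accepts the solver's output $b'$ iff $b'=b$, and the threshold is $c=\tfrac12$. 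The distinguisher $\algo D$ from the hypothesis is, up to a standard averaging step, a solver for $\mathsf{P}$ with advantage $\Theta(\eps)$. Likewise define a search assumption $\mathsf{Q}=\mathsf{Q}_{q,m,\sigma,\Phi}$ whose generator samples $\vec x\sim D_{\Z_q^m,\sigma}$ and hands the solver $\aux(\vec x)=\Phi(\proj{\vec x})$, whose verifier accepts $\vec x'$ iff $\vec x'=\vec x$, and whose threshold is $c=0$; solving $\mathsf{Q}$ is exactly the extraction task.

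Next I would observe that \Cref{thm:GL} is precisely a \emph{classical}, \emph{non-adaptive}, \emph{black-box} reduction from solving $\mathsf{Q}$ to solving $\mathsf{P}$: the extractor of Dodis et al.\ uses $\algo D$ only as an oracle, and the linear challenges it feeds to $\algo D$ are all fixed in advance, independently of $\algo D$'s responses. Reading off its parameters, this reduction runs in time $T=T(\algo D)\cdot\poly(m,\sigma,1/\eps)$ and converts a $\mathsf{P}$-solver of advantage $\eps$ into a $\mathsf{Q}$-solver of advantage $\Omega(\eps^3/(mq^2))$. Applying \Cref{thm:BBK} to this reduction --- with the role of the $\mathsf{P}$-solver played by the quantum distinguisher $\algo D$ acting on the quantum auxiliary input $\aux(\vec x)$ --- yields a quantum reduction from solving $\mathsf{Q}$ to quantumly solving $\mathsf{P}$ of running time $\poly(\lambda,T,1/\eps)$ and advantage $\poly(1/\lambda,1/T,\eps)$. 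Instantiating the assumption parameter $\lambda$ by a fixed polynomial in $m$ and $\log q$, using $\sigma<q$, and substituting the value of $T$ produces a quantum extractor $\algo E$ of running time $\poly(m,T(\algo D),\sigma,q,1/\eps)$ that recovers $\vec x$ from $\aux(\vec x)=\Phi(\proj{\vec x})$ with probability $\poly(1/m,1/T(\algo D),1/\sigma,1/q,\eps)$, which is the claimed conclusion.

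I expect the main obstacle to be checking that the BBK machinery genuinely applies in our setting, where the ``advice'' given to the $\mathsf{P}$-solver is a quantum state $\aux(\vec x)=\Phi(\proj{\vec x})$ that is correlated with the secret $\vec x$ and cannot be cloned. This is exactly the situation the quantum rewinding of~\cite{BBK22} is built for: because the classical reduction is non-adaptive and black-box, its many simulated invocations of $\algo D$ --- each of which classically would consume a fresh copy of $\aux(\vec x)$ --- can be run coherently on a single purified copy of $\aux(\vec x)$ and subsequently uncomputed. What needs care is verifying that the hypotheses of \Cref{thm:BBK} literally hold, in particular that the queries of the Dodis et al.\ extractor are non-adaptive and that the instance/advice pair is generated in the correct order ($\vec x$ first, then $\aux(\vec x)\leftarrow\Phi(\proj{\vec x})$), and that the quantum auxiliary is correctly routed as solver advice rather than as part of a (classical) instance. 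A minor additional point is that \Cref{thm:GL} is already stated for the discrete Gaussian input distribution $D_{\Z_q^m,\sigma}$ with $\sigma\in(0,q/\sqrt m)$, matching the hypothesis of \Cref{thm:QGL} verbatim, so no change of input distribution is needed.
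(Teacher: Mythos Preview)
Your proposal is correct and matches the paper's approach exactly: the paper's proof is the single sentence ``The proof follows immediately by combining \Cref{thm:GL} and \Cref{thm:BBK},'' and you have simply spelled out in detail how that combination works (casting the problems as non-interactive assumptions $\mathsf{P}$ and $\mathsf{Q}$, checking the Dodis et al.\ extractor is non-adaptive and black-box, and tracking parameters). Your more careful treatment of the quantum auxiliary input and the applicability of the BBK framework is useful elaboration, but not a different route.
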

\begin{proof}
The proof follows immediately by combining \Cref{thm:GL} and \Cref{thm:BBK}.
\end{proof}

\section{Definition: Key-Revocable Public-Key Encryption}

Let us now give a formal definition of key-revocable public-key encryption schemes.

\begin{definition}[Key-Revocable Public-Key Encryption]\label{def:key-revocable-PKE}
A key-revocable public-key encryption scheme consists of efficient algorithms $\left(\setup,\enc,\dec,\delete\right)$, where $\enc$ is a $\mathsf{PPT}$ algorithm and $\setup,\dec$ and $\delete$ are $\QPT$ algorithms defined as follows:
\begin{itemize}
    \item $\setup(1^{\secparam})$: given as input a security parameter $\secparam$, output a public key $\pk$, a master secret key $\msk$ and a quantum decryption key $\sk$. 
    \item $\enc(\pk,x)$: given a public key $\pk$ and plaintext $x \in \{0,1\}^{\ell}$, output a ciphertext $\ct$. 
    \item $\dec(\sk,\ct)$: given a decryption key $\sk$ and ciphertext $\ct$, output a message $y$. 
    \item $\delete\left(\pk,\msk,\sigma\right)$: given as input a master secret key $\msk$, a public key $\pk$ and quantum state $\sigma$, output $\valid$ or $\invalid$. 
\end{itemize}
\end{definition}

\paragraph{Correctness of Decryption.} For every $x \in \{0,1\}^{\ell}$, the following holds: 
$$\prob\left[ x \leftarrow \dec(\sk,\ct)\ :\ \substack{(\pk,\msk,\sk) \leftarrow \setup(1^{\secparam})\\ \ \\ \ct \leftarrow \enc(\pk,x)} \right] \geq 1 - \nu(\secparam),$$
where $\nu(\cdot)$ is a negligible function. 

\paragraph{Correctness of Revocation.} The following holds:
$$\prob\left[ \valid \leftarrow \delete\left(\pk,\msk,\sk \right)\ :\ (\pk,\msk,\sk) \leftarrow \setup(1^{\secparam}) \right] \geq 1 - \nu(\secparam),$$
where $\nu(\cdot)$ is a negligible function. 

\begin{remark}\label{remark:gentle-measurement}
Using the well-known ``Almost As Good As New Lemma'' (\Cref{lem:almost}), the procedure $\dec$ can be purified to obtain another quantum circuit $\widetilde{\dec}$ such that $\widetilde{\dec}(\sk,\ct)$ yields $(x,\sk')$ with probability at least $1-\nu(\secparam)$ and moreover, $\ct \leftarrow \enc(\pk,x)$ and $\TD(\sk',\sk) \leq \nu'(\secparam)$ with $\nu'(\secparam)$ is another negligible function. 
\end{remark}

\subsection{Security Definition}

Our security definition for key-revocable public-key encryption is as follows.

\begin{figure}[!htb]
   \begin{center} 
   \begin{tabular}{|p{14cm}|}
    \hline 
\begin{center}
\underline{$\expt_{\Sigma,\algo A}\left( 1^{\secparam},b \right)$}: 
\end{center}
\noindent {\bf Initialization Phase}:
\begin{itemize}
    \item The challenger runs $(\pk,\msk,\sk) \leftarrow \setup(1^{\secparam})$ and sends $(\pk,\sk)$ to $\adversary$. 
\end{itemize}
\noindent {\bf Revocation Phase}:
\begin{itemize}
    \item The challenger sends the message {\texttt{REVOKE}} to $\adversary$. 
    \item The adversary $\adversary$ returns a state $\sigma$.
    \item The challenger aborts if $\revoke(\pk,\msk,\sigma)$ outputs $\invalid$.
\end{itemize}
\noindent {\bf Guessing Phase}:
\begin{itemize}
    \item $\adversary$ submits a plaintext $x \in \{0,1\}^{\ell}$ to the challenger.
    \item If $b=0$: The challenger sends $\ct \leftarrow \enc(\pk,x)$ to $\adversary$. Else, if $b=1$, the challenger sends $\ct \xleftarrow{\$} {\cal C}$, where ${\cal C}$ is the ciphertext space of $\ell$ bit messages. 
    \item Output $b_{\adversary}$ if the output of $\adversary$ is $b_{\adversary}$.  
\end{itemize}
\ \\ 
\hline
\end{tabular}
    \caption{Security Experiment}
    \label{fig:securityexpt}
    \end{center}
\end{figure}

\begin{definition}
\label{def:krpke:security}
A key-revocable public-key encryption scheme $\Sigma =\left(\setup,\enc,\dec,\delete\right)$ is $(\epsilon,\delta)$-secure if, for every $\QPT$ adversary $\adversary$ with 
$$\prob[\invalid \leftarrow \expt_{\Sigma,\algo A}(1^\lambda,b)] \leq \delta(\secparam)
$$
for $b \in \bit$, it holds that
$$ \left| \Pr\left[ 1 \leftarrow \expt_{\Sigma,\algo A}(1^{\secparam},0) \right] - \Pr\left[ 1 \leftarrow \expt_{\Sigma,\algo A}(1^{\secparam},1) \right]\right|\leq  \eps(\secparam),$$
where $\expt_{\Sigma,\algo A}(1^{\secparam},b)$ is as defined in~\Cref{fig:securityexpt}. If $\delta(\secparam) = 1-1/\poly(\lambda)$ and $\eps(\secparam)=\negl(\lambda)$, we simply say the key-revocable public-key encryption scheme is secure. 
\end{definition}

\begin{remark}
    Our security definition is similar to the one proposed by Agrawal et al.~\cite{AKNYY23} in the context of public-key encryption with secure leasing. 
\end{remark}

\subsection{Key-Revocable Public-Key Fully Homomorphic Encryption}
A key-revocable public-key fully homomorphic encryption scheme defined for a class of functions ${\cal F}$, in addition to $(\keygen,\enc,\dec,\delete)$, consists of the following \textsf{PPT} algorithm:
\begin{itemize}
    \item $\eval(\pk,f,\ct)$: on input a public key $\pk$, function $f \in {\cal F}$, ciphertext $\ct$, outputs another ciphertext $\ct'$. 
\end{itemize}

\begin{remark}
Sometimes we allow $\keygen$ to additionally take as input different parameters associated with the implementations of the functions in ${\cal F}$. For example, we allow $\keygen$ to take as input a parameter $L$ in such a way that all the parameters in the system depend on $L$ and moreover, the homomorphic evaluation is only supported on circuits (in ${\cal F}$) of depth at most $L$. 
\end{remark}

\paragraph{Correctness of Evaluation and Decryption.}  For every $f \in {\cal F}$ with $\ell$-bit inputs, every $x \in \{0,1\}^{\ell}$, the following holds: 
$$\prob\left[ f(x) \leftarrow \dec(\sk,\ct')\ :\ \substack{(\pk,\msk,\sk) \leftarrow \setup(1^{\secparam})\\ \ \\ \ct \leftarrow \enc(\pk,x)\\ \ \\ \ct' \leftarrow \eval(\pk,f,\ct)} \right] \geq 1 - \nu(\secparam),$$
where $\nu(\cdot)$ is a negligible function. 

\paragraph{Correctness of Revocation.} Defined as before.

\paragraph{Security.} Defined as before (\Cref{def:krpke:security}).

\section{Key-Revocable Dual-Regev Encryption}\label{sec:dual-regev}
\noindent 
In this section, we present a construction of key-revocable public-key encryption from learnin with errors. Our construction essentially involves making the Dual Regev public-key encryption of Gentry, Peikert and Vaikuntanathan~\cite{cryptoeprint:2007:432} key revocable. 

\subsection{Construction}

We define our Dual-Regev construction below.

\begin{construction}[Key-Revocable Dual-Regev Encryption]\label{cons:dual-regev} Let $n \in \N$ be the security parameter and $m \in \N$. Let $q \geq 2$ be a prime and let $\alpha,\beta,\sigma >0$ be parameters. The key-revocable public-key scheme $\mathsf{RevDual} = (\keygen,\enc,\dec,\revoke)$ consists of the following $\QPT$ algorithms:
\begin{itemize}
\item $\keygen(1^\lambda) \rightarrow (\pk,\sk,\msk):$ sample $(\vec A \in \Z_q^{n \times m},\mathsf{td}_{\vec A}) \leftarrow \GenTrap(1^n,1^m,q)$ and generate a Gaussian superposition $(\ket{\psi_{\vec y}}, \vec y) \leftarrow \mathsf{GenGauss}(\vec A,\sigma)$ with
$$
\ket{\psi_{\vec y}} \,\,= \sum_{\substack{\vec x \in \Z_q^m\\ \vec A \vec x = \vec y}}\rho_{\sigma}(\vec x)\,\ket{\vec x},
$$
for some $\vec y \in \Z_q^n$. Output $\pk = (\vec A,\vec y)$, $\sk = \ket{\psi_{\vec y}}$ and $\msk = \mathsf{td}_{\vec A}$.
\item $\Enc(\pk,\mu) \rightarrow \ct:$ to encrypt a bit $\mu \in \bit$, sample a random vector $\vec s \rand \Z_q^n$ and errors $\vec e \sim D_{\Z^{m},\,\alpha q}$ and $e' \sim D_{\Z,\,\beta q}$, and output the ciphertext pair
$$
\ct = \left(\vec s^\intercal \vec A + \vec e^\intercal \Mod{q}, \vec s^\intercal \vec y + e' + \mu \cdot \lfloor \frac{q}{2} \rfloor \Mod{q} \right) \in \Z_q^m \times \Z_q.
$$
\item $\Dec(\sk,\ct) \rightarrow \bit:$ to decrypt $\ct$, 
apply the unitary $U: \ket{\vec x}\ket{0} \rightarrow \ket{\vec x}\ket{ \ct\cdot (-\vec x,1)^\intercal}$ on input $\ket{\psi_{\vec y}}\ket{0}$, where $\sk=\ket{\psi_{\vec y}}$, and measure the second register in the computational basis. Output $0$, if the measurement outcome
is closer to $0$ than to $\lfloor \frac{q}{2} \rfloor$,
and output $1$, otherwise. 

\item $\revoke(\msk,\pk,\rho) \rightarrow \{\top,\bot\}$: on input $\mathsf{td}_{\vec A} \leftarrow \msk$ and $(\vec A,\vec y) \leftarrow \pk$, apply the measurement $\{\ketbra{\psi_{\vec y}}{\psi_{\vec y}},I  - \ketbra{\psi_{\vec y}}{\psi_{\vec y}}\}$ onto the state $\rho$ using the procedure $\mathsf{QSampGauss}(\vec A,\mathsf{td}_{\vec A},\vec y,\sigma)$ in Algorithm \ref{alg:SampGauss}. Output $\top$, if the measurement is successful, and output $\bot$ otherwise.
\end{itemize}
\end{construction}

\paragraph{Correctness of Decryption.} Follows from the correctness of Dual-Regev public-key encryption. 

\paragraph{Correctness of Revocation.} This follows from~\Cref{lem:qdgs}.  

Let us now prove the security of our key-revocable Dual-Regev scheme in \Cref{cons:dual-regev}. Our first result concerns $(\negl(\lambda),\negl(\lambda))$-security, i.e., we assume that revocation succeeds with overwhelming probability.

\begin{theorem}\label{thm:security-Dual-Regev-high-revoke}
Let $n\in \N$ and $q$ be a prime modulus with $q=2^{o(n)}$ and $m \geq 2n \log q$, each parameterized by the security parameter $\lambda \in \N$. Let $\sqrt{8m}< \sigma <q/\sqrt{8m}$ and let $\alpha,\beta \in (0,1)$ be noise ratios chosen such that $\beta/\alpha =2^{o(n)}$ and $1/\alpha= 2^{o(n)} \cdot \sigma$.
Then, assuming the subexponential hardness of the $\mathsf{LWE}_{n,q,\alpha q}^m$
and $\mathsf{SIS}_{n,q,\sigma\sqrt{2m}}^m$ problems, the scheme
$\mathsf{RevDual} = (\keygen,\enc,\dec,\revoke)$ in \Cref{cons:dual-regev}
is a $(\negl(\lambda),\negl(\lambda))$-secure key-revocable public-key encryption scheme according to \Cref{def:krpke:security}.
\end{theorem}

To prove the stronger variant of $(\negl(\lambda),1-1/\poly(\lambda))$-security, i.e., where we do not make any requirements about the success probability of revocation, we need to invoke \Cref{thm:search-to-decision}. 

\begin{theorem}\label{thm:security-Dual-Regev}
Let $n\in \N$ and $q$ be a prime modulus with $q=2^{o(n)}$ and $m \geq 2n \log q$, each parameterized by the security parameter $\lambda \in \N$. Let $\sqrt{8m}< \sigma <q/\sqrt{8m}$ and let $\alpha,\beta \in (0,1)$ be noise ratios chosen such that $\beta/\alpha =2^{o(n)}$ and $1/\alpha= 2^{o(n)} \cdot \sigma$.
Assuming \Cref{thm:search-to-decision}, the scheme
$\mathsf{RevDual} = (\keygen,\enc,\dec,\revoke)$ in \Cref{cons:dual-regev}
is a $(\negl(\lambda),1-1/\poly(\lambda))$-secure key-revocable public-key encryption scheme according to \Cref{def:krpke:security}.
\end{theorem}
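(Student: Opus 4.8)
The plan is to assume, towards a contradiction, a $\QPT$ adversary $\adversary$ that wins $\expt_{\mathsf{RevDual}}^{\adversary}(1^{n},b)$ with probability $\tfrac12+\eps$ for some non-negligible $\eps=1/\poly(n)$, and to turn it into either a distinguisher against $\LWE_{n,q,\alpha q}^m$ or a solver for $\SIS_{n,q,\sigma\sqrt{2m}}^m$, contradicting a hypothesis. \emph{(Step 1: weak revocation.)} Since $\revoke$ accepting projects the returned register $R$ onto $\ketbra{\psi_{\vec y}}{\psi_{\vec y}}$, which by \Cref{lem:gaussian-tails} is supported, up to negligible weight, on vectors $\vec x'$ with $\|\vec x'\|\le\sigma\sqrt{m/2}$ and $\vec A\vec x'=\vec y \Mod q$, I would first insert into the experiment an extra check that measures $R$ in the computational basis and accepts only when the outcome is such a short preimage. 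Conditioned on $\revoke$ accepting, this always accepts, so the winning probability is unchanged; but the modified experiment now \emph{produces an explicit short preimage $\vec x'$ of $\vec y$} whenever it does not abort.

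\emph{(Step 2: the $\SIS$ solver, assuming extraction.)} Suppose for the moment one can efficiently extract from the adversary's residual side information $\textsc{Aux}$ a second short preimage $\vec x_0$ with $\vec A\vec x_0=\vec y \Mod q$, with probability $\poly(\eps,1/q,1/m)$, \emph{jointly with} $\revoke$ accepting on $R$. Then a $\SIS$ solver on input $\vec A$ proceeds as follows: produce $(\ket{\psi_{\vec y}},\vec y)\leftarrow\mathsf{GenGauss}(\vec A,\sigma)$ (which needs only $\vec A$, not $\td_{\vec A}$), run $\adversary$ to obtain $R$ and $\textsc{Aux}$, \emph{skip} the trapdoor-dependent check and instead measure $R$ directly to obtain $\vec x'$, and run the extractor on $\textsc{Aux}$ to obtain $\vec x_0$; output $\vec x_0-\vec x'\in\Lambda_q^\bot(\vec A)$, which is short. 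The remaining point is that $\vec x_0\neq\vec x'$ with non-negligible probability: since $\vec A$ is full rank (\Cref{lem:full-rank}), \Cref{lem:max:ampl:bound} bounds every amplitude of $\ket{\psi_{\vec y}}$ by $2^{-m/2+O(1)}$, so the measurement of $R$ (conditioned on $\revoke$) is spread over exponentially many preimages; a careful analysis then shows that even when $\adversary$ entangles $R$ with $\textsc{Aux}$ so as to correlate the measurement with the extracted $\vec x_0$, the two preimages disagree except with negligible probability. Hence $\vec x_0-\vec x'$ is a valid $\SIS_{n,q,\sigma\sqrt{2m}}^m$ solution.

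\emph{(Step 3: the extraction, via quantum Goldreich--Levin and $\LWE$.)} Write $\vec y=\vec A\vec x_0 \Mod q$ for $\vec x_0\sim D_{\Z_q^m,\sigma}$ sampled \emph{first}; by smoothing this is statistically equivalent to $\keygen$, with $\ket{\psi_{\vec y}}$ re-prepared from scratch by $\mathsf{QSampGauss}$ using $\td_{\vec A}$ (\Cref{lem:qdgs}). Let $\Phi$ be the $\CPTP$ map that, on input $\proj{\vec x_0}$, prepares $\ket{\psi_{\vec y}}$, runs the first stage of $\adversary$ to get $(R,\textsc{Aux})$, applies $\revoke$ to $R$, and outputs $\textsc{Aux}$ together with the bit ``$\revoke$ accepted''. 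Introduce the hybrid challenge $\mathcal H=(\vec u,\vec u^\intercal\vec x_0+e')$ with $\vec u\rand\Z_q^m$ and $e'\sim D_{\Z,\beta q}$. Using the noise-flooding lemma (\Cref{lem:shifted-gaussian}) with $1/\alpha=2^{o(n)}\sigma$ and $\beta/\alpha=2^{o(n)}$, an honest ciphertext $(\vec s^\intercal\vec A+\vec e^\intercal,\ \vec s^\intercal\vec y+e')$ with the $\mu$-term removed is statistically close to $(\vec s^\intercal\vec A+\vec e^\intercal,\ (\vec s^\intercal\vec A+\vec e^\intercal)\cdot\vec x_0+e')$, and $\mathcal H$ is statistically close to $(\vec u,\vec u^\intercal\vec x_0)$. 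A $1$-bit-unpredictability triangle inequality (in the spirit of \Cref{lem:hybrid-lemma}) then shows that if $\adversary$ predicts honest-versus-uniform with advantage $\eps$ conditioned on $\revoke$, it must either (i) predict honest-versus-$\mathcal H$ with advantage $\ge\eps$, or (ii) predict uniform-versus-$\mathcal H$ with advantage $\ge\eps$, both still conditioned on $\revoke$ accepting. In case (i) I would drop the conditioning on $\revoke$ --- valid \emph{only because} we are in the $1$-bit unpredictability regime, where removing the conditioning can only increase the advantage, and it also removes any need for $\td_{\vec A}$ in the simulation --- and then invoke $\LWE_{n,q,\alpha q}^m$ (applied with matrix $[\vec A\,\|\,\vec y]$, the flooding noise $e'$ added to the last coordinate) to conclude the advantage is negligible, a contradiction. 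Case (ii) is exactly the hypothesis of the quantum Goldreich--Levin theorem for $\Z_q$ (\Cref{thm:QGL}) with auxiliary-information channel $\Phi$; it yields a quantum extractor recovering $\vec x_0$ from $\textsc{Aux}$ in time $\poly(m,q,1/\eps)$ with probability $\poly(1/m,1/q,\eps)$, and since $\Phi$ also reports whether $\revoke$ accepted, this success is joint with $\revoke$ accepting --- precisely what Step~2 needs.

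\emph{(Main obstacle.)} The hard part is making Steps 2 and 3 coexist: the reduction must never disturb $\ket{\psi_{\vec y}}$ before $\revoke$ is applied (else $\revoke$ rejects), must avoid $\td_{\vec A}$ when invoking $\LWE$ (since it is attacking $\LWE$ relative to $\vec A$), and must defeat the ``clone, return the clone, keep a working copy'' strategy --- which is exactly what the bounded-amplitude estimate of \Cref{lem:max:ampl:bound} together with the entanglement analysis accomplish, forcing a fresh computational-basis measurement of $R$ to almost never collide with the preimage extracted from $\textsc{Aux}$. Routine parameter bookkeeping (the noise-flooding and tail bounds under $\beta/\alpha=2^{o(n)}$, $1/\alpha=2^{o(n)}\sigma$, $q=2^{o(n)}$) and correctness of $\dec$ and $\revoke$ (already noted via \Cref{lem:qdgs}) complete the argument.
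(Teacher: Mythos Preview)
Your high-level plan matches the paper's: weak revocation, a single hybrid $\mathcal H=(\vec u,\vec u^\intercal\vec x_0+e')$, a case split into ``honest vs.\ $\mathcal H$'' (killed by $\LWE$) and ``uniform vs.\ $\mathcal H$'' (feeding quantum Goldreich--Levin), and finally the distinct-pair extraction against $\SIS$ via \Cref{lem:max:ampl:bound}. The pieces are all there, but Step~3 has a genuine gap in how you define $\vec x_0$ for the hybrid.

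You fix $\vec x_0$ as \emph{sampled first}, with $\ket{\psi_{\vec y}}$ re-prepared by $\mathsf{QSampGauss}$ using $\td_{\vec A}$, and then in case~(i) claim that dropping $\revoke$ ``removes any need for $\td_{\vec A}$ in the simulation.'' It does not: under your definition of $\mathcal H$, the $\LWE$ reduction must still produce $\ket{\psi_{\vec y}}$ for the specific $\vec y=\vec A\vec x_0$, and that is exactly what requires $\td_{\vec A}$. (Relatedly, the reduction is $\LWE$ with matrix $\vec A$, not $[\vec A\,\|\,\vec y]$: after noise-flooding rewrites the second component as $(\vec s^\intercal\vec A+\vec e^\intercal)\cdot\vec x_0+e'$, one swaps $\vec s^\intercal\vec A+\vec e^\intercal$ for $\vec u$---but the reduction must then know $\vec x_0$.) The paper resolves this by using \emph{two} statistically-close versions of the hybrid: in $\mathsf{hyb}_0$ the state comes from $\mathsf{GenGauss}$ and $\vec x_0$ is the outcome of your Step~1's \textsf{WeakRevoke} measurement on $R$ (no trapdoor anywhere once $\revoke$ is dropped), which is where $\LWE$ is invoked; only \emph{after} establishing that the ``uniform vs.\ hybrid'' advantage is $1/\poly$ does the paper switch to $\mathsf{hyb}_1$ with $\vec x_0$ sampled first and $\ket{\psi_{\vec y}}$ via $\mathsf{QSampGauss}$, and apply \Cref{thm:QGL}. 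You introduced \textsf{WeakRevoke} in Step~1 but never tied it to $\vec x_0$ in Step~3; doing so closes the gap and makes your argument essentially identical to the paper's.
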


\paragraph{Guide for proving~\Cref{thm:security-Dual-Regev-high-revoke} and \Cref{thm:security-Dual-Regev}.} 
\begin{itemize}

    \item First, we prove a technical lemma (\Cref{lem:remove-revoke}) that helps us remove the condition that revocation succeeds when analyzing the advantage of a distinguisher. Our proof uses \emph{projective implementations} which allow us to estimate the success probability of quantum programs.
    \item The next step towards proving~\Cref{thm:security-Dual-Regev-high-revoke} is a search-to-decision reduction with quantum auxiliary input for the Dual-Regev scheme (\Cref{thm:search-to-decision-without-revoke}). Here, we show how to extract a short vector mapping $\bfA$ to $\bfy$ from an efficient adversary who has a non-negligible distinguishing advantage at distinguishing Dual-Regev ciphertexts from uniform. 

    \item Next, we state the \emph{Simultaneous Dual-Regev Extraction} conjecture in \Cref{thm:search-to-decision}, which is a strengthening of our search-to-decision reduction in \Cref{thm:search-to-decision-without-revoke}. Informally, it says that extraction of a short vector mapping $\bfA$ to $\bfy$ succeeds, even if we apply revocation on a separate register. We prove that \Cref{thm:search-to-decision} holds assuming $\LWE$/$\SIS$ in the special case when revocation succeeds with overwhelming probability. This is captured by \Cref{thm:search-to-decision-revoke-1}.
    
    \item Next, we prove technical lemma which exploits the search-to-reduction to extract two \emph{distinct} short vectors mapping $\bfA$ to $\bfy$. This is proven in~\Cref{sec:distinct-pre-images}. 
    \item Finally, we put all the pieces together in~\Cref{sec:dual-regev-proof} and show how to use the result from~\Cref{sec:distinct-pre-images} in order to break the $\SIS$ assumption. 
\end{itemize}

\subsection{Threshold Implementations}

In this section, we prove \Cref{lem:remove-revoke}. This is a useful ingredient in the security proofs behind our key-revocable Dual-Regev encryption scheme.

First, we review some recent techniques that allow us to measure the success probability of \emph{quantum programs}. In the classical setting, this task is fairly straightforward: simply execute a given program on samples from a \emph{test distribution}, and check how many times the program succeeds. Using standard concentration inequalities, one can then estimate the success probability to inverse polymonial precision. In the quantum realm, however, this task is non-trivial if the quantum program is run with respect to quantum auxiliary inputs. 

Inspired by the work of Marriott and Watrous~\cite{marriott2005quantum},
Zhandry~\cite{cryptoeprint:2020/1191} introduced the notion of projective implementations which allow us to accomplish this task efficiently. Below, we introduce some relevant definitions and results from the original work of Zhandry~\cite{cryptoeprint:2020/1191}, as well as subsequent follow-up works~\cite{ALLZZ21,CLLZ21,cryptoeprint:2022/884}. First, we discuss \emph{inefficient} measurement techniques for measuring the success probability of a quantum program. Next, we move onto \emph{efficient} measurement techniques that allow us to obtain such estimates approximately.

\paragraph{Inefficient measurements.}

Suppose we we have quantum program, say consisting of a quantum circuit and some quantum auxiliary input, and we wish to estimate its success probability. A natural starting point is to consider a two-outcome POVM $\algo P = (P,Q)$ over the two outcomes $0$ (success) and $1$ (failure). Zhandry~\cite{cryptoeprint:2020/1191} showed that for any such $\algo P$, there exists a natural projective measurement (called a \emph{projective implementation}) such that the post-measurement state corresponds precisely to an eigenvector of $P$. Moreover, there exists a projective measurement $\algo E$ that \emph{measures} the success probability with respect to $\algo P$ on some auxiliary input state; specifically,
\begin{itemize}
    \item $\algo E$ outputs a probability $p \in [0,1]$ (i.e., a real number) from the set of eigenvalues of $P$.
    \item The post-measurement state after obtaining outcome $p$ corresponds to an eigenvector of $P$ with eigenvalue $p$; similarly, it is an eigenvector of $Q = I - P$ with eigenvalue $1-p$.
\end{itemize}
The measurement $\algo E$ is projective in the following sense: whenever we apply the same measurement $\algo E$ on the post-measurement state, we obtain precisely the same outcome. The following theorem is implicit in \cite[Lemma 1]{cryptoeprint:2020/1191}, but we rely on the presentation from \cite[Theorem 2.5]{cryptoeprint:2022/884}.

\begin{theorem}[Projective implementation]\label{thm:pi} Let $\algo P = (P,Q)$ be a two-outcome POVM and let $\algo D$ be the distribution over the eigenvalues of $P$. Then, there exists a projective measurement $\algo E = \{E_p\}_{p \in \algo D}$ with index set $\algo D$ such that: for every quantum state $\rho$, where we let $\rho_p = E_p \rho E_p$ denote the sub-normalized post-measurement state after measuring $\rho$ via $E_p$, it holds that
\begin{itemize}
    \item For every $p \in \algo D$, the state $\rho_p$ is an eigenvector of $P$ with eigenvaue $p$, and
    \item the probability of $\rho$ when measured with respect to $P$ is equal to $\Tr[P\rho] = \sum_{p \in \algo D}\Tr[P\rho_p]$.
\end{itemize}
    
\end{theorem}

\begin{remark}
Suppose that $\algo P = (P,Q)$ is a two-outcome POVM and that $P$ has an eigenbasis $\{\ket{\psi_i}\}$ with associated eigenvalues $\{\lambda_i\}$. Because $P$ and $Q$ commute, they share a common eigenbasis. In this case, there exists a natural measurement $\algo E$ that corresponds to a projective implementation of the POVM $\algo P$; namely, for any input $\ket{\psi}$, which can express as $\ket{\psi} = \sum_i \alpha_i \ket{\psi_i}$, the measurement $\algo E = \{ E_{\lambda_i}\}$ will result in outcome $\lambda_i$ and a leftover eigenstate $\ket{\psi_i}$ with probabiliy $|\alpha_i|^2$.
\end{remark}

Next, we use a generalization of projective implementations introduced in~\cite{ALLZZ21}. Rather than estimating the success probability directly, we can instead measure whether it is above or below a certain threshold. This gives rise to the following notion of \emph{threshold implementations}.

\begin{theorem}[Threshold implementation]\label{def:TI}
Let $\gamma \in (0,1)$ be a parameter and let $\algo P = (P,Q)$ be a two-outcome POVM, where $P$ has an eigenbasis $\{\ket{\psi_i}\}$ with associated eigenvalues $\{\lambda_i\}$. Then, there exists a projective threshold implementation $(\mathsf{TI}_\gamma(\algo P), \mathrm{I}- \mathsf{TI}_\gamma(\algo P))$ such that
\begin{itemize}
    \item $\mathsf{TI}_\gamma(\algo P)$ projects a quantum state into the subspace spanned by $\{\ket{\psi_i}\}$ whose eigenvalues $\lambda_i$ satisfy the property $\lambda_i \leq \gamma$. 

    \item $\mathrm{I}-\mathsf{TI}_\gamma(\algo P)$ projects a quantum state into the subspace spanned by $\{\ket{\psi_i}\}$ whose eigenvalues $\lambda_i$ satisfy the property $\lambda_i > \gamma$. 
\end{itemize}
\end{theorem}
The proof of the theorem above follows directly from \Cref{thm:pi} by considering the projective measurements $\mathsf{TI}_\gamma(\algo P) = \sum_{i: \lambda_i\leq \gamma} E_{\lambda_i}$ and $\mathrm{I} -\mathsf{TI}_\gamma(\algo P) = \mathrm{I} -\sum_{i: \lambda_i > \gamma} E_{\lambda_i}$.

Finally, we also use the following \emph{symmetric} variant of threshold implementations which were considered in \cite[Theorem 2.6]{cryptoeprint:2022/884}. Here, the projective measurement determines whether the success probability is either close to $1/2$ or far from $1/2$.

\begin{theorem}[Symmetric threshold implementation]
Let $\gamma \in (0,1/2)$ be a parameter and let $\algo P = (P,Q)$ be a two-outcome POVM, where $P$ has an eigenbasis $\{\ket{\psi_i}\}$ with associated eigenvalues $\{\lambda_i\}$. Then, there exists a projective threshold implementation $(\mathsf{STI}_\gamma(\algo P), \mathrm{I}- \mathsf{STI}_\gamma(\algo P))$ such that
\begin{itemize}
    \item $\mathsf{STI}_\gamma(\algo P)$ projects a quantum state into the subspace spanned by $\{\ket{\psi_i}\}$ whose eigenvalues $\lambda_i$ satisfy the property $|\lambda_i - \frac{1}{2}| \leq \gamma$. 

    \item $\mathrm{I}-\mathsf{STI}_\gamma(\algo P)$ projects a quantum state into the subspace spanned by $\{\ket{\psi_i}\}$ whose eigenvalues $\lambda_i$ satisfy the property $|\lambda_i - \frac{1}{2}| > \gamma$. 
\end{itemize}
\end{theorem}
The proof of the theorem above follows directly from \Cref{thm:pi} by considering the projective measurements $\mathsf{STI}_\gamma(\algo P) = \sum_{i: |\lambda_i - \frac{1}{2}| \leq \gamma} E_{\lambda_i}$ and $\mathrm{I} -\mathsf{STI}_\gamma(\algo P) = \mathrm{I} -\sum_{i: |\lambda_i - \frac{1}{2}| > \gamma} E_{\lambda_i}$.

\paragraph{Efficient measurements.}

The quantum measurements we described above can, in general, not be implemented efficiently. However, Zhandry~\cite{cryptoeprint:2020/1191} showed that there exist so-called efficient \emph{approximate implementations} which allow one to obtain approximate estimates of the success probability of a quantum program.
In this section, we review some basic definitions and results that allow us to perform such measurmenents efficiently.

\begin{definition}[Mixture of projective measurements]
Let $\algo P = \{\algo P_i\}_{i \in \algo I}$ be a collection of binary outcome projective measurements $\algo P_i = (P_i,Q_i)$ over the same Hilbert space $\algo H$, and suppose that $P_i$ corresponds to outcome $1$ and $Q_i$ corresponds to outcome $0$. Let $D$ be a distribution over the the index set $\algo I$. Then, $\algo P_D = (P_D,Q_D)$ is the following mixture of pojective measurements:
$$
P_D = \sum_{i \in \algo I} \Pr[i \leftarrow D] \, P_i \quad\quad \text{ and } \quad\quad Q_D = \sum_{i \in \algo I} \Pr[i \leftarrow D] \, Q_i.
$$    
\end{definition}

The following result is adapted from~\cite[Theorem 6.2]{cryptoeprint:2020/1191} and~\cite[Corollary 1]{ALLZZ21}.

\begin{lemma}[Approximate threshold implementation]\label{lem:ati}
Let $\cP_D = (P_D,Q_D)$ be a binary outcome POVM over Hilbert space $\Hs$ that is a mixture of projective measurements over some distribution $D$. Let $\eps,\delta,\gamma \in (0,1)$. Then, there exists an efficient binary-outcome quantum algorithm $\sati_{\cP,D,\gamma}^{\epsilon, \delta}$, interpreted as the POVM element corresponding to outcome $1$,  such that the following holds:
\begin{itemize}
    \item For all quantum states $\rho$, $\Tr[\sati_{\cP,D,\gamma-\epsilon}^{\epsilon, \delta} \, \rho] \geq \Tr[\mathsf{TI}_\gamma(\algo P_D) \,\rho] - \delta$.

    \item For all quantum states $\rho$, it holds that
    $\Tr[\mathsf{TI}_{\gamma - 2 \eps}(\algo P_D) \,\rho'] \geq 1 - 2\delta$, where $\rho'$ is the post-measurement state which results from applying the measurement $\sati_{\cP,D, \gamma}^{\epsilon, \delta}$ to $\rho$.

    \item The expected running time to implement $\sati_{\cP,D, \gamma}^{\epsilon, \delta}$ is proportional to $\poly(1/\eps,\log(1/\delta))$, the time it takes to implement $P_D$, and the time it takes to sample from $D$. 
\end{itemize}
\end{lemma}

Finally, we use the following 
\emph{symmetric} version of the approximate threshold implementation \Cref{lem:ati} which is a variant of \cite[Theorem 2.8]{cryptoeprint:2022/884}.

\begin{lemma}[Symmetric approximate threshold implementation]\label{lem:sym-ati}
Let $\cP_D = (P_D,Q_D)$ be a binary outcome POVM over Hilbert space $\Hs$ that is a mixture of projective measurements over some distribution $D$. Let $\gamma \in (0,1/2)$ and $\eps \in (0,\gamma/2)$, and let $\delta\in (0,1)$. Let $D$ be a distribution. Then, there exists an efficient binary-outcome quantum algorithm $\mathsf{SATI}_{\cP,D,\gamma}^{\epsilon, \delta}$, interpreted as the POVM element corresponding to outcome $1$,  such that the following holds:
\begin{itemize}
    \item For all quantum states $\rho$, $\Tr[\mathsf{SATI}_{\cP,D,\gamma-\epsilon}^{\epsilon, \delta} \, \rho] \geq \Tr[\mathsf{STI}_\gamma(\algo P_D) \,\rho] - \delta$.

    \item For all quantum states $\rho$, it holds that
    $\Tr[\mathsf{STI}_{\gamma - 2 \eps}(\algo P_D) \,\rho'] \geq 1 - 2\delta$, where $\rho'$ is the post-measurement state which results from applying the measurement $\mathsf{SATI}_{\cP,D, \gamma}^{\epsilon, \delta}$ to $\rho$.

    \item The expected running time to implement $\mathsf{SATI}_{\cP,D, \gamma}^{\epsilon, \delta}$ is proportional to $\poly(1/\eps,\log(1/\delta))$, the time it takes to implement $P_D$, and the time it takes to sample from $D$. 
\end{itemize}
\end{lemma}

\paragraph{Useful Lemma.} We are now ready to prove an important lemma. Roughly speaking, the lemma says the following. Suppose we have a bipartite state $\rho$ on two registers $\textsc{R}$ 
and $\textsc{Aux}$ with the guarantee that conditioned on a binary outcome POVM succeeding on $\textsc{R}$, given the register $\textsc{Aux}$, a distinguisher can successfully distinguish two distributions ${\cal D}_0$ and ${\cal D}_1$. The lemma states that there is a distinguisher that can distinguish ${\cal D}_0$ and ${\cal D}_1$ {\em regardless of the outcome of the binary-outcome POVM on $\textsc{R}$}. 

\begin{lemma}\label{lem:remove-revoke}
Let $\lambda \in \N$ be a parameter and let $\rho_{\textsc{R},\textsc{Aux}}$ be a quantum state on systems $\textsc{R}$ and \textsc{Aux} of at most $\poly(\lambda)$ many qubits. Let $D_0,D_1$ be two efficiently samplable distributions with support $\algo X$. Let $\algo D$ be a $\QPT$ algorithm. Suppose that the following two properties hold:
\begin{itemize}
    \item A (possibly inefficient) two-outcome POVM $\algo M = \{M_1 ,M_0\}$ succeeds on system \textsc{R} with probability at least 
    $$
    \Tr[(M_1 \otimes I_{\textsc{Aux}}) \rho] \geq \frac{1}{p(\lambda)}
    $$ 
    for some polynomial $p(\lambda)$.

    \item the algorithm $\algo D$ succeeds at distinguishing $D_0$ from $D_1$ with advantage
    $$
    \left|\Pr\left[\algo D(x,\textsc{Aux})=b \, : \, 
    \substack{
    b \rand \bit\\
    x \sim D_b\\
    1 \leftarrow \algo M(\textsc{R})
    }\right] - \frac{1}{2}\right|\geq \frac{1}{q(\lambda)}.
    $$
    for some polynomial $q(\lambda)$ conditioned on the measurement $\algo M$ succeeding on register $\textsc{R}$. 
\end{itemize}
Then, there exists a $\QPT$ algorithm $\tilde{\algo D}$ and a polynomial $\mu(\lambda)$ such that $\tilde{\algo D}$ succeeds at distinguishing $D_0$ and $D_1$ with advantage at least $1/\mu(\lambda)$ on the reduced system alone, i.e.
    $$
   \left|\Pr\left[\tilde{\algo D}(x,\textsc{Aux})=b \, : \, 
    \substack{
    b \rand \bit\\
    x \sim D_b
    }\right] - \frac{1}{2}\right|
   \geq \frac{1}{\mu(\lambda)},
    $$
 where system \textsc{Aux} corresponds to the reduced state   $\rho_{\textsc{Aux}} = \Tr_{\textsc{R}}[\rho_{\textsc{R},\textsc{Aux}}]$.
\end{lemma}
\begin{proof}
Consider the binary outcome POVM $\algo P = ( P_{(D_0,D_1)}, Q_{(D_0,D_1)})$ with $ Q_{(D_0,D_1)} = I - P_{(D_0,D_1)}$ which is the following mixture of projective measurents such that 
$$
P_{(D_0,D_1)} = \frac{\Pi_0 + \Pi_1}{2}
$$
where $\Pi_0,\Pi_1$ are mixtures of two-outcome POVMs $\{\algo P_x\}$ that correspond to running $\algo D$ on samples $x$ from $D_0,D_1$ and system \textsc{Aux}, and then measuring whether the output is $0$ or $1$, i.e.
$$
\Pi_0 = \sum_{x \in \algo X} \Pr[x \leftarrow D_0] \, \algo P_x \quad\quad \text{ and } \quad\quad \Pi_1 = \sum_{x \in \algo X} \Pr[x \leftarrow D_1] \,\algo P_x.
$$
Let $\algo P$ have an eigenbasis $\{ \ket{\psi_i}\}$ with eigenvalues $\{ \lambda_i\}$.
Without loss of generality we can assume that $\rho_{R,\textsc{Aux}}$ is a pure state $\ket{\psi} = \sum_{i} \alpha_i \ket{\psi}$ in systems $R$ and \textsc{Aux}. Moreover, we can write $\ket{\psi}$ as
$$
\ket{\psi}=\sum_{i: \, |\lambda_i - \frac{1}{2}| \geq \frac{1}{q} } \alpha_i \ket{\psi_i} + \sum_{i: \, |\lambda_i - \frac{1}{2}| < \frac{1}{q}} \alpha_i \ket{\psi_i}.
$$
Let $\eps = 1/8p$, $\delta = 2^{-\lambda}$ and $\gamma = 1/2p$ be parameters.
Consider the following distinguisher $\tilde{\algo D}$:
\begin{itemize}
    \item Run the efficient approximate threshold implementation $\mathsf{SATI}_{\cP, (D_0,D_1),\gamma}^{\epsilon, \delta}$ from~\Cref{lem:sym-ati} on system \textsc{Aux} for the binary-outcome POVM given by $\algo P$.
    \item If the outcome of $\mathsf{SATI}_{\cP,(D_0,D_1), \gamma}^{\epsilon, \delta}$ is $1$, then run $\algo D$ on the post-measurement system \textsc{Aux}, and output whatever $\algo D$ outputs. Otherwise, output a random bit.
\end{itemize}
Let us now analyze the success probability of $\tilde{\algo D}$. Because the two-outcome POVM $\algo M$ succeeds on system $\textsc{R}$ with probability at least $\frac{1}{p}$ and because $\algo D$ succeeds with advantage at least $\frac{1}{q}$ on system \textsc{Aux} conditioned on $\algo M$ outputting $1$, we have
that $\ket{\psi}$ has weight at least $\frac{1}{p}$ on eigenvectors with eigenvalues $\lambda_i$ such that $|\lambda_i - \frac{1}{2}| \geq \frac{1}{q}$. In other words,
$$
\sum_{i: \, |\lambda_i - \frac{1}{2}| \geq \frac{1}{q}} |\alpha_i|^2 \geq \frac{1}{p}. 
$$
Therefore, the probability that $\mathsf{SATI}_{\cP,(D_0,D_1), \gamma}^{\epsilon, \delta}$ outputs $1$ on system \textsc{Aux} is at least
$$
\Tr\left[\mathsf{SATI}_{\cP,(D_0,D_1), \gamma}^{\epsilon, \delta}(\textsc{Aux})\right] \geq \frac{1}{p} - 2 \delta = O\left(1/p\right).
$$
Moreover, the post-measurement state in system $\tilde{\textsc{Aux}}$ after getting outcome $1$ has weight $1-2\delta$ on eigenvectors $\bracketsC{\ket{\psi_i}}$ such that $|\lambda_i - \frac{1}{2}| > \gamma - 2\epsilon$. Therefore, with probability at least $1-2\delta$, $\algo D$ has an advantage of at least $\gamma - 2 \epsilon$ at outputting the correct bit when run on the collapsed post-measurement system $\tilde{\textsc{Aux}}$.

However, if the measurement $\mathsf{SATI}_{\cP,(D_0,D_1), \gamma}^{\epsilon, \delta}$ on system \textsc{Aux} fails and outputs $0$, then $\tilde{\algo D}$ succeeds with probability $1/2$. Therefore, with overwhelming probability, $\tilde{\algo D}$ has advantage at least
\begin{align*}
& \vline \Pr\left[\tilde{\algo D}(x,\textsc{Aux})=b \, : \, 
    \substack{
    b \rand \bit\\
    x \sim D_b
    }\right] - \frac{1}{2} \,\vline\\
    & = \, \vline \Pr\left[\algo D(x,\tilde{\textsc{Aux}})=b \, : \, 
    \substack{
    b \rand \bit\\
    x \sim D_b
    }\right] \cdot \Tr\left[\mathsf{SATI}_{\cP,(D_0,D_1), \gamma}^{\epsilon, \delta}(\textsc{Aux})\right]\\
    &\quad\quad + \frac{1}{2} \left(1-\Tr\left[\mathsf{SATI}_{\cP,(D_0,D_1), \gamma}^{\epsilon, \delta}(\textsc{Aux})\right] \right) - \frac{1}{2} \,\vline\\
    &= \Tr\left[\mathsf{SATI}_{\cP,(D_0,D_1), \gamma}^{\epsilon, \delta}(\textsc{Aux})\right] \cdot \,\,\vline\Pr\left[\algo D(x,\tilde{\textsc{Aux}})=b \, : \, 
    \substack{
    b \rand \bit\\
    x \sim D_b
    }\right] - \frac{1}{2} \,\vline\\
    &\geq \left(1/p - 2\delta \right) \cdot (\gamma - 2\eps) \,\geq \, 1/\poly(\lambda).
\end{align*}
Finally, we remark that the running time of the distinguisher $\tilde{\algo D}$ is proportional to the running time of $\algo D$ and $\poly(1/\eps,\log(1/\delta))$, and hence it is efficient.
\end{proof}

\subsection{Simultaneous Search-to-Decision Reduction with Quantum Auxiliary Input}\label{sec:simult:srchtodecision}

Our first result concerns distinguishers with quantum auxiliary input that can distinguish between Dual-Regev samples and uniformly random samples with high probability. In \Cref{thm:search-to-decision-without-revoke}, we give a search-to-decision reduction: we show that such distinguishers can be converted into a quantum extractor that can obtain a Dual-Regev secret key with overwhelming probability. We then state a strenghtening of this extraction property (which we call \emph{Simultaneous Dual-Regev Extraction}) in \Cref{thm:search-to-decision}. Informally, this property states that extraction is possible even if additionally require that a \emph{revocation} procedure succeeds on a separate register. 

While we do not know how to prove \Cref{thm:search-to-decision} under standard assumptions, we prove that \emph{Simultaneous Dual-Regev Extraction} holds assuming $\LWE$/$\SIS$ in the special case when revocation succeeds with overwhelming probability. This is captured by \Cref{thm:search-to-decision-revoke-1}.

\paragraph{Search-to-decision reduction.} We first show the following result.

\begin{theorem}[Search-to-Decision Reduction with Quantum Auxiliary Input]\label{thm:search-to-decision-without-revoke}
Let $n\in \N$ and $q$ be a prime modulus with $q=2^{o(n)}$ and let $m \geq 2n \log q$, each parameterized by the security parameter $\lambda \in \N$. Let $\sqrt{8m}< \sigma <q/\sqrt{8m}$ and let $\alpha,\beta \in (0,1)$ be noise ratios with $\beta/\alpha = 2^{o(n)}$ and $1/\alpha= 2^{o(n)} \cdot \sigma$. Let
$\algo A = \{(\algo A_{\lambda,\vec A,\vec y},\nu_\lambda)\}_{\lambda \in \N}$
be any non-uniform quantum algorithm consisting of a family of polynomial-sized quantum circuits
$$\Bigg\{\algo A_{\lambda,\vec A,\vec y}: \algo L(\algo H_q^m \otimes \algo H_{B_\lambda}) \allowbreak\rightarrow \algo L(\algo H_{R_\lambda} \otimes \algo H_{\textsc{aux}_\lambda})\Bigg\}_{\vec A \in \Zq^{n \times m}, \,\vec y \in \Z_q^n}$$
and polynomial-sized advice states $\nu_\lambda \in \algo D(\algo H_{B_\lambda})$ which are independent of $\vec A$.
Then,
assuming the quantum hardness of the $\LWE_{n,q,\alpha q}^m$ assumption, the following holds for every $\QPT$ distinguisher $\mathcal{D}$. Suppose that there exists a function $\eps(\lambda) = 1/\poly(\lambda)$ such that
\begin{align*}
\vline \, \prob\left[1 \leftarrow \mathsf{SearchToDecisionExpt}^{\adversary,\algo D}(1^{\secparam},0)\right] - \prob\left[1 \leftarrow \mathsf{SearchToDecisionExpt}^{\adversary,\algo D}(1^{\secparam},1)\right]\vline
= \eps(\lambda).
\end{align*}
\begin{figure}[!htb]
   \begin{center} 
   \begin{tabular}{|p{14cm}|}
    \hline 
\begin{center}
\underline{$\mathsf{SearchToDecisionExpt}^{\adversary,\algo D}\left( 1^{\secparam},b \right)$}: 
\end{center}
\begin{itemize}
\item If $b=0$: output $\lwe.\dist^{\adversary,\algo D}\left( 1^{\secparam} \right)$ defined in \Cref{fig:dist-lwe-tilde}.
\item If $b=1$: output $\unif.\dist^{\adversary,\algo D}\left( 1^{\secparam} \right)$ defined in \Cref{fig:dist-unif-tilde}.
\end{itemize}
\ \\
\hline
\end{tabular}
    \caption{The experiment $\mathsf{SearchToDecisionExpt}^{\adversary,\algo D}\left( 1^{\secparam},b \right)$.}
    \label{fig:search-to-decision}
    \end{center}
\end{figure}

\noindent Then, there exists a quantum extractor $\algo E$ that takes as input $\vec A$, $\vec y$ and system $\vec{\textsc{Aux}}$ of the state $\rho_{\reg R,\vec{\textsc{Aux}}}$ and outputs a short vector in the coset $\Lambda_q^{\vec y}(\vec A)$ in time $\poly(\lambda,m,\sigma,q,1/\eps)$ such that
\begin{align*}
\Pr\left[\substack{
\mathcal{E}(\vec A,\vec y,\rho_{\vec{\textsc{Aux}}}) = \vec x\vspace{1mm}\\ \bigwedge \vspace{0.1mm}\\ {\vec x} \,\,\in\,\, \Lambda_q^{\vec y}(\vec A) \,\cap\, \algo B^m(\vec 0,\sigma \sqrt{\frac{m}{2}})} \, : \, \substack{
 \vec A \rand \Z_q^{n \times m}\\
(\ket{\psi_{\vec y}},\vec y) \leftarrow \mathsf{GenGauss}(\vec A,\sigma)\\
\rho_{\reg R,\vec{\textsc{Aux}}} \leftarrow \algo A_{\lambda,\vec A,\vec y}(\proj{\psi_{\vec y}} \otimes \nu_\lambda)
    }\right] \geq 1/\poly(\lambda).
\end{align*}
\end{theorem}

\begin{proof}
Let $\lambda \in \N$ be the security parameter and let $\algo A = \{(\algo A_{\lambda,\vec A,\vec y},\nu_\lambda)\}_{\vec A \in \Z_q^{n \times m}}$
be a non-uniform quantum algorithm. Suppose that $\mathcal{D}$ is a $\QPT$ distinguisher with advantage $\eps = 1/\poly(\lambda)$.

To prove the claim, we consider the following sequence of hybrid distributions.

\begin{description}
  
\item $\hybrid_0$: This is the distribution $\lwe.\dist^{\adversary,\algo D}\left( 1^{\secparam}\right)$ in \Cref{fig:dist-lwe-tilde}.

\begin{figure}[!htb]
   \begin{center} 
   \begin{tabular}{|p{14cm}|}
    \hline 
\begin{center}
\underline{$\lwe.\dist^{\adversary,\algo D}\left( 1^{\secparam}\right)$}: 
\end{center}
\begin{enumerate}
    \item Sample $\vec A \rand \Z_q^{n \times m}$.
   \item Generate $(\ket{\psi_{\vec y}},\vec y) \leftarrow \mathsf{GenGauss}(\vec A,\sigma)$.
   \item Generate $\rho_{R, \,\vec{\textsc{Aux}}} \leftarrow \algo A_{\lambda,\vec A,\vec y}(\proj{\psi_{\vec y}} \otimes \nu_\lambda)$.
    \item Sample $\vec s \rand \Z_q^n, \vec e \sim D_{\Z^{m},\alpha q}$ and $e'\sim D_{\Z,\beta q}$.
   
    \item Generate $\rho_{\reg R,\vec{\textsc{Aux}}} \leftarrow \algo A_{\lambda,\vec A,\vec y}(\proj{\psi_{\vec y}} \otimes \nu_\lambda)$.

    \item Run $b' \leftarrow \mathcal{D}(\vec A,\vec y,\vec s^\intercal \vec A+ \vec e^\intercal,\vec s^\intercal\vec y + e',\rho_{\textsc{Aux}})$ on the reduced state. Output $b'$.

\end{enumerate}
\ \\
\hline
\end{tabular}
    \caption{The distribution $\lwe.\dist^{\adversary,\algo D}\left( 1^{\secparam}\right)$.}
    \label{fig:dist-lwe-tilde}
    \end{center}
\end{figure}

\item $\hybrid_1$: This is the following distribution:
\begin{enumerate}
        \item Sample a random matrix $\vec A \rand \Z_q^{n \times m}$.

        \item Sample $\vec s \rand \Zq^n$, $\vec e \sim D_{\Z^m,\alpha q}$ and $e'\sim D_{\Z,\beta q}$.
        \item \rc{Sample a Gaussian vector $\vec x_0 \sim D_{\Z_q^m,\frac{\sigma}{\sqrt{2}}}$ and let $\vec y = \vec A \cdot \vec x_0 \Mod{q}$.}

        \item Run \rc{$\algo A_{\lambda,\vec A,\vec y}(\proj{\vec x_0} \otimes \nu_\lambda)$} to generate a state $\rho_{\reg R ,\vec{\textsc{aux}}}$ in systems $\reg R$ and $\vec{\textsc{aux}}$.
        
        \item Run the distinguisher $\mathcal{D}(\vec A,\vec y,\vec s^\intercal \vec A+ \vec e^\intercal,\vec s^\intercal\vec y + e',\rho_{\textsc{aux}})$ on the reduced state $\rho_{\vec{\textsc{aux}}}$.
    \end{enumerate}

\item $\hybrid_2:$ This is the following distribution:
\begin{enumerate}
        \item Sample a uniformly random matrix $\vec A \rand \Z_q^{n \times m}$.

        \item Sample $\vec s \rand \Zq^n$, $\vec e \sim D_{\Z^m,\alpha q}$ and $e'\sim D_{\Z,\beta q}$. \rc{Let $\vec u = \vec A^\intercal \vec s+ \vec e$.}
        
        \item Sample a Gaussian vector $\vec x_0 \sim D_{\Z_q^m,\frac{\sigma}{\sqrt{2}}}$ and let $\vec y = \vec A \cdot \vec x_0 \Mod{q}$.
        
        \item Run $ \algo A_{\lambda,\vec A,\vec y}(\proj{\vec x_0} \otimes \nu_\lambda)$ to generate a state $\rho_{R ,\vec{\textsc{aux}}}$ in systems $\reg R$ and $\vec{\textsc{aux}}$.
        
        \item Run the distinguisher 
        \rc{$\mathcal{D}(\vec A,\vec y,\vec u,\vec u^\intercal\vec x_0 + e',\rho_{\textsc{aux}})$} on the reduced state $\rho_{\vec{\textsc{aux}}}$.
    \end{enumerate}

 \item $\hybrid_3:$ This is the following distribution:
\begin{enumerate}
        \item Sample a uniformly random matrix $\vec A \rand \Z_q^{n \times m}$.

        \item \rc{Sample $\vec u \rand \Zq^m$} and $e'\sim D_{\Z,\beta q}$.
        
        \item Sample a Gaussian vector $\vec x_0 \sim D_{\Z_q^m,\frac{\sigma}{\sqrt{2}}}$ and let $\vec y = \vec A \cdot \vec x_0 \Mod{q}$.
        
        \item Run $ \algo A_{\lambda,\vec A,\vec y}(\proj{\vec x_0} \otimes \nu_\lambda)$ to generate a state $\rho_{R ,\vec{\textsc{aux}}}$ in systems $\reg R$ and $\vec{\textsc{aux}}$.
        
        \item Run the distinguisher 
        $\mathcal{D}(\vec A,\vec y,\vec u,\vec u^\intercal\vec x_0 + e',\rho_{\textsc{aux}})$ on the reduced state $\rho_{\vec{\textsc{aux}}}$.
    \end{enumerate}   

\item $\hybrid_4$: This is the following distribution:
\begin{enumerate}
        \item Sample a uniformly random matrix $\vec A \rand \Z_q^{n \times m}$.

        \item Sample $\vec u \rand \Zq^m$ and \rc{$r \rand \Z_q$}.
        
        \item Sample a Gaussian vector $\vec x_0 \sim D_{\Z_q^m,\frac{\sigma}{\sqrt{2}}}$ and let $\vec y = \vec A \cdot \vec x_0 \Mod{q}$.
        
        \item Run $ \algo A_{\lambda,\vec A,\vec y}(\proj{\vec x_0} \otimes \nu_\lambda)$ to generate a state $\rho_{R ,\vec{\textsc{aux}}}$ in systems $\reg R$ and $\vec{\textsc{aux}}$.
        
        \item Run the distinguisher 
        $\mathcal{D}(\vec A,\vec y,\vec u,\rc{r},\rho_{\textsc{aux}})$ on the reduced state $\rho_{\vec{\textsc{aux}}}$.
    \end{enumerate}

\item $\hybrid_5$: This is the distribution $\unif.\dist^{\adversary,\algo D}\left( 1^{\secparam}\right)$ in \Cref{fig:dist-unif-tilde}.

\begin{figure}[!htb]
   \begin{center} 
   \begin{tabular}{|p{14cm}|}
    \hline 
\begin{center}
\underline{$\unif.\dist^{\adversary,\algo D}\left( 1^{\secparam} \right)$}: 
\end{center}
\begin{enumerate}   \item Sample $\vec A \rand \Z_q^{n \times m}$.
    \item Sample $\vec u \rand \Z_q^{m}$ and $r \rand \Z_q$.
    \item Run $(\ket{\psi_{\vec y}},\vec y) \leftarrow \mathsf{GenGauss}(\vec A,\sigma)$.

    \item Generate $\rho_{R,\vec{\textsc{Aux}}} \leftarrow \algo A_{\lambda,\vec A,\vec y}(\proj{\psi_{\vec y}} \otimes \nu_\lambda)$.
    
    \item Run $b' \leftarrow \mathcal{D}(\vec A,\vec y,\vec u,r,\rho_{\textsc{Aux}})$ on the reduced state. Output $b'$. 
\end{enumerate}
\ \\
\hline
\end{tabular}
    \caption{The distribution $\unif.\dist^{\adversary,\algo D}\left( 1^{\secparam} \right)$.}
    \label{fig:dist-unif-tilde}
    \end{center}
\end{figure}
\end{description}
We now show the following:
\begin{claim} Assuming $\LWE_{n,q,\alpha q}^m$, the hybrids $\hybrid_0$ and $\hybrid_1$ are computationally indistinguishable,
$$
\hybrid_0 \, \approx_c \,\hybrid_1.
$$
\end{claim}
\begin{proof}
Here, we invoke the \emph{Gaussian-collapsing property} in \Cref{thm:Gauss-collapsing} which states that the following samples are indistinguishable under $\LWE_{n,q,\alpha q}^m$,
$$
\Big(\vec  A \rand \Z_q^{n \times m},\,\, \ket{\psi_{\vec y}}=\sum_{\substack{\vec x \in \Z_q^m\\ \vec A \vec x = \vec y}}\rho_{\sigma}(\vec x) \,\ket{\vec x}, \,\,\vec y\in \Z_q^n \Big)\,\, \approx_c \,\, \Big(\vec  A \rand \Z_q^{n \times m}, \,\,\ket{\vec x_0},\,\, \vec A \cdot \vec x_0 \,\in \Z_q^n\Big)
$$
where $(\ket{\psi_{\vec y}},\vec y) \leftarrow \mathsf{GenGauss}(\vec A,\sigma)$ and where $\vec x_0 \sim D_{\Z_q^m,\frac{\sigma}{\sqrt{2}}}$ is a sample from the discrete Gaussian distribution. Because $\algo A_{\lambda,\vec A,\vec y}$ is a family efficient quantum algorithms, this implies that
$$
\algo A_{\lambda,\vec A,\vec y}(\proj{\psi_{\vec y}} \otimes \nu_\lambda)
\quad \approx_c
\quad \algo A_{\lambda,\vec A,\vec y}(\proj{\vec x_0} \otimes \nu_\lambda),
$$
for any polynomial-sized advice state $\nu_\lambda \in \algo D(\algo H_{B_\lambda})$ which is independent of $\vec A$.
\end{proof}

\begin{claim} Hybrids $\hybrid_1$ and $\hybrid_2$ are statistically indistinguishable. In other words,
$$
\hybrid_1 \, \approx_s \,\hybrid_2.
$$
\end{claim}
\begin{proof} Here, we invoke the \emph{noise flooding} property in \Cref{lem:shifted-gaussian} to argue that $\vec e^\intercal \vec x_0 \ll e'$ holds with overwhelming probability for our choice of parameters.
Therefore, the distributions in $\hybrid_1$ and $\hybrid_2$ are computationally indistinguishable.
\end{proof}

\begin{claim} Assuming $\LWE_{n,q,\alpha q}^m$, the hybrids $\hybrid_2$ and $\hybrid_3$ are computationally indistinguishable,
$$
\hybrid_2 \, \approx_c \,\hybrid_3.
$$
\end{claim}
\begin{proof}
This follows from the $\LWE_{n,q,\alpha q}^m$ assumption since the reduction can sample $\vec x_0 \sim D_{\Z^m,\frac{\sigma}{\sqrt{2}}}$ itself and generate $\rho_{R ,\vec{\textsc{aux}}} \leftarrow \algo A_{\lambda,\vec A,\vec y}(\proj{\vec x_0} \otimes \nu_\lambda)$ on input $\vec A \in \Z_q^{n \times m}$ and $\nu_\lambda$.
\end{proof}
Finally, we show the following:
\begin{claim} Assuming $\LWE_{n,q,\alpha q}^m$, the hybrids $\hybrid_4$ and $\hybrid_5$ are computationally indistinguishable,
$$
\hybrid_4 \, \approx_c \,\hybrid_5.
$$
\end{claim}
\begin{proof}
Here, we invoke the \emph{Gaussian-collapsing property} in \Cref{thm:Gauss-collapsing} again.
\end{proof}

\par Recall that $\hybrid_0$ and $\hybrid_5$ can be distinguished with probability $\eps=1/\poly(\secparam)$. We proved that the hybrids $\hybrid_0$ and $\hybrid_3$ are computationally indistinguishable and moreover, hybrids $\hybrid_4$ and $\hybrid_5$ are computationally indistinguishable. As a consequence, it holds that hybrids $\hybrid_3$ and $\hybrid_4$ can be distinguished with probability at least $\eps - \negl(\secparam)$.
\par We leverage this to obtain a Goldreich-Levin reduction. Consider the following distinguisher.

\begin{figure}[!htb]
   \begin{center} 
   \begin{tabular}{|p{14cm}|}
    \hline 
\begin{center}
\underline{$\tilde{\algo D}\big(\vec A,\vec y,\vec u,v,\rho\big)$}: 
\end{center}
Input: $\vec A \in \Zq^{n \times m}$, $\vec y \in \Z_q^n$, $\vec u \in \Z_q^n$, $v \in \Z_q$ and $\rho \in L(\algo H_{\vec{\textsc{Aux}}})$.\\
Output: A bit $b' \in \bit$.
\vspace{3mm}\\
\textbf{Procedure:}
\begin{enumerate}
   \item Sample $e' \sim D_{\Z,\beta q}$.
   
   \item Output $b' \leftarrow \algo D\big(\vec A,\vec y,\vec u, v + e',\rho\big)$.
\end{enumerate}
\ \\
\hline
\end{tabular}
    \caption{The distinguisher $\tilde{\algo D}\big(\vec A,\vec y,\vec u,v,\rho\big)$.}
    \label{fig:tilde-dist}
    \end{center}
\end{figure}
Note that $r + e' \Mod{q}$ is uniform whenever $r \rand \Z_q$ and $e' \sim D_{\Z,\beta q}$. Therefore, our previous argument shows that there exists a negligible function $\eta$ such that:
\begin{align*}
&\vline\Pr \left[
\tilde{\mathcal{D}}(\vec A,\vec y,\vec u,\vec u^\intercal \vec x_0,\rho_{\vec{\textsc{aux}}}) = 1
 \, : \, \substack{
  \vec A \rand \Z_q^{n \times m}, \,
\vec u \rand \Z_q^{m}\\
\vec x_0 \sim D_{\Z_q^m,\frac{\sigma}{\sqrt{2}}}, \, \vec y \leftarrow \vec A \cdot \vec x_0 \Mod{q}\\
\rho_{R \,\vec{\textsc{aux}}} \leftarrow \algo A_{\lambda,\vec A,\vec y}(\proj{\vec x_0} \otimes \nu_\lambda)
    }\right] \nonumber\\
&- \Pr \left[
\tilde{\mathcal{D}}(\vec A,\vec y,\vec u,\vec r,\rho_{\vec{\textsc{aux}}}) = 1
 \, : \, \substack{
  \vec A \rand \Z_q^{n \times m}\\
\vec u \rand \Z_q^{m}, \, r \rand \Z_q\\
\vec x_0 \sim D_{\Z_q^m,\frac{\sigma}{\sqrt{2}}}, \, \vec y \leftarrow \vec A \cdot \vec x_0 \Mod{q}\\
\rho_{R \,\vec{\textsc{aux}}} \leftarrow \algo A_{\lambda,\vec A,\vec y}(\proj{\vec x_0} \otimes \nu_\lambda)
    }\right] 
\vline \geq \eps -  \eta(\lambda).   
\end{align*}

From \Cref{thm:QGL}, it follows that there exists a Goldreich-Levin extractor $\algo E$ running in time $T(\algo E) =\poly(\lambda,n,m,\sigma,q,1/\eps)$ that outputs a short vector in $\Lambda_q^{\vec y}(\vec A)$ with probability at least
$$
\Pr \left[\substack{
\mathcal{E}(\vec A,\vec y,\rho_{\vec{\textsc{Aux}}}) = \vec x\vspace{1mm}\\ \bigwedge \vspace{0.1mm}\\ {\vec x} \,\,\in\,\, \Lambda_q^{\vec y}(\vec A) \,\cap\, \algo B^m(\vec 0,\sigma \sqrt{\frac{m}{2}})}  \, : \, \substack{
 \vec A \rand \Z_q^{n \times m}, \, \vec x_0 \sim D_{\Z_q^m,\frac{\sigma}{\sqrt{2}}}\\
 \vec y \leftarrow \vec A \cdot \vec x_0 \Mod{q}\\
\rho_{\reg R,\vec{\textsc{Aux}}} \leftarrow \algo A_{\lambda,\vec A,\vec y}(\proj{\vec x_0} \otimes \nu_\lambda)
    }\right] \geq \poly(\eps,1/q).
$$
Assuming the $\LWE_{n,q,\alpha q}^m$ assumption, we can invoke the Gaussian-collapsing property in \Cref{thm:Gauss-collapsing} once again which implies that the quantum extractor $\algo E$ satisfies
$$
\Pr \left[\substack{
\mathcal{E}(\vec A,\vec y,\rho_{\vec{\textsc{Aux}}}) = \vec x\vspace{1mm}\\ \bigwedge \vspace{0.1mm}\\ {\vec x} \,\,\in\,\, \Lambda_q^{\vec y}(\vec A) \,\cap\, \algo B^m(\vec 0,\sigma \sqrt{\frac{m}{2}})}\, : \, \substack{
 \vec A \rand \Z_q^{n \times m}\\
(\ket{\psi_{\vec y}},\vec y) \leftarrow \mathsf{GenGauss}(\vec A,\sigma)\\
\rho_{\reg R,\vec{\textsc{Aux}}} \leftarrow \algo A_{\lambda,\vec A,\vec y}(\proj{\psi_{\vec y}} \otimes \nu_\lambda)
    }\right] \geq \poly(\eps,1/q).
$$
This proves the claim.
\end{proof}

\paragraph{Simultaneous sarch-to-decision reduction.}
Next, we give a strengthening of our result in \Cref{thm:search-to-decision-without-revoke} and state a \emph{simultaneous} search-to-decision reduction with quantum auxiliary input which holds even if additionally require that a \emph{revocation} procedure succeeds on a separate register.

To formalize the notion that revocation is applied on a separate register, we introduce a procedure called $\ineffrevoke$ which is defined in \Cref{fig:ineff-revoke}. In \Cref{fig:ineff-Qsampgauss}, we also introduce an inefficient variant of $\mathsf{QSampGauss}$ from \Cref{sec:QSampGauss} which does not require a trapdoor.

\begin{figure}[!htb]
   \begin{center} 
   \begin{tabular}{|p{11cm}|}
    \hline 
\begin{center}
\underline{$\ineffrevoke(\vec A,\vec y,\sigma,\rho_{\reg R})$}: 
\end{center}
Input: $\vec A \in \Zq^{n \times m}$, $\vec y \in \Z_q^n$ and $\rho \in L(\algo H_{\reg R})$.\\
Output: Accept ($\top$) or reject ($\bot$).
\vspace{3mm}\\
\textbf{Procedure:}
\begin{enumerate}
   \item Apply the (inefficient) projective measurement 
   $$
   \big\{\proj{\psi_{\vec y}}, I - \proj{\psi_{\vec y}}\big\}
   $$
   where $\ket{\psi_{\vec y}}$ is the Gaussian coset state
    $$ \ket{\psi_{\vec y}} = \sum_{\substack{\vec x \in \Z_q^m:\\ \vec A \vec x= \vec y \Mod{q}}} \rho_\sigma(\vec x) \ket{\vec x}.
    $$
   
   \item If the measurement succeeds, output $\top$. Else, output $\bot$.
\end{enumerate}
\ \\
\hline
\end{tabular}
    \caption{The procedure $\ineffrevoke(\vec A,\vec y,\sigma,\rho_{\reg R})$.}
    \label{fig:ineff-revoke}
    \end{center}
\end{figure}

\begin{figure}[!htb]
   \begin{center} 
   \begin{tabular}{|p{11cm}|}
    \hline 
\begin{center}
\underline{$\mathsf{IneffQSampGauss}(\vec A,\vec y,\sigma)$}: 
\end{center}
Input: $\vec A \in \Zq^{n \times m}$, $\vec y \in \Z_q^n$ and $\sigma > 0$.\\
Output: Gaussian coset state $\ket{\psi_{\vec y}}$.
\vspace{3mm}\\
\textbf{Procedure:}
\begin{enumerate}
   \item Prepare the Gaussian coset state
    $$ \ket{\psi_{\vec y}} = \sum_{\substack{\vec x \in \Z_q^m:\\ \vec A \vec x= \vec y \Mod{q}}} \rho_\sigma(\vec x) \ket{\vec x}.
    $$
   
   \item Output $\ket{\psi_{\vec y}}$.
\end{enumerate}
\ \\
\hline
\end{tabular}
    \caption{The procedure $\mathsf{IneffQSampGauss}(\vec A,\vec y,\sigma)$.}
    \label{fig:ineff-Qsampgauss}
    \end{center}
\end{figure}

\noindent We use the following conjecture. We refer the reader to the introduction for an informal explanation of the conjecture below.

\begin{conjecture}\label{conjecture-I}
Let $\lambda \in \N$. Then, there exist parameters (each parameterized by $\lambda$) such that $n\in \N$, $q$ is a prime with $q=2^{o(n)}$, $m \geq 2n \log q$, $\sqrt{8m}< \sigma <q/\sqrt{8m}$, $\alpha \in (0,1)$ with $1/\alpha= 2^{o(n)} \cdot \sigma$ for which the following holds: for any $\QPT$ $\algo A$ and $\QPT$ $\algo C$ (and for a fixed algorithm $\algo B$), and for any fixed $\poly(\lambda)$-sized quantum auxiliary input $\nu_\lambda$ (which depends on $\lambda$):
$$
\vline \, \Pr[1 \leftarrow \expt_{\algo A,\algo B,\algo C}(1^{\secparam},0)]  - \Pr[1 \leftarrow \expt_{\algo A,\algo B,\algo C}(1^{\secparam},1)] \, \vline \leq \negl(\lambda),
$$
where $\expt_{\algo A,\algo B,\algo C}(1^{\secparam},b)$ is the asymmetric cloning experiment in \Cref{fig:Expt-Conjecture1}.
\end{conjecture}

\begin{figure}
   \begin{center} 
   \begin{tabular}{|p{14cm}|}
    \hline 
\begin{center}
\underline{$\expt_{\algo A,\algo B,\algo C}(1^{\secparam},b)$}: 
\end{center}
\begin{enumerate}
        \item The challenger samples a random matrix $\vec A \rand \Z_q^{n \times m}$ and a Gaussian vector $\vec x_0 \sim D_{\Z_q^m,\frac{\sigma}{\sqrt{2}}}$ and lets $\vec y= \vec A \cdot \vec x_0 \Mod{q}$. Then, the challenger runs $\ket{\psi_{\vec y}} \leftarrow \mathsf{IneffQSampGauss}(\vec A,\vec y,\sigma)$ and sends $(\vec A,\vec y,\ket{\psi_{\vec y}})$ to $\algo A$.

        \item $\algo A$ receives $(\vec A,\vec y,\ket{\psi_{\vec y}})$ together with auxiliary input $\nu_\lambda$ and generates a state $\rho_{\reg B \reg C}$ in systems $\reg{BC}$, and sends $\reg B$ to $\algo B$ and $\reg C$ to $\algo C$.

        \item The challenger sends $(\vec A,\vec y,\sigma)$ to $\algo B$ and, depending on the value of $b$, the challenger sends the following to $\algo C$:
        \begin{itemize}
            \item if $b=0$: the challenger samples $\vec s \rand \Zq^n$, $\vec e \sim D_{\Z^m,\alpha q}$, lets $\vec u = \vec A^\intercal \vec s + \vec e$, and sends $(\vec A,\vec y,\vec u,\vec u^\intercal \vec x_0 \Mod{q})$ to $\algo C$.

            \item if $b=0$: the challenger samples a uniformly random vector $\vec u \rand \Z_q^m$ and sends $(\vec A,\vec y,\vec u,\vec u^\intercal \vec x_0 \Mod{q})$ to $\algo C$.
        \end{itemize}

        \item $\algo B$ receives as input $(\vec A,\vec y,\sigma,\reg B)$ and runs $\ineffrevoke(\vec A,\vec y,\sigma,\reg B)$. Next, $\algo B$ outputs $\top$, if the measurement succeeds, else $\algo B$ outputs $\bot$.
        
       \item $\algo C$ receives as input $(\vec A,\vec y,\vec u,\vec u^\intercal \vec x_0 \Mod{q},\reg C)$ and outputs a bit $b'$.

       \item The challenger outputs $1$, if $\algo B$ outputs $\top$ and $\algo C$ outputs $b'=b$. This is also the outcome of the experiment.
    \end{enumerate}\\
\hline
\end{tabular}
    \caption{The experiment for Conjecture 1.}
    \label{fig:Expt-Conjecture1}
    \end{center}
\end{figure}

\begin{theorem}[Simultaneous Search-To-Decision Reduction with Quantum Auxiliary Input]\label{thm:search-to-decision}
Let $\lambda\in \N$ be the security parameter. Let $n\in \N$, $q$ be a prime with $q=2^{o(n)}$, $m \geq 2n \log q$, $\sqrt{8m}< \sigma <q/\sqrt{8m}$, and let $\alpha,\beta \in (0,1)$ with $\beta/\alpha = 2^{o(n)}$ with $1/\alpha= 2^{o(n)} \cdot \sigma$. Let
$\algo A = \{(\algo A_{\lambda,\vec A,\vec y},\nu_\lambda)\}_{\lambda \in \N}$
be any non-uniform quantum algorithm consisting of a family of polynomial-sized quantum circuits
$$\Bigg\{\algo A_{\lambda,\vec A,\vec y}: \algo L(\algo H_q^m \otimes \algo H_{B_\lambda}) \allowbreak\rightarrow \algo L(\algo H_{R_\lambda} \otimes \algo H_{\textsc{aux}_\lambda})\Bigg\}_{\vec A \in \Zq^{n \times m}, \,\vec y \in \Z_q^n}$$
and polynomial-sized advice states $\nu_\lambda \in \algo D(\algo H_{B_\lambda})$ which are independent of $\vec A$.
Then, assuming \Cref{conjecture-I} is true, the following holds for every $\QPT$ distinguisher $\mathcal{D}$. Suppose there exists a function $\eps(\lambda) = 1/\poly(\lambda)$ such that
\begin{align*}
&\vline \, \prob\left[ 1 \leftarrow \mathsf{SimultSearchToDecisionExpt}^{\adversary,\algo D}(1^{\secparam},0)\right]-\\
&\,\prob\left[ 1 \leftarrow \mathsf{SimultSearchToDecisionExpt}^{\adversary,\algo D}(1^{\secparam},1)\right]  \,\vline = \eps(\lambda).
\end{align*}
\begin{figure}[!htb]
   \begin{center} 
   \begin{tabular}{|p{14cm}|}
    \hline 
\begin{center}
\underline{$\mathsf{SimultSearchToDecisionExpt}^{\adversary,\algo D}\left( 1^{\secparam},b \right)$}: 
\end{center}
\begin{itemize}
\item If $b=0$: output $\simult.\lwe.\dist^{\adversary,\algo D}\left( 1^{\secparam} \right)$ defined in \Cref{fig:lwe-dist}.
\item If $b=1$: output $\simult.\unif.\dist^{\adversary,\algo D}\left( 1^{\secparam} \right)$ defined in \Cref{fig:unif.dist}.
\end{itemize}
\ \\
\hline
\end{tabular}
    \caption{The experiment $\mathsf{SimultSearchToDecisionExpt}^{\adversary,\algo D}\left( 1^{\secparam},b \right)$.}
    \label{fig:simult-search-to-decision}
    \end{center}
\end{figure}

\noindent Then, there exists a quantum extractor $\algo E$ that takes as input $\vec A$, $\vec y$ and system $\vec{\textsc{Aux}}$ of the state $\rho_{\reg R,\vec{\textsc{Aux}}}$ and outputs a short vector in the coset $\Lambda_q^{\vec y}(\vec A)$ in time $\poly(\lambda,m,\sigma,q,1/\eps)$ such that
\begin{align*}
&\Pr \left[\substack{
\ineffrevoke(\vec A,\vec y,\sigma,\reg R)  = \top\vspace{0.2mm}\\ \bigwedge \vspace{0.2mm}\\ \mathcal{E}(\vec A,\vec y,\reg{Aux}) \,\,\in\,\, \Lambda_q^{\vec y}(\vec A) \,\cap\, \algo B^m(\vec 0,\sigma \sqrt{\frac{m}{2}})}
 \, : \, \substack{
 \vec A \rand \Z_q^{n \times m}\\
(\ket{\psi_{\vec y}},\vec y) \leftarrow \mathsf{GenGauss}(\vec A,\sigma)\\
\rho_{\reg R,\vec{\textsc{Aux}}} \leftarrow \algo A_{\lambda,\vec A,\vec y}(\proj{\psi_{\vec y}} \otimes \nu_\lambda)
    }\right] \, \geq\, \poly(\eps,1/q).
\end{align*}
\end{theorem}

\begin{figure}[!htb]
   \begin{center} 
   \begin{tabular}{|p{14cm}|}
    \hline 
\begin{center}
\underline{$\simult.\lwe.\dist^{\adversary,\algo D}\left( 1^{\secparam}\right)$}: 
\end{center}
\begin{enumerate}
   \item Sample $ \vec A \rand \Z_q^{n \times m}$.
   \item Generate $(\ket{\psi_{\vec y}},\vec y) \leftarrow \mathsf{GenGauss}(\vec A,\sigma)$.
   \item Generate $\rho_{\reg R, \,\vec{\textsc{Aux}}} \leftarrow \algo A_{\lambda,\vec A,\vec y}(\proj{\psi_{\vec y}} \otimes \nu_\lambda)$.
    \item Sample $\vec s \rand \Z_q^n, \vec e \sim D_{\Z^{m},\alpha q}$ and $e'\sim D_{\Z,\beta q}$.
   
    \item Generate $\rho_{\reg R,\vec{\textsc{Aux}}} \leftarrow \algo A_{\lambda,\vec A,\vec y}(\proj{\psi_{\vec y}} \otimes \nu_\lambda)$.

   \item Run $\ineffrevoke(\vec A,\vec y,\sigma,\cdot)$ on system $\reg R$. If it outputs $\top$, continue. Otherwise, output $\invalid$.
    
    \item Run $b' \leftarrow \mathcal{D}(\vec A,\vec y,\vec s^\intercal \vec A+ \vec e^\intercal,\vec s^\intercal\vec y + e',\cdot)$ on system \textsc{Aux}. Output $b'$.

\end{enumerate}
\ \\
\hline
\end{tabular}
    \caption{The distribution $\simult.\lwe.\dist^{\adversary,\algo D}\left( 1^{\secparam}\right)$.}
    \label{fig:lwe-dist}
    \end{center}
\end{figure}

\begin{figure}[!htb]
   \begin{center} 
   \begin{tabular}{|p{14cm}|}
    \hline 
\begin{center}
\underline{$\simult.\unif.\dist^{\adversary,\algo D}\left( 1^{\secparam} \right)$}: 
\end{center}
\begin{enumerate}
   \item Sample $ \vec A \rand \Z_q^{n \times m}$.
   \item Generate $(\ket{\psi_{\vec y}},\vec y) \leftarrow \mathsf{GenGauss}(\vec A,\sigma)$.

\item Sample $\vec u \rand \Z_q^{m}$ and $r \rand \Z_q$.
   
   \item Generate $\rho_{R, \vec{\textsc{Aux}}} \leftarrow \algo A_{\lambda,\vec A,\vec y}(\proj{\psi_{\vec y}} \otimes \nu_\lambda)$.

\item Run $\ineffrevoke(\vec A,\vec y,\sigma,\cdot)$ on system $\reg R$. If it outputs $\top$, continue. Otherwise, output $\invalid$.

\item Run $b' \leftarrow \mathcal{D}(\vec A,\vec y,\vec u,r,\cdot)$ on system \textsc{Aux}. Output $b'$.
\end{enumerate}
\ \\
\hline
\end{tabular}
    \caption{The distribution $\simult.\unif.\dist^{\adversary,\algo D}\left( 1^{\secparam} \right)$.}
    \label{fig:unif.dist}
    \end{center}
\end{figure}

\begin{proof}
Let $\lambda \in \N$ be the security parameter and let $\algo A = \{(\algo A_{\lambda,\vec A,\vec y},\nu_\lambda)\}_{\vec A \in \Z_q^{n \times m}}$
be a non-uniform quantum algorithm. Suppose that $\mathcal{D}$ is a $\QPT$ distinguisher with advantage $\eps = 1/\poly(\lambda)$.

To prove the claim, we consider the following sequence of hybrid distributions.

\begin{description}
  
\item $\hybrid_0$: This is the distribution $\simult.\lwe.\dist^{\adversary,\algo D}\left( 1^{\secparam} \right)$ in \Cref{fig:lwe-dist}.

\item $\hybrid_1$: This is the following distribution:
\begin{enumerate}
        \item Sample a random matrix $\vec A \rand \Z_q^{n \times m}$.

        \item \rc{Sample a Gaussian vector $\vec x_0 \sim D_{\Z_q^m,\frac{\sigma}{\sqrt{2}}}$ and let $\vec y= \vec A \cdot \vec x_0 \Mod{q}$.}

        \item Run $\ket{\psi_{\vec y}} \leftarrow \mathsf{IneffQSampGauss}(\vec A,\vec y,\sigma)$.

        \item Sample $\vec s \rand \Zq^n$, $\vec e \sim D_{\Z^m,\alpha q}$ and $e'\sim D_{\Z,\beta q}$. \rc{Let $\vec u = \vec A^\intercal \vec s+ \vec e$}.

        \item Run $\algo A_{\lambda,\vec A,\vec y}(\proj{\psi_{\vec y}} \otimes \nu_\lambda)$ to generate a state $\rho_{\reg R ,\vec{\textsc{aux}}}$ in systems $\reg R$ and $\vec{\textsc{aux}}$.

        \item Run $\ineffrevoke(\vec A,\vec y,\sigma,\cdot)$ on $\reg R$. If it outputs $\top$, continue. Otherwise, output $\invalid$.
        
       \item Run the distinguisher 
        \rc{$\mathcal{D}(\vec A,\vec y,\vec u,\vec u^\intercal\vec x_0 + e',\cdot)$} on system \textsc{Aux}. Output $b'$.
    \end{enumerate}

\item $\hybrid_2:$ This is the following distribution:
\begin{enumerate}
        \item Sample a random matrix $\vec A \rand \Z_q^{n \times m}$.

        \item Sample a Gaussian vector $\vec x_0 \sim D_{\Z_q^m,\frac{\sigma}{\sqrt{2}}}$ and let $\vec y= \vec A \cdot \vec x_0 \Mod{q}$.

        \item Run $\ket{\psi_{\vec y}} \leftarrow \mathsf{IneffQSampGauss}(\vec A,\vec y,\sigma)$.

        \item Sample  \rc{$\vec u \rand \Z_q^m$ and $e'\sim D_{\Z,\beta q}$}.

        \item Run $\algo A_{\lambda,\vec A,\vec y}(\proj{\psi_{\vec y}} \otimes \nu_\lambda)$ to generate a state $\rho_{\reg R ,\vec{\textsc{aux}}}$ in systems $\reg R$ and $\vec{\textsc{aux}}$.

        \item Run $\ineffrevoke(\vec A,\vec y,\sigma,\cdot)$ on $\reg R$. If it outputs $\top$, continue. Otherwise, output $\invalid$.
        
       \item Run the distinguisher 
    $\mathcal{D}(\vec A,\vec y,\vec u,\vec u^\intercal\vec x_0 + e',\cdot)$ on system \textsc{Aux}. Output $b'$.
    \end{enumerate}

\item $\hybrid_3$: This is the distribution $\simult.\unif.\dist^{\adversary,\algo D}\left( 1^{\secparam} \right)$ defined in \Cref{fig:unif.dist}.

\end{description}
We now show the following:

\begin{claim} Hybrids $\hybrid_0$ and $\hybrid_1$ are statistically indistinguishable. In other words,
$$
\hybrid_0 \, \approx_s \,\hybrid_1.
$$
\end{claim}
\begin{proof} Here, we invoke the \emph{noise flooding} property in \Cref{lem:shifted-gaussian} to argue that $\vec e^\intercal \vec x_0 \ll e'$ holds with overwhelming probability for our choice of parameters.
Therefore, the distributions in $\hybrid_0$ and $\hybrid_1$ are computationally indistinguishable.
\end{proof}

\begin{claim} Assuming \Cref{conjecture-I} holds for our choice of parameters, the hybrids $\hybrid_1$ and $\hybrid_2$ are computationally indistinguishable,
$$
\hybrid_1 \, \approx_c \,\hybrid_2.
$$
\end{claim}
\begin{proof}
This follows directly from \Cref{conjecture-I} which allows us to invoke a variant of $\LWE$ assumption, even if the procedure $\ineffrevoke$ is applied on a separate register. Here, we rely on the fact that during the reduction, we can simply sample $e'\sim D_{\Z,\beta q}$ to produce an identically distributed challenge distribution.
\end{proof}

\par Recall that $\hybrid_0$ and $\hybrid_3$ can be distinguished with probability $\eps=1/\poly(\secparam)$. We proved that the hybrids $\hybrid_0$ and $\hybrid_2$ are computationally indistinguishable. As a consequence, it holds that hybrids $\hybrid_2$ and $\hybrid_3$ can be distinguished with probability at least $\eps - \negl(\secparam)$.
\par We leverage this to obtain a Goldreich-Levin reduction. Consider the following distinguisher.

\begin{figure}[!htb]
   \begin{center} 
   \begin{tabular}{|p{14cm}|}
    \hline 
\begin{center}
\underline{$\tilde{\algo D}\big(\vec A,\vec y,\vec u,v,\rho\big)$}: 
\end{center}
Input: $\vec A \in \Zq^{n \times m}$, $\vec y \in \Z_q^n$, $\vec u \in \Z_q^n$, $v \in \Z_q$ and $\rho \in L(\algo H_{\vec{\textsc{Aux}}})$.\\
Output: A bit $b' \in \bit$.
\vspace{3mm}\\
\textbf{Procedure:}
\begin{enumerate}
   \item Sample $e' \sim D_{\Z,\beta q}$.
   
   \item Output $b' \leftarrow \algo D\big(\vec A,\vec y,\vec u, v + e',\rho\big)$.
\end{enumerate}
\ \\
\hline
\end{tabular}
    \caption{The distinguisher $\tilde{\algo D}\big(\vec A,\vec y,\vec u,v,\rho\big)$.}
    \label{fig:tilde-dist}
    \end{center}
\end{figure}
Note that $r + e' \Mod{q}$ is uniform whenever $r \rand \Z_q$ and $e' \sim D_{\Z,\beta q}$. Therefore, our previous argument shows that there exists a negligible function $\eta$ such that:
\begin{align*}
\,&\vline\Pr \left[
\substack{
\ineffrevoke(\vec A,\vec y,\sigma,\reg R) = \top\\
\ \\
\bigwedge\\
\ \\
\tilde{\mathcal{D}}(\vec A,\vec y,\vec u, \vec u^\intercal \vec x_0, \textsc{Aux}) =1
}
\, : \, \substack{
\vec A \rand \Z_q^{n \times m}, \vec u \rand \Z_q^m\\
\vec x_0 \sim D_{\Z_q^m,\frac{\sigma}{\sqrt{2}}}, \, \vec y = \vec A \vec x_0 \text{ (mod $q$)}\\
\ket{\psi_{\vec y}} \leftarrow \mathsf{IneffQSampGauss}(\vec A,\vec y,\sigma)\\
\rho_{\reg R,\vec{\textsc{Aux}}} \leftarrow \algo A_{\lambda,\vec A,\vec y}(\proj{\psi_{\vec y}} \otimes \nu_\lambda)
}
\right] -\\
&\Pr \left[
\substack{
\ineffrevoke(\vec A,\vec y,\sigma,\reg R) = \top\\
\ \\
\bigwedge\\
\ \\
\tilde{\mathcal{D}}(\vec A,\vec y,\vec u,r, \textsc{Aux}) =1
}
\, : \, \substack{
\vec A \rand \Z_q^{n \times m}\\
\vec u \rand \Z_q^m, r \rand \Z_q\\
\vec x_0 \sim D_{\Z_q^m,\frac{\sigma}{\sqrt{2}}}, \, \vec y = \vec A \vec x_0 \text{ (mod $q$)}\\
\ket{\psi_{\vec y}} \leftarrow \mathsf{IneffQSampGauss}(\vec A,\vec y,\sigma)\\
\rho_{\reg R,\vec{\textsc{Aux}}} \leftarrow \algo A_{\lambda,\vec A,\vec y}(\proj{\psi_{\vec y}} \otimes \nu_\lambda)
}
\right]  \vline \geq \eps - \eta(\lambda).
\end{align*}

From \Cref{thm:QGL}, it follows that there exists a Goldreich-Levin extractor $\algo E$ running in time $T(\algo E) =\poly(\lambda,n,m,\sigma,q,1/\eps)$ that outputs a short vector in $\Lambda_q^{\vec y}(\vec A)$ with probability at least
$$
\Pr \left[\substack{
\mathcal{E}(\vec A,\vec y,\rho_{\vec{\textsc{Aux}}}) = \vec x\vspace{1mm}\\ \bigwedge \vspace{0.1mm}\\ {\vec x} \,\,\in\,\, \Lambda_q^{\vec y}(\vec A) \,\cap\, \algo B^m(\vec 0,\sigma \sqrt{\frac{m}{2}})}\, : \, \substack{
 \vec A \rand \Z_q^{n \times m}\\
\vec x_0 \sim D_{\Z_q^m,\frac{\sigma}{\sqrt{2}}}, \, \vec y = \vec A \vec x_0 \text{ (mod $q$)}\\
\ket{\psi_{\vec y}} \leftarrow \mathsf{IneffQSampGauss}(\vec A,\vec y,\sigma)\\
\rho_{\reg R,\vec{\textsc{Aux}}} \leftarrow \algo A_{\lambda,\vec A,\vec y}(\proj{\psi_{\vec y}} \otimes \nu_\lambda)
    }\right] \geq \poly(\eps,1/q).
$$
This proves the claim.
\end{proof}

Finally, we give a proof of the \emph{simultaneous} search-to-decision reduction (\Cref{thm:search-to-decision}) from standard assumptions without \Cref{conjecture-I} in the special case when revocation succeeds with overwhelming probability.

\begin{theorem}\label{thm:search-to-decision-revoke-1}
Let $n\in \N$. Let $q$ be a prime with $q=2^{o(n)}$, $m \geq 2n \log q$, $\sqrt{8m}< \sigma <q/\sqrt{8m}$, and let $\alpha,\beta \in (0,1)$ with $\beta/\alpha = 2^{o(n)}$ with $1/\alpha= 2^{o(n)} \cdot \sigma$. Let
$\algo A = \{(\algo A_{\lambda,\vec A,\vec y},\nu_\lambda)\}_{\lambda \in \N}$
be any non-uniform quantum algorithm consisting of a family of polynomial-sized quantum circuits
$$\Bigg\{\algo A_{\lambda,\vec A,\vec y}: \algo L(\algo H_q^m \otimes \algo H_{B_\lambda}) \allowbreak\rightarrow \algo L(\algo H_{R_\lambda} \otimes \algo H_{\textsc{aux}_\lambda})\Bigg\}_{\vec A \in \Zq^{n \times m}, \,\vec y \in \Z_q^n}$$
and polynomial-sized advice states $\nu_\lambda \in \algo D(\algo H_{B_\lambda})$ which are independent of $\vec A$ such that
$$
\Pr\left[
\ineffrevoke(\vec A,\vec y,\sigma,\rho_{\reg R})  = \top
 \, : \, \substack{
 \vec A \rand \Z_q^{n \times m}\\
(\ket{\psi_{\vec y}},\vec y) \leftarrow \mathsf{GenGauss}(\vec A,\sigma)\\
\rho_{\reg R,\vec{\textsc{Aux}}} \leftarrow \algo A_{\lambda,\vec A,\vec y}(\proj{\psi_{\vec y}} \otimes \nu_\lambda)
    }\right] = 1-\nu(\lambda),
$$
for some negligle function $\nu(\lambda)$.
Then,
assuming the quantum hardness of the $\LWE_{n,q,\alpha q}^m$ assumption, the following holds for every $\QPT$ distinguisher $\mathcal{D}$. Suppose that there exists a function $\eps(\lambda) = 1/\poly(\lambda)$ such that
\begin{align*}
&\vline \, \prob\left[ 1 \leftarrow \mathsf{SimultSearchToDecisionExpt}^{\adversary,\algo D}(1^{\secparam},0)\right]-\\
&\,\prob\left[ 1 \leftarrow \mathsf{SimultSearchToDecisionExpt}^{\adversary,\algo D}(1^{\secparam},1)\right]  \,\vline = \eps(\lambda).
\end{align*}

\noindent Then, there exists a quantum extractor $\algo E$ that takes as input $\vec A$, $\vec y$ and system $\vec{\textsc{Aux}}$ of the state $\rho_{\reg R,\vec{\textsc{Aux}}}$ and outputs a short vector in the coset $\Lambda_q^{\vec y}(\vec A)$ in time $\poly(\lambda,m,\sigma,q,1/\eps)$ such that
\begin{align*}
&\Pr \left[\substack{
\ineffrevoke(\vec A,\vec y,\sigma,\reg R)  = \top\vspace{0.2mm}\\ \bigwedge \vspace{0.2mm}\\ \mathcal{E}(\vec A,\vec y,\reg{Aux}) \,\,\in\,\, \Lambda_q^{\vec y}(\vec A) \,\cap\, \algo B^m(\vec 0,\sigma \sqrt{\frac{m}{2}})}
 \, : \, \substack{
 \vec A \rand \Z_q^{n \times m}\\
(\ket{\psi_{\vec y}},\vec y) \leftarrow \mathsf{GenGauss}(\vec A,\sigma)\\
\rho_{\reg R,\vec{\textsc{Aux}}} \leftarrow \algo A_{\lambda,\vec A,\vec y}(\proj{\psi_{\vec y}} \otimes \nu_\lambda)
    }\right] \, \geq\, \poly(\eps,1/q).
\end{align*}
\end{theorem}

\begin{proof}
By assumption, there exists an adversary $(\algo A,\algo D)$ such that $\mathsf{Adv}(\algo A,\algo D) = \eps(\lambda)$, where
\begin{align*}
\mathsf{Adv}(\algo A,\algo D) = \,&\vline\Pr \left[
\substack{
\ineffrevoke(\vec A,\vec y,\sigma,\reg R) = \top\\
\ \\
\bigwedge\\
\ \\
\mathcal{D}(\vec A,\vec y,\vec s^\intercal \vec A + \vec e^\intercal, \vec s^\intercal \vec y + e', \textsc{Aux}) =1
}
\, : \, \substack{
\vec A \rand \Z_q^{n \times m}, \,\vec s \rand \Z_q^n\\
\vec e \sim D_{\Z^{m},\,\alpha q}, \, e' \sim D_{\Z,\,\beta q}\\
(\ket{\psi_{\vec y}},\vec y) \leftarrow \mathsf{GenGauss}(\vec A,\sigma)\\
\rho_{\reg R,\vec{\textsc{Aux}}} \leftarrow \algo A_{\lambda,\vec A,\vec y}(\proj{\psi_{\vec y}} \otimes \nu_\lambda)
}
\right] -\\
&\Pr \left[
\substack{
\ineffrevoke(\vec A,\vec y,\sigma,\reg R) = \top\\
\ \\
\bigwedge\\
\ \\
\mathcal{D}(\vec A,\vec y,\vec u,r, \textsc{Aux}) =1
}
\, : \, \substack{
\vec A \rand \Z_q^{n \times m}\\
\vec u \rand \Z_q^m, r \rand \Z_q\\
(\ket{\psi_{\vec y}},\vec y) \leftarrow \mathsf{GenGauss}(\vec A,\sigma)\\
\rho_{\reg R,\vec{\textsc{Aux}}} \leftarrow \algo A_{\lambda,\vec A,\vec y}(\proj{\psi_{\vec y}} \otimes \nu_\lambda)
}
\right]  \vline.
\end{align*}
We can now invoke \Cref{lem:remove-revoke} to argue that there exists a $\QPT$ distinguisher $\tilde{\algo D}$ (that internally runs $\algo D$) and succeeds on the reduced system \reg{Aux} alone, i.e.
\begin{align*}
&\vline\Pr \left[
\tilde{\mathcal{D}}(\vec A,\vec y,\vec s^\intercal \vec A + \vec e^\intercal, \vec s^\intercal \vec y + e', \rho_{\textsc{Aux}}) =1
\, : \, \substack{
\vec A \rand \Z_q^{n \times m}, \,\vec s \rand \Z_q^n\\
\vec e \sim D_{\Z^{m},\,\alpha q}, \, e' \sim D_{\Z,\,\beta q}\\
(\ket{\psi_{\vec y}},\vec y) \leftarrow \mathsf{GenGauss}(\vec A,\sigma)\\
\rho_{\reg R,\vec{\textsc{Aux}}} \leftarrow \algo A_{\lambda,\vec A,\vec y}(\proj{\psi_{\vec y}} \otimes \nu_\lambda)
}
\right] -\\
&\Pr \left[
\tilde{\mathcal{D}}(\vec A,\vec y,\vec u,r, \rho_{\textsc{Aux}}) =1
\, : \,  \substack{
\vec A \rand \Z_q^{n \times m}\\
\vec u \rand \Z_q^m, r \rand \Z_q\\
(\ket{\psi_{\vec y}},\vec y) \leftarrow \mathsf{GenGauss}(\vec A,\sigma)\\
\rho_{\reg R,\vec{\textsc{Aux}}} \leftarrow \algo A_{\lambda,\vec A,\vec y}(\proj{\psi_{\vec y}} \otimes \nu_\lambda)
}
\right]  \vline = \bar{\eps}(\lambda),
\end{align*}
for some $\bar{\eps}=1/\poly(\lambda)$.
In other words, the $\QPT$ algorithm $\tilde{\algo D}$ can successfully predict whether it has received a Dual-Regev sample or a uniformly random sample.
Therefore, we can now invoke \Cref{thm:search-to-decision-without-revoke} to argue there exists an extractor $\algo E$ that takes as input $\vec A$, $\vec y$ and system $\vec{\textsc{Aux}}$ of the state $\rho_{\reg R,\vec{\textsc{Aux}}}$ and outputs a short vector in the coset $\Lambda_q^{\vec y}(\vec A)$ in time $\poly(\lambda,m,\sigma,q,1/\bar{\eps})$ such that
\begin{align*}
\Pr \left[\substack{
\mathcal{E}(\vec A,\vec y,\rho_{\vec{\textsc{Aux}}}) = \vec x\vspace{1mm}\\ \bigwedge \vspace{0.1mm}\\ {\vec x} \,\,\in\,\, \Lambda_q^{\vec y}(\vec A) \,\cap\, \algo B^m(\vec 0,\sigma \sqrt{\frac{m}{2}})} \, : \, \substack{
 \vec A \rand \Z_q^{n \times m}\\
(\ket{\psi_{\vec y}},\vec y) \leftarrow \mathsf{GenGauss}(\vec A,\sigma)\\
\rho_{\reg R,\vec{\textsc{Aux}}} \leftarrow \algo A_{\lambda,\vec A,\vec y}(\proj{\psi_{\vec y}} \otimes \nu_\lambda)
    }\right] \geq \poly(\bar{\eps},1/q).
\end{align*}
Recall also that, by assumption, revocation succeeds with overwhelming probability, i.e.,
$$
\Pr\left[
\ineffrevoke(\vec A,\vec y,\sigma,\rho_{\reg R})  = \top
 \, : \, \substack{
 \vec A \rand \Z_q^{n \times m}\\
(\ket{\psi_{\vec y}},\vec y) \leftarrow \mathsf{GenGauss}(\vec A,\sigma)\\
\rho_{\reg R,\vec{\textsc{Aux}}} \leftarrow \algo A_{\lambda,\vec A,\vec y}(\proj{\psi_{\vec y}} \otimes \nu_\lambda)
    }\right] = 1-\negl(\lambda).
$$
Using Bonferroni's inequality, we can argue that
\begin{align*}
&\Pr \left[\substack{
\ineffrevoke(\vec A,\vec y,\sigma,\reg R)  = \top\vspace{0.2mm}\\ \bigwedge \vspace{0.2mm}\\ \mathcal{E}(\vec A,\vec y,\reg{Aux}) \,\,\in\,\, \Lambda_q^{\vec y}(\vec A) \,\cap\, \algo B^m(\vec 0,\sigma \sqrt{\frac{m}{2}})}
 \, : \, \substack{
 \vec A \rand \Z_q^{n \times m}\\
(\ket{\psi_{\vec y}},\vec y) \leftarrow \mathsf{GenGauss}(\vec A,\sigma)\\
\rho_{\reg R,\vec{\textsc{Aux}}} \leftarrow \algo A_{\lambda,\vec A,\vec y}(\proj{\psi_{\vec y}} \otimes \nu_\lambda)
    }\right]\\
&\geq \Pr \left[
\ineffrevoke(\vec A,\vec y,\sigma,\rho_{\reg R})= \top  \, : \, \substack{
 \vec A \rand \Z_q^{n \times m}\\
(\ket{\psi_{\vec y}},\vec y) \leftarrow \mathsf{GenGauss}(\vec A,\sigma)\\
\rho_{\reg R,\vec{\textsc{Aux}}} \leftarrow \algo A_{\lambda,\vec A,\vec y}(\proj{\psi_{\vec y}} \otimes \nu_\lambda)
    } \right]\\
    & \quad + \Pr \left[
\mathcal{E}(\vec A,\vec y,\rho_{\reg{Aux}}) \,\,\in\,\, \Lambda_q^{\vec y}(\vec A) \,\cap\, \algo B^m(\vec 0,\sigma \sqrt{m/2})  \, : \, \substack{
 \vec A \rand \Z_q^{n \times m}\\
(\ket{\psi_{\vec y}},\vec y) \leftarrow \mathsf{GenGauss}(\vec A,\sigma)\\
\rho_{\reg R,\vec{\textsc{Aux}}} \leftarrow \algo A_{\lambda,\vec A,\vec y}(\proj{\psi_{\vec y}} \otimes \nu_\lambda)
    }\right] -  1\\
&\geq \Pr \left[
\mathcal{E}(\vec A,\vec y,\rho_{\reg{Aux}}) \,\,\in\,\, \Lambda_q^{\vec y}(\vec A) \,\cap\, \algo B^m(\vec 0,\sigma \sqrt{m/2})  \, : \, \substack{
 \vec A \rand \Z_q^{n \times m}\\
(\ket{\psi_{\vec y}},\vec y) \leftarrow \mathsf{GenGauss}(\vec A,\sigma)\\
\rho_{\reg R,\vec{\textsc{Aux}}} \leftarrow \algo A_{\lambda,\vec A,\vec y}(\proj{\psi_{\vec y}} \otimes \nu_\lambda)
    }\right] -  \negl(\lambda)\\
&\geq \poly(\eps,1/q).
\end{align*}
This proves the claim.
    
\end{proof}

\subsection{Distinct Pair Extraction}\label{sec:distinct-pre-images}

The following lemma allows us to analyze the probability of simultaneously extracting two distinct preimages in terms of the success probability of revocation and the success probability of extracting a preimage from the adversary's state.

\begin{lemma}[Distinct pair extraction]
\label{clm:distinct:preimageext}
Let $\rho \in \mathcal{D}({\cal H}_{X} \otimes {\cal H}_{Y})$ be an any density matrix, for some Hilbert spaces ${\cal H}_{X}$ and ${\cal H}_{Y}$. Let $\ket{\psi} = \sum_{x \in \algo S} \alpha_x \ket{x}\in {\cal H}_{X}$ be any state supported on a subset ${\cal S} \subseteq \algo X$, and let $\vec{\Pi}= \ketbra{\psi}{\psi}$ denote its associated projection.
Let $\vec \Pi_{\algo S}$ be the projector onto $\algo S$ with
$$
\vec \Pi_{\algo S} = \sum_{x \in \algo S} \proj{x}.
$$
Let $\cal{E}: \algo L(\algo H
_Y) \rightarrow \algo L(\algo H_{X'})$ be any $\CPTP$ map of the form 
$$
\cal{E}_{Y \rightarrow X'}(\sigma) = \Tr_{E} \left[V_{Y \rightarrow X'E}\, \sigma V_{Y \rightarrow X'E}^\dag\right], \quad \forall \sigma \in \algo D(\algo H_{Y}),
$$
for some isometry $V_{Y \rightarrow X'E}$. Consider the measurement specified by
$$\vec{\Gamma} = \sum_{x,x' \in \setS:x \neq x'} \ketbra{x}{x}_{X} \otimes V_{Y \rightarrow X'E}^{\dagger} (\ketbra{x'}{x'}_{X'} \otimes I_E) V_{Y \rightarrow X'E}.$$
Let $\rho_X = \Tr_Y[\rho_{XY}]$ denote the reduced state. Then, it holds that
$$\Tr[\vec{\Gamma} \rho] \,\geq \, \left(1- \max_{x \in \setS} |\alpha_x|^2 \right) \cdot \Tr[\vec \Pi \rho_X]\cdot \Tr\left[ \vec \Pi_{\algo S} \, \algo E_{Y \rightarrow X'}(\sigma) \right],$$
where $\sigma = \Tr[(\vec \Pi \otimes I)\rho]^{-1} \cdot  \Tr_X[(\vec \Pi \otimes I)\rho]$ is a reduced state in system $Y$.
\end{lemma}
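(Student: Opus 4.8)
The goal is to lower-bound $\Tr[\vec\Gamma\rho]$, the probability of simultaneously (i) projecting system $X$ onto $\ket{\psi}$ and then measuring a value $x\in\setS$, and (ii) having the extractor $\algo E$ applied to $Y$ output a value $x'\in\setS$ with $x'\neq x$. The plan is to decompose $\vec\Gamma$ by conditioning on the outcome of the measurement $\vec\Pi=\ketbra{\psi}{\psi}$ on $X$. Concretely, write $\vec\Gamma = \vec\Gamma_{\text{all}} - \vec\Gamma_{\text{diag}}$, where $\vec\Gamma_{\text{all}} = \vec\Pi_{\setS}\otimes V^\dagger(\vec\Pi_{\setS}\otimes I_E)V$ is the ``unrestricted'' projector (both registers land in $\setS$, no distinctness constraint) and $\vec\Gamma_{\text{diag}} = \sum_{x\in\setS}\ketbra{x}{x}_X\otimes V^\dagger(\ketbra{x}{x}_{X'}\otimes I_E)V$ is the ``collision'' term. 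First I would show $\Tr[\vec\Gamma_{\text{all}}\,\rho]\geq \Tr[\vec\Pi\rho_X]\cdot\Tr[\vec\Pi_{\setS}\algo E_{Y\to X'}(\sigma)]$: since $\ket{\psi}$ is supported entirely on $\setS$, we have $\vec\Pi_{\setS}\vec\Pi=\vec\Pi$, so $\vec\Gamma_{\text{all}}\geq \vec\Pi\otimes V^\dagger(\vec\Pi_{\setS}\otimes I_E)V$ as operators; tracing against $\rho$ and using that $\Tr_X[(\vec\Pi\otimes I)\rho] = \Tr[(\vec\Pi\otimes I)\rho]\cdot\sigma$ by definition of $\sigma$, together with $\Tr[V^\dagger(\vec\Pi_{\setS}\otimes I_E)V\,\sigma] = \Tr[\vec\Pi_{\setS}\algo E_{Y\to X'}(\sigma)]$, gives the bound.

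The second and more delicate step is to upper-bound the collision term $\Tr[\vec\Gamma_{\text{diag}}\,\rho]$ by $\big(\max_{x\in\setS}|\alpha_x|^2\big)\cdot\Tr[\vec\Pi\rho_X]\cdot\Tr[\vec\Pi_{\setS}\algo E_{Y\to X'}(\sigma)]$. The key observation is that $\vec\Gamma_{\text{diag}}$ only ever overlaps with $\rho$ through the $X$-register projectors $\ketbra{x}{x}_X$, and $\ketbra{x}{x}_X\vec\Pi\ketbra{x}{x}_X = |\alpha_x|^2\ketbra{x}{x}_X$. The technical heart is to argue that, on the relevant subspace, $\ketbra{x}{x}_X \preceq \frac{1}{\ldots}$ — more precisely, I would note that $\sum_{x\in\setS}\ketbra{x}{x}_X\otimes M_x$ for PSD $M_x$ can be compared to $\vec\Pi\otimes(\cdots)$ by inserting $\vec\Pi$: since $\ket{\psi}=\sum_x\alpha_x\ket{x}$, we have $\proj{x}\preceq \frac{1}{|\alpha_x|^2}\vec\Pi$ restricted to... actually the cleanest route is to bound $\Tr[\vec\Gamma_{\text{diag}}\rho]=\sum_{x\in\setS}\Tr[(\proj{x}_X\otimes V^\dagger(\proj{x}_{X'}\otimes I)V)\rho]\le \max_{x\in\setS}|\alpha_x|^2 \cdot \sum_{x\in\setS}\frac{1}{|\alpha_x|^2}\Tr[(\proj{x}_X\otimes\cdots)\rho]$ and then relate $\sum_x \frac{1}{|\alpha_x|^2}\proj{x}_X\otimes\cdots$ back to $\vec\Pi\otimes\cdots$ via the identity $\vec\Pi=\sum_{x,x'}\alpha_x\bar\alpha_{x'}\ketbra{x}{x'}$, using that cross terms contribute non-negatively after combining with the $X'E$ part. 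I expect this rearrangement — getting the off-diagonal terms of $\vec\Pi$ to cooperate — to be the main obstacle, and I would handle it by a Cauchy–Schwarz / operator-monotonicity argument or by directly writing $\rho$ in a purified form and expanding.

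Finally, subtracting gives $\Tr[\vec\Gamma\rho] = \Tr[\vec\Gamma_{\text{all}}\rho] - \Tr[\vec\Gamma_{\text{diag}}\rho] \geq (1-\max_{x\in\setS}|\alpha_x|^2)\cdot\Tr[\vec\Pi\rho_X]\cdot\Tr[\vec\Pi_{\setS}\algo E_{Y\to X'}(\sigma)]$, which is exactly the claimed inequality. I would double-check the edge case where $\Tr[(\vec\Pi\otimes I)\rho]=0$ (so $\sigma$ is undefined): in that case $\Tr[\vec\Pi\rho_X]=0$ and the right-hand side is $0$, while the left-hand side is non-negative, so the bound holds trivially; this should be noted explicitly. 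The overall structure mirrors standard ``two-outcome extraction'' arguments (e.g.\ in the monogamy-of-entanglement and certified-deletion literature), with the new twist being the interplay between the fixed Gaussian state $\ket{\psi}$'s amplitude profile and the $\setS$-restricted measurement, which is precisely why the $\max_x|\alpha_x|^2$ factor — controlled by \Cref{lem:max:ampl:bound} — shows up.
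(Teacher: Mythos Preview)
Your Step~2 has a genuine gap that cannot be closed: the bound $\Tr[\vec\Gamma_{\mathrm{diag}}\,\rho]\le\bigl(\max_{x\in\setS}|\alpha_x|^2\bigr)\cdot\Tr[\vec\Pi\rho_X]\cdot\Tr[\vec\Pi_{\setS}\,\algo E_{Y\to X'}(\sigma)]$ is false in general. Take $\setS=\{0,1\}$, $\ket{\psi}=\tfrac{1}{\sqrt2}(\ket0+\ket1)$, $V$ the identity (so $\algo E$ is the identity channel), and $\rho=\tfrac12\bigl(\proj{00}+\proj{11}\bigr)$. Then $\Tr[\vec\Gamma_{\mathrm{diag}}\,\rho]=1$, while $\Tr[\vec\Pi\rho_X]=\tfrac12$, $\sigma=\tfrac12 I$, and $\Tr[\vec\Pi_\setS\,\algo E(\sigma)]=1$, so your proposed upper bound is $\tfrac12\cdot\tfrac12\cdot1=\tfrac14$. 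No Cauchy--Schwarz or operator-monotonicity maneuver will rescue this, because the adversary can perfectly correlate the $X$-register with the extractor's output on $Y$. In fact the same example gives $\Tr[\vec\Gamma\rho]=0$ while the lemma's right-hand side equals $\tfrac14$, so the inequality you set out to prove already fails here; the difficulty is with the statement itself, not just your route to it.

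For comparison, the paper does \emph{not} split $\vec\Gamma$ as you do. It asserts that $\vec\Gamma$ commutes with $\vec\Pi\otimes I$, uses this to replace $\rho$ by $(\vec\Pi\otimes I)\rho(\vec\Pi\otimes I)=\Tr[\vec\Pi\rho_X]\cdot\proj{\psi}\otimes\sigma$, and then directly computes $\Tr\bigl[\vec\Gamma\,(\proj{\psi}\otimes\sigma)\bigr]=\sum_{x'\in\setS}(1-|\alpha_{x'}|^2)\,\Tr\bigl[\proj{x'}\algo E(\sigma)\bigr]\ge(1-\max_x|\alpha_x|^2)\,\Tr[\vec\Pi_\setS\,\algo E(\sigma)]$. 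That last computation is fine, but the commutation claim is not: writing $\vec\Gamma=\sum_{x\in\setS}\proj{x}\otimes\bigl(\sum_{x'\ne x}V^\dagger(\proj{x'}\otimes I_E)V\bigr)$, the $Y$-factor depends on $x$, so $\vec\Gamma$ does not commute with the off-diagonal rank-one projector $\vec\Pi\otimes I$, and the same two-qubit example above refutes the step $\Tr[\vec\Gamma\rho]\ge\Tr[\vec\Gamma\,(\vec\Pi\otimes I)\rho(\vec\Pi\otimes I)]$. What the paper's chain from that point on \emph{does} establish is the desired lower bound with $\Tr[\vec\Gamma\,(\vec\Pi\otimes I)\rho(\vec\Pi\otimes I)]$ on the left in place of $\Tr[\vec\Gamma\rho]$; you should flag this discrepancy rather than attempt to prove the lemma verbatim.
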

\begin{proof}
\noindent 
Because the order in which we apply $\vec \Gamma$ and $(\vec \Pi \otimes I)$ does not matter, we have the inequality
\begin{align}\label{ineq:gamma}
\Tr\left[\vec\Gamma \rho \right]
\geq \Tr\left[ \left(\vec \Pi \otimes I \right) \vec \Gamma \rho  \right] =  \Tr\left[ \left(\vec \Pi \otimes I \right) \vec \Gamma \rho \left(\vec \Pi \otimes I \right)  \right]
 = \Tr\left[ \vec \Gamma  (\vec \Pi \otimes I )  \rho \left(\vec \Pi \otimes I \right) \right]. 
\end{align}
Notice also that $(\vec \Pi \otimes I)\rho(\vec \Pi \otimes I)$ lies in the image of $(\vec \Pi \otimes I)$ with $\vec{\Pi}= \ketbra{\psi}{\psi}$, and thus
\begin{align}\label{ineq:sigma}
(\vec \Pi \otimes I)\rho(\vec \Pi \otimes I) = \Tr[(\vec \Pi \otimes I)\rho] \cdot ( \ketbra{\psi}{\psi} \otimes \sigma),
\end{align}
for some $\sigma \in \algo D(\algo H_Y)$.
Putting everything together, we get that
\begin{eqnarray*}
\Tr\left[\vec\Gamma \rho \right]
& \geq & \Tr\left[ \vec \Gamma  (\vec \Pi \otimes I )  \rho \left(\vec \Pi \otimes I \right) \right] \quad\quad\,\,\,\quad\quad\quad\quad\quad\quad\quad\quad\,\,\,\,\,\,\text{(using inequality \eqref{ineq:gamma})}\\
& = &  \Tr[(\vec \Pi \otimes I)\rho] \cdot \Tr\left[\vec \Gamma\left( \ketbra{\psi}{\psi} \otimes \sigma \right) \right] \quad\quad\quad\quad\quad\quad\quad\,\,  \text{(using equation \eqref{ineq:sigma})}\\
& = &  \Tr[\vec \Pi \rho_X] \cdot \Tr\left[ \sum_{x,x'  \in \setS: x \neq x'} \ketbra{x}{x}_X \otimes V_{Y \rightarrow X'E}^{\dagger} \left(  \ketbra{x'}{x'}_{X'} \otimes I_E \right) V_{Y \rightarrow X'E} \left( \ketbra{\psi}{\psi} \otimes \sigma \right) \right] \\
& = &  \Tr[\vec \Pi \rho_X] \cdot \sum_{x' \in \setS}\left( \sum_{x \in \setS:x \neq x'} |\langle x|\psi \rangle|^2 \right)\Tr\left[  V_{Y \rightarrow X'E}^{\dagger} (\ketbra{x'}{x'}_{X'} \otimes I_E)  V_{Y \rightarrow X'E} \, \sigma\right] \\
& = & \Tr[\vec \Pi \rho_X] \cdot \sum_{x' \in \setS}\left(1-|\alpha_{x'}|^2 \right)\Tr\left[  (\ketbra{x'}{x'}_{X'} \otimes I_E)  V_{Y \rightarrow X'E} \,\sigma \, V_{Y \rightarrow X'E}^{\dagger}\right] \\
& \geq & \Tr[\vec \Pi \rho_X] \cdot \left(1- \max_{x \in \setS}|\alpha_x|^2 \right)\cdot \sum_{x' \in \setS}\Tr\left[  (\ketbra{x'}{x'}_{X'} \otimes I_E)  V_{Y \rightarrow X'E} \,\sigma \, V_{Y \rightarrow X'E}^{\dagger}\right] \\
& = &  \Tr[\vec \Pi \rho_X] \cdot \left(1- \max_{x \in \setS} |\alpha_x|^2 \right)\cdot \sum_{x' \in \setS}\Tr\left[ \ketbra{x'}{x'}_{X'} \Tr_E \left[V_{Y \rightarrow X'E} \,\sigma \, V_{Y \rightarrow X'E}^{\dagger}\right] \right] \\
& = & \Tr[\vec \Pi \rho_X] \cdot \left(1- \max_{x \in \setS
} |\alpha_x|^2 \right)\cdot \Tr\left[ \vec \Pi_{\algo S} \, \algo E_{Y \rightarrow X'}(\sigma) \right].
\end{eqnarray*}
This proves the claim.
\end{proof}

\subsection{Proof of Theorem~\ref{thm:security-Dual-Regev-high-revoke}}  \label{sec:dual-regev-proof}

\begin{proof} 
Let $\algo A$ be a $\QPT$ adversary and suppose that
$$ \left| \Pr\left[ 1 \leftarrow \expt_{\algo A}(1^{\secparam},0) \right] - \Pr\left[ 1 \leftarrow \expt_{\algo A}(1^{\secparam},1) \right] \right| =  \eps(\secparam),$$
for some $\eps(\lambda)$ with respect to $\expt_{\algo A}(1^{\secparam},b)$ in \Cref{fig:Dual-Regev-security}. We show that $\eps(\lambda)$ is negligible.

\begin{figure}
   \begin{center} 
   \begin{tabular}{|p{14cm}|}
    \hline 
\begin{center}
\underline{$\expt_{\adversary}(1^{\secparam},b)$}: 
\end{center}
\begin{enumerate}
    \item The challenger samples $(\vec A \in \Z_q^{n \times m},\mathsf{td}_{\vec A}) \leftarrow \GenTrap(1^n,1^m,q)$ and generates
$$
\ket{\psi_{\vec y}} \,\,= \sum_{\substack{\vec x \in \Z_q^m\\ \vec A \vec x = \vec y \Mod{q}}}\rho_{\sigma}(\vec x)\,\ket{\vec x},
$$
for some $\vec y \in \Z_q^n$, by running $(\ket{\psi_{\vec y}}, \vec y) \leftarrow \mathsf{GenGauss}(\vec A,\sigma)$. The challenger lets $\msk \leftarrow \mathsf{td}_{\vec A}$ and $\pk \leftarrow (\vec A,\vec y)$ and sends  $\sk \leftarrow \ket{\psi_{\vec y}}$ to the adversary $\mathcal A$.
\item $\mathcal A$ generates a (possibly entangled) bipartite state $\rho_{R,\vec{\textsc{aux}}}$ in systems $\algo H_R \otimes \algo H_{\textsc{Aux}}$ with $\algo H_R= \algo H_q^m$, returns system $R$ and holds onto the auxiliary system $\textsc{Aux}$.
\item The challenger runs $\revoke(\pk,\msk,\rho_R)$, where $\rho_R$ is the reduced state in system $R$. If the outcome is $\top$, the game continues. Otherwise, output \textsf{Invalid}.

\item $\algo A$ submits a plaintext bit $\mu \in \bit$.

\item The challenger does the following depending on $b \in \bit$:
\begin{itemize}
    \item if $b=0$: the challenger samples a vector $\vec s \rand \Z_q^n$ and errors $\vec e \sim D_{\Z^{m},\,\alpha q}$ and $e' \sim D_{\Z,\,\beta q}$, and sends a Dual-Regev encryption of $\mu \in \bit$ to $\algo A$:
$$
\ct = \left(\vec s^\intercal \vec A + \vec e^\intercal, \vec s^\intercal \vec y + e' + \mu \cdot \lfloor \frac{q}{2} \rfloor\right) \in \Z_q^m \times \Z_q.
$$

\item if $b=1$: the challenger samples $\vec u \rand \Z_q^m$ and $r \rand \Z_q$ uniformly at random and sends the following pair to $\algo A$:
$$
(\vec u,r) \in \Z_q^m \times \Z_q.
$$
\end{itemize}

\item $\algo A$ returns a bit $b' \in \bit$.
\end{enumerate}\\
\hline
\end{tabular}
    \caption{The key-revocable security experiment according to \Cref{def:krpke:security}.}
    \label{fig:Dual-Regev-security}
    \end{center}
\end{figure}

Suppose for the sake of contradiction that $\epsilon(\lambda) $ is non-negligible. Using the equivalence between prediction advantage and distinguishing advantage, we can write
$$ 2 \cdot \left| \Pr\left[ b \leftarrow \expt_{\algo A}(1^{\secparam},b) \, : \, b \rand \bit\right] - \frac{1}{2}\right| =  \eps(\secparam).$$
We show that we can use $\algo A$ to break the $\mathsf{SIS}_{n,q,\sigma\sqrt{2m}}^m$ problem. Without loss of generality, we assume that $\algo A$ submits the plaintext $x=0$. By the assumption that revocation succeeds with overwhelming probability and since $\epsilon(\lambda) \geq 1/\poly(\lambda)$, we can use \Cref{thm:search-to-decision-revoke-1} to argue that
there exists a quantum Goldreich-Levin extractor $\algo E$ that takes as input $\vec A$, $\vec y$ and system $\vec{\textsc{Aux}}$ of the state $\rho_{R,\vec{\textsc{Aux}}}$ and outputs a short vector in the coset $\Lambda_q^{\vec y}(\vec A)$ in time $\poly(\lambda,m,\sigma,q,1/\eps)$ such that
\begin{align*}
&\Pr \left[\substack{
\ineffrevoke(\vec A,\vec y,\sigma,\reg R)  = \top\vspace{0.2mm}\\ \bigwedge \vspace{0.2mm}\\ \mathcal{E}(\vec A,\vec y,\reg{Aux}) \,\,\in\,\, \Lambda_q^{\vec y}(\vec A) \,\cap\, \algo B^m(\vec 0,\sigma \sqrt{\frac{m}{2}})}
 \, : \, \substack{
 \vec A \rand \Z_q^{n \times m}\\
(\ket{\psi_{\vec y}},\vec y) \leftarrow \mathsf{GenGauss}(\vec A,\sigma)\\
\rho_{R,\vec{\textsc{Aux}}} \leftarrow \algo A_{\lambda,\vec A,\vec y}(\proj{\psi_{\vec y}})
    }\right] \, \geq\, \poly(\eps,1/q).
\end{align*}
Here, we rely on the  correctness of $\GenTrap$ in \Cref{thm:gen-trap} and $\mathsf{QSampGauss}$ in \Cref{lem:qdgs}.

\noindent Consider the following procedure in Algorithm \ref{alg:SIS-Solver}.\ \\

\begin{algorithm}[H]\label{alg:SIS-Solver}
\DontPrintSemicolon
\SetAlgoLined
\KwIn{Matrix $\vec A \in \Z_q^{n \times m}$.}
\KwOut{Vector $\vec x \in \Z^m$.}

Generate a Gaussian state $(\ket{\psi_{\vec y}}, \vec y) \leftarrow \mathsf{GenGauss}(\vec A,\sigma)$ with
$$
\ket{\psi_{\vec y}} \,\,= \sum_{\substack{\vec x \in \Z_q^m\\ \vec A \vec x = \vec y \Mod{q}}}\rho_{\sigma}(\vec x)\,\ket{\vec x}
$$
for some vector $\vec y \in \Z_q^n$.\;

Run $\mathcal A$ to generate a bipartite state $\rho_{R \,\vec{\textsc{Aux}}}$ in systems $\algo H_{\reg R} \otimes \algo H_{\textsc{aux}}$ with $\algo H_{\reg R}= \algo H_q^m$.\;    

Measure system $\reg R$ in the computational basis, and let $\vec x_0 \in \Z_q^n$ denote the outcome.\;

Run the quantum Goldreich-Levin extractor $\algo E(\vec A,\vec y,\rho_{\vec{\textsc{aux}}})$ from \Cref{thm:search-to-decision}, where $\rho_{\vec{\textsc{Aux}}}$ is the reduced state in system $\algo H_{\textsc{Aux}}$, and let $\vec x_1 \in \Z_q^n$ denote the outcome.\;

Output the vector $\vec x = \vec x_1 - \vec x_0$.
 \caption{$\mathsf{SIS \_ Solver}(\vec A)$}
\end{algorithm}
\ \\
\ \\
To conclude the proof, we show that $\mathsf{SIS \_ Solver}(\vec A)$ in Algorithm \ref{alg:SIS-Solver} breaks the $\mathsf{SIS}_{n,q,\sigma\sqrt{2m}}^m$ problem whenever $\eps(\lambda) = 1/\poly(\lambda)$. In order to guarantee that $\mathsf{SIS \_ Solver}(\vec A)$ is successful, we use the distinct pair extraction result of \Cref{clm:distinct:preimageext}. This allows us to analyze the probability of simultaneously extracting two distinct short pre-images $\vec x_0\neq\vec x_1$ such that $\vec A \vec x_0 = \vec y = \vec A \vec x_1 \Mod{q}$ -- both in terms of the success probability of revocation and the success probability of extracting a pre-image from the adversary's state $\rho_{\vec{\textsc{Aux}}}$ in system $\algo H_{\textsc{Aux}}$. Assuming that $\vec x_0,\vec x_1$ are distinct short pre-images such that $\|\vec x_0\| \leq \sigma \sqrt{\frac{m}{2}}$ and $\|\vec x_1\| \leq \sigma \sqrt{\frac{m}{2}}$, it then follows that the vector $\vec x = \vec x_1 - \vec x_0$ output by $\mathsf{SIS \_ Solver}(\vec A)$ has norm at most $\sigma\sqrt{2m}$, and thus yields a solution to $\mathsf{SIS}_{n,q,\sigma\sqrt{2m}}^m$.

We remark that the state $\ket{\psi_{{\vec y}}}$ prepared by
Algorithm \ref{alg:SIS-Solver} is not normalized for ease of notation. Note that the tail bound in \Cref{lem:gaussian-tails} implies that (the normalized variant of) $\ket{\psi_{{\vec y}}}$ is within negligible trace distance of the state
with support $\{\vec x \in \Z_q^m : \|\vec x\| \leq \sigma \sqrt{\frac{m}{2}}\}$. Therefore, for the sake of \Cref{clm:distinct:preimageext}, we can assume that $\ket{\psi_{{\vec y}}}$ is a normalized state of the form
  $$
  \ket{\psi_{{\vec y}}} =
    \left(\displaystyle\sum_{\substack{\vec z \in \Z_q^m,\|\vec z\| \leq \sigma \sqrt{\frac{m}{2}}\\
\vec A \vec z = \vec y \Mod{q}}} \rho_{\frac{\sigma}{\sqrt{2}}}(\vec z) \right)^{-\frac{1}{2}}
    \sum_{\substack{\vec x \in \Z_q^m,\|\vec x\| \leq \sigma \sqrt{\frac{m}{2}}\\
\vec A \vec x = \vec y \Mod{q}}}
  \rho_{\sigma}(\vec x)
\ket{\vec x}.
$$
Before we analyze Algorithm \ref{alg:SIS-Solver}, we first make two technical remarks. First, since $\sigma \geq \omega(\sqrt{\log m})$, it follows from \Cref{lem:max:ampl:bound}
that, for any full-rank $\vec A \in \Z_q^{n \times m}$ and for any $\vec y \in \Z_q^n$, we have
$$
\max_{\substack{\vec x \in \Z_q^m, \, \|\vec x\| \leq \sigma \sqrt{\frac{m}{2}}\\
\vec A \vec x = \vec y \Mod{q}}}
\left\{
\frac{\rho_{\frac{\sigma}{\sqrt{2}}}(\vec x)}{\displaystyle\sum_{\substack{\vec z \in \Z_q^m,\|\vec z\| \leq \sigma \sqrt{\frac{m}{2}}\\
\vec A \vec z = \vec y \Mod{q}}} \rho_{\frac{\sigma}{\sqrt{2}}}(\vec z)} \right\} \,\,\leq \,\,  2^{-\Omega(m)}.
$$
Second, we can replace the procedure $\revoke(\vec A,\mathsf{td}_{\vec A},\vec y,\rho_R)$
by an (inefficient) projective measurement $\{\proj{\psi_{\vec y}}, I - \proj{\psi_{\vec y}}\}$, since they produce statistically close outcomes. This follows from the fact that 
$\revoke(\vec A,\mathsf{td}_{\vec A},\vec y,\rho_{\reg R})$ applies the procedure $\mathsf{QSampGauss}$ in Algorithm \ref{alg:SampGauss} as a subroutine, which is correct with overwhelming probability acccording to \Cref{lem:qdgs}.

Let us now analyze the success probability of Algorithm \ref{alg:SIS-Solver}. Putting everything together, we get
\begin{align*}
&\prob\left[
\substack{
\vec x \leftarrow {\sf SIS}\_{\sf Solver}(\vec A)\vspace{1mm}\\
\bigwedge \vspace{0.5mm}\\
\vec x \neq \vec 0 \, \text{ s.t.} \,\, \|\vec x\| \leq \sigma \sqrt{2m}
}\ \, : \,\ \vec A \rand \Zq^{n \times m}\right]  \\
& \geq 
\left( 1 -\max_{\substack{\vec x \in \Z_q^m, \, \|\vec x\| \leq \sigma \sqrt{\frac{m}{2}}\\
\vec A \vec x = \vec y \Mod{q}}}
\left\{
\frac{\rho_{\frac{\sigma}{\sqrt{2}}}(\vec x)}{\displaystyle\sum_{\substack{\vec z \in \Z_q^m,\|\vec z\| \leq \sigma\sqrt{\frac{m}{2}}\\
\vec A \vec z = \vec y \Mod{q}}} \rho_{\frac{\sigma}{\sqrt{2}}}(\vec z)} \right\} \right)\\
&\quad\cdot \Pr \left[
\ineffrevoke({\vec A},{\vec y},\rho_R) = \top \,\, : \,\, 
\substack{
 \vec A \rand \Zq^{n \times m} \text{ s.t. }{\vec A}\text{ is full-rank}\\
(\ket{\psi_{\vec y}},\vec y) \leftarrow \mathsf{GenGauss}(\vec A,\sigma)\\
\rho_{\reg R,\vec{\textsc{Aux}}} \leftarrow \algo A_{\lambda,\vec A,\vec y}(\proj{\psi_{\vec y}})} \right]\\
&\quad\cdot 
\Pr\left[
 \mathcal{E}\big(\vec A,\vec y,\rho_{\vec{\textsc{Aux}}}\big) \,\,\in\,\, \Lambda_q^{\vec y}(\vec A) \cap \algo B^m(\vec 0,\sigma \sqrt{m/2}) \,\, : \,\,
 \substack{
  \vec A \rand \Zq^{n \times m} \text{ s.t. }{\vec A}\text{ is full-rank}\\
(\ket{\psi_{\vec y}},\vec y) \leftarrow \mathsf{GenGauss}(\vec A,\sigma)\\
\rho_{\reg R,\vec{\textsc{Aux}}} \leftarrow \algo A_{\lambda,\vec A,\vec y}(\proj{\psi_{\vec y}})\\
\top \leftarrow \ineffrevoke({\vec A},{\vec y},\rho_{\reg R})
}\right] \\
& \geq \left( 1 - 2^{-\Omega(m)} \right) \cdot  \Pr \left[\substack{
\ineffrevoke(\vec A,\vec y,\sigma,\reg R)  = \top\vspace{0.2mm}\\ \bigwedge \vspace{0.2mm}\\ \mathcal{E}(\vec A,\vec y,\reg{Aux}) \,\,\in\,\, \Lambda_q^{\vec y}(\vec A) \,\cap\, \algo B^m(\vec 0,\sigma \sqrt{\frac{m}{2}})}
 \, : \, \substack{
\vec A \rand \Zq^{n \times m} \text{ s.t. }{\vec A}\text{ is full-rank}\\
(\ket{\psi_{\vec y}},\vec y) \leftarrow \mathsf{GenGauss}(\vec A,\sigma)\\
\rho_{\reg R,\vec{\textsc{Aux}}} \leftarrow \algo A_{\lambda,\vec A,\vec y}(\proj{\psi_{\vec y}})
    }\right] \\
& \geq \left( 1 - 2^{-\Omega(m)} \right)  \cdot \big(  \poly(\eps,1/q) - q^{-n}\big) \quad \geq \quad \poly(\eps,1/q).
\end{align*}
In the last line, we applied the simultaneous search-to-decision reduction from \Cref{thm:search-to-decision-revoke-1} and \Cref{lem:full-rank}.
Therefore, $\mathsf{SIS\_ Solver}(\vec A)$ in Algorithm \ref{alg:SIS-Solver} runs in time $\poly(q,1/\eps)$ and solves $\mathsf{SIS}_{n,q,\sigma\sqrt{2m}}^m$ whenever $\eps = 1/\poly(\lambda)$. Therefore, we conclude that $\eps(\lambda)$ must be negligible.
\end{proof}

\subsection{Proof of Theorem~\ref{thm:security-Dual-Regev}}

\begin{proof}
The proof is the same as in \Cref{thm:security-Dual-Regev-high-revoke}, except that we invoke 
\Cref{thm:search-to-decision} instead of \Cref{thm:search-to-decision-revoke-1} to argue that simultaneous extraction succeeds with sufficiently high probability.
\end{proof}
\newpage
\section{Key-Revocable Dual-Regev Encryption with Classical Revocation}

Recall that our key-revocable Dual-Regev public-key encryption scheme in \Cref{cons:dual-regev} requires that a quantum state is \emph{returned} as part of revocation. In this section, we present a Dual-Regev public-key encryption scheme with \emph{classical key revocation} (\Cref{cons:dual-regev-classical-revoc}).

To prove security, we follow a similar proof as in our previous construction from \Cref{sec:dual-regev}. As an additional technical ingredient, we rely on a strong type of \emph{collapsing} property Ajtai hash recently proven by Bartusek, Khurana and Poremba~\cite[Theorem 5.5]{bartusek2023publiclyverifiable}, which relies on the hardness of subexponential $\LWE$ and $\SIS$.

\subsection{Definition: Public-Key Encryption with Classical Key Revocation}

Let us now give a formal definition of a public-key encryption scheme with classical key revocation.

\begin{definition}[Key-Revocable Public-Key Encryption with Classical Revocation]\label{def:classical-key-revocable-PKE}

Let $\lambda \in \N$ be the security parameter. A key-revocable public-key encryption scheme with classical revocation consists efficient algorithms $\left(\setup,\enc,\dec,\Delete,\delete\right)$, where $\enc$ and $\revoke$ are $\mathsf{PPT}$ algorithms, and $\setup,\dec$ and $\Delete$ are $\QPT$ algorithms defined as follows:
\begin{itemize}
    \item $\setup(1^{\secparam})$: given as input a security parameter $\secparam$, output a public key $\pk$, a master secret key $\msk$ and a quantum decryption key $\sk$. 
    \item $\enc(\pk,x)$: given a public key $\pk$ and plaintext $x \in \{0,1\}^{\ell}$, output a ciphertext $\ct$. 
    \item $\dec(\sk,\ct)$: given a decryption key $\sk$ and ciphertext $\ct$, output a message $y$. 
    \item $\Delete(\sk)$: given a quantum decryption key, it outputs a classical certificate $\pi$.
    \item $\delete\left(\pk,\msk,\pi\right)$: given as input a master secret key $\msk$, a public key $\pk$ and a certificate $\pi$, output $\valid$ or $\invalid$. 
\end{itemize}
\end{definition}

\paragraph{Correctness of Decryption.} For every $x \in \{0,1\}^{\ell}$, the following holds: 
$$\prob\left[ x \leftarrow \dec(\sk,\ct)\ :\ \substack{(\pk,\msk,\sk) \leftarrow \setup(1^{\secparam})\\ \ \\ \ct \leftarrow \enc(\pk,x)} \right] \geq 1 - \nu(\secparam),$$
where $\nu(\cdot)$ is a negligible function. 

\paragraph{Correctness of Revocation.} The following holds:
$$\prob\left[ \valid \leftarrow \delete\left(\pk,\msk,\pi \right)\ :\ 
\substack{
(\pk,\msk,\sk) \leftarrow \setup(1^{\secparam})\\
\pi \leftarrow \Delete(\sk)}\right] \geq 1 - \nu(\secparam),$$
where $\nu(\cdot)$ is a negligible function. 

Our security definition for key-revocable public-key encryption is as follows.

\begin{figure}[!htb]
   \begin{center} 
   \begin{tabular}{|p{14cm}|}
    \hline 
\begin{center}
\underline{$\expt_{\Sigma,\algo A}\left( 1^{\secparam},b \right)$}: 
\end{center}
\noindent {\bf Initialization Phase}:
\begin{itemize}
    \item The challenger runs $(\pk,\msk,\sk) \leftarrow \setup(1^{\secparam})$ and sends $(\pk,\sk)$ to $\adversary$. 
\end{itemize}
\noindent {\bf Revocation Phase}:
\begin{itemize}
    \item The challenger sends the message {\texttt{REVOKE}} to $\adversary$. 
    \item The adversary $\adversary$ returns a certificate $\pi$.
    \item The challenger aborts if $\revoke(\pk,\msk,\pi)$ outputs $\invalid$.
\end{itemize}
\noindent {\bf Guessing Phase}:
\begin{itemize}
    \item $\adversary$ submits a plaintext $x \in \{0,1\}^{\ell}$ to the challenger.
    \item If $b=0$: The challenger sends $\ct \leftarrow \enc(\pk,x)$ to $\adversary$. Else, if $b=1$, the challenger sends $\ct \xleftarrow{\$} {\cal C}$, where ${\cal C}$ is the ciphertext space of $\ell$ bit messages. 
    \item Output $b_{\adversary}$ if the output of $\adversary$ is $b_{\adversary}$.  
\end{itemize}
\ \\ 
\hline
\end{tabular}
    \caption{Security Experiment for Classical Key Revocation}
    \label{fig:securityexpt-classical}
    \end{center}
\end{figure}

\begin{definition}
\label{def:krpke:security-classical}
A public-key encryption scheme $\Sigma =\left(\setup,\enc,\dec,\Delete,\delete\right)$ with classical key revocation is $(\eps,\delta)$-secure if, for every $\QPT$ adversary $\adversary$ with 
$$\prob[\invalid \leftarrow \expt_{\Sigma,\algo A}(1^\lambda,b)] \leq \delta(\secparam)
$$
for $b \in \bit$, it holds that
$$ \left| \Pr\left[ 1 \leftarrow \expt_{\Sigma,\algo A}(1^{\secparam},0) \right] - \Pr\left[ 1 \leftarrow \expt_{\Sigma,\algo A}(1^{\secparam},1) \right]\right|\leq  \eps(\secparam),$$
where $\expt_{\Sigma,\algo A}(1^{\secparam},b)$ is as defined in~\Cref{fig:securityexpt-classical}. If $\delta(\secparam) = 1-1/\poly(\lambda)$ and $\eps(\secparam)=\negl(\lambda)$, we simply say the key-revocable public-key encryption scheme is secure. 
\end{definition}

\subsection{Construction.}

\paragraph{Preliminaries.} Let us first introduce some additional background on lattice trapdoors.

\begin{definition}[$\beta$-good trapdoor]\label{def:beta-good-lattice-td}
Let $\vec A \in \Z_q^{n \times m}$ be a matrix. A $\beta$-good lattice trapdoor $\vec T \in \Z^{m \times m}$ for the matrix $\vec A$ is a short basis for the lattice $\Lambda_q^\perp(\vec A)$ with the following properties:
\begin{enumerate}
    \item Each column vector of $\vec T$ is in the right kernel of $\vec A$ (mod $q$), i.e. $\vec A \cdot \vec T = \vec 0 \Mod{q}$.

    \item Each column vector of $\vec T = [\vec t_1, \dots, \vec t_m]$ is short, i.e. $\| \vec t_i \| \leq \beta$, for all $i \in [m]$.

    \item The matrix $\vec T$ has full rank $m$ over $\mathbb{R}$.
\end{enumerate}   
\end{definition}

We make use of the following specific variant of a lattice trapdoor for the $q$-ary lattice $\Lambda_q^\perp(\vec A)$ due to Alwen and Peikert~\cite{cryptoeprint:2008/521}.

\begin{lemma}[AP lattice trapdoor, \cite{cryptoeprint:2008/521}, Lemma 3.5]\label{lem:gen-trap-AP} Let  $n,m \in \N$ and $q \in \N$ be a prime such that $m = \lceil 6 n \log q\rceil$. There exists a randomized algorithm with the following properties:
\begin{itemize}
    \item $\GenTrap(1^n,1^m,q)$: on input $1^n, 1^m$ and $q$, returns a matrix $\vec A \in \Z_q^{n \times m}$ and a short trapdoor matrix $\vec T_{\vec A} \in \Z^{m \times m}$ such that the distribution of $\vec A$ is negligibly (in the parameter $n$) close to uniform and $\vec T_{\vec A}$ is a  $(20 n \log q)$-good lattice trapdoor with all but negligible probability.
    \end{itemize}
\end{lemma}

Let us now present our Dual-Regev encryption scheme with classical key revocation

\paragraph{Parameters.} Let $\lambda \in \N$ be the security parameter. Let $n,m \in \N$ and $q \in \N$ be a prime with $q= 2^{o(n)}$ such that $m = \lceil 6 n \log q\rceil$ and $\nu = 64m^2$, each parameterized by $\lambda$. Choose a parameter $\sigma \in \mathbb{R}$ such that $\sqrt{8m}< \sigma <q/\sqrt{8m}$ and $\sigma/\nu = 2^{o(n)}$. Moreover, let $\alpha,\beta \in (0,1)$ be noise ratios with the property that $\beta/\alpha =2^{o(n)}$ such that $1/\alpha= 2^{o(n)} \cdot \sigma$.

\begin{construction}[Dual-Regev Encryption with Classical Key Revocation]\label{cons:dual-regev-classical-revoc} Let $\lambda \in \N$ be the security parameter. The key-revocable scheme $\mathsf{CRevDual} = (\keygen,\enc,\dec,\mathsf{Delete},\mathsf{Revoke})$ with classical revocation consists of the following $\QPT$ algorithms:
\begin{itemize}
\item $\keygen(1^\lambda) \rightarrow (\pk,\sk,\msk):$ sample $(\vec A \in \Z_q^{n \times m},\vec T_{\vec A} \in \Z^{m \times m}) \leftarrow \GenTrap(1^n,1^m,q)$ using the procedure in \Cref{lem:gen-trap-AP}, sample a random vector $\vec v \rand \bit^m$, generate a Gaussian superposition $(\ket{\psi_{\vec y}}, \vec y \in \Z_q^n) \leftarrow \mathsf{GenGauss}(\vec A,\sigma)$, and let
$$
\ket{\psi_{\vec y}^{\vec v}} = \vec Z_q^{\lfloor \frac{q}{\nu} \rfloor \cdot \vec v} \ket{\psi_{\vec y}}= \sum_{\substack{\vec x \in \Z_q^m\\ \vec A \vec x = \vec y}}\rho_{\sigma}(\vec x)\, \omega_q^{\langle \vec x,\lfloor \frac{q}{\nu} \rfloor \cdot \vec v \rangle} 
 \, \ket{\vec x}.
$$
Output $\pk = (\vec A,\vec y)$, $\sk = \ket{\psi_{\vec y}^{\vec v}} $ and $\msk = (\vec T_{\vec A},\vec v)$.
\item $\Enc(\pk,\mu) \rightarrow \ct:$ to encrypt a bit $\mu \in \bit$, sample a random vector $\vec s \rand \Z_q^n$ and errors $\vec e \sim D_{\Z^{m},\,\alpha q}$ and $e' \sim D_{\Z,\,\beta q}$, and output the ciphertext pair
$$
\ct = \left(\vec s^\intercal \vec A + \vec e^\intercal \Mod{q}, \vec s^\intercal \vec y + e' + \mu \cdot \lfloor \frac{q}{2} \rfloor \Mod{q} \right) \in \Z_q^m \times \Z_q.
$$
\item $\Dec(\sk,\ct) \rightarrow \bit:$ to decrypt $\ct$, 
apply the unitary $U: \ket{\vec x}\ket{0} \rightarrow \ket{\vec x}\ket{ \ct\cdot (-\vec x,1)^\intercal}$ on input $\ket{\psi_{\vec y}^{\vec v}}\ket{0}$, where $\sk=\ket{\psi_{\vec y}^{\vec v}}$, and measure the second register in the computational basis. Output $0$, if the measurement outcome
is closer to $0$ than to $\lfloor \frac{q}{2} \rfloor$,
and output $1$, otherwise.\footnote{This procedure can be purified via the Gentle Measurement Lemma.} 

\item $\mathsf{Delete}(\rho) \rightarrow \vec w$: apply the $m$-qudit $q$-ary quantum Fourier transform to the state $\rho$, and measure in the computational basis to obtain an outcome $\vec w \in \Z_q^m$.

\item $\mathsf{Revoke}(\msk,\vec w) \rightarrow \{ \top,\bot\}$: to verify the certificate $\vec w \in \Z_q^m$, do the following:
\begin{enumerate}
    \item Parse $(\vec T_{\vec A} \in \Z^{m \times m},\vec v \in \bit^m) \leftarrow \msk$. 
    \item Compute $\vec c^\intercal = \vec w^\intercal \cdot \vec T_{\vec A} \Mod{q}$.

    \item Compute $\vec d^\intercal = \vec c^\intercal \cdot \vec T_{\vec A}^{-1}$, where $\vec T_{\vec A}^{-1}$ is the inverse matrix of $\vec T_{\vec A}$ over $\mathbb{R}$.\footnote{The matrix $\vec T_{\vec A} \in \Z^{m \times m}$ output by $\GenTrap(1^n,1^m,q)$ has full rank $m$ over  $\mathbb{R}$ with overwhelming probability.}

    \item For $i \in [m]$, let $v'_i = 0$, if $d_i \in [-q/\sigma,q/\sigma]$, and let $v'_i = 1$, otherwise.
    \item Output $\top$,  if $\vec v' = \vec v$, and output $\bot$ otherwise.
\end{enumerate}
\end{itemize}
\end{construction}

\paragraph{Correctness of verification.}

\begin{lemma} Let $\lambda \in \N$ be the security parameter. Then, by our choice of parameters.
$$\prob\left[ \top \leftarrow \mathsf{Revoke}(\msk,\vec w)\ :\ \substack{(\pk,\msk,\sk) \leftarrow \keygen(1^{\secparam})\\ \ \\ 
\vec w \leftarrow \mathsf{Delete}(\sk)}\right] \geq 1 - \negl(\secparam)$$
\end{lemma}
\begin{proof}
Recall that $\keygen(1^\lambda)$ samples $(\vec A \in \Z_q^{n \times m},\vec T \in \Z^{m \times m}) \leftarrow \GenTrap(1^n,1^m,q)$ and $\vec v \rand \bit^m$, generates a Gaussian superposition $(\ket{\psi_{\vec y}}, \vec y \in \Z_q^n) \leftarrow \mathsf{GenGauss}(\vec A,\sigma)$ and assigns $\pk = (\vec A,\vec y)$, $\sk = \ket{\psi_{\vec y}^{\vec v}} $ and $\msk = (\vec T,\vec v)$, where
$$
\ket{\psi_{\vec y}^{\vec v}} = \vec Z_q^{\lfloor \frac{q}{\nu} \rfloor \cdot \vec v} \ket{\psi_{\vec y}}= \sum_{\substack{\vec x \in \Z_q^m\\ \vec A \vec x = \vec y}}\rho_{\sigma}(\vec x)\, \omega_q^{\langle \vec x,\lfloor \frac{q}{\nu} \rfloor \cdot \vec v \rangle} 
 \, \ket{\vec x}.
$$
From \Cref{lem:gen-trap-AP} and \Cref{lem:full-rank}, it follows that $\vec A$ is full rank and that $\vec T$ is a  $(20 n \log q)$-good lattice trapdoor  with overwhelming probability.
Moreover, from \Cref{lem:duality}, it follows that $\vec w \leftarrow \mathsf{Del}(\sk)$ results in a vector $\vec w^\intercal = \hat{\vec s}^\intercal \vec A + \lfloor \frac{q}{\nu} \rfloor \cdot \vec v^\intercal  + \hat{\vec e}^\intercal \in \Z_q^m$ with $\hat{\vec e} \sim D_{\Z_q^m,q/\sigma}$. Therefore, we have that with overwhelming probability $\|\hat{\vec e}\|_2 \leq \sqrt{m}q/\sigma$ and thus
\begin{align*}
\|\lfloor \frac{q}{\nu} \rfloor \cdot \vec v^\intercal \cdot \vec T  + \hat{\vec e}^\intercal \cdot \vec T\|_\infty &\leq  \lfloor \frac{q}{\nu} \rfloor \cdot \|\vec v^\intercal \vec T\|_2 + \|\hat{\vec e}^\intercal \vec T\|_2 \\
&\leq \lfloor \frac{q}{\nu} \rfloor \cdot \|\vec v\|_2 \cdot \|\vec T\|_2 + \|\hat{\vec e}\|_2 \cdot \|\vec T\|_2\\
&\leq \lfloor \frac{q}{\nu} \rfloor \cdot \sqrt{m} \cdot \|\vec T\|_2 + \sqrt{m} \cdot \frac{q}{\sigma}\cdot \|\vec T\|_2\\
&\leq \left(\lfloor \frac{q}{\nu} \rfloor + \frac{q}{\sigma} \right) \cdot m \cdot \|\vec T\| \, \leq \, q/4.
\end{align*}
Consequently, $\vec c^\intercal = \vec w^\intercal \cdot \vec T = \lfloor \frac{q}{\nu} \rfloor \cdot \vec v^\intercal \cdot \vec T  + \hat{\vec e}^\intercal \cdot \vec T\,\in \, \Z^m \cap (-\frac{q}{4},\frac{q}{4}]^m$ and thus 
$$
\vec d^\intercal = \vec c^\intercal \cdot \vec T^{-1} = \lfloor \frac{q}{\nu} \rfloor \cdot \vec v^\intercal  + \hat{\vec e}^\intercal.$$
By our choice of parameters, we have that $\lfloor \frac{q}{\nu} \rfloor \gg \frac{q}{\sigma}$, and thus the procedure $\mathsf{Revoke}(\msk,\vec w)$ successfully decodes $\vec v$ from $\vec d$, and outputs $\top$ with overwhelming probability.

\end{proof}

\subsection{Proof of Security.}

We first discuss some relevant background which we use for the security proof.

\paragraph{Everlasting collapsing property of the Ajtai hash function.} We need the following result from Bartusek, Khurana and Poremba~\cite[Theorem 5.5]{bartusek2023publiclyverifiable}. Informally, it says that once the adversary outputs a short pre-image (given as input either a Gaussian superposition of pre-images, or a single measured pre-image), then the adversary cannot tell which state it initially received even if it is allowed to be unbounded for the rest of the experiment. This property is called \emph{certified-everlasting target-collapsing}\footnote{It's called ``everlasting'' because the adversary can be unbounded once a pre-image is presented and it's called ``target-collapsing'' because it is a weakening of collapsing, where the image is honestly generated by the challenger.} and it can be shown assuming the hardness of $\LWE$/$\SIS$.

\begin{theorem}[Certified-everlasting target-collapsing property of the Ajtai hash, \cite{bartusek2023publiclyverifiable}]\label{thm:ev-target-collapse-ajtai}
Let $n\in \N$ and $q$ be a prime modulus with $q=2^{o(n)}$ and $m \geq 2n \log q$, each parameterized by the security parameter $\lambda \in \N$. Let $q/\sqrt{8m} > \sigma > \sqrt{8m}$ and let $\alpha \in (0,1)$ be a noise ratio such that $1/\alpha= 2^{o(n)} \cdot \sigma$.
Then, assuming the hardness of the $\mathsf{LWE}_{n,q,\alpha q}^m$ and $\SIS_{n,q,\sigma\sqrt{2m}}^m$, it holds for any adversary
$\algo A = (\algo A_0,\algo A_1)$ consisting of a $\QPT$ algorithm $\algo A_0$ and an unbounded algorithm $\algo A_1$:
$$
|\Pr\left[\mathsf{EvTargetCollapseExpt}_{\algo A,\lambda}(0) = 1 \right] - \Pr\left[\mathsf{EvTargetCollapseExpt}_{\algo A,\lambda}(1) = 1\right]| \leq \negl(\lambda).
$$
Here, the experiment $\mathsf{EvTargetCollapseExpt}_{\algo A,\lambda}(b)$ is defined as follows:

\begin{enumerate}

    \item The challenger generates a Gaussian superposition $(\ket{\psi_{\vec y}}, \vec y \in \Z_q^n) \leftarrow \mathsf{GenGauss}(\vec A,\sigma)$, with
    $$
    \ket{\psi_{\vec y}}= \sum_{\substack{\vec x \in \Z_q^m\\ \vec A \vec x = \vec y}}\rho_{\sigma}(\vec x) \, \ket{\vec x}.
    $$
    \item If $b=0$, the challenger does nothing. Else, if $b=1$, the challenger measures the state in the computational basis. Next, the challenger sends the resulting state together with $y$ to $\algo A_0$.
    \item $\algo A_{0}$ sends a classical certificate $\vec w \in \Z_q^m$ to the challenger and initializes $\algo A_{1}$ with its residual (internal) state.
    \item The challenger checks if $\vec w$ satisfies $\vec A \cdot \vec w = \vec y \Mod{q}$ and $\| \vec w \| \leq \sigma \sqrt{m/2}$. If true, $\algo A_{1}$ is run until it outputs a bit $b'$. Otherwise, $b' \gets \{0,1\}$ is sampled uniformly at random. The output of the experiment is $b'$.

    
\end{enumerate}
\end{theorem}

We immediately obtain result.

\begin{theorem}[Uncertainty principle for Gaussian states]\label{cor:uncertainty}

Let $n\in \N$ and $q$ be a prime modulus with $q=2^{o(n)}$ and $m \geq 2n \log q$, each parameterized by the security parameter $\lambda \in \N$. Let $q/\sqrt{8m} > \sigma > \sqrt{8m}$ and let $\alpha \in (0,1)$ be a noise ratio such that $1/\alpha= 2^{o(n)} \cdot \sigma$, and let $\nu >0$.
Then, assuming the subexponential hardness of the $\mathsf{LWE}_{n,q,\alpha q}^m$, it holds for any adversary
$\algo A = (\algo A_0,\algo A_1)$ consisting of a $\QPT$ algorithm $\algo A_0$ and an unbounded algorithm $\algo A_1$:
$$
\Pr[\mathsf{UncertaintyExpt}_{\algo A}(1^\lambda) =1] \leq \negl(\lambda),
$$
where $\mathsf{UncertaintyExpt}_{\algo A}(1^\lambda)$ is the following experiment between a challenger and the adversary $\algo A$:
\begin{enumerate}
    \item The challenger samples a random matrix $\vec A \rand \Z_q^{n \times m}$ and vector $\vec v \rand \bit^m$, generates a Gaussian superposition $(\ket{\psi_{\vec y}}, \vec y \in \Z_q^n) \leftarrow \mathsf{GenGauss}(\vec A,\sigma)$ and prepares
$$
\ket{\psi_{\vec y}^{\vec v}} = \vec Z_q^{\lfloor \frac{q}{\nu} \rfloor \cdot \vec v} \ket{\psi_{\vec y}}= \sum_{\substack{\vec x \in \Z_q^m\\ \vec A \vec x = \vec y}}\rho_{\sigma}(\vec x)\, \omega_q^{\langle \vec x,\lfloor \frac{q}{\nu} \rfloor \cdot \vec v \rangle} 
 \, \ket{\vec x}.
$$
Then, the challenger sends $(\vec A,\vec y,\ket{\psi_{\vec y}^{\vec v}})$ to $\algo A_0$.
\item $\algo A_0$ generates a vector $\vec x_0 \in \Z_q^m$ and a polynomial-sized quantum state $\rho_{\textsc{Aux}}$. Then, $\algo A_0$ sends $\vec x_0$ to the challenger and forwards $(\vec A,\vec y,\rho_{\textsc{Aux}})$ to $\algo A_1$.

\item $\algo A_1$ generates a vector $\vec v' \in \Z_q^m$ on input $(\vec A,\vec y,\rho_{\textsc{Aux}})$ and sends it to the challenger.

\item The challenger checks if $\vec v' = \vec v$ and whether the vector $\vec x_0$ satisfies $\vec A \cdot \vec x_0 = \vec y \Mod{q}$ and $\| \vec x_0 \| \leq \sigma \sqrt{m/2}$. If true, the challenger outputs $1$, otherwise the challenger outputs $0$. The bit output by the challenger also denotes the outcome of the experiment.
\end{enumerate}
\end{theorem}

\begin{proof}
Let $\algo A = (\algo A_0,\algo A_1)$ be an adversary consisting of a $\QPT$ algorithm $\algo A_0$ and an unbounded algorithm $\algo A_1$. We consider the following two hybrids:

\begin{description}
  
\item $\hybrid_0$: This is the original experiment $\mathsf{UncertaintyExpt}_{\algo A}(1^\lambda)$.

\item $\hybrid_1$: This is the same experiment as before, except that the challenger additionally measures the state $\ket{\psi_{\vec y}^{\vec v}}$ in Step $1$ before sending it to $\algo A_0$.
\end{description}

To complete the proof, we show the following:

\begin{claim}
$$
|\Pr\left[\hybrid_0 = 1\right]-\Pr\left[\hybrid_1=1\right]| \leq \negl(\lambda).
$$
\end{claim}
\begin{proof}
This follows from the certified-everlasting target-collapsing property of the Ajtai hash in \Cref{thm:ev-target-collapse-ajtai}. Suppose for the sake of contradiction that the advantage is $1/\poly(\lambda)$. We consider the following adversary $\algo B = (\algo B_0,\algo B_1)$ consisting of a $\QPT$ algorithm $\algo B_0$ and an unbounded algorithm $\algo B_1$ that distinguishes $\mathsf{EvTargetCollapseExpt}_{\algo B,\lambda}(1^\lambda,b)$ for $b \in \bit$ with advantage $1/\poly(\lambda)$.
\begin{itemize}
    \item $\algo B_0$ receives as input $(\vec A,\vec y,\ket{\psi_{\vec y}})$ from the challenger, samples $\vec v \rand \bit^m$, generates the state
    $\ket{\psi_{\vec y}^{\vec v}} = \vec Z_q^{\lfloor \frac{q}{\nu} \rfloor \cdot \vec v} \ket{\psi_{\vec y}}$
and then forwards $(\vec A,\vec y,\ket{\psi_{\vec y}^{\vec v}})$ to $\algo A_0$.

\item When $\algo A_0$ outputs a vector $\vec x_0 \in \Z_q^m$ and a polynomial-size advice state $\rho_{\textsc{Aux}}$, $\algo B_0$ forwards $\vec x_0$ to the challenger, and initializes $\algo B_1$ with $\rho_{\textsc{Aux}}$.

\item $\algo B_1$ runs $\algo A_1$ with auxuliary input $\rho_{\textsc{Aux}}$. When $\algo A_1$ outputs a vector $\vec v'$, the procedure $\algo B_1$ checks if $\vec v' = \vec v$. If true, $\algo B_1$ outputs $1$. Else, $\algo B_1$ outputs $0$ otherwise.
\end{itemize}
Note that if $b=0$, the state $\ket{\psi_{\vec y}^{\vec v}}$ prepared by $\algo B_0$ corresponds to  
$$
\ket{\psi_{\vec y}^{\vec v}}= \sum_{\substack{\vec x \in \Z_q^m\\ \vec A \vec x = \vec y}}\rho_{\sigma}(\vec x)\, \omega_q^{\langle \vec x,\lfloor \frac{q}{\nu} \rfloor \cdot \vec v \rangle} 
 \, \ket{\vec x},
$$
and thus matches the correct input from $\hybrid_0$.
Moreover, if $b=1$, the state is collapsed and thus matches the correct input from $\hybrid_1$. Crucially, the Pauli-Z operator $\vec Z_q^{\lfloor \frac{q}{\nu} \rfloor \cdot \vec v}$ has no effect on the collapsed state and merely introduces a global phase.
Therefore,
$\algo B = (\algo B_0,\algo B_1)$ distinguishes $\mathsf{EvTargetCollapseExpt}_{\algo B,\lambda}(1^\lambda,b)$ for $b \in \bit$ with advantage at least $1/\poly(\lambda)$, which results in the desired contradiction.
 \end{proof}
Because the vector $\vec v$ is completely hidden from the view of the adversary in $\hybrid_1$, the probability that $\algo A_1$ guesses it correctly is at most $2^{-m}$, which is negligible. This proves the claim.
\end{proof}

\paragraph{Simultaneous search-to-decision reduction with classical revocation}

We give a strengthening of our result in \Cref{thm:search-to-decision-without-revoke} and state a \emph{simultaneous} search-to-decision reduction with quantum auxiliary input which holds even if additionally require that a \emph{classical revocation} procedure succeeds. This is essentially a classical revocation analogue of \Cref{thm:search-to-decision}.

To prove the theorem, we require the following analogue of \Cref{conjecture-I}.

\begin{conjecture}\label{conjecture-II}
Let $\lambda \in \N$. Then, there exist parameters (each parameterized by $\lambda$) such that $n\in \N$, $q$ is a prime with $q=2^{o(n)}$, $m \geq 2n \log q$, $\sqrt{8m}< \sigma <q/\sqrt{8m}$, $\alpha \in (0,1)$ with $1/\alpha= 2^{o(n)} \cdot \sigma$ and $\nu = 64m^2$ for which the following holds: for any $\QPT$ $\algo A$ and $\QPT$ $\algo C$ (and for a fixed algorithm $\algo B$), and for any $\poly(\lambda)$-sized quantum auxiliary input $\tau_\lambda$ (which depends on $\lambda$):
$$
\vline \, \Pr[1 \leftarrow \expt_{\algo A,\algo B,\algo C}(1^{\secparam},0)]  - \Pr[1 \leftarrow \expt_{\algo A,\algo B,\algo C}(1^{\secparam},1)] \, \vline \leq \negl(\lambda),
$$
where $\expt_{\algo A,\algo B,\algo C}(1^{\secparam},b)$ is the cloning experiment in \Cref{fig:Expt-Conjecture1}.
\end{conjecture}

\begin{figure}
   \begin{center} 
   \begin{tabular}{|p{14cm}|}
    \hline 
\begin{center}
\underline{$\expt_{\algo A,\algo B,\algo C}(1^{\secparam},b)$}: 
\end{center}
\begin{enumerate}
        \item The challenger samples $(\vec A \in \Z_q^{n \times m},\vec T_{\vec A}) \leftarrow \GenTrap(1^n,1^m,q)$ and a Gaussian vector $\vec x_0 \sim D_{\Z_q^m,\frac{\sigma}{\sqrt{2}}}$ and lets $\vec y= \vec A \cdot \vec x_0 \Mod{q}$. Then, the challenger runs $\ket{\psi_{\vec y}} \leftarrow \mathsf{QSampGauss}(\vec A,\vec T_{\vec A},\vec y,\sigma)$, samples $\vec v \rand \bit^m$ and sends $(\vec A,\vec y,\ket{\psi_{\vec y}^{\vec v}})$ to $\algo A$, where $\ket{\psi_{\vec y}^{\vec v}} = \vec Z_q^{\lfloor \frac{q}{\nu} \rfloor \cdot \vec v} \ket{\psi_{\vec y}}$.

        \item $\algo A$ receives $(\vec A,\vec y,\ket{\psi_{\vec y}^{\vec v}})$ together with auxiliary input $\tau_\lambda$ and generates a state $\rho_{\reg B \reg C}$ in systems $\reg{BC}$ where $\reg B$ is classical, and sends $\reg B$ to $\algo B$ and $\reg C$ to $\algo C$.

        \item The challenger sends $(\vec T_{\vec A},\vec v,\sigma)$ to $\algo B$ and, depending on the value of $b$, the challenger sends the following to $\algo C$:
        \begin{itemize}
            \item if $b=0$: the challenger samples $\vec s \rand \Zq^n$, $\vec e \sim D_{\Z^m,\alpha q}$, lets $\vec u = \vec A^\intercal \vec s + \vec e$, and sends $(\vec A,\vec y,\vec u,\vec u^\intercal \vec x_0 \Mod{q})$ to $\algo C$.

            \item if $b=0$: the challenger samples a uniformly random vector $\vec u \rand \Z_q^m$ and sends $(\vec A,\vec y,\vec u,\vec u^\intercal \vec x_0 \Mod{q})$ to $\algo C$.
        \end{itemize}

        \item $\algo B$ receives as input $(\vec T_{\vec A},\vec v,\sigma,\vec w)$, where $\vec w$ is the content of the classical register $\reg B$, and applies $\mathsf{Revoke}(\vec T_{\vec A},\vec v,\vec w)$.  
        $\algo B$ outputs $\top$, if it succeeds, else outputs $\bot$.
        
       \item $\algo C$ receives as input $(\vec A,\vec y,\vec u,\vec u^\intercal \vec x_0 \Mod{q},\reg C)$ and outputs a bit $b'$.

       \item The challenger outputs $1$, if $\algo B$ outputs $\top$ and $\algo C$ outputs $b'=b$. This is also the outcome of the experiment.
    \end{enumerate}\\
\hline
\end{tabular}
    \caption{The experiment for Conjecture 2.}
    \label{fig:Expt-Conjecture2}
    \end{center}
\end{figure}

We prove the following statement.

\begin{theorem}[Simultaneous Search-to-Decision Reduction with Classical Revocation]\label{thm:search-to-decision-classical}
Let $\lambda\in \N$ be the security parameter. Let $n\in \N$, $q$ be a prime with $q=2^{o(n)}$, $m \geq 6n \log q$, $\sqrt{8m}< \sigma <q/\sqrt{8m}$, $\alpha,\beta \in (0,1)$ with $\beta/\alpha = 2^{o(n)}$, $1/\alpha= 2^{o(n)} \cdot \sigma$ and $\nu = 64m^2$, such that the following holds: Let
$\algo A = \{(\algo A_{\lambda,\vec A,\vec y},\tau_\lambda)\}_{\lambda \in \N}$
be any non-uniform quantum algorithm consisting of a family of polynomial-sized quantum circuits
$$\Bigg\{\algo A_{\lambda,\vec A,\vec y}: \algo L(\algo H_q^m \otimes \algo H_{B_\lambda}) \allowbreak\rightarrow \algo L(\algo H_{R_\lambda} \otimes \algo H_{\textsc{aux}_\lambda})\Bigg\}_{\vec A \in \Zq^{n \times m}, \,\vec y \in \Z_q^n}$$
and polynomial-sized advice states $\tau_\lambda \in \algo D(\algo H_{B_\lambda})$ which are independent of $\vec A$.
Then, assuming \Cref{conjecture-II} is true, the following statement holds for every $\QPT$ distinguisher $\mathcal{D}$. Suppose there exists a function $\eps(\lambda) = 1/\poly(\lambda)$ such that
\begin{align*}
&\vline \, \prob\left[ 1 \leftarrow \mathsf{SimultSearchToDecisionExpt}^{\adversary,\algo D}(1^{\secparam},0)\right]-\\
&\,\prob\left[ 1 \leftarrow \mathsf{SimultSearchToDecisionExpt}^{\adversary,\algo D}(1^{\secparam},1)\right]  \,\vline = \eps(\lambda).
\end{align*}
\begin{figure}[!htb]
   \begin{center} 
   \begin{tabular}{|p{14cm}|}
    \hline 
\begin{center}
\underline{$\mathsf{SimultSearchToDecisionExpt}^{\adversary,\algo D}\left( 1^{\secparam},b \right)$}: 
\end{center}
\begin{itemize}
\item If $b=0$: output $\simult.\lwe.\dist^{\adversary,\algo D}\left( 1^{\secparam} \right)$ defined in \Cref{fig:lwe-dist-classical}.
\item If $b=1$: output $\simult.\unif.\dist^{\adversary,\algo D}\left( 1^{\secparam} \right)$ defined in \Cref{fig:unif.dist-classical}.
\end{itemize}
\ \\
\hline
\end{tabular}
    \caption{The experiment $\mathsf{SimultSearchToDecisionExpt}^{\adversary,\algo D}\left( 1^{\secparam},b \right)$.}
    \label{fig:simult-search-to-decision-classical}
    \end{center}
\end{figure}

\noindent Then, there exists a quantum extractor $\algo E$ that takes as input $\vec A$, $\vec y$ and system $\vec{\textsc{Aux}}$ of the state $\rho_{\reg R,\vec{\textsc{Aux}}}$ and outputs a short vector in the coset $\Lambda_q^{\vec y}(\vec A)$ in time $\poly(\lambda,m,\sigma,q,1/\eps)$ such that
\begin{align*}
&\Pr \left[\substack{
\mathsf{Revoke}(\vec T_{\vec A},\vec v,\vec w) = \top \\ 
\ \\
\bigwedge\\
\ \\
\mathcal{E}(\vec A,\vec y,\rho_{\vec{\textsc{Aux}}}) \, \in \, \Lambda_q^{\vec y}(\vec A) \,\cap\, \algo B^m(\vec 0,\sigma \sqrt{\frac{m}{2}})}
 \, : \, \substack{
 (\vec A, \vec T_{\vec A}) \leftarrow \GenTrap(1^n,1^m,q)\\
(\ket{\psi_{\vec y}}, \vec y) \leftarrow \mathsf{GenGauss}(\vec A,\sigma)\\
\vec v \rand \bit^m, \, \, \ket{\psi_{\vec y}^{\vec v}} \leftarrow \vec Z_q^{\lfloor \frac{q}{\nu} \rfloor \cdot \vec v} \ket{\psi_{\vec y}}\\
\proj{\vec w} \otimes \rho_{\textsc{Aux}} \leftarrow \algo A_{\lambda,\vec A,\vec y}(\proj{\psi_{\vec y}^{\vec v}} \otimes \tau_\lambda)
    }\right]
\geq 1/\poly(\eps,1/q).
\end{align*}
\end{theorem}

\begin{figure}[!htb]
   \begin{center} 
   \begin{tabular}{|p{14cm}|}
    \hline 
\begin{center}
\underline{$\simult.\lwe.\dist^{\adversary,\algo D}\left( 1^{\secparam}\right)$}: 
\end{center}
\begin{enumerate}
   \item Sample $(\vec A, \vec T_{\vec A}) \leftarrow \GenTrap(1^n,1^m,q)$.
   \item Generate $(\ket{\psi_{\vec y}},\vec y) \leftarrow \mathsf{GenGauss}(\vec A,\sigma)$.

   \item Sample $\vec v \rand \bit^m$ and compute $\ket{\psi_{\vec y}^{\vec v}} \leftarrow \vec Z_q^{\lfloor \frac{q}{\nu} \rfloor \cdot \vec v} \ket{\psi_{\vec y}}$.
   \item Generate $\proj{\vec w} \otimes \rho_{\textsc{Aux}}  \leftarrow \algo A_{\lambda,\vec A,\vec y}(\proj{\psi_{\vec y}^{\vec v}} \otimes \tau_\lambda)$.
    \item Sample $\vec s \rand \Z_q^n, \vec e \sim D_{\Z^{m},\alpha q}$ and $e'\sim D_{\Z,\beta q}$.
   
   \item Run $\Revoke(\vec T_{\vec A},\vec w)$ on $\vec w$. If it outputs $\top$, continue. Otherwise, output $\invalid$.
    
    \item Run $b' \leftarrow \mathcal{D}(\vec A,\vec y,\vec s^\intercal \vec A+ \vec e^\intercal,\vec s^\intercal\vec y + e',\rho_{\reg{Aux}})$. Output $b'$.

\end{enumerate}
\ \\
\hline
\end{tabular}
    \caption{The distribution $\simult.\lwe.\dist^{\adversary,\algo D}\left( 1^{\secparam}\right)$.}
    \label{fig:lwe-dist-classical}
    \end{center}
\end{figure}

\begin{figure}[!htb]
   \begin{center} 
   \begin{tabular}{|p{14cm}|}
    \hline 
\begin{center}
\underline{$\simult.\unif.\dist^{\adversary,\algo D}\left( 1^{\secparam} \right)$}: 
\end{center}
\begin{enumerate}
   \item Sample $(\vec A, \vec T_{\vec A}) \leftarrow \GenTrap(1^n,1^m,q)$.
   \item Generate $(\ket{\psi_{\vec y}},\vec y) \leftarrow \mathsf{GenGauss}(\vec A,\sigma)$.

   \item Sample $\vec v \rand \bit^m$ and compute $\ket{\psi_{\vec y}^{\vec v}} \leftarrow \vec Z_q^{\lfloor \frac{q}{\nu} \rfloor \cdot \vec v} \ket{\psi_{\vec y}}$.
   \item Generate $\proj{\vec w} \otimes \rho_{\textsc{Aux}}  \leftarrow \algo A_{\lambda,\vec A,\vec y}(\proj{\psi_{\vec y}^{\vec v}} \otimes \tau_\lambda)$.
    \item Sample $\vec s \rand \Z_q^n, \vec e \sim D_{\Z^{m},\alpha q}$ and $e'\sim D_{\Z,\beta q}$.

\item Sample $\vec u \rand \Z_q^{m}$ and $r \rand \Z_q$.

   \item Run $\Revoke(\vec T_{\vec A},\vec w)$ on $\vec w$. If it outputs $\top$, continue. Otherwise, output $\invalid$.
    
    \item Run $b' \leftarrow \mathcal{D}(\vec A,\vec y,\vec u,r,\rho_{\reg{Aux}})$. Output $b'$.
\end{enumerate}
\ \\
\hline
\end{tabular}
    \caption{The distribution $\simult.\unif.\dist^{\adversary,\algo D}\left( 1^{\secparam} \right)$.}
    \label{fig:unif.dist-classical}
    \end{center}
\end{figure}

\begin{proof}
Let $\lambda \in \N$ be the security parameter and let $\algo A = \{(\algo A_{\lambda,\vec A,\vec y},\tau_\lambda)\}_{\vec A \in \Z_q^{n \times m}}$
be a non-uniform quantum algorithm. Suppose that $\mathcal{D}$ is a $\QPT$ distinguisher with advantage $\eps = 1/\poly(\lambda)$.

To prove the claim, we consider the following sequence of hybrid distributions.

\begin{description}
  
\item $\hybrid_0$: This is the distribution $\simult.\lwe.\dist^{\adversary,\algo D}\left( 1^{\secparam} \right)$ in \Cref{fig:lwe-dist-classical}.

\item $\hybrid_1$: This is the following distribution:
\begin{enumerate}
        \item Sample $(\vec A, \vec T_{\vec A}) \leftarrow \GenTrap(1^n,1^m,q)$.

        \item \rc{Sample a Gaussian vector $\vec x_0 \sim D_{\Z_q^m,\frac{\sigma}{\sqrt{2}}}$ and let $\vec y= \vec A \cdot \vec x_0 \Mod{q}$.}

        \item Run $\ket{\psi_{\vec y}} \leftarrow \mathsf{QSampGauss}(\vec A,\vec T_{\vec A},\vec y,\sigma)$.

        \item Sample $\vec v \rand \bit^m$ and compute $\ket{\psi_{\vec y}^{\vec v}} \leftarrow \vec Z_q^{\lfloor \frac{q}{\nu} \rfloor \cdot \vec v} \ket{\psi_{\vec y}}$.

        \item Run $\algo A_{\lambda,\vec A,\vec y}(\proj{\psi_{\vec y}^{\vec v}} \otimes \tau_\lambda)$ to generate a state $\proj{\vec w} \otimes \rho_{\vec{\textsc{aux}}}$.

        \item Run $\mathsf{Revoke}(\vec T_{\vec A},\vec v,\vec w)$. If it outputs $\top$, continue. Otherwise, output $\invalid$.

        \item Sample $\vec s \rand \Zq^n$, $\vec e \sim D_{\Z^m,\alpha q}$ and $e'\sim D_{\Z,\beta q}$. \rc{Let $\vec u = \vec A^\intercal \vec s+ \vec e$}.
        
       \item Run the distinguisher 
        \rc{$\mathcal{D}(\vec A,\vec y,\vec u,\vec u^\intercal\vec x_0 + e',\cdot)$} on system \textsc{Aux}. Output $b'$.
    \end{enumerate}

\item $\hybrid_2:$ This is the following distribution:
\begin{enumerate}
         \item Sample $(\vec A, \vec T_{\vec A}) \leftarrow \GenTrap(1^n,1^m,q)$.

        \item Sample a Gaussian vector $\vec x_0 \sim D_{\Z_q^m,\frac{\sigma}{\sqrt{2}}}$ and let $\vec y= \vec A \cdot \vec x_0 \Mod{q}$.

        \item Run $\ket{\psi_{\vec y}} \leftarrow \mathsf{QSampGauss}(\vec A,\vec T_{\vec A},\vec y,\sigma)$.

        \item Sample $\vec v \rand \bit^m$ and compute $\ket{\psi_{\vec y}^{\vec v}} \leftarrow \vec Z_q^{\lfloor \frac{q}{\nu} \rfloor \cdot \vec v} \ket{\psi_{\vec y}}$.

        \item Run $\algo A_{\lambda,\vec A,\vec y}(\proj{\psi_{\vec y}^{\vec v}} \otimes \tau_\lambda)$ to generate a state $\proj{\vec w} \otimes \rho_{\vec{\textsc{aux}}}$.

        \item Run $\mathsf{Revoke}(\vec T_{\vec A},\vec v,\vec w)$. If it outputs $\top$, continue. Otherwise, output $\invalid$.

        \item Sample  \rc{$\vec u \rand \Z_q^m$ and $e'\sim D_{\Z,\beta q}$}.
        
       \item Run the distinguisher 
    $\mathcal{D}(\vec A,\vec y,\vec u,\vec u^\intercal\vec x_0 + e',\cdot)$ on system \textsc{Aux}. Output $b'$.
    \end{enumerate}

\item $\hybrid_3$: This is the distribution $\simult.\unif.\dist^{\adversary,\algo D}\left( 1^{\secparam} \right)$ defined in \Cref{fig:unif.dist-classical}.

\end{description}
We now show the following:

\begin{claim} Hybrids $\hybrid_0$ and $\hybrid_1$ are statistically indistinguishable. In other words,
$$
\hybrid_0 \, \approx_s \,\hybrid_1.
$$
\end{claim}
\begin{proof} Here, we invoke the \emph{noise flooding} property in \Cref{lem:shifted-gaussian} to argue that $\vec e^\intercal \vec x_0 \ll e'$ holds with overwhelming probability for our choice of parameters.
Therefore, the distributions in $\hybrid_0$ and $\hybrid_1$ are computationally indistinguishable.
\end{proof}

\begin{claim} Assuming \Cref{conjecture-II} holds for our choice of parameters, the hybrids $\hybrid_1$ and $\hybrid_2$ are computationally indistinguishable,
$$
\hybrid_1 \, \approx_c \,\hybrid_2.
$$
\end{claim}
\begin{proof}
This follows from directly from \Cref{conjecture-II}, which essentially allows us to invoke a variant of the $\LWE$ assumption, even when the procedure $\mathsf{Revoke}(\vec T_{\vec A},\vec v,\vec w)$ is simultaneously taken into account. Here, we rely on the fact that during the reduction, we can simply sample $e'\sim D_{\Z,\beta q}$ to produce an identically distributed challenge distribution.
\end{proof}

\par Recall that $\hybrid_0$ and $\hybrid_3$ can be distinguished with probability $\eps=1/\poly(\secparam)$. We proved that the hybrids $\hybrid_0$ and $\hybrid_2$ are computationally indistinguishable. As a consequence, it holds that hybrids $\hybrid_2$ and $\hybrid_3$ can be distinguished with probability at least $\eps - \negl(\secparam)$.
\par We leverage this to obtain a Goldreich-Levin reduction. Consider the following distinguisher.

\begin{figure}[!htb]
   \begin{center} 
   \begin{tabular}{|p{14cm}|}
    \hline 
\begin{center}
\underline{$\tilde{\algo D}\big(\vec A,\vec y,\vec u,v,\rho\big)$}: 
\end{center}
Input: $\vec A \in \Zq^{n \times m}$, $\vec y \in \Z_q^n$, $\vec u \in \Z_q^n$, $v \in \Z_q$ and $\rho \in L(\algo H_{\vec{\textsc{Aux}}})$.\\
Output: A bit $b' \in \bit$.
\vspace{3mm}\\
\textbf{Procedure:}
\begin{enumerate}
   \item Sample $e' \sim D_{\Z,\beta q}$.
   
   \item Output $b' \leftarrow \algo D\big(\vec A,\vec y,\vec u, v + e',\rho\big)$.
\end{enumerate}
\ \\
\hline
\end{tabular}
    \caption{The distinguisher $\tilde{\algo D}\big(\vec A,\vec y,\vec u,v,\rho\big)$.}
    \label{fig:tilde-dist-classical}
    \end{center}
\end{figure}
Note that $r + e' \Mod{q}$ is uniform whenever $r \rand \Z_q$ and $e' \sim D_{\Z,\beta q}$. Therefore, our previous argument shows that there exists a negligible function $\eta$ such that:
\begin{align*}
\,&\vline\Pr \left[
\substack{
\mathsf{Revoke}(\vec T_{\vec A},\vec v,\vec w) = \top\\
\ \\
\bigwedge\\
\ \\
\tilde{\mathcal{D}}(\vec A,\vec y,\vec u, \vec u^\intercal \vec x_0, \rho_{\textsc{Aux}}) =1
}
\, : \, \substack{
(\vec A, \vec T_{\vec A}) \leftarrow \GenTrap(1^n,1^m,q)\\ \vec u \rand \Z_q^m\\
\vec x_0 \sim D_{\Z_q^m,\frac{\sigma}{\sqrt{2}}}, \, \vec y = \vec A \vec x_0 \text{ (mod $q$)}\\
\ket{\psi_{\vec y}} \leftarrow \mathsf{QSampGauss}(\vec A,\vec y,\vec T_{\vec A},\sigma)\\
\proj{\vec w}\otimes \rho_{\vec{\textsc{Aux}}} \leftarrow \algo A_{\lambda,\vec A,\vec y}(\proj{\psi_{\vec y}} \otimes \tau_\lambda)
}
\right] -\\
&\Pr \left[
\substack{
\mathsf{Revoke}(\vec T_{\vec A},\vec v,\vec w) = \top\\
\ \\
\bigwedge\\
\ \\
\tilde{\mathcal{D}}(\vec A,\vec y,\vec u,r, \rho_{\textsc{Aux}}) =1
}
\, : \, \substack{
(\vec A, \vec T_{\vec A}) \leftarrow \GenTrap(1^n,1^m,q)\\
\vec u \rand \Z_q^m, r \rand \Z_q\\
\vec x_0 \sim D_{\Z_q^m,\frac{\sigma}{\sqrt{2}}}, \, \vec y = \vec A \vec x_0 \text{ (mod $q$)}\\
\ket{\psi_{\vec y}} \leftarrow \mathsf{QSampGauss}(\vec A,\vec y,\vec T_{\vec A},\sigma)\\
\proj{\vec w}\otimes \rho_{\vec{\textsc{Aux}}} \leftarrow \algo A_{\lambda,\vec A,\vec y}(\proj{\psi_{\vec y}} \otimes \tau_\lambda)
}
\right]  \vline \geq \eps - \eta(\lambda).
\end{align*}

From \Cref{thm:QGL}, it follows that there exists a Goldreich-Levin extractor $\algo E$ running in time $T(\algo E) =\poly(\lambda,n,m,\sigma,q,1/\eps)$ that outputs a short vector in $\Lambda_q^{\vec y}(\vec A)$ with probability at least
\begin{align*}
&\Pr \left[\substack{
\mathsf{Revoke}(\vec T_{\vec A},\vec v,\vec w) = \top \\ 
\ \\
\bigwedge\\
\ \\
\mathcal{E}(\vec A,\vec y,\rho_{\vec{\textsc{Aux}}}) \, \in \, \Lambda_q^{\vec y}(\vec A) \,\cap\, \algo B^m(\vec 0,\sigma \sqrt{\frac{m}{2}})}
 \, : \, \substack{
 (\vec A, \vec T_{\vec A}) \leftarrow \GenTrap(1^n,1^m,q)\\
\vec x_0 \sim D_{\Z_q^m,\frac{\sigma}{\sqrt{2}}}, \, \vec y = \vec A \vec x_0 \text{ (mod $q$)}\\
\ket{\psi_{\vec y}} \leftarrow \mathsf{QSampGauss}(\vec A,\vec y,\vec T_{\vec A},\sigma)\\
\proj{\vec w}\otimes \rho_{\vec{\textsc{Aux}}} \leftarrow \algo A_{\lambda,\vec A,\vec y}(\proj{\psi_{\vec y}} \otimes \tau_\lambda)
    }\right]
\geq 1/\poly(\eps,1/q).
\end{align*}
This proves the claim.
\end{proof}

\paragraph{Towards a proof of the conjecture.}

We now give a proof of the \emph{simultaneous} search-to-decision reduction (\Cref{thm:search-to-decision-classical}) from standard assumptions in the special case when revocation succeeds with overwhelming probability.

\begin{theorem}\label{thm:search-to-decision-revoke-1-classical}
Let $n\in \N$. Let $q$ be a prime with $q=2^{o(n)}$, $m \geq 6n \log q$, $\sqrt{8m}< \sigma <q/\sqrt{8m}$, and let $\alpha,\beta \in (0,1)$ with $\beta/\alpha = 2^{o(n)}$ with $1/\alpha= 2^{o(n)} \cdot \sigma$. Let $\nu = 64m^2$. Let
$\algo A = \{(\algo A_{\lambda,\vec A,\vec y},\tau_\lambda)\}_{\lambda \in \N}$
be any non-uniform quantum algorithm consisting of a family of polynomial-sized quantum circuits
$$\Bigg\{\algo A_{\lambda,\vec A,\vec y}: \algo L(\algo H_q^m \otimes \algo H_{B_\lambda}) \allowbreak\rightarrow \algo L(\algo H_{R_\lambda} \otimes \algo H_{\textsc{aux}_\lambda})\Bigg\}_{\vec A \in \Zq^{n \times m}, \,\vec y \in \Z_q^n}$$
and polynomial-sized advice states $\tau_\lambda \in \algo D(\algo H_{B_\lambda})$ which are independent of $\vec A$ such that
$$
\Pr\left[
\mathsf{Revoke}(\vec T_{\vec A},\vec v,\vec w) = \top
 \, : \, \substack{
(\vec A, \vec T_{\vec A}) \leftarrow \GenTrap(1^n,1^m,q)\\
(\ket{\psi_{\vec y}}, \vec y) \leftarrow \mathsf{GenGauss}(\vec A,\sigma)\\
\vec v \rand \bit^m, \, \, \ket{\psi_{\vec y}^{\vec v}} \leftarrow \vec Z_q^{\lfloor \frac{q}{\nu} \rfloor \cdot \vec v} \ket{\psi_{\vec y}}\\
\proj{\vec w} \otimes \rho_{\textsc{Aux}} \leftarrow \algo A_{\lambda,\vec A,\vec y}(\proj{\psi_{\vec y}^{\vec v}} \otimes \tau_\lambda)
    }\right] = 1-\mu(\lambda),
$$
for some negligle function $\mu(\lambda)$.
Then,
assuming the quantum hardness of the $\LWE_{n,q,\alpha q}^m$ assumption, the following holds for every $\QPT$ distinguisher $\mathcal{D}$. Suppose that there exists a function $\eps(\lambda) = 1/\poly(\lambda)$ such that
\begin{align*}
&\vline \, \prob\left[ 1 \leftarrow \mathsf{SimultSearchToDecisionExpt}^{\adversary,\algo D}(1^{\secparam},0)\right]-\\
&\,\prob\left[ 1 \leftarrow \mathsf{SimultSearchToDecisionExpt}^{\adversary,\algo D}(1^{\secparam},1)\right]  \,\vline = \eps(\lambda).
\end{align*}

\noindent Then, there exists a quantum extractor $\algo E$ that takes as input $\vec A$, $\vec y$ and system $\vec{\textsc{Aux}}$ of the state $\rho_{\reg R,\vec{\textsc{Aux}}}$ and outputs a short vector in the coset $\Lambda_q^{\vec y}(\vec A)$ in time $\poly(\lambda,m,\sigma,q,1/\eps)$ such that
\begin{align*}
&\Pr \left[\substack{
\mathsf{Revoke}(\vec T_{\vec A},\vec v,\vec w)\vspace{0.2mm}\\ \bigwedge \vspace{0.2mm}\\ \mathcal{E}(\vec A,\vec y,\reg{Aux}) \,\,\in\,\, \Lambda_q^{\vec y}(\vec A) \,\cap\, \algo B^m(\vec 0,\sigma \sqrt{\frac{m}{2}})}
 \, : \, \substack{
 (\vec A, \vec T_{\vec A}) \leftarrow \GenTrap(1^n,1^m,q)\\
(\ket{\psi_{\vec y}}, \vec y) \leftarrow \mathsf{GenGauss}(\vec A,\sigma)\\
\vec v \rand \bit^m, \, \, \ket{\psi_{\vec y}^{\vec v}} \leftarrow \vec Z_q^{\lfloor \frac{q}{\nu} \rfloor \cdot \vec v} \ket{\psi_{\vec y}}\\
\proj{\vec w} \otimes \rho_{\textsc{Aux}} \leftarrow \algo A_{\lambda,\vec A,\vec y}(\proj{\psi_{\vec y}^{\vec v}} \otimes \tau_\lambda)
    }\right] \, \geq\, \poly(\eps,1/q).
\end{align*}
\end{theorem}

\begin{proof}
Let $\lambda \in \N$ and suppose that there exists an adversary consisting of a pair of $\QPT$ algorithms $(\algo A,\algo D)$ and a function $\eps(\lambda) = 1/\poly(\lambda)$ such that
\begin{align*}
&\vline\Pr \left[
\substack{
\mathsf{Revoke}(\vec T_{\vec A},\vec v,\vec w) = \top\\
\ \\
\bigwedge\\
\ \\
\mathcal{D}(\vec A,\vec y,\vec s^\intercal \vec A + \vec e^\intercal, \vec s^\intercal \vec y + e', \rho_{\vec{\textsc{Aux}}}) =1
}
\, : \, \substack{
(\vec A, \vec T_{\vec A}) \leftarrow \GenTrap(1^n,1^m,q)\\
\vec v \rand \bit^m, \,\vec s \rand \Z_q^n\\
\vec e \sim D_{\Z^{m},\,\alpha q}, \, e' \sim D_{\Z,\,\beta q}\\
(\ket{\psi_{\vec y}}, \vec y) \leftarrow \mathsf{GenGauss}(\vec A,\sigma)\\
\ket{\psi_{\vec y}^{\vec v}} \leftarrow \vec Z_q^{\lfloor \frac{q}{\nu} \rfloor \cdot \vec v} \ket{\psi_{\vec y}}\\
\proj{\vec w} \otimes \rho_{\textsc{Aux}} \leftarrow \algo A_{\lambda,\vec A,\vec y}(\proj{\psi_{\vec y}^{\vec v}} \otimes \tau_\lambda)
}
\right] -\\
&\Pr \left[
\substack{
\mathsf{Revoke}(\vec T_{\vec A},\vec v,\vec w) = \top\\
\ \\
\bigwedge\\
\ \\
\mathcal{D}(\vec A,\vec y,\vec u,r, \rho_{\vec{\textsc{Aux}}}) =1
}
\, : \, \substack{
(\vec A, \vec T_{\vec A}) \leftarrow \GenTrap(1^n,1^m,q)\\
\vec v \rand \bit^m, \vec u \rand \Z_q^m,\,
r \rand \Z_q\\
(\ket{\psi_{\vec y}}, \vec y) \leftarrow \mathsf{GenGauss}(\vec A,\sigma)\\
\ket{\psi_{\vec y}^{\vec v}} \leftarrow \vec Z_q^{\lfloor \frac{q}{\nu} \rfloor \cdot \vec v} \ket{\psi_{\vec y}}\\
\proj{\vec w} \otimes \rho_{\textsc{Aux}} \leftarrow\algo A_{\lambda,\vec A,\vec y}(\proj{\psi_{\vec y}^{\vec v}} \otimes \tau_\lambda)
}
\right] \vline = \eps(\lambda).
\end{align*}  
    From \Cref{lem:remove-revoke}, it follows that there exists a $\QPT$ distinguisher $\tilde{\algo D}$ that succeeds on the reduced state on \textsc{Aux} alone (where we drop the condition that $\mathsf{Revoke}(\vec T_{\vec A},\vec v,\vec w)$ outputs $\top$). In other words, there exists $\bar{\epsilon} = 1/\poly(\lambda)$ such that 
\begin{align*}
&\vline\Pr \left[
\tilde{\mathcal{D}}(\vec A,\vec y,\vec s^\intercal \vec A + \vec e^\intercal, \vec s^\intercal \vec y + e', \rho_{\vec{\textsc{Aux}}}) =1
\, : \, \substack{
(\vec A, \vec T_{\vec A}) \leftarrow \GenTrap(1^n,1^m,q)\\
\vec v \rand \bit^m, \,\vec s \rand \Z_q^n\\
\vec e \sim D_{\Z^{m},\,\alpha q}, \, e' \sim D_{\Z,\,\beta q}\\
(\ket{\psi_{\vec y}}, \vec y) \leftarrow \mathsf{GenGauss}(\vec A,\sigma)\\
\ket{\psi_{\vec y}^{\vec v}} \leftarrow \vec Z_q^{\lfloor \frac{q}{\nu} \rfloor \cdot \vec v} \ket{\psi_{\vec y}}\\
\proj{\vec w} \otimes \rho_{\textsc{Aux}} \leftarrow \algo A_{\lambda,\vec A,\vec y}(\proj{\psi_{\vec y}^{\vec v}} \otimes \tau_\lambda)
}
\right] -\\
&\Pr \left[
\tilde{\mathcal{D}}(\vec A,\vec y,\vec u,r, \rho_{\vec{\textsc{Aux}}}) =1
\, : \, \substack{
(\vec A, \vec T_{\vec A}) \leftarrow \GenTrap(1^n,1^m,q)\\
\vec v \rand \bit^m, \vec u \rand \Z_q^m,\,
r \rand \Z_q\\
(\ket{\psi_{\vec y}}, \vec y) \leftarrow \mathsf{GenGauss}(\vec A,\sigma)\\
\ket{\psi_{\vec y}^{\vec v}} \leftarrow \vec Z_q^{\lfloor \frac{q}{\nu} \rfloor \cdot \vec v} \ket{\psi_{\vec y}}\\
\proj{\vec w} \otimes \rho_{\textsc{Aux}} \leftarrow \algo A_{\lambda,\vec A,\vec y}(\proj{\psi_{\vec y}^{\vec v}} \otimes \tau_\lambda)
}
\right] \vline = \bar{\eps}(\lambda).
\end{align*}  
Because $\tilde{\algo D}$ succeeds with distinguishing advantage $\bar{\eps} = 1/\poly(\lambda)$, we can use \Cref{thm:search-to-decision-without-revoke} to argue that 
there exists a quantum extractor $\algo E$ that takes as input a set of good inputs $(\vec A,\vec y,\rho_{\reg{Aux}})$, and outputs a short vector in the coset $\Lambda_q^{\vec y}(\vec A)$ in time $\poly(\lambda,m,\sigma,q,1/\bar{\eps})$ such that
\begin{align*}
\Pr \left[\substack{
\mathcal{E}(\vec A,\vec y,\rho_{\vec{\textsc{Aux}}}) = \vec x_0\vspace{1mm}\\ \bigwedge \vspace{0.1mm}\\ {\vec x_0} \,\,\in\,\, \Lambda_q^{\vec y}(\vec A) \,\cap\, \algo B^m(\vec 0,\sigma \sqrt{\frac{m}{2}})} \, : \, \substack{
(\vec A, \vec T_{\vec A}) \leftarrow \GenTrap(1^n,1^m,q)\\
(\ket{\psi_{\vec y}}, \vec y) \leftarrow \mathsf{GenGauss}(\vec A,\sigma)\\
\vec v \rand \bit^m, \, \, \ket{\psi_{\vec y}^{\vec v}} \leftarrow \vec Z_q^{\lfloor \frac{q}{\nu} \rfloor \cdot \vec v} \ket{\psi_{\vec y}}\\
\proj{\vec w} \otimes \rho_{\textsc{Aux}} \leftarrow \algo A_{\lambda,\vec A,\vec y}(\proj{\psi_{\vec y}^{\vec v}} \otimes \tau_\lambda)
}\right] \geq \poly(\bar{\eps},1/q).
\end{align*}
Recall that revocation succeeds with overwhelming probability, i.e.,
$$
\Pr\left[
\mathsf{Revoke}(\vec T_{\vec A},\vec v,\vec w) = \top
 \, : \, \substack{
(\vec A, \vec T_{\vec A}) \leftarrow \GenTrap(1^n,1^m,q)\\
(\ket{\psi_{\vec y}}, \vec y) \leftarrow \mathsf{GenGauss}(\vec A,\sigma)\\
\vec v \rand \bit^m, \, \, \ket{\psi_{\vec y}^{\vec v}} \leftarrow \vec Z_q^{\lfloor \frac{q}{\nu} \rfloor \cdot \vec v} \ket{\psi_{\vec y}}\\
\proj{\vec w} \otimes \rho_{\textsc{Aux}} \leftarrow \algo A_{\lambda,\vec A,\vec y}(\proj{\psi_{\vec y}^{\vec v}} \otimes \tau_\lambda)
    }\right] = 1-\negl(\lambda).
$$
Therefore, we can use Bonferroni's inequality to argue that
\begin{align*}
&\Pr \left[\substack{
\mathsf{Revoke}(\vec T_{\vec A},\vec v,\vec w)  = \top\vspace{0.2mm}\\ \bigwedge \vspace{0.2mm}\\ \mathcal{E}(\vec A,\vec y,\reg{Aux}) \,\,\in\,\, \Lambda_q^{\vec y}(\vec A) \,\cap\, \algo B^m(\vec 0,\sigma \sqrt{\frac{m}{2}})}
 \, : \, \substack{
 (\vec A, \vec T_{\vec A}) \leftarrow \GenTrap(1^n,1^m,q)\\
(\ket{\psi_{\vec y}}, \vec y) \leftarrow \mathsf{GenGauss}(\vec A,\sigma)\\
\vec v \rand \bit^m, \, \, \ket{\psi_{\vec y}^{\vec v}} \leftarrow \vec Z_q^{\lfloor \frac{q}{\nu} \rfloor \cdot \vec v} \ket{\psi_{\vec y}}\\
\proj{\vec w} \otimes \rho_{\textsc{Aux}} \leftarrow \algo A_{\lambda,\vec A,\vec y}(\proj{\vec A} \otimes \proj{\vec y} \otimes \proj{\psi_{\vec y}^{\vec v}} \otimes \tau_\lambda )
    }\right]\\
&\geq \Pr \left[
\mathsf{Revoke}(\vec T_{\vec A},\vec v,\vec w)= \top  \, : \, \substack{
 (\vec A, \vec T_{\vec A}) \leftarrow \GenTrap(1^n,1^m,q)\\
(\ket{\psi_{\vec y}}, \vec y) \leftarrow \mathsf{GenGauss}(\vec A,\sigma)\\
\vec v \rand \bit^m, \, \, \ket{\psi_{\vec y}^{\vec v}} \leftarrow \vec Z_q^{\lfloor \frac{q}{\nu} \rfloor \cdot \vec v} \ket{\psi_{\vec y}}\\
\proj{\vec w} \otimes \rho_{\textsc{Aux}} \leftarrow \algo A_{\lambda,\vec A,\vec y}(\proj{\psi_{\vec y}^{\vec v}} \otimes \tau_\lambda )
    } \right]\\
    & \quad + \Pr \left[
\mathcal{E}(\vec A,\vec y,\rho_{\reg{Aux}}) \,\,\in\,\, \Lambda_q^{\vec y}(\vec A) \,\cap\, \algo B^m(\vec 0,\sigma \sqrt{m/2})  \, : \, \substack{
  (\vec A, \vec T_{\vec A}) \leftarrow \GenTrap(1^n,1^m,q)\\
(\ket{\psi_{\vec y}}, \vec y) \leftarrow \mathsf{GenGauss}(\vec A,\sigma)\\
\vec v \rand \bit^m, \, \, \ket{\psi_{\vec y}^{\vec v}} \leftarrow \vec Z_q^{\lfloor \frac{q}{\nu} \rfloor \cdot \vec v} \ket{\psi_{\vec y}}\\
\proj{\vec w} \otimes \rho_{\textsc{Aux}} \leftarrow \algo A_{\lambda,\vec A,\vec y}(\proj{\psi_{\vec y}^{\vec v}} \otimes \tau_\lambda )
    }\right] -  1\\
&\geq \Pr \left[
\mathcal{E}(\vec A,\vec y,\rho_{\reg{Aux}}) \,\,\in\,\, \Lambda_q^{\vec y}(\vec A) \,\cap\, \algo B^m(\vec 0,\sigma \sqrt{m/2})  \, : \, \substack{
  (\vec A, \vec T_{\vec A}) \leftarrow \GenTrap(1^n,1^m,q)\\
(\ket{\psi_{\vec y}}, \vec y) \leftarrow \mathsf{GenGauss}(\vec A,\sigma)\\
\vec v \rand \bit^m, \, \, \ket{\psi_{\vec y}^{\vec v}} \leftarrow \vec Z_q^{\lfloor \frac{q}{\nu} \rfloor \cdot \vec v} \ket{\psi_{\vec y}}\\
\proj{\vec w} \otimes \rho_{\textsc{Aux}} \leftarrow \algo A_{\lambda,\vec A,\vec y}( \proj{\psi_{\vec y}^{\vec v}} \otimes \tau_\lambda )
    }\right] -  \negl(\lambda)\\
&\geq \poly(\eps,1/q).
\end{align*}
This proves the claim.
\end{proof}

\paragraph{Proof of security.}

We can show that the Dual-Regev scheme with classical revocation achieves our notion key-revocable security from \Cref{def:krpke:security-classical}, assuming either \Cref{thm:search-to-decision-classical} is true, or using \Cref{thm:search-to-decision-revoke-1-classical} based on $\LWE$/$\SIS$
in case revocation succeeds with overwhelming probability.

To carry out the proof, we show that a successful key-revocation adversary allows us to break the uncertainty principle for Gaussian superpositions in \Cref{cor:uncertainty}. The idea is the following: Suppose there exists an algorithm $\algo D$ that can distinguish Dual-Regev from uniform given $\rho_{\textsc{Aux}}$ in the key-revocation experiment (conditioned on $\mathsf{Revoke}$ succeeding on certificate $\vec w$). Then we can use it to extract (using Goldreich-Levin) a short pre-image $\vec x_0$ from the internal state $\rho_{\textsc{Aux}}$ which we can send to the challenger as part of the experiment in \Cref{cor:uncertainty}. Moreover, because the revocation certificate $\vec w$ passes verification\footnote{Here, one can think of it as essentially being an LWE encryption $\vec w^\intercal = \hat{\vec s}^\intercal \vec A + \lfloor \frac{q}{\nu} \rfloor \cdot \vec v^\intercal  + \hat{\vec e}^\intercal$ of the string $\vec v$.}, it must yield $\vec v$ when $\vec w$ is decrypted using a short trapdoor basis (i.e., by running $\mathsf{Revoke}$).
Fortunately, once we present a pre-image $\vec x_0$ in the second half of the experiment, we are allowed to be computationally unbounded; meaning, we can simply decode $\vec w$ in exponential time (i.e. by breaking $\LWE$, or by finding a trapdoor that can be used to decode $\vec w$ using $\mathsf{Revoke}$). Therefore, we obtain an adversary that ends up finding both a pre-image $\vec x_0$ and $\vec v$ at the same time, which violates \Cref{cor:uncertainty}.

Our first result concerns $(\negl(\lambda),\negl(\lambda))$-security, i.e., where we assume that revocation succeeds with overwhelming probability.

\begin{theorem}\label{thm:dual-regev-classical}
Let $n\in \N$ and $q$ be a prime with $q=2^{o(n)}$ and $m = \lceil 6 n \log q\rceil$, each parameterized by $\lambda \in \N$. Let $q/\sqrt{8m} > \sigma > \sqrt{8m}$ and let $\alpha,\beta \in (0,1)$ be noise ratios chosen such that $\beta/\alpha =2^{o(n)}$ and $1/\alpha= 2^{o(n)} \cdot \sigma$. Let $\nu = 64m^2$.
Then, assuming the subexponential hardness of the $\mathsf{LWE}_{n,q,\alpha q}^m$
and $\mathsf{SIS}_{n,q,\sigma\sqrt{2m}}^m$ problems, the scheme
$\mathsf{CRevDual} = (\keygen,\enc,\dec,\mathsf{Delete},\mathsf{Revoke})$ in \Cref{cons:dual-regev-classical-revoc}
is a $(\negl(\lambda),\negl(\lambda))$-secure key-revocable public-key encryption scheme with classical revocation.
\end{theorem}

\begin{proof}
Let $\lambda \in \N$ and suppose that there exists an adversary consisting of a pair of $\QPT$ algorithms $(\algo A,\algo D)$ and a function $\eps(\lambda) = 1/\poly(\lambda)$ such that
\begin{align*}
&\vline\Pr \left[
\substack{
\mathsf{Revoke}(\vec T_{\vec A},\vec v,\vec w) = \top\\
\ \\
\bigwedge\\
\ \\
\mathcal{D}(\vec A,\vec y,\vec s^\intercal \vec A + \vec e^\intercal, \vec s^\intercal \vec y + e', \rho_{\vec{\textsc{Aux}}}) =1
}
\, : \, \substack{
(\vec A, \vec T_{\vec A}) \leftarrow \GenTrap(1^n,1^m,q)\\
\vec v \rand \bit^m, \,\vec s \rand \Z_q^n\\
\vec e \sim D_{\Z^{m},\,\alpha q}, \, e' \sim D_{\Z,\,\beta q}\\
(\ket{\psi_{\vec y}}, \vec y) \leftarrow \mathsf{GenGauss}(\vec A,\sigma)\\
\ket{\psi_{\vec y}^{\vec v}} \leftarrow \vec Z_q^{\lfloor \frac{q}{\nu} \rfloor \cdot \vec v} \ket{\psi_{\vec y}}\\
\proj{\vec w} \otimes \rho_{\textsc{Aux}} \leftarrow \algo A_{\lambda,\vec A,\vec y}(\proj{\psi_{\vec y}^{\vec v}} \otimes \tau_\lambda)
}
\right] -\\
&\Pr \left[
\substack{
\mathsf{Revoke}(\vec T_{\vec A},\vec v,\vec w) = \top\\
\ \\
\bigwedge\\
\ \\
\mathcal{D}(\vec A,\vec y,\vec u,r, \rho_{\vec{\textsc{Aux}}}) =1
}
\, : \, \substack{
(\vec A, \vec T_{\vec A}) \leftarrow \GenTrap(1^n,1^m,q)\\
\vec v \rand \bit^m, \vec u \rand \Z_q^m,\,
r \rand \Z_q\\
(\ket{\psi_{\vec y}}, \vec y) \leftarrow \mathsf{GenGauss}(\vec A,\sigma)\\
\ket{\psi_{\vec y}^{\vec v}} \leftarrow \vec Z_q^{\lfloor \frac{q}{\nu} \rfloor \cdot \vec v} \ket{\psi_{\vec y}}\\
\proj{\vec w} \otimes \rho_{\textsc{Aux}} \leftarrow \algo A_{\lambda,\vec A,\vec y}(\proj{\psi_{\vec y}^{\vec v}} \otimes \tau_\lambda)
}
\right] \vline = \eps(\lambda).
\end{align*}  
We now show how to use $(\algo A,\algo D)$ to violate the uncertainty principle for Gaussian states from \Cref{cor:uncertainty}.
Consider the adversary $(\algo B_0,\algo{B}_1)$ consisting of the following procedures:
\begin{itemize}
\item $\algo B_0(\vec A,\vec y,\ket{\psi_{\vec y}^{\vec v}})$: run the algorithm $\algo A$ to generate
$$
 \proj{\vec w} \otimes \rho_{\textsc{Aux}} \leftarrow \algo A_{\lambda,\vec A,\vec y}(\proj{\psi_{\vec y}^{\vec v}} \otimes \tau_\lambda).
$$
Then, run the Goldreich-Levin extractor $\algo E(\vec A,\vec y,\rho_{\vec{\textsc{aux}}})$ from \Cref{thm:search-to-decision-revoke-1-classical} (instantiated using $\algo A,\algo D$) and let $\vec x_0$ be the outcome. Send $\vec x_0$ to the challenger and initialize ${\algo B_1}$ with $(\vec A,\vec w)$.

\item $\algo B_1(\vec A,\vec w)$: compute (in exponential time) a $(20 n \log q)$-good trapdoor basis $\hat{\vec T}_{\vec A} \in \Z^{m \times m}$ for the matrix $\vec A \in \Z_q^{n \times m}$ as in \Cref{def:beta-good-lattice-td} with the properties that
$$
\vec A \cdot \hat{\vec T}_{\vec A} = \vec 0 \Mod{q} \quad\text{ and } \quad \|\hat{\vec T}_{\vec A}\| \leq 20 n \log q.
$$ 
Then, use $\hat{\vec T}_{\vec A}$ to do the following:
\begin{enumerate}
\item Compute $\vec c^\intercal = \vec w^\intercal \cdot \hat{\vec T}_{\vec A} \Mod{q}$.
\item Compute $\vec d^\intercal = \vec c^\intercal \cdot \hat{\vec T}_{\vec A}^{-1}$, where $\hat{\vec T}_{\vec A}^{-1}$ is the inverse matrix of $\hat{\vec T}_{\vec A}$ over $\mathbb{R}$.

    \item For $i \in [m]$, let $v'_i = 0$, if $d_i \in [-q/\sigma,q/\sigma]$, and let $v'_i = 1$, otherwise.

    \item Output $\vec v' = (v'_1,\dots,v'_m)$.
\end{enumerate}
\end{itemize}
Because $\algo D$ succeeds with distinguishing advantage $\eps = 1/\poly(\lambda)$, it follows from \Cref{thm:search-to-decision-revoke-1-classical} the extractor outputs a short vector in the coset $\Lambda_q^{\vec y}(\vec A)$ in time $\poly(\lambda,m,\sigma,q,1/\eps)$ such that
\begin{align*}
&\Pr \left[\substack{
\mathsf{Revoke}(\vec T_{\vec A},\vec v,\vec w)\vspace{0.2mm}\\ \bigwedge \vspace{0.2mm}\\ \mathcal{E}(\vec A,\vec y,\reg{Aux}) \,\,\in\,\, \Lambda_q^{\vec y}(\vec A) \,\cap\, \algo B^m(\vec 0,\sigma \sqrt{\frac{m}{2}})}
 \, : \, \substack{
 (\vec A, \vec T_{\vec A}) \leftarrow \GenTrap(1^n,1^m,q)\\
(\ket{\psi_{\vec y}}, \vec y) \leftarrow \mathsf{GenGauss}(\vec A,\sigma)\\
\vec v \rand \bit^m, \, \, \ket{\psi_{\vec y}^{\vec v}} \leftarrow \vec Z_q^{\lfloor \frac{q}{\nu} \rfloor \cdot \vec v} \ket{\psi_{\vec y}}\\
\proj{\vec w} \otimes \rho_{\textsc{Aux}} \leftarrow \algo A_{\lambda,\vec A,\vec y}(\proj{\psi_{\vec y}^{\vec v}} \otimes \tau_\lambda)
    }\right] \, \geq\, \poly(\eps,1/q).
\end{align*}
Finally, because $\hat{\vec T}_{\vec A} \in \Z^{m \times m}$ is a $(20 n \log q)$-good trapdoor basis for $\vec A \in \Z_q^{n \times m}$, this implies that the probability that $\algo B_1(\vec A,\vec w)$ correctly recovers $\vec v$ is also at least $1/\poly(\lambda)$.
Therefore,
$(\algo B_0,\algo{B}_1)$ has non-negligible probability of finding both the correct certificate $\vec v' = \vec v$ as well as a short pre-image $\vec x_0$ such that $\vec A \cdot \vec x_0 = \vec y \Mod{q}$ and $\| \vec x_0 \| \leq \sigma \sqrt{m/2}$. This proves the claim.
\end{proof}

Finally, we obtain the following stronger notion of $(\negl(\lambda),1-1/\poly(\lambda))$-security, assuming \Cref{thm:search-to-decision-classical} is true.

\begin{theorem}\label{thm:dual-regev-classical-conj}
Let $n\in \N$ and $q$ be a prime with $q=2^{o(n)}$ and $m = \lceil 6 n \log q\rceil$, each parameterized by $\lambda \in \N$. Let $q/\sqrt{8m} > \sigma > \sqrt{8m}$ and let $\alpha,\beta \in (0,1)$ be noise ratios chosen such that $\beta/\alpha =2^{o(n)}$ and $1/\alpha= 2^{o(n)} \cdot \sigma$. Let $\nu = 64m^2$.
Then, assuming \Cref{thm:search-to-decision-classical}, the scheme
$\mathsf{CRevDual} = (\keygen,\enc,\dec,\mathsf{Delete},\mathsf{Revoke})$ in \Cref{cons:dual-regev-classical-revoc}
is a $(\negl(\lambda),1-1/\poly(\lambda))$-secure key-revocable public-key encryption scheme with classical revocation.
\end{theorem}

\begin{proof}
The proof is analogous to \Cref{thm:dual-regev-classical}, except that we invoke \Cref{thm:search-to-decision-classical} instead of \Cref{thm:search-to-decision-revoke-1-classical}.
\end{proof}
\section{Key-Revocable Fully Homomorphic Encryption}\label{sec:dual-GSW}

In this section, we describe our key-revocable (leveled) fully homomorphic encryption scheme from $\LWE$ which is based on the so-called $\mathsf{DualGSW}$ scheme used by Mahadev~\cite{mahadev2018classical} which itself is a variant of the homomorphic encryption scheme by Gentry, Sahai and Waters~\cite{GSW2013}.

Let $\lambda \in \N$ be the security parameter. Suppose we would like to evaluate $L$-depth circuits consisting of $\mathsf{NAND}$ gates. We choose $n(\lambda,L) \gg L$ and a prime $q=2^{o(n)}$. Then, for integer parameters $m \geq 2 n \log q$ and $N = (m+1) \cdot \lceil \log q \rceil$, we let $\vec I$ be the $(m+1) \times (m+1)$ identity matrix and let
$\vec G = [\vec I \, \| \, 2 \vec I \, \| \, \dots \, \| \, 2^{\lceil \log q \rceil -1} \vec I] \in \Z_q^{(m+1) \times N}$ denote the so-called \emph{gadget matrix} which converts a binary representation of a vector back to its original vector representation over the field $\Z_q$. 
Note that the associated (non-linear) inverse operation $\vec G^{-1}$ converts vectors in $\Z_q^{m+1}$ to their binary representation in $\bit^N$. In other words, we have that $\vec G \circ \vec G^{-1}$ acts as the identity operator.

\subsection{Construction}\label{sec:DualGSW}

We now construct a key-revocable Dual-Regev (leveled) fully homomorphic encryption scheme based on our key-revocable Dual-Regev public-key scheme.
\begin{remark}
While the construction in this section can be readily adapted to feature \emph{classical revocation} via \Cref{cons:dual-regev-classical-revoc}, we choose to focus on \emph{quantum revocation} for simplicity by making use of our Dual-Regev scheme from \Cref{cons:dual-regev}.
\end{remark}

We are now ready to state our construction.

\begin{construction}[Key-Revocable \textsf{DualGSW} encryption]\label{cons:DualGSW}
Let $\lambda \in \N$ be the security parameter.
The scheme $\mathsf{RevDualGSW} = (\KeyGen,\Enc,\Dec,\Eval,\Revoke)$ consists of the following $\QPT$ algorithms:
\begin{description}
\item $\KeyGen(1^\lambda,1^L) \rightarrow (\pk,\sk):$ 
sample a pair $(\vec A \in \Z_q^{n \times m},\mathsf{td}_{\vec A}) \leftarrow \GenTrap(1^n,1^m,q)$ and generate a Gaussian superposition $(\ket{\psi_{\vec y}}, \vec y) \leftarrow \mathsf{GenGauss}(\vec A,\sigma)$ with
$$
\ket{\psi_{\vec y}} \,\,= \sum_{\substack{\vec x \in \Z_q^m\\ \vec A\vec x = \vec y}}\rho_{\sigma}(\vec x)\,\ket{\vec x},
$$
for some $\vec y \in \Z_q^n$.
Output $\pk = (\vec A,\vec y)$, $\sk = \ket{\psi_{\vec y}}$ and $\msk = \mathsf{td}_{\vec A}$.

\item $\Enc(\pk,\mu):$ to encrypt $\mu \in \bit$, parse $(\vec A,\vec y) \leftarrow \pk$, sample a random matrix $\vec S \rand \Z_q^{n \times N}$ and $\vec E \sim D_{\Z^{m\times N}, \,\alpha q}$ and row vector $\vec e \sim D_{\Z^{N}, \,\beta q}$, and
output the ciphertext 
$$
\ct= \left[\substack{
\vec A^\intercal \vec S + \vec E\vspace{1mm}\\
\hline\vspace{1mm}\\
\vec y^\intercal  \vec S + \vec e}
 \right] 
+ \mu \cdot \vec G \Mod{q} \in \Z_q^{(m+1)\times N}.
$$

\item $\Eval(\ct_0,\ct_1):$ to apply a $\mathsf{NAND}$ gate on a ciphertext pair $\ct_0$ and $\ct_1$, output the matrix
$$
\vec G - \ct_0 \cdot \vec G^{-1}(\ct_1) \Mod{q} \in \Z_q^{(m+1)\times N}.
$$

\item $\Dec(\sk,\ct) \rightarrow \bit:$ to decrypt $\ct$, 
apply the unitary $U: \ket{\vec x}\ket{0} \rightarrow \ket{\vec x}\ket{(-\vec x,1) \cdot \ct_N}$ on input $\ket{\psi_{\vec y}}\leftarrow\sk$, where $\ct_N \in \Z_q^{m+1}$ is the $N$-th column of $\ct$, and measure the second register in the computational basis. Output $0$, if the measurement outcome
is closer to $0$ than to $\lfloor \frac{q}{2} \rfloor$,
and output $1$, otherwise.

\item $\revoke(\msk,\pk,\rho) \rightarrow \{\top,\bot\}$: on input $\mathsf{td}_{\vec A} \leftarrow \msk$ and $(\vec A,\vec y) \leftarrow \pk$, apply the projective measurement $\{\ketbra{\psi_{\vec y}}{\psi_{\vec y}},I  - \ketbra{\psi_{\vec y}}{\psi_{\vec y}}\}$ onto $\rho$ using $\mathsf{SampGauss}(\vec A,\mathsf{td}_{\vec A},\vec y,\sigma)$ in Algorithm \ref{alg:SampGauss}. Output $\top$ if the measurement is successful, and output $\bot$ otherwise.
\end{description}
\end{construction}

\subsection{Proof of security}

\begin{figure}[!htb]
   \begin{center} 
   \begin{tabular}{|p{14cm}|}
    \hline 
\begin{center}
\underline{$\expt_{\adversary}(1^{\secparam},b)$}: 
\end{center}
\begin{enumerate}
    \item The challenger samples $(\vec A \in \Z_q^{n \times m},\mathsf{td}_{\vec A}) \leftarrow \GenTrap(1^n,1^m,q)$ and generates
$$
\ket{\psi_{\vec y}} \,\,= \sum_{\substack{\vec x \in \Z_q^m\\ \vec A \vec x = \vec y \Mod{q}}}\rho_{\sigma}(\vec x)\,\ket{\vec x},
$$
for some $\vec y \in \Z_q^n$, by running $(\ket{\psi_{\vec y}}, \vec y) \leftarrow \mathsf{GenGauss}(\vec A,\sigma)$. The challenger lets $\msk \leftarrow \mathsf{td}_{\vec A}$ and $\pk \leftarrow (\vec A,\vec y)$ and sends  $\sk \leftarrow \ket{\psi_{\vec y}}$ to the adversary $\mathcal A$.
\item $\mathcal A$ generates a (possibly entangled) bipartite state $\rho_{R,\vec{\textsc{aux}}}$ in systems $\algo H_R \otimes \algo H_{\textsc{Aux}}$ with $\algo H_R= \algo H_q^m$, returns system $R$ and holds onto the auxiliary system $\textsc{Aux}$.
\item The challenger runs $\revoke(\msk,\pk,\rho_R)$, where $\rho_R$ is the reduced state in system $R$. If the outcome is $\top$, the game continues. Otherwise, output \textsf{Invalid}.

\item $\algo A$ submits a plaintext bit $\mu \in \bit$.

\item The challenger does the following depending on $b \in \bit$:
\begin{itemize}
    \item if $b=0$: The challenger samples a random matrix $\vec S \rand \Z_q^{n \times N}$ and errors $\vec E \sim D_{\Z^{m\times N}, \,\alpha q}$ and row vector $\vec e \sim D_{\Z^{N}, \,\beta q}$, and
outputs the ciphertext 
$$
\ct
= \left[\substack{
\vec A^\intercal \vec S + \vec E\vspace{1mm}\\
\hline\vspace{1mm}\\
\vec y^\intercal  \vec S + \vec e}
 \right] + \mu \cdot \vec G
\,\, \in \Z_q^{(m+1)\times N}.
$$

\item if $b=1$: the challenger samples a matrix $\vec U \rand \Z_q^{m \times N}$ and row vector $r \rand \Z_q^N$ uniformly at random, and sends the following to $\algo A$:
$$
\left[\substack{
\, \vec U \, \vspace{1mm}\\
\hline\vspace{1mm}\\
\vec r}
 \right] 
\,\, \in \Z_q^{(m+1)\times N}.
$$
\end{itemize}

\item $\algo A$ returns a bit $b' \in \bit$.
\end{enumerate}\\
\hline
\end{tabular}
    \caption{The key-revocable security experiment according to \Cref{def:krpke:security}.}
    \label{fig:Dual-FHE-security}
    \end{center}
\end{figure}

Our first result on the security of \Cref{cons:DualGSW} concerns $(\negl(\lambda),\negl(\lambda))$-security, i.e., we assume that revocation succeeds with overwhelming probability.

\begin{theorem}\label{thm:security-Dual-GSW}
Let $L$ be an upper bound on the $\mathsf{NAND}$-depth of the circuit which is to be evaluated. Let $n\in \N$ and $q$ be a prime modulus with $n=n(\lambda,L) \gg L$, $q=2^{o(n)}$ and $m \geq 2n \log q$, each parameterized by the security parameter $\lambda \in \N$. Let $N = (m+1) \cdot \lceil \log q \rceil$ be an integer. Let $q/\sqrt{8m}> \sigma > \sqrt{8m}$ and let $\alpha,\beta \in (0,1)$ be parameters such that $\beta/\alpha =2^{o(n)}$ and $1/\alpha= 2^{o(n)} \cdot \sigma$.
Then, assuming the subexponential hardness of the $\mathsf{LWE}_{n,q,\alpha q}^m$
and $\mathsf{SIS}_{n,q,\sigma\sqrt{2m}}^m$ problems, the scheme
$\mathsf{RevDualGSW} = (\KeyGen,\Enc,\Dec,\Eval,\Revoke)$ in \Cref{cons:DualGSW}
is a $(\negl(\lambda),\negl(\lambda))$-secure key-revocable (leveled) fully homomorphic encryption scheme according to \Cref{def:krpke:security}.
\end{theorem}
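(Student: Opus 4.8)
The plan is to reduce the security of $\mathsf{RevDualGSW}$ directly to the key-revocable Dual-Regev security established in \Cref{thm:security-Dual-Regev}, exploiting the fact that a \textsf{DualGSW} ciphertext is nothing but a column-wise bundle of Dual-Regev ciphertexts that all share the same public key $(\vec A,\vec y)$ and the same quantum decryption key $\ket{\psi_{\vec y}}$. Concretely, for the $j$-th column $\vec g_j$ of the gadget matrix $\vec G$, the $j$-th column of $\Enc(\pk,\mu)$ in \Cref{cons:DualGSW} equals $(\vec A^\intercal \vec s_j + \vec e_j\,;\,\vec y^\intercal \vec s_j + e_j) + \mu \cdot \vec g_j$, which is exactly the transpose of a Dual-Regev encryption of $0$ under $(\vec A,\vec y)$ (with the same Gaussian error parameters $\alpha q$ and $\beta q$), shifted by the publicly known vector $\mu \cdot \vec g_j$. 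Correctness of decryption, of homomorphic evaluation (via the standard \textsf{GSW}/\textsf{DualGSW} noise-growth analysis, using $n \gg L$ so that $L$-level evaluation keeps the noise below the decryption threshold), and of revocation (by \Cref{lem:qdgs}, since $\revoke$ here is syntactically the same procedure as in \Cref{cons:dual-regev}) are all routine, so the work is entirely in the security claim.

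For security, fix a $\QPT$ adversary $\adversary$ in the experiment $\expt^{\adversary}(1^\secparam,b)$ of \Cref{fig:Dual-FHE-security} and let $\epsilon(\secparam)$ be its prediction advantage. Following the template of \Cref{claim:single-to-multi-security}, I would define a sequence of hybrid experiments $\hybrid_0,\dots,\hybrid_N$, where $N=(m+1)\lceil \log q\rceil$, and where $\hybrid_i$ behaves as $\expt^{\adversary}(1^\secparam,\cdot)$ except that after $\adversary$ submits its plaintext bit $\mu$, the challenge ciphertext matrix is formed by putting uniformly random vectors in columns $1,\dots,i$ and genuine \textsf{DualGSW}-style columns $(\vec A^\intercal \vec s_j + \vec e_j\,;\,\vec y^\intercal \vec s_j + e_j) + \mu \cdot \vec g_j$ in columns $i+1,\dots,N$. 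Since a uniform vector plus the constant $\mu \cdot \vec g_j$ is still uniform, we have $\hybrid_0 = \expt^{\adversary}(1^\secparam,0)$ and $\hybrid_N = \expt^{\adversary}(1^\secparam,1)$.

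The core step is to show that no $\QPT$ predictor can distinguish $\hybrid_i$ from $\hybrid_{i+1}$ with non-negligible advantage, by a reduction $\algo B$ to the key-revocable Dual-Regev game of \Cref{def:krpke:security}. The reduction $\algo B$ receives $(\pk=(\vec A,\vec y),\sk=\ket{\psi_{\vec y}})$ from the Dual-Regev challenger and forwards it to $\adversary$; when $\adversary$ returns a register for revocation, $\algo B$ passes it on to the Dual-Regev challenger, whose $\revoke$ (run with the true trapdoor $\mathsf{td}_{\vec A}$) is the identical projective measurement used in the \textsf{DualGSW} game, so the conditioning on successful revocation is consistent between the two games; upon $\adversary$'s plaintext $\mu$, $\algo B$ submits $\mu'=0$ to its own challenger, receives $\ct^* \in \Z_q^m \times \Z_q$, and builds the \textsf{DualGSW} challenge matrix by placing $(\ct^*)^\intercal + \mu \cdot \vec g_{i+1}$ in column $i+1$, uniformly random vectors in columns $<i+1$, and freshly sampled \textsf{DualGSW}-style encryptions of $\mu$ (computed using only the public $\vec A$ and $\vec y$) in columns $>i+1$; it then hands the matrix to $\adversary$ and outputs $\adversary$'s guess. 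When $\ct^*$ is a real Dual-Regev encryption of $0$, $\adversary$'s view is exactly $\hybrid_i$; when $\ct^*$ is uniform, it is $\hybrid_{i+1}$. Crucially, $\algo B$ never needs $\mathsf{td}_{\vec A}$, so it is a legitimate $\QPT$ adversary against $\mathsf{RevDual}$, and \Cref{thm:security-Dual-Regev} bounds its advantage — hence $\adversary$'s $\hybrid_i$-versus-$\hybrid_{i+1}$ advantage $\epsilon_i$ — by $\negl(\secparam)$. Finally, applying the hybrid lemma for $1$-bit unpredictability (\Cref{lem:hybrid-lemma}) over $i=0,\dots,N-1$ and using $N=\poly(\secparam)$ gives $\epsilon(\secparam) \le \sum_{i=0}^{N-1}\epsilon_i(\secparam) \le \negl(\secparam)$, which is exactly \Cref{def:krpke:security} for $\mathsf{RevDualGSW}$. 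The only place requiring care — and the point I expect to be the main obstacle to write cleanly — is the revocation-conditioning in the hybrid reduction: one must verify that routing $\adversary$'s returned register through the Dual-Regev challenger's $\revoke$ faithfully reproduces the abort behaviour of the \textsf{DualGSW} experiment (it does, since the two $\revoke$ procedures are literally the same $\mathsf{QSampGauss}$-based measurement on the same key), so that the already-proven \Cref{thm:security-Dual-Regev} can be invoked as a black box rather than re-running the search-to-$\SIS$ argument from scratch.
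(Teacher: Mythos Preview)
Your proposal is correct and follows essentially the same approach as the paper: observe that a \textsf{DualGSW} ciphertext is (up to the public additive shift $\mu\cdot\vec G$) a column-wise bundle of $N$ independent Dual-Regev encryptions of $0$ under the same $(\vec A,\vec y)$ and the same quantum key, and then reduce security to \Cref{thm:security-Dual-Regev} via the hybrid argument for $1$-bit unpredictability. The paper's proof is terser---it simply invokes \Cref{claim:single-to-multi-security} together with \Cref{thm:security-Dual-Regev} rather than spelling out the column-by-column hybrids and the reduction $\algo B$---but your explicit write-up is exactly the content of that claim specialized to this setting, including the point that the two $\revoke$ procedures are literally identical so the revocation conditioning carries over.
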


\begin{proof} Let $\algo A$ be a $\QPT$ adversary and suppose that
$$ \vline \, \Pr\left[ 1 \leftarrow \expt_{\adversary}(1^{\secparam},0) \right] - \Pr\left[ 1 \leftarrow \expt_{\adversary}(1^{\secparam},1) \right] \,\vline =\epsilon(\secparam),$$
for some $\eps(\lambda)$ with respect to $\expt_{\adversary}(1^{\secparam},b)$ in \Cref{fig:Dual-FHE-security}. Note that the \textsf{RevDualGSW} ciphertext can (up to an additive shift) be thought of as a column-wise concatenation of $N$-many independent ciphertexts of our key-revocable Dual-Regev scheme in \Cref{cons:dual-regev}.
Therefore, we can invoke \Cref{thm:security-Dual-Regev-high-revoke} in order to argue that $\eps(\lambda)$ is at most negligible.

\end{proof}

Our second result concerns $(\negl(\lambda),1-1/\poly(\lambda))$-security, i.e., we do not make any requirements on the success probability of revocation. Here, we need to invoke \Cref{thm:search-to-decision}.

\begin{theorem}\label{thm:security-Dual-GSW}
Let $L$ be an upper bound on the $\mathsf{NAND}$-depth of the circuit which is to be evaluated. Let $n\in \N$ and $q$ be a prime modulus with $n=n(\lambda,L) \gg L$, $q=2^{o(n)}$ and $m \geq 2n \log q$, each parameterized by the security parameter $\lambda \in \N$. Let $N = (m+1) \cdot \lceil \log q \rceil$ be an integer. Let $q/\sqrt{8m}> \sigma > \sqrt{8m}$ and let $\alpha,\beta \in (0,1)$ be parameters such that $\beta/\alpha =2^{o(n)}$ and $1/\alpha= 2^{o(n)} \cdot \sigma$.
Then, assuming assuming \Cref{thm:search-to-decision}, the scheme
$\mathsf{RevDualGSW} = (\KeyGen,\Enc,\Dec,\Eval,\Revoke)$ in \Cref{cons:DualGSW}
is a $(\negl(\lambda),1-1/\poly(\lambda))$-secure key-revocable (leveled) fully homomorphic encryption scheme according to \Cref{def:krpke:security}.
\end{theorem}

\begin{proof} The proof is the same as in the theorem before, except that we invoke  \Cref{thm:security-Dual-Regev} instead of \Cref{thm:security-Dual-Regev-high-revoke} in order to argue security.
\end{proof}

\section{Revocable Pseudorandom Functions}

In this section, we introduce the notion of \emph{key-revocable} pseudorandom functions (or simply, called {\em revocable}) and present the first construction from (quantum hardness of) learning with errors.

\subsection{Definition}

Let us first recall the traditional notion of $\PRF$ security~\cite{GGM86}, defined as follows.

\begin{definition}[Pseudorandom Function]\label{def:pqPRF} Let $\lambda \in \N$ and $\kappa(\lambda),\ell(\lambda)$ and $\ell'(\lambda)$ be polynomials. A (post-quantum) pseudorandom function $(\pqPRF)$ is a pair $(\gen,\PRF)$ of $\PPT$ algorithms given by
\begin{itemize}
    \item $\gen(1^\lambda):$ On input $1^\lambda$, it outputs a key $k \in \bit^\kappa$.

    \item $\PRF(k,x):$ On input $k \in \bit^\kappa$ and $x \in \bit^{\ell}$, it outputs a value $y \in \bit^{\ell'}$.
\end{itemize}
with the property that, for any $\QPT$ distinguisher $\algo D$, we have
\begin{align*}
\vline \, \Pr\left[\algo D^{\PRF(k,\cdot)}(1^\lambda) = 1] \, : \, k \leftarrow \gen(1^\lambda)
\right]
- \Pr\left[\algo D^{F(\cdot)}(1^\lambda) = 1] \, : \, 
F \rand \algo F^{\ell,\ell'}
\right] \,\vline \,\leq \, \negl(\lambda),
\end{align*}
where $\algo F^{\ell,\ell'}$ is the set of all functions with domain $\bit^\ell$ and range $\bit^{\ell'}$.
    
\end{definition}

\noindent We now present a formal definition of revocable pseudorandom functions below.

\begin{definition}[Revocable Pseudorandom Function]\label{def:rprf}
Let $\lambda \in \N$ be the security parameter and let $\kappa(\lambda),\ell(\lambda)$ and $\ell'(\lambda)$ be polynomials.
A revocable pseudorandom function $(\mathsf{rPRF})$ is a scheme
$(\gen, \prf,\eval,\revoke)$ consisting of the following efficient algorithms:

\begin{itemize}
    \item $\gen(1^{\secparam})$: on input the security parameter $\secparam \in \N$, it outputs a $\prf$ key $k \in \bit^\kappa$, a quantum state $\rho_k$ and a master secret key $\msk$. 
    \item $\prf(k,x)$: on input a key $k \in \bit^\kappa$ and an input string $x \in \{0,1\}^{\ell}$, it outputs a value $y \in \{0,1\}^{\ell'}$. This is a deterministic algorithm. 
    \item $\eval(\rho_k,x)$: on input a state $\rho_k$ and an input $x \in \{0,1\}^{\ell}$, it outputs a value $y \in \{0,1\}^{\ell'}$. 
    \item $\revoke(\msk,\sigma)$: on input key $\msk$ and a state $\sigma$, it outputs $\valid$ or $\invalid$. 
\end{itemize}
\end{definition}
\noindent We additionally require that the following holds:
\paragraph{Correctness.} 
For each $(k,\rho_k,\msk)$ in the support of $\gen(1^{\secparam})$ and for every $x \in \{0,1\}^{\ell}$:
\begin{itemize}
    \item (Correctness of evaluation:)
$$ \prob\left[ \prf(k,x) = \eval(\rho_k,x) \right] \geq 1 - \negl(\secparam).$$
    \item (Correctness of revocation:)
$$ \prob\left[ \valid \leftarrow \revoke(\msk,\rho_k)\right] \geq 1 - \negl(\secparam).$$
\end{itemize}

\subsection{Security} 
We define revocable $\prf$ security below. 

\begin{figure}[!htb]
   \begin{center} 
   \begin{tabular}{|p{14cm}|}
    \hline 
\begin{center}
\underline{$\expt_{\adversary,\mu}(1^{\secparam},b)$}: 
\end{center}
\noindent {\bf Initialization Phase}:
\begin{itemize}
    \item The challenger computes $(k,\rho_k,\msk) \leftarrow \mathsf{Gen}(1^\lambda)$ and sends $\rho_k$ to $\adversary$. 
\end{itemize}
\noindent {\bf Revocation Phase}:
\begin{itemize}
    \item The challenger sends the message \texttt{REVOKE} to $\adversary$. 
    \item The adversary $\adversary$ sends a state $\sigma$ to the challenger.
    \item The challenger aborts if $\revoke\left(\msk,\sigma\right)$ outputs $\invalid$. 
\end{itemize}
\noindent {\bf Guessing Phase}:
\begin{itemize}
    \item The challenger samples bit $b \leftarrow \{0,1\}$.  
    \item  The challenger samples random inputs $x_1,\dots,x_\mu \rand \bit^\ell$ and then sends the values $(x_1,\dots,x_\mu)$ and $(y_1,\dots,y_\mu)$ to $\adversary$, where: 
    \begin{itemize}
        \item If $b=0$, set $y_1=\mathsf{PRF}(k,x_1),\, \dots, \, y_\mu=\mathsf{PRF}(k,x_\mu)$ and,
        \item If $b=1$, set $y_1,\dots,y_\mu \rand \{0,1\}^{\ell'}$. 
    \end{itemize}
    \item $\adversary$ outputs a bit $b'$ and wins if $b'=b$. 
\end{itemize}
\ \\ 
\hline
\end{tabular}
    \caption{Revocable $\PRF$ security}
    \label{fig:KR-PRF-Model2}
    \end{center}
\end{figure}

\begin{definition}[Revocable $\prf$ Security]\label{def:revocable-PRF}
A revocable pseudorandom function $(\rPRF)$ consisting of a tuple of $\QPT$ algorithms $(\gen, \prf,\eval,\revoke)$ has $(\eps,\delta,\mu)$ revocable $\PRF$ security if, for every $\QPT$ adversary $\adversary$ with 
$$\prob[\invalid \leftarrow \expt_{\algo A,\mu}(1^\lambda,b)] \leq \delta(\secparam)
$$
for $b \in \bit$, it holds that
$$ \left| \Pr\left[ 1 \leftarrow \expt_{\algo A,\mu}(1^{\secparam},0) \right] - \Pr\left[ 1 \leftarrow \expt_{\algo A,\mu}(1^{\secparam},1) \right]\right|\leq  \eps(\secparam),$$
where $\expt_{\adversary,\mu}$ is as defined in~\Cref{fig:KR-PRF-Model2}. If $\delta(\secparam) = 1-1/\poly(\lambda)$, $\eps(\secparam)=\negl(\lambda)$ we oftentimes drop $(\delta,\eps)$ and simply refer to it as $\rPRF$ satisfies $\mu$-revocable $\PRF$ security.
\end{definition}

\paragraph{From one-query to multi-query security.} We show that proving security with respect to $\mu=1$ is sufficient. That is, we show the following. 

\begin{claim}\label{claim:one-query-to-multi-query}
Supoose an $\rPRF$ scheme
$(\gen, \prf,\eval,\revoke)$ satisfies $1$-revocable $\PRF$ security. Then, $\rPRF$ also satisfies the stronger notion of (multi-query) revocable PRF security.  
\end{claim}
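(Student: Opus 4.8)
The plan is to prove the contrapositive via a hybrid argument over the $\mu$ challenge input/output pairs, leveraging the Hybrid Lemma for $1$-Bit Unpredictability (\Cref{lem:hybrid-lemma}). Concretely, for $i \in \{0,1,\dots,\mu\}$, I would define hybrid experiment $\hybrid_i$ to be identical to $\expt^{\adversary,\mu}$ except that in the Guessing Phase, the first $i$ of the returned values $y_1,\dots,y_i$ are sampled uniformly at random from $\bit^{\ell'}$, while the remaining $y_{i+1},\dots,y_\mu$ are set to $\prf(k,x_{i+1}),\dots,\prf(k,x_\mu)$. The inputs $x_1,\dots,x_\mu$ are always sampled uniformly and sent in the clear, exactly as in \Cref{fig:KR-PRF-Model2}. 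Then $\hybrid_0$ is the experiment $\expt^{\adversary,\mu}(1^\lambda,0)$ (all pseudorandom) and $\hybrid_\mu$ is $\expt^{\adversary,\mu}(1^\lambda,1)$ (all uniform). Note that the revocation phase is unchanged across all hybrids, so the conditioning on $\revoke$ outputting $\valid$ is handled uniformly.

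The key step is to show that for each $i$, any $\QPT$ predictor distinguishing $\hybrid_i$ from $\hybrid_{i+1}$ with advantage $\epsilon_i$ yields a $1$-query adversary $\adversary'$ against $\expt^{\adversary',1}$ with advantage $\epsilon_i$. The reduction $\adversary'$ runs internally a copy of $\adversary$: it forwards the quantum key $\rho_k$ it receives to $\adversary$, relays the \texttt{REVOKE} message and the returned state $\sigma$ faithfully (so that $\adversary'$ passes revocation exactly when $\adversary$ does). In the guessing phase, $\adversary'$ receives a single pair $(x^*, y^*)$ where $y^*$ is either $\prf(k,x^*)$ or uniform. It then samples $\mu-1$ additional uniform inputs, embeds $x^*$ in position $i+1$, sets positions $1,\dots,i$ to fresh uniform values, and sets positions $i+2,\dots,\mu$ to pseudorandom values $\prf(k,x_j)$ — but here is the subtlety: $\adversary'$ does \emph{not} hold the key $k$ after revocation. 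To produce the pseudorandom values in positions $i+2,\dots,\mu$, $\adversary'$ must instead query an evaluation oracle. This is why revocable $\prf$ security as stated in \Cref{fig:KR-PRF-Model2} is insufficient on its own; one needs to additionally assume (as is standard, and as the paper's discussion of "traditional pseudorandomness" suggests is imposed) that the challenger in the $1$-query game also answers pre- and post-revocation evaluation queries, or — more cleanly — one observes that the $1$-query experiment can be modified to allow the adversary oracle access to $\prf(k,\cdot)$ on inputs of its choice (distinct from the challenge), which is consistent with $1$-revocable security holding in that augmented model. I would state this augmented single-query game explicitly and note that \Cref{def:revocable-PRF} is taken with respect to it (or derive it from the combination with a post-quantum $\prf$ as in the paper's third bullet).

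Having established that each $\epsilon_i \leq \negl(\lambda)$ under $1$-revocable $\prf$ security, I would invoke \Cref{lem:hybrid-lemma} directly: the predictor $\adversary$ distinguishing $\hybrid_0$ from $\hybrid_\mu$ — i.e.\ winning $\expt^{\adversary,\mu}$ — has advantage at most $\sum_{i=0}^{\mu-1}\epsilon_i \leq \mu \cdot \negl(\lambda) = \negl(\lambda)$, since $\mu = \mu(\lambda)$ is polynomially bounded. This yields $\Pr[b \leftarrow \expt^{\adversary,\mu}(1^\lambda,b)] \leq \tfrac{1}{2} + \negl(\lambda)$, as required. The main obstacle, as flagged above, is the bookkeeping around oracle access: ensuring the $1$-query reduction can consistently simulate the remaining pseudorandom coordinates without retaining the key post-revocation. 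Once the single-query security notion is fixed to include (challenge-disjoint) evaluation queries, the rest is a routine hybrid, and the $\revoke$-conditioning causes no difficulty precisely because, unlike in the indistinguishability setting, the Hybrid Lemma for $1$-bit unpredictability already accounts for it.
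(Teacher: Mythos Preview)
Your hybrid structure is correct and matches the paper's, but your reduction has a genuine gap at exactly the point you yourself flag. You propose to resolve the ``subtlety'' by augmenting the $1$-query experiment with oracle access to $\prf(k,\cdot)$; this changes the definition and is not what the paper does (nor is it implied by the combination with a post-quantum $\prf$, which is a separate upgrade for a different purpose).

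The paper's fix is different and works with \Cref{def:revocable-PRF} exactly as stated. The reduction $\mathcal B$ does \emph{not} forward $\rho_k$ directly to $\adversary$. Instead, \emph{before} handing over the key and before revocation, $\mathcal B$ samples the future inputs $x_{i+1},\dots,x_\mu$ itself and computes $y_j = \eval(\rho_k,x_j)$ for $j>i$ using the quantum key. By correctness of evaluation and the Almost As Good As New Lemma (\Cref{remark:gentle-measurement}), each such evaluation disturbs $\rho_k$ only negligibly, so after $\mu-i$ evaluations the residual state $\rho_k^{(\mu+1)}$ is still within negligible trace distance of $\rho_k$ (quantum union bound). $\mathcal B$ then hands $\rho_k^{(\mu+1)}$ to $\adversary$, relays revocation, receives the single challenge $(x^*,y^*)$, embeds it at position $i$, fills positions $1,\dots,i-1$ with fresh uniform pairs and positions $i+1,\dots,\mu$ with the precomputed $(x_j,y_j)$, and outputs $\adversary$'s bit. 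No oracle is needed.

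In short: the missing idea is \emph{precompute the pseudorandom coordinates via $\eval$ before revocation}, using gentle measurement to keep the key intact. Once you make that change, your invocation of \Cref{lem:hybrid-lemma} goes through exactly as you wrote.
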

\begin{proof}
Let $\adversary$ be a $\QPT$ adversary that participating in the revocable $\PRF$ security experiment defined in~\Cref{fig:KR-PRF-Model2} and let $(x_1,y_1),\ldots,(x_{\mu},y_{\mu})$ denote the challenge input-output pairs, for some polynomial $\mu=\mu(\secparam)$. In the following, we denote by $k$ the $\PRF$ key sampled using $\gen$ by the challenger in~\Cref{fig:KR-PRF-Model2}. We consider a sequence of hybrids defined as follows.\\

 \noindent $\hybrid_{i}$, for $i \in [\mu+1]$: In this hybrid, $y_1,\ldots,y_{i-1}$ are sampled uniformly at random from $\{0,1\}^{\ell'}$ and $y_i,\ldots,y_{\mu}$ are generated as follows: $y_j=\prf(k,x_{j})$ for $j \geq i$.\\ 

 \noindent We claim that $\adversary$ cannot distinguish between hybrids $\hybrid_i$ and $\hybrid_{i+1}$, for all $i \in [\mu]$, with more than negligible advantage. Suppose for the sake of contradiction that the claim is not true, and that $\algo A$ can distinguish $\hybrid_i$ and $\hybrid_{i+1}$, for some index $i \in [\mu]$, with advantage at least $\eps(\lambda) = 1/\poly(\lambda)$.  We will now show that we can use the adversary $\adversary$ to break the 1-revocation security of $\rPRF$.
 
 Consider a reduction ${\cal B}$ that does the following: 
 \begin{enumerate}
     \item Receive the state $\rho_k$ from the challenger. 
     \item Sample $x_{i+1},\ldots,x_{\mu}$ uniformly at random from $\{0,1\}^{\ell}$. Denote $\rho_{k}^{(i+1)}=\rho_k$. Do the following for $j=i+1,\ldots,\mu$: $\eval(\rho_k^{(j)},x_j)$ to obtain $y_j$. Using the ``Almost As Good As New''~\cite{aaronson2016complexity}, recover $\rho_{k}^{(j+1)}$, where $\rho_{k}^{(j+1)}$ is negligibly\footnote{Technically, this depends on the correctness error and we start with a $\rPRF$ that is correct with probability negligibly close to 1.} close to $\rho_k$ in trace distance. 
     \item Forward the state $\rho_k^{(\mu+1)}$ to $\adversary$. 
     \item When the challenger submits the message  \texttt{REVOKE}, forward the same message to $\adversary$. 
     \item If $\adversary$ sends $\sigma$, then forward the same state $\sigma$ to the challenger. 
     \item If the revocation did not fail, the guessing phase begins. The challenger sends $(x^*,y^*)$. Then, sample $x_1,\ldots,x_{i-1}$ uniformly at random from $\{0,1\}^{\ell}$ and $y_1,\ldots,y_{i-1}$ uniformly at random from $\{0,1\}^{\ell'}$. Set $x_i=x^*$ and $y_i=y^*$. Send $(x_1,y_1),\ldots,(x_{\mu},y_{\mu})$ to $\adversary$. 
     \item Output $b$, where $b$ is the output of $\adversary$.  
 \end{enumerate}
 From the quantum union bound~(\Cref{lem:union}), the ``Almost As Good As New'' lemma (\Cref{lem:almost}) and the correctness of $\rPRF$, it follows that $\TD(\rho_k,\rho_{k}^{(\mu+1)}) \leq \negl(\secparam)$ and thus, the advantage of $\adversary$ when given $\rho_k^{(\mu+1)}$ instead of $\rho_k$ is now at least $\eps - \negl(\secparam)$. Moreover, by the design of ${\cal B}$, it follows that the success probability of ${\cal B}$ in breaking 1-revocation security of $\rPRF$ is exactly the same as the success probability of $\adversary$ in breaking revocation security of $\rPRF$. This contradicts the fact that $\rPRF$ satisfies 1-revocation security.
 
\end{proof}

\begin{remark}\label{remark:traditional-security}
Our notion of revocable $\PRF$ security from \Cref{def:revocable-PRF} does not directly imply traditional notion of $\pqPRF$ security\footnote{Although any revocable $\PRF$ is a {\em weak} PRF. Recall that a weak PRF is one where the adversary receives as input $(x_1,y_1),\ldots,(x_{\mu},y_{\mu})$, where $x_i$s are picked uniformly at random. The goal of the adversary is to distinguish the two cases: all $y_i$s are pseudorandom or all $y_i$s are picked uniformly at random.} from \Cref{def:pqPRF}. The reason is that the definition does not preclude the possibility of there being an input $x$ (say an all zeroes string) on which, $\PRF$ outputs $x$ itself (or the first bit of $x$ if the output of $\PRF$ is a single bit). 
\end{remark}

\noindent Motivated by \Cref{remark:traditional-security}, we now introduce the following notion of a \emph{strong} $\rPRF$.

\begin{definition}[Strong $\rPRF$]\label{def:strong-rprf}
We say that a scheme
$(\gen, \prf,\eval,\revoke)$ is a strong revocable pseudorandom function (or, strong $\rPRF$) if the following two properties hold:
\begin{enumerate}
    \item $(\gen, \prf,\eval,\revoke)$ satisfy revocable $\PRF$ security according to \Cref{def:revocable-PRF}, and
    \item $(\gen,\PRF)$ satisfy $\pqPRF$ security according to \Cref{def:pqPRF}.
\end{enumerate}
\end{definition}

\begin{remark}
When instantiating pseudorandom functions in the textbook construction of private-key encryption~\cite{oded06} from revocable pseudorandom functions, we immediately obtain a revocable private-key encryption scheme. 
\end{remark}

\noindent We show that the issue raised in \Cref{remark:traditional-security} is not inherent. In fact, we give a simple generic transformation that allows us to obtain strong $\rPRF$s by making use of traditional $\pqPRF$s.

\begin{claim}[Generic Transformation for Strong $\rPRF$s]
Let $(\gen, \prf,\eval,\revoke)$ be an $\rPRF$ scheme which satisfies revocable $\PRF$ security, and let  
$(\overline{\gen},\overline{\prf})$ be a $\pqPRF$.    
Then, the scheme $(\widetilde{\gen}, \widetilde{\prf},\widetilde{\eval},\widetilde{\revoke})$ is a strong $\rPRF$ which consists of the following algorithms:
\begin{itemize}
    \item $\widetilde{\gen}(1^{\secparam})$: on input the security parameter $1^{\secparam}$, first run $(k,\rho_k,\msk) \leftarrow \gen(1^\lambda)$ and then output $((K,k),(K,\rho_k),\msk)$, where $K \leftarrow \overline{\gen}(1^\lambda)$ is a $\pqPRF$ key.
    
    \item $\widetilde{\prf}((K,k),x)$: on input a key $(K,k)$ and string $x \in \{0,1\}^{\ell}$, output $\overline{\prf}(K,x) \oplus \PRF(k,x)$. 
    
    \item $\widetilde{\eval}((K,\rho_k),x)$: on input $(K,\rho_k)$ and $x \in \{0,1\}^{\ell}$, output $\overline{\prf}(K,x) \oplus \Eval(\rho_k,x)$. 
    \item $\widetilde{\revoke}(\msk,(K,\sigma))$: on input a master secret key $\msk$ and a pair $(K,\rho_k)$, first discard the key $K$ and then run $\revoke(\msk,\sigma)$.
\end{itemize}
\end{claim}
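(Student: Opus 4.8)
The claim asserts that XORing an $\rPRF$ with an ordinary $\pqPRF$ yields a scheme that is simultaneously revocable-secure and traditional-$\pqPRF$-secure. There are two things to verify: (i) traditional $\pqPRF$ security of $(\widetilde{\gen},\widetilde{\prf})$, and (ii) revocable $\PRF$ security of $(\widetilde{\gen},\widetilde{\prf},\widetilde{\eval},\widetilde{\revoke})$. Correctness of evaluation is immediate since $\widetilde{\eval}((K,\rho_k),x) = \overline{\prf}(K,x) \oplus \Eval(\rho_k,x)$ agrees with $\widetilde{\prf}((K,k),x)$ whenever $\Eval(\rho_k,x) = \PRF(k,x)$, which holds with overwhelming probability by correctness of the underlying $\rPRF$; correctness of revocation is immediate because $\widetilde{\revoke}$ simply discards $K$ and runs $\revoke$.

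For part (i), I would argue directly from $\pqPRF$ security of $(\overline{\gen},\overline{\prf})$. Given a $\QPT$ distinguisher $\algo D$ against $\widetilde{\prf}$, build a distinguisher $\algo D'$ against $\overline{\prf}$: sample $(k,\rho_k,\msk) \leftarrow \gen(1^\lambda)$ internally, answer each oracle query $x$ by querying the $\overline{\prf}$-oracle on $x$ to get a value $v$ and returning $v \oplus \PRF(k,x)$, and output whatever $\algo D$ outputs. When $\algo D'$ is in the real world its simulated oracle is exactly $\widetilde{\prf}((K,k),\cdot)$; when $\algo D'$ is in the random world, the simulated oracle is $F(\cdot) \oplus \PRF(k,\cdot)$ for a uniformly random function $F$, and since $F$ is uniformly random and independent of $k$, the function $x \mapsto F(x) \oplus \PRF(k,x)$ is also a uniformly random function. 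Hence $\algo D'$ has exactly the advantage of $\algo D$, so $\algo D$'s advantage is negligible. (Note we only need the key $k$ here, not the quantum state $\rho_k$, so the reduction is entirely classical on this side.)

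For part (ii), I would reduce to revocable $\PRF$ security of the underlying $\rPRF$, using \Cref{claim:one-query-to-multi-query} so that it suffices to treat $\mu=1$, or alternatively reduce to the $\mu$-query game directly. Given a $\QPT$ adversary $\adversary$ against the revocable security of $(\widetilde{\gen},\widetilde{\prf},\widetilde{\eval},\widetilde{\revoke})$ with advantage $\eps$, construct an adversary $\algo B$ against the revocable security of $(\gen,\prf,\eval,\revoke)$: $\algo B$ receives $\rho_k$ from its challenger, samples $K \leftarrow \overline{\gen}(1^\lambda)$ itself, forwards $(K,\rho_k)$ to $\adversary$; when $\adversary$ returns a state $\sigma$ in the revocation phase, $\algo B$ strips off (discards) the $K$-component and forwards the remaining register to its own challenger; upon receiving challenge pairs $(x_1,y_1),\dots,(x_\mu,y_\mu)$ from its challenger, $\algo B$ forwards $(x_1, y_1 \oplus \overline{\prf}(K,x_1)),\dots,(x_\mu,y_\mu \oplus \overline{\prf}(K,x_\mu))$ to $\adversary$ and outputs $\adversary$'s guess. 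When the challenger's bit is $b=0$, each $y_i = \prf(k,x_i)$, so $\adversary$ receives $\overline{\prf}(K,x_i) \oplus \prf(k,x_i) = \widetilde{\prf}((K,k),x_i)$, matching the $b=0$ case of $\adversary$'s game; when $b=1$, each $y_i$ is uniform, and $y_i \oplus \overline{\prf}(K,x_i)$ is also uniform (and independent, since the $y_i$ are independent), matching the $b=1$ case. The revocation check is the same in both games since $\widetilde{\revoke}$ discards $K$ and invokes $\revoke$. Thus $\algo B$ wins with the same probability $\tfrac12 + \eps$ that $\adversary$ does, forcing $\eps \leq \negl(\lambda)$.

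\textbf{Main obstacle.} The arguments are essentially bookkeeping; the only subtlety to get right is the claim that $\overline{\prf}(K,x) \oplus (\text{uniform})$ is uniform and that this masking is performed consistently across the challenge inputs and across the oracle answers, so that the simulated distributions in each reduction exactly match the target distributions rather than merely being computationally close. Since $K$ is chosen by the reduction (not the challenger) in both reductions, there is no circularity and no need for an extra hybrid step there. One should also double-check that in part (ii) the $\pqPRF$ security of $\overline{\prf}$ is \emph{not} needed at all — only the fact that $\overline{\prf}(K,\cdot)$ is a fixed deterministic function once $K$ is fixed — whereas in part (i) only the $\pqPRF$ security of $\overline{\prf}$ is needed; neither half requires both properties of both building blocks, which is what makes the transformation clean.
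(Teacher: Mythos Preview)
Your proposal is correct and follows essentially the same approach as the paper: for revocable $\PRF$ security you build an adversary against the underlying $\rPRF$ by sampling $K$ yourself and XOR-masking the challenge outputs (this is exactly the paper's reduction), and for $\pqPRF$ security you build a distinguisher against $\overline{\prf}$ by sampling $k$ yourself and XOR-masking oracle answers (the paper only sketches this half by saying ``a similar argument''). Your write-up is in fact more explicit than the paper's, and your closing observation that each half of the argument uses only one of the two underlying security properties is correct and worth noting.
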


\begin{proof}
Let us first show that the scheme $(\widetilde{\gen}, \widetilde{\prf},\widetilde{\eval},\widetilde{\revoke})$ maintains revocable $\PRF$ security.
Suppose that there exists a $\QPT$ adversary $\algo A$ and a polynomial $\mu = \mu(\lambda) \in \N$ such that
$$  \left| \Pr\left[ 1 \leftarrow \expt_{\algo A,\mu}(1^{\secparam},0)\right] - \Pr\left[ 1 \leftarrow \expt_{\algo A,\mu}(1^{\secparam},1) \right]\right| = \epsilon(\secparam),$$
for some function $\epsilon(\lambda) = 1/\poly(\lambda)$, and where $\expt_{\adversary,\mu}$ is the experiment from~\Cref{fig:KR-PRF-Model2} with respect to the scheme $(\widetilde{\gen}, \widetilde{\prf},\widetilde{\eval},\widetilde{\revoke})$. We show that this implies the existence of a $\QPT$ distinguisher $\algo D$ that breaks the revocable $\PRF$ security of the scheme $(\gen, \prf,\eval,\revoke)$.

The distinguisher $\algo D$ proceeds as follows:
\begin{enumerate}
    \item $\algo D$ receives as input a quantum state $\rho_k$, where $(k,\rho_k,\msk) \leftarrow \gen(1^\lambda)$ is generated by the challenger. Then, $\algo D$ generates a $\pqPRF$ key $K \leftarrow \overline{\gen}(1^\lambda)$ and sends $(K,\rho_k)$ to $\algo A$.

    \item When $\algo A$ returns a state $\rho$, $\algo D$ forwards it to the challenger as part of the revocation phase.

    \item When $\algo D$ receives the challenge input $(x_1,\dots,x_\mu)$ and $(y_1,\dots,y_\mu)$ from the challenger, $\algo D$ sends $(x_1,\dots,x_\mu)$ and $(\overline{\prf}(K,x_1) \oplus y_1,\dots,\overline{\prf}(K,x_\mu) \oplus y_\mu)$ to $\algo A$.

    \item When $\algo A$ outputs $b'$, so does the distinguisher $\algo D$.
\end{enumerate}
Note that the simulated challenge distribution above precisely matches the challenge distribution from the experiment $\expt_{\adversary,\mu}$ from ~\Cref{fig:KR-PRF-Model2}. Therefore, if $\algo A$ succeeds with inverse polynomial advantage $\epsilon(\lambda) = 1/\poly(\lambda)$, so does $\algo D$ -- thereby breaking the revocable $\PRF$ security of the scheme $(\gen, \prf,\eval,\revoke)$. Consequently, $(\widetilde{\gen}, \widetilde{\prf},\widetilde{\eval},\widetilde{\revoke})$ satisfies revocable $\PRF$ security.

To see why $(\widetilde{\gen},\widetilde{\PRF})$ satisfy $\pqPRF$ security according to \Cref{def:pqPRF}, we can follow a similar argument as above to break the $\pqPRF$ security of $(\overline{\gen},\overline{\prf})$. Here, we rely on the fact that the keys $(k,\rho_k,\msk) \leftarrow \gen(1^\lambda)$ and $K \leftarrow \overline{\gen}(1^\lambda)$ are sampled independently from another.

\end{proof}

\subsection{Construction} 
\label{sec:intlemma}
\noindent We construct a $\PRF$ satisfying 1-revocation security (\Cref{def:revocable-PRF}). 


\newcommand{\bfz}{\textbf{z}}

\paragraph{Shift-Hiding Construction.}
We construct a \emph{shift-hiding} function which is loosely inspired by shift-hiding shiftable functions introduced by Peikert and Shiehian~\cite{PS18}.

Let $n,m \in \N$,  $q \in \N$ be a modulus and let $\ell = n m \lceil \log q \rceil$. In the following, we consider matrix-valued functions
$F :\{0,1\}^{\ell} \rightarrow \Zq^{n \times m}$, where $F$ is one of the following functions:
    \begin{itemize}
        \item $\zerof :\{0,1\}^{\ell} \rightarrow \Zq^{n \times m}$ which, on input $x \in \bit^\ell$, outputs an all zeroes matrix $\vec 0 \in \Z_q^{n \times m}$, or:
        \item $H_r:\{0,1\}^{\ell} \rightarrow \Zq^{n \times m}$ which, on input $x \in \{0,1\}^{\ell}$, outputs $\vec M \in \Z_q^{n \times m}$, where $r \in \{0,1\}^{\ell}$ and $x=r\oplus \bindecomp(\vec M)$, where $\vec M \in \Zq^{n \times m}$ and $\bindecomp(\cdot)$ takes as input a matrix and outputs a binary string that is obtained by concatenating the binary decompositions of all the elements in the matrix (in some order). 
    \end{itemize}
We show that there exist $\PPT$ algorithms $(\lemkg,\lemeval)$ (formally defined in \Cref{const:shift-hiding-const}) with the following properties:
\begin{itemize}
    \item $\lemkg(1^n,1^m,q,\bfA,F)$: on input $1^n,1^m$, a modulus $q \in \N$, a matrix $\bfA \in \Zq^{n \times m}$ and a function $F \in \{\algo Z\} \cup \{H_r:r \in \{0,1\}^{\ell}\}$, it outputs a pair of keys $(pk_F,sk_F)$.
    \item $\lemeval(pk_F,x)$: on input $pk_F$, $x \in \{0,1\}^{\ell}$, it outputs $\bfS_x \bfA + \bfE_x + F(x)$, where $\bfS_x \in \Zq^{n \times n}$ and $\bfE_x \in \Zq^{n \times m}$, where $||\bfE_x||\leq nm^2 \sigma \lceil \log(q) \rceil$. Moreover, there is an efficient algorithm that recovers $\bfS_x$ given $sk_F$ and $x$.  
\end{itemize}
We show that our construction of $(\lemkg,\lemeval)$ satisfies a 
\emph{shift-hiding property}; namely, for any $r \in \{0,1\}^{\ell}$,
$$\{pk_{\zerof} \} \approx_c \{ pk_{H_r}\},$$
for any $pk_F$ with $(pk_F,sk_F) \leftarrow \lemkg(1^n,1^m,q,\bfA,F)$, where $\bfA \xleftarrow{\$} \Zq^{n \times m}$, and $F \in \{\algo Z,H_r\}$.
\par In the construction below, we consider a bijective function $\phi:[n] \times [m] \times [\lceil \log(q) \rceil] \rightarrow [\ell]$.

\begin{construction}\label{const:shift-hiding-const}
Consider the $\PPT$ algorithms $(\lemkg,\lemeval)$  defined as follows:
\begin{itemize}
    \item $\lemkg(1^n,1^m,q,\bfA,F)$:  on input $1^n,1^m$, a modulus $q \in \N$, a matrix $\bfA \in \Zq^{n \times m}$ and function $F \in \{\algo Z\} \cup \{H_r:r \in \{0,1\}^{\ell}\}$, it outputs a pair of keys $\lemkey_{F}=(pk_F,sk_F)$ generated as follows: \begin{enumerate}
    \item For every $i,j \in [n],\tau \in [\lceil \log(q) \rceil]$, define $\{\bfM_{b}^{(i,j,\tau)}\}$ as follows:
    \begin{itemize}
        \item If $F=\zerof$, then for every $i \in [n],j \in [m],\tau \in [\lceil \log(q) \rceil]$, let $\bfM_{b}^{(i,j,\tau)}= \vec 0 \in \Z_q^{n \times n}$,
        \item If $F=H_r$, then for every $i \in [n],j \in [m],\tau \in [\lceil \log(q) \rceil]$, let $\bfM_{b}^{(i,j,\tau)}=(b \oplus r_{\phi(i,j,\tau)}) \cdot {\bf I}_{n \times n}$. 
    \end{itemize}

    \item For every $i \in [n],j \in [m],\tau \in [\lceil \log(q) \rceil]$, $b \in \{0,1\}$, compute:
    $$pk_b^{(i,j,\tau)} = \bfS_b^{(i,j,\tau)} \bfA + \bfE_{b}^{(i,j,\tau)} + \bfM_b^{(i,j,\tau)},$$
    $$sk_b^{(i,j,\tau)}=\left( \left\{\bfS_{b}^{(i,j,\tau)},\bfE_b^{(i,j,\tau)}\right\} \right),$$
    where for every $i \in [n],j \in [m],\tau \in [\lceil \log(q) \rceil]$, $b \in \{0,1\}$:
    \begin{itemize}
        \item $\bfS_{b}^{(i,j,\tau)} \leftarrow D_{\Zq,\sigma}^{n \times n}$, 
        \item $\bfE_{b}^{(i,j,\tau)} \leftarrow D_{\Zq,\sigma}^{n \times m}$
    \end{itemize}
    \item Output $pk_F=\left(\bfA,\left\{ pk^{(i,j,\tau)}_b \right\}_{\substack{i \in [n],j \in [m],\\ \tau \in [\lceil \log(q) \rceil],b \in \{0,1\}}} \right)$ and $sk_F=\left\{ sk^{(i,j,\tau)}_b \right\}_{\substack{i \in [n],j \in [m],\\ \tau \in [\lceil \log(q) \rceil],b \in \{0,1\}}}$. 
\end{enumerate}

\item $\lemeval(pk_F,x)$: on input $pk_F$ and $x \in \bit^\ell$, proceed as follows:
\begin{enumerate}
 \item Parse $pk_F=\left( \bfA \left\{ pk_b^{(i,j,\tau)} \right\}_{\substack{i \in [n],j \in [m],\\ \tau \in [\lceil \log(q) \rceil], b \in \{0,1\}}} \right)$
\item Output $\sum_{\substack{i \in [n],j \in [m],\\ \tau \in [\lceil \log(q) \rceil]}} pk_{x_{\phi(i,j,\tau)}}^{(i,j,\tau)}$.
\end{enumerate}
\end{itemize}
\end{construction}

\begin{claim}[Correctness]\label{claim:kg-e-correctness} Let $(\lemkg,\lemeval)$ be the pair of $\PPT$ algorithms in \Cref{const:shift-hiding-const}. Let $(pk_F,sk_F) \leftarrow \lemkg(1^n,1^m,q,\bfA,F)$ with $F \in \{\algo Z\} \cup \{H_r:r \in \{0,1\}^{\ell}\}$. Then, the output of $\lemeval(pk_F,x)$ is of the form:
$$\lemeval(pk_F,x) = \bfS_x \bfA + \bfE_x + F(x),$$ where $\bfS_x \in \Zq^{n \times n}$ and $\bfE_x \in \Zq^{n \times m}$ with $||\bfE_x|| \leq nm^2 \sigma \lceil \log(q) \rceil$. Moreover, there is an efficient algorithm that recovers $\bfS_x$ given $(pk_F,sk_F)$.  
\end{claim}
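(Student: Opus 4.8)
The plan is to unwind the definition of $\lemeval$ using linearity of matrix multiplication, identify $\bfS_x$ and $\bfE_x$ explicitly, and then discharge the two remaining assertions (that the ``shift part'' equals $F(x)$, and that $\bfE_x$ is small) separately. Concretely, by the definition of $\lemeval$ in \Cref{const:shift-hiding-const},
\[
\lemeval(pk_F,x) \;=\; \sum_{\substack{i\in[n],\,j\in[m]\\ \tau\in[\lceil\log q\rceil]}} pk^{(i,j,\tau)}_{x_{\phi(i,j,\tau)}} \;=\; \Big(\sum_{i,j,\tau}\bfS^{(i,j,\tau)}_{x_{\phi(i,j,\tau)}}\Big)\bfA \;+\; \sum_{i,j,\tau}\bfE^{(i,j,\tau)}_{x_{\phi(i,j,\tau)}} \;+\; \sum_{i,j,\tau}\bfM^{(i,j,\tau)}_{x_{\phi(i,j,\tau)}},
\]
where the second equality just pulls the common right factor $\bfA$ out of the sum. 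So I would set $\bfS_x := \sum_{i,j,\tau}\bfS^{(i,j,\tau)}_{x_{\phi(i,j,\tau)}}\in\Zq^{n\times n}$ and $\bfE_x := \sum_{i,j,\tau}\bfE^{(i,j,\tau)}_{x_{\phi(i,j,\tau)}}\in\Zq^{n\times m}$, and it remains to check (a) $\sum_{i,j,\tau}\bfM^{(i,j,\tau)}_{x_{\phi(i,j,\tau)}} = F(x)$ and (b) $\|\bfE_x\|_\infty \le (m\sigma)^2\cdot(nm\lceil\log q\rceil)$.

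For (a) I would do a two-case analysis following the definition of the $\bfM^{(i,j,\tau)}_b$. If $F = \zerof$, every $\bfM^{(i,j,\tau)}_b$ is the zero matrix, so the sum is $\vec 0 = \zerof(x)$. If $F = H_r$, I would use that $\phi$ is a bijection $[n]\times[m]\times[\lceil\log q\rceil]\to[\ell]$ and that the $\bfM^{(i,j,\tau)}_b$ are defined precisely so that, when $b = x_{\phi(i,j,\tau)}$, the summand contributes the bit $x_{\phi(i,j,\tau)}\oplus r_{\phi(i,j,\tau)}$ into exactly the slot of the output matrix that $\bindecomp$ assigns to coordinate $\phi(i,j,\tau)$; summing over all triples $(i,j,\tau)$ then reassembles, coordinate by coordinate, the unique matrix $\vec M$ with $\bindecomp(\vec M) = x\oplus r$, which is by definition $H_r(x)$. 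This reconstruction is the only place in the proof that is not pure linearity, and confirming that the indexing convention of $\phi$ is consistent with that of $\bindecomp$ (so that the sum is $\vec M$ itself, not a permuted or rescaled variant) is the step I expect to need the most care.

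For (b), each $\bfE^{(i,j,\tau)}_b$ is a truncated discrete Gaussian with parameter $\sigma$ over a space of dimension at most $nm$; since $m \ge 2n\log q \ge n$, the truncated support forces $\|\bfE^{(i,j,\tau)}_b\|_\infty \le \sigma\sqrt{nm} \le m\sigma$. As $\bfE_x$ is a sum of $\ell = nm\lceil\log q\rceil$ such matrices, the triangle inequality gives $\|\bfE_x\|_\infty \le (m\sigma)\cdot(nm\lceil\log q\rceil) \le (m\sigma)^2\cdot(nm\lceil\log q\rceil)$, using $m\sigma\ge 1$, so the stated bound is deliberately loose. Finally, the efficient recovery of $\bfS_x$ is immediate from the identity above: given $sk_F = \{(\bfS^{(i,j,\tau)}_b,\bfE^{(i,j,\tau)}_b)\}_{i,j,\tau,b}$ together with the input $x$, one reads off the $\ell$ relevant matrices $\bfS^{(i,j,\tau)}_{x_{\phi(i,j,\tau)}}$ and sums them, which runs in time $\poly(n,m,\log q)$.
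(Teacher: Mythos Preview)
Your proposal is correct and follows essentially the same route as the paper: expand $\lemeval(pk_F,x)$ by linearity, define $\bfS_x$ and $\bfE_x$ as the obvious sums, treat the $\zerof$ and $H_r$ cases separately for the shift term, and read off $\bfS_x$ from $sk_F$. Your error estimate is in fact sharper than what the paper writes (the paper bounds each $\|\bfE^{(i,j,\tau)}_b\|_\infty$ by $(m\sigma)^2$ with an authorial note questioning whether $m\sigma$ suffices; you give $m\sigma$ and then observe the stated $(m\sigma)^2$ bound is slack), and your flagging of the indexing consistency between $\phi$ and $\bindecomp$ in the $H_r$ case is exactly the point the paper glosses over.
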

\begin{proof}
Let $(pk_F,sk_F) \leftarrow \lemkg(1^n,1^m,q,\bfA,F)$. Parse $pk_F=\left(\bfA,\left\{ pk^{(i,j,\tau)}_b \right\}_{\substack{i \in [n],j \in [m],\\ \tau \in [\lceil \log(q) \rceil],b \in \{0,1\}}} \right)$ and $sk_F=\left\{ sk^{(i,j,\tau)}_b \right\}_{\substack{i \in [n],j \in [m],\\ \tau \in [\lceil \log(q) \rceil],b \in \{0,1\}}}$, where: 
  $$pk_b^{(i,j,\tau)} = \bfS_b^{(i,j,\tau)} \bfA + \bfE_{b}^{(i,j,\tau)} + \bfM_b^{(i,j,\tau)},$$
    $$sk_b^{(i,j,\tau)}=\left( \{\bfS_{b}^{(i,j,\tau)},\bfE_b^{(i,j,\tau)}\} \right)$$
 \noindent There are two cases to consider here: \\

 \noindent \textbf{Case 1}. $F={\cal Z}$: in this case, $\bfM_{b}^{(i,j,\tau)} = \bf0$, for every $i \in [n],j \in [m], \tau \in [\lceil \log(q) \rceil], b \in \{0,1\}$. Thus, the following holds: 

 \begin{eqnarray*}
\sum_{\substack{i \in [n],j \in [m],\\ \tau \in [\lceil \log(q) \rceil]}} pk_{x_{\phi(i,j,\tau)}}^{(i,j,\tau)} & = & \underbrace{\left( \sum_{\substack{i \in [n],j \in [m],\\ \tau \in [\lceil \log(q) \rceil]}} \bfS_{x_{\phi(i,j,\tau)}}^{(i,j,\tau)} \right)}_{\bfS_x}  \bfA + \underbrace{\left( \sum_{\substack{i \in [n],j \in [m],\\ \tau \in [\lceil \log(q) \rceil]}} \bfE_{x_{\phi(i,j,\tau)}}^{(i,j,\tau)} \right)}_{\bfE_x} + \left( \sum_{\substack{i \in [n],j \in [m],\\ \tau \in [\lceil \log(q) \rceil]}} \bfM_{x_{\phi(i,j,\tau)}}^{(i,j,\tau)} \right) \\
& = & \bfS_x \bfA + \bfE_x + {\cal Z}(x) 
 \end{eqnarray*}

 \noindent Moreover, we have that $||\bfE_b^{(i,j,\tau)}|| \leq m \sigma$ and thus, $|| \bfE_x || \leq nm^2 \sigma \lceil \log(q) \rceil$. \\

\noindent \textbf{Case 2}. $F={H}_r$:
 \begin{eqnarray*}
\sum_{\substack{i \in [n],j \in [m],\\ \tau \in [\lceil \log(q) \rceil]}} pk_{x_{\phi(i,j,\tau)}}^{(i,j,\tau)} & = & \bfS_x \bfA + \bfE_x + \left( \sum_{\substack{i \in [n],j \in [m],\\ \tau \in [\lceil \log(q) \rceil]}} \bfM_{x_{\phi(i,j,\tau)}}^{(i,j,\tau)} \right) \\ 
& = & \bfS_x \bfA + \bfE_x + H_r(x),
 \end{eqnarray*}
where $\bfS_x$ and $\bfE_x$ are as defined above. The second equality holds because of the fact that $\bfM_{x_{\phi(i,j,\tau)}}^{(i,j,\tau)}$ has the value $(b \oplus r_{\phi(i,j,\tau)}) \cdot 2^{\tau}$ in the $(i,j)^{th}$ position and zero, everywhere else. Thus, summing up all the $\bfM_{x_{\phi(i,j,\tau)}}^{(i,j,\tau)}$ matrices results in the matrix $\bfM$, where $x \oplus r$ is the binary decomposition of $\bfM$. \\

\noindent Finally, it is clear that $\bfS_x$ can be efficiently recovered from $sk_F$ and $x$. 
\end{proof}

\begin{claim}[Shift-hiding property]\label{claim:kg-e-shift-hiding} Assuming the quantum hardness of learning with errors, the pair $(\lemkg,\lemeval)$ in \Cref{const:shift-hiding-const} has the property that
$$\{pk_{\zerof} \} \approx_c \{ pk_{H_r}\},$$
for any $pk_F$ with $(pk_F,sk_F) \leftarrow \lemkg(1^n,1^m,q,\bfA,F)$, where $\bfA \xleftarrow{\$} \Zq^{n \times m}$, $r \in \{0,1\}^{\ell}$ and $F \in \{\algo Z,H_r\}$.
\end{claim}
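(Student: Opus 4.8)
The plan is to show that each of $pk_{\zerof}$ and $pk_{H_r}$ is computationally indistinguishable from one and the same distribution $\mathcal{U} = \big(\bfA,\{\bfU_b^{(i,j,\tau)}\}\big)$, where $\bfA \rand \Zq^{n\times m}$ and every $\bfU_b^{(i,j,\tau)} \rand \Zq^{n\times m}$ is an independent uniform matrix; the claim then follows by the triangle inequality. The key structural observation is that the only difference between $\lemkg(1^n,1^m,q,\bfA,\zerof)$ and $\lemkg(1^n,1^m,q,\bfA,H_r)$ lies in the publicly determined shift matrices $\bfM_b^{(i,j,\tau)}$ added to each component $pk_b^{(i,j,\tau)} = \bfS_b^{(i,j,\tau)}\bfA + \bfE_b^{(i,j,\tau)} + \bfM_b^{(i,j,\tau)}$, and that adding a fixed (given $r$) matrix to a uniform matrix yields a uniform matrix.

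First I would fix an arbitrary ordering of the $2\ell = 2nm\lceil \log q\rceil$ index tuples $(i,j,\tau,b)$ and, for each $F \in \{\zerof,H_r\}$ and $t \in \{0,\dots,2\ell\}$, define a hybrid $\hybrid_t^F$ in which the first $t$ components $pk_b^{(i,j,\tau)}$ are replaced by fresh uniform matrices in $\Zq^{n\times m}$, while the remaining components are generated honestly as in \Cref{const:shift-hiding-const} with shifts prescribed by $F$. Then $\hybrid_0^F = pk_F$, and $\hybrid_{2\ell}^{\zerof} = \hybrid_{2\ell}^{H_r} = \mathcal{U}$ by the observation above. For the step $\hybrid_t^F \approx_c \hybrid_{t+1}^F$, I would reduce to decisional $\LWE_{n,q,\alpha q}^m$: on an LWE challenge $(\bfA,\vec C)$, where $\vec C$ is either $\bfS\bfA + \bfE$ (with $\bfS \rand \Zq^{n\times n}$, $\bfE \sim D_{\Z^{n\times m},\alpha q}$) or uniform, the reduction generates honest components for indices after $t+1$ reusing the \emph{same} $\bfA$, puts fresh uniform matrices for indices before $t+1$, and plants $\vec C + \bfM_b^{(i,j,\tau)}$ (with the shift for the $(t+1)$-st index) as the $(t+1)$-st component, then runs the purported distinguisher and echoes its guess. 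The matrix form of LWE used here — secret width $n$, i.e.\ $n$ secrets sharing $\bfA$, with $m$ samples — follows from the single-secret assumption $\LWE_{n,q,\alpha q}^m$ by the standard hybrid over the $n$ rows of $\bfS$, and the reduction's reuse of $\bfA$ across all honestly generated components is precisely what lets this go through. Summing the $2\ell = \poly(\secparam)$ hybrid steps for each of $F = \zerof$ and $F = H_r$ and composing through $\mathcal{U}$ gives $\{pk_{\zerof}\} \approx_c \{pk_{H_r}\}$.

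The only mild subtlety — not a real obstacle — is parameter matching: the errors $\bfE_b^{(i,j,\tau)} \sim D_{\Zq,\sigma}^{n\times m}$ in \Cref{const:shift-hiding-const} must be sampled with width matching the noise rate $\alpha q$ in the assumed hardness of $\LWE_{n,q,\alpha q}^m$, so I would take $\sigma = \alpha q$ (noting $D_{\Zq,\sigma}$ is within negligible statistical distance of the truncated Gaussian over $\Z$ used elsewhere in the paper), which makes the planted component in the reduction distributed exactly as in the hybrids. Everything else is a routine hybrid argument, and no step needs to invoke anything beyond the quantum hardness of $\LWE$ and the publicness of the shifts $\bfM_b^{(i,j,\tau)}$.
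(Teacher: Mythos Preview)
Your proposal is correct and follows essentially the same approach as the paper: both arguments use the pseudorandomness of each component $\bfS_b^{(i,j,\tau)}\bfA + \bfE_b^{(i,j,\tau)}$ under $\LWE$ (applied via a hybrid over the $2\ell$ independently keyed components sharing $\bfA$) together with the fact that the shifts $\bfM_b^{(i,j,\tau)}$ are public and fixed, so adding them to something indistinguishable from uniform preserves indistinguishability. The paper states this in two sentences without spelling out the intermediate uniform distribution or the hybrid sequence, whereas you make these explicit; the only additional wrinkle worth flagging is that in \Cref{const:shift-hiding-const} the secrets $\bfS_b^{(i,j,\tau)}$ are Gaussian rather than uniform, so your reduction should technically invoke the standard equivalence between uniform-secret and short-secret $\LWE$, but this is routine.
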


\begin{proof}
For every $i \in [n],j \in [m],\tau \in [\lceil \log(q) \rceil]$, $b \in \{0,1\}$, let $\bfM_{b}^{(i,j,\tau)}=(b \oplus r_{\phi(i,j,\tau)}) \cdot {\bf I}_{n \times n}$. Then from the quantum hardness of learning with errors, the following holds for every $(i,j,\tau)$ and $b \in \{0,1\}$: 
$$\{ \bfS_{b}^{(i,j,\tau)} \bfA + \bfE_{b}^{(i,j,\tau)}\} \approx_c \{ \bfS_{b}^{(i,j,\tau)} \bfA + \bfE_{b}^{(i,j,\tau)} + \bfM_b^{(i,j,\tau)} \}$$
Since $\{\bfS_{b}^{(i,j,\tau)}\}$ and $\{\bfE_{b}^{(i,j,\tau)}\}$ are sampled independently for every $(i,j,\tau)$ and $b\in \{0,1\}$, the proof of the claim follows. 
\end{proof}

\begin{remark}
When consider the all-zeroes function $\zerof$, we drop the notation from the parameters. For instance, we denote $pk_{\zerof}$ to be simply $pk$. 
\end{remark}

\paragraph{Construction.}

We consider the following parameters which are relevant to our $\prf$ construction. Let $n,m \in \N$ and let $q \in \N$ be a modulus with $q=2^{o(n)}$, and let $\ell = n m \lceil \log q \rceil$.  Let $\sigma$ be a parameter with $\sqrt{8m}< \sigma <q/\sqrt{8m}$ and let $p \ll q$ be a sufficiently large rounding parameter with
$$n\cdot m^{3}\sigma^2 \lceil \log q \rceil = (q/p) \cdot 2^{-o(n)}.$$

We now construct a key-revocable Dual-Regev (leveled) fully homomorphic encryption scheme based on our key-revocable Dual-Regev public-key scheme.
\begin{remark}
While the construction in this section can be readily adapted to feature \emph{classical revocation} via \Cref{cons:dual-regev-classical-revoc}, we choose to focus on \emph{quantum revocation} for simplicity by making use of our Dual-Regev scheme from \Cref{cons:dual-regev}.
\end{remark}

We describe our construction below.

\begin{construction}[Revocable $\prf$ scheme]\label{cons:prf}
Let $n \in \N$ be the security parameter and $m \in \N$. Let $q \geq 2$ be a prime and let $\sigma >0$ be a parameter. Let
$(\lemkg,\lemeval)$ be the procedure in \Cref{const:shift-hiding-const}. 
Our revocable $\prf$ scheme is defined as follows: 
\begin{itemize}

    \item $\gen(1^{\secparam})$: This is the following key generation procedure:
    \begin{enumerate}

    \item Sample $(\vec A,\mathsf{td}_{\vec A}) \leftarrow \GenTrap(1^n,1^m,q)$.


    \item Compute $\lemkey_{\zerof} \leftarrow \lemkg(1^n,1^m,q,\bfA,\zerof)$, where $\zerof:\{0,1\}^{\ell} \rightarrow \Zq^{n \times m}$ is the such that $\zerof(x)$ outputs an all zero matrix for every $x \in \{0,1\}^{\ell}$. Parse $\lemkey_{\zerof}$ as $(pk,sk)$.
    
    \item Generate a Gaussian superposition $(\ket{\psi_{\vec y}}, \vec y \in \Z_q^n) \leftarrow \mathsf{GenGauss}(\vec A,\sigma)$ with
$$
\ket{\psi_{\vec y}} \,\,= \sum_{\substack{\vec x \in \Z_q^m\\ \bfA \vec x = \vec y}}\rho_{\sigma}(\vec x)\,\ket{\vec x}.
$$
    \end{enumerate}
    Output $k=(pk,sk,\bfy)$, $\rho_{k} =  (pk,\ket{\psi_{\vec y}})$ and $\msk = \mathsf{td}_{\vec A}$. 

    \item$\prf(k,x)$: this is the following procedure:
    \begin{enumerate}
        \item Parse the key $k$ as a tuple $(pk,sk),\vec y)$.
        \item Output $\lfloor \bfS_x \bfy \rceil_p$. Here, $\bfS_x \in \Zq^{n \times n}$ is a matrix that can be efficiently recovered from $sk$ as stated in~\Cref{claim:kg-e-correctness}. 
    \end{enumerate}
    
    \item $\eval(\rho_k,x)$: this is the following evaluation algorithm:
    \begin{enumerate}
    \item Parse $\rho_k$ as $(pk,\rho)$.
    \item Compute ${\sf M}_x \leftarrow \lemeval(pk,x)$.
   
    \item Measure the register ${\sf Aux}$ of the state $U(\rho \otimes \proj{0}_{\sf Aux}) U^{\dagger}$. Denote the resulting outcome to be ${\bf z}$, where $U$ is defined as follows:
    $$U \ket{\bft}\ket{0}_{\sf Aux} \rightarrow \ket{\bft}\ket{\lfloor {\sf M}_{x} \cdot \bft \rceil_p}_{\sf Aux} $$

    \item Output ${\bf z}$. 
    \end{enumerate}

    \item $\revoke(\msk,\rho) $: given as input the trapdoor $\mathsf{td}_{\vec A} \leftarrow \msk$, apply the projective measurement $\{\ketbra{\psi_{\vec y}}{\psi_{\vec y}},I  - \ketbra{\psi_{\vec y}}{\psi_{\vec y}}\}$ onto the state $\rho$ using the procedure $\mathsf{QSampGauss}(\vec A,\mathsf{td}_{\vec A},\vec y,\sigma)$ in Algorithm \ref{alg:SampGauss}. Output $\valid$ if the measurement is successful, and $\invalid$ otherwise.
\end{itemize}

\end{construction}

\begin{lemma}
The above scheme satisfies correctness for our choice of parameters. 
\end{lemma}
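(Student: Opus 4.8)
The statement to prove is the correctness lemma for the revocable $\prf$ scheme in \Cref{cons:prf}: namely, correctness of evaluation (that $\eval(\rho_k,x)$ agrees with $\prf(k,x)$ with overwhelming probability) and correctness of revocation (that $\revoke(\msk,\rho_k)$ outputs $\valid$ with overwhelming probability). The plan is to reduce both to the technical ingredients already developed in the paper. Correctness of revocation is essentially immediate: $\rho_k$ contains the honestly prepared Gaussian coset state $\ket{\psi_{\vec y}}$ produced by $\mathsf{GenGauss}(\vec A,\sigma)$, and $\revoke$ implements the projective measurement $\{\proj{\psi_{\vec y}}, I-\proj{\psi_{\vec y}}\}$ via $\mathsf{QSampGauss}(\vec A,\mathsf{td}_{\vec A},\vec y,\sigma)$, which by \Cref{lem:qdgs} (using $(\vec A,\mathsf{td}_{\vec A}) \leftarrow \GenTrap(1^n,1^m,q)$ and $\sigma \in (\sqrt{2m},q/\sqrt{2m})$) prepares a state within negligible trace distance of $\ket{\psi_{\vec y}}$; hence the measurement accepts with overwhelming probability.

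For correctness of evaluation, I would first unwind the two computations. By \Cref{claim:kg-e-correctness} applied to $F = \zerof$, we have $\mathsf{M}_x = \lemeval(pk,x) = \bfS_x \bfA + \bfE_x$ with $\|\bfE_x\|_\infty \leq (m\sigma)^2 \cdot (nm\lceil \log q \rceil)$, and $\bfS_x$ is exactly the matrix recovered from $sk$ that defines $\prf(k,x) = \lfloor \bfS_x \vec y \rceil_p$. The $\eval$ algorithm applies the unitary $U: \ket{\vec t}\ket{0} \mapsto \ket{\vec t}\ket{\lfloor \mathsf{M}_x \cdot \vec t \rceil_p}$ to $\ket{\psi_{\vec y}}\ket{0}$ and measures the $\mathsf{Aux}$ register. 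Since $\ket{\psi_{\vec y}}$ is supported on vectors $\vec x$ with $\bfA \vec x = \vec y \pmod q$ and $\|\vec x\| \leq \sigma\sqrt{m/2}$ (up to negligible trace distance, by the tail bound \Cref{lem:gaussian-tails}), for every such $\vec x$ we have $\mathsf{M}_x \cdot \vec x = \bfS_x \bfA \vec x + \bfE_x \vec x = \bfS_x \vec y + \bfE_x \vec x \pmod q$. The key step is then a rounding argument: one shows $\lfloor \bfS_x \vec y + \bfE_x \vec x \rceil_p = \lfloor \bfS_x \vec y \rceil_p$ whenever $\bfS_x \vec y$ is not too close to a rounding boundary relative to the magnitude of the error term $\bfE_x \vec x$. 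Bounding $\|\bfE_x \vec x\|_\infty \leq m \cdot \|\bfE_x\|_\infty \cdot \|\vec x\|_\infty \leq m \cdot (m\sigma)^2 (nm\lceil\log q\rceil) \cdot \sigma\sqrt{m/2}$, which is $n \cdot m^{3.5}\sigma^3 \lceil \log q\rceil$ up to constants — exactly the quantity the parameter constraint $n \cdot m^{3.5}\sigma^3\lceil\log q\rceil = (q/p)\cdot 2^{-o(n)}$ controls — this error is a $2^{-o(n)}$ fraction of $q/p$, so a union bound over the $n$ coordinates and a smoothing/borderline argument shows the rounding is unaffected except with negligible probability over the choice of $\vec y$ (or, if one wants it to hold for all $\vec y$, one absorbs the boundary cases into the negligible correctness error, possibly by a standard randomized-rounding shift). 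Thus the measurement outcome $\vec z$ equals $\lfloor \bfS_x \vec y \rceil_p = \prf(k,x)$ with overwhelming probability.

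Finally, since the outcome $\vec z$ is (with overwhelming probability) a fixed classical value determined before measurement, the post-measurement state on the first register is within negligible trace distance of $\ket{\psi_{\vec y}}$ up to the garbage in $\mathsf{Aux}$; invoking the ``Almost As Good As New'' lemma (as in \Cref{remark:gentle-measurement}), $\eval$ can be purified so that the key state is essentially undisturbed — this is what makes repeated evaluation meaningful and is needed for the reduction in \Cref{claim:one-query-to-multi-query}, though for the bare correctness statement it is not strictly required.

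\textbf{Main obstacle.} The delicate point is the rounding argument: $\lfloor \cdot \rceil_p$ is not robust to small additive perturbations near the $q/p$-boundaries, so one cannot claim $\lfloor a + e \rceil_p = \lfloor a \rceil_p$ deterministically even when $|e| \ll q/p$. Making this precise requires either arguing that $\bfS_x \vec y$ lands near a boundary with only negligible probability (using that $\bfS_x \vec y$ has enough entropy, or smoothing over $\vec y$), or modifying the scheme to use a randomized rounding offset; I would go with the former and carefully track that the failure probability, summed over the $n$ output coordinates and over all $x$ one cares about, remains negligible given $q/p \geq n m^{3.5}\sigma^3\lceil\log q\rceil \cdot 2^{o(n)}$. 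Everything else is bookkeeping with the already-established lemmas.
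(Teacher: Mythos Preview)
Your proposal is correct and follows essentially the same approach as the paper: correctness of revocation via \Cref{lem:qdgs}, and correctness of evaluation by unwinding $\lemeval$ through \Cref{claim:kg-e-correctness}, restricting to short $\vec t$ in the support of $\ket{\psi_{\vec y}}$ via the Gaussian tail bound, and then invoking the parameter constraint $n\cdot m^{3.5}\sigma^3 \lceil \log q \rceil = (q/p) \cdot 2^{-o(n)}$ to absorb $\bfE_x \vec t$ into the rounding. The paper's own proof actually glosses over the rounding-boundary issue you flag as the main obstacle --- it simply asserts $\lfloor \bfS_x \vec y + \bfE_x \vec t \rceil_p = \lfloor \bfS_x \vec y \rceil_p$ directly from the error bound without handling boundary cases --- so you are being more careful than the paper here, not less.
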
 

\begin{proof}
The correctness of revocation follows immediately from the correctness of $\mathsf{QSampGauss}$ in Algorithm \ref{alg:SampGauss}, which we showed in \Cref{lem:qdgs}.
Next, we show the correctness of evaluation. Let $\lemkey_{\zerof} \leftarrow \lemkg(1^n,1^m,q,\bfA,\zerof)$ with $\lemkey_{\zerof}=(\pk,\mathsf{SK})$.
From~\Cref{claim:kg-e-correctness}, we have for any $x \in \bit^\ell$:
$$\lemeval(\pk,x) = \bfS_x \bfA + \bfE_x \Mod{q},$$ where $\bfS_x \in \Zq^{n \times n}$ and $\bfE_x \in \Zq^{n \times m}$ with $||\bfE_x||_{\infty} \leq  nm^2 \sigma \lceil \log(q) \rceil$. Recall that $\mathsf{GenGauss}(\vec A,\sigma)$ outputs a state $\ket{\psi_{\vec y}}$ that is overwhelmingly supported on vectors $\vec t \in \Z_q^m$ such that $\|\vec t \| \leq \sigma \sqrt{\frac{m}{2}}$ with $\bfA \cdot \bft = \bfy \Mod{q}$. Therefore, we have for any input $x \in \bit^\ell$:
$$
\left\lfloor 
\lemeval(\pk,x)
\cdot \vec t
\right\rceil_p 
= \left\lfloor 
\bfS_x \bfA \cdot \vec t+ \bfE_x \cdot \vec t
\right\rceil_p  = \left\lfloor 
\bfS_x \cdot \vec y+ \bfE_x \cdot \vec t
\right\rceil_p = \left\lfloor 
\bfS_x \cdot \vec y
\right\rceil_p,
$$
where the last equality follows from the fact that 
$$\| \vec E_x \cdot \vec t \|_2 \leq
\| \vec E_x  \|_2 \cdot \|\vec t \|_2 \leq
\sqrt{m} \cdot \| \vec E_x  \| \cdot \|\vec t \|_2
\leq n \sqrt{m}m^2 \sigma \lceil \log(q) \rceil \cdot \sigma \sqrt{m/2}. 
$$
and $n\cdot m^{3}\sigma^2 \lceil \log q \rceil = (q/p) \cdot 2^{-o(n)}$ for our choice of parameters.
\end{proof}

\paragraph{Proof of security.}

Our first result on the security of \Cref{cons:prf} concerns $(\negl(\lambda),\negl(\lambda),1)$-security, i.e., we assume that revocation succeeds with overwhelming probability.

\begin{theorem}\label{thm:security-PRF}
Let $n\in \N$ and $q$ be a prime modulus with $q=2^{o(n)}$ and $m \geq 2n \log q$, each parameterized by $\lambda \in \N$. Let $\ell = n m \lceil \log q \rceil$. Let $\sqrt{8m}< \sigma <q/\sqrt{8m}$ and $\alpha \in (0,1)$ be any noise ratio with $1/\alpha= \sigma \cdot 2^{o(n)}$, and let $p \ll q$ be a sufficiently large rounding parameter with
$$
n\cdot m^{3}\sigma^2 \lceil \log q \rceil = (q/p) \cdot 2^{-o(n)}.
$$
Then, assuming the quantum subexponential hardness of  $\mathsf{LWE}_{n,q,\alpha q}^m$
and $\mathsf{SIS}_{n,q,\sigma\sqrt{2m}}^m$, our revocable $\PRF$ scheme
$(\gen,\prf,\eval,\revoke)$ defined in \Cref{cons:prf} satisfies $(\negl(\lambda),\negl(\lambda),1)$-revocation security according to \Cref{def:revocable-PRF}.
\end{theorem}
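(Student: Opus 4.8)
\textbf{Proof proposal for \Cref{thm:security-PRF}.}

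The plan is to reduce $1$-revocation security of the $\rPRF$ scheme in \Cref{cons:prf} to the security of the key-revocable Dual-Regev scheme (\Cref{thm:security-Dual-Regev}), exploiting the fact that the PRF is constructed around exactly the same public key $(\vec A,\vec y)$ and the same Gaussian coset state $\ket{\psi_{\vec y}}$. By \Cref{claim:one-query-to-multi-query} it suffices to treat $\mu=1$, so the adversary receives a single challenge pair $(x^*,y^*)$ after revocation, with $y^*$ either $\lfloor \vec S_{x^*}\vec y\rceil_p$ or uniform. Suppose toward a contradiction that a $\QPT$ adversary $\algo A$ wins $\expt^{\adversary,1}(1^\lambda,b)$ with probability $\tfrac12+\eps(\lambda)$ for $\eps = 1/\poly(\lambda)$. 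I would build a reduction $\algo B$ that plays the Dual-Regev experiment of \Cref{fig:Dual-Regev-security}: $\algo B$ receives $\pk=(\vec A,\vec y)$ and $\sk=\ket{\psi_{\vec y}}$; it then needs to manufacture the $\rPRF$ quantum key $\rho_k=(pk,\ket{\psi_{\vec y}})$. The crucial setup step is that, instead of running $\lemkg(1^n,1^m,q,\vec A,\zerof)$ honestly, $\algo B$ runs $\lemkg$ with the shifted function $H_{r}$ for a uniformly random $r\in\bit^\ell$; by the shift-hiding property (\Cref{claim:kg-e-shift-hiding}), which holds under quantum $\LWE_{n,q,\alpha q}^m$, the resulting $pk$ is computationally indistinguishable from the honest one, so $\algo A$'s winning probability changes by at most $\negl(\lambda)$. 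Now $\algo B$ forwards $(pk,\ket{\psi_{\vec y}})$ to $\algo A$, relays the \texttt{REVOKE} message, forwards $\algo A$'s returned state $\sigma$ to its own challenger (so that $\revoke$ is applied by the Dual-Regev challenger using $\msk=\td_{\vec A}$, which $\algo B$ does not hold), and upon success submits the plaintext bit $\mu = 0$.

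The heart of the reduction is wiring the Dual-Regev challenge ciphertext $\ct^* = (\ct_1^*,\ct_2^*)$ into the PRF challenge. Here $\ct_1^* \approx \vec s^\intercal\vec A + \vec e^\intercal \in \Z_q^{m}$ (or uniform $\vec u$) and $\ct_2^* \approx \vec s^\intercal\vec y + e'$ (or uniform $r$). As explained in the overview, $\algo B$ sets the challenge input $x^* = r \oplus \bindecomp(\vec M^*)$ where $\vec M^* \in \Z_q^{n\times m}$ is chosen so that $\sum_{i,j,\tau}\vec M^{*,(i,j,\tau)}_{x^*_{\phi(i,j,\tau)}}$ reproduces (a matrix encoding of) $\ct_1^*$; concretely, because $H_r$ is the identity-shift function and the PRF output is $\lfloor \vec S_{x^*}\vec y\rceil_p$, the shift contributes $H_r(x^*)=\vec M^*$, and evaluating $\lemeval(pk,x^*)$ on $\ket{\psi_{\vec y}}$ yields $\lfloor (\vec S_{x^*}\vec A + \vec E_{x^*} + \vec M^*)\vec t\rceil_p = \lfloor \vec S_{x^*}\vec y + \vec M^*\vec t\rceil_p$. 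Setting $\vec M^*$ proportional to an $\LWE$-style matrix whose action on the coset recovers $\ct_2^*$ up to a vector $\vec u = \sum_{i,j,\tau}\vec y$ that $\algo B$ knows, we get that the honest PRF evaluation at $x^*$ equals $\approx \ct_2^* + \vec u$ rounded; hence $\algo B$ can set $y^* := \lfloor \ct_2^* \text{-part} + \vec u\rceil_p$ and hand $(x^*,y^*)$ to $\algo A$. When $b=0$ (Dual-Regev case) the pair $(x^*,y^*)$ is distributed as a genuine PRF input–output pair, so $\algo A$ sees the $b=0$ branch of $\expt^{\adversary,1}$; when $b=1$ (uniform case) $\ct_2^*$ is uniform and independent of $\ct_1^*=\vec u$, hence $y^*$ is uniform given $x^*$, matching the $b=1$ branch. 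Whatever bit $\algo A$ outputs, $\algo B$ outputs the same, winning the Dual-Regev game with advantage $\eps-\negl(\lambda)$, contradicting \Cref{thm:security-Dual-Regev}, which in turn rests on $\SIS_{n,q,\sigma\sqrt{2m}}^m$ and $\LWE_{n,q,\alpha q}^m$.

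Two technical points need care, and the first is the main obstacle. First, the ``known vector $\vec u$'' argument must be made to work exactly: the PRF output at $x^*$ is $\lfloor \vec S_{x^*}\vec y + (\text{shift-induced term})\rceil_p$, and to match it to $\ct_2^*$ one must control the carry/rounding interaction between the deterministic shift $\vec M^*\vec t$ (which depends on the secret coset vector $\vec t$) and the noise in $\ct_2^*$; I would handle this by a noise-flooding argument (\Cref{lem:shifted-gaussian}) showing the additive discrepancy $\|\vec E_{x^*}\vec t\|_\infty$ and the rounding slack are dominated by the ciphertext noise $e'$, exactly as in the correctness proof, and by choosing $\vec M^*$ so that $\vec M^*\vec t$ contributes a fixed, reduction-known offset $\vec u$ rather than something $\vec t$-dependent — this is where the entropy-preservation point of the overview is used, namely that $\ct_1^*$ (and hence $x^*$) retains enough entropy that the encoding $x^* = r\oplus\bindecomp(\vec M^*)$ is well spread. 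Second, I must ensure the reduction's manipulations commute with revocation: $\algo B$ never measures or disturbs $\ket{\psi_{\vec y}}$ before returning it (it only copies the classical $pk$ alongside), and the Dual-Regev challenger performs $\revoke$ on the returned register exactly as in the $\rPRF$ experiment, so the conditioning on successful revocation is preserved identically. Modulo these checks — and using \Cref{claim:one-query-to-multi-query} to lift to arbitrary polynomial $\mu$ — the $1/\poly$ advantage of $\algo A$ yields a $1/\poly$ advantage against \Cref{thm:security-Dual-Regev}, completing the proof.
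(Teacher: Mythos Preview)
Your high-level strategy matches the paper's: reduce to key-revocable Dual-Regev by switching the shift-hiding key from $\zerof$ to $H_r$ so that the Dual-Regev challenge ciphertext can be planted in the PRF challenge input. However, you collapse into a single reduction what the paper carefully splits into two hybrid hops, and in doing so you gloss over the main technical obstacle.

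The gap is in your invocation of shift-hiding. You write that ``by the shift-hiding property (\Cref{claim:kg-e-shift-hiding}) \dots\ $\algo A$'s winning probability changes by at most $\negl(\lambda)$.'' But shift-hiding is a standard computational-indistinguishability statement about $pk_{\zerof}$ versus $pk_{H_r}$, and to use it you would need a reduction that simulates the \emph{entire} revocable-PRF experiment --- including running $\revoke$, which requires the trapdoor $\td_{\vec A}$. A shift-hiding distinguisher receives only $\vec A$ (and $pk$), not $\td_{\vec A}$, so it cannot perform revocation. Because the security notion here is $1$-bit unpredictability \emph{conditioned on $\revoke$ succeeding}, you cannot directly translate indistinguishability of $pk$ into a negligible change in the prediction probability. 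The paper resolves this explicitly: it first observes that \emph{dropping} $\revoke$ can only increase the $1$-bit prediction probability, passes to auxiliary hybrids $\widetilde{\hybrid_1},\widetilde{\hybrid_2}$ with no revocation, and only then invokes shift-hiding. After that, a separate hop (now with $H_r$ in place) reduces to Dual-Regev security, where $\revoke$ is executed by the Dual-Regev challenger. This two-step decomposition, together with the $1$-bit hybrid lemma (\Cref{lem:hybrid-lemma}) to stitch the hops back into the overall advantage, is precisely what your single-reduction sketch is missing.

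A second, smaller issue: your wiring of $y^*$ follows the informal overview rather than the actual mechanism. You say the shift contributes ``a fixed, reduction-known offset $\vec u$ rather than something $\vec t$-dependent,'' but $H_r(x^*)\cdot\vec t=\ct_1^*\cdot\vec t$ \emph{is} $\vec t$-dependent. The paper's reduction does not try to cancel this; instead, since $\algo B$ itself ran $\lemkg$ and holds $sk$, it recovers $\vec S_{x^*}$ directly and sets $y^*=\lfloor \vec S_{x^*}\vec y+\ct_2^*\rceil_p$. The match with the Dual-Regev branches ($\ct_2^*$ real versus uniform) then gives the PRF-versus-uniform dichotomy without any ``known $\vec u$'' bookkeeping.
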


\begin{proof}
Let $\algo A$ be a $\QPT$ adversary and suppose that
$$ \vline \, \Pr\left[ 1 \leftarrow \expt^{\adversary}(1^{\secparam},0)\ \right] - \Pr\left[ 1 \leftarrow \expt^{\adversary}(1^{\secparam},1)\ \right] \, \vline = \epsilon(\secparam),$$
for some function $\eps(\lambda)$ with respect to security experiment $\expt^{\adversary}(1^{\secparam},b)$ from \Cref{fig:prf:expt}. To complete the proof, it suffices to show that $\eps(\lambda)$ is negligible.

\begin{figure}
   \begin{center} 
   \begin{tabular}{|p{13cm}|}
    \hline 
\begin{center}
\underline{$\expt^{\algo A}(1^{\secparam},b)$}: 
\end{center}
\noindent {\bf Initialization Phase}:
\begin{itemize}
    \item The challenger runs the procedure $\mathsf{Gen}(1^\lambda)$:
    \begin{enumerate}

    \item Sample $(\vec A,\mathsf{td}_{\vec A}) \leftarrow \GenTrap(1^n,1^m,q)$.

    \item Generate $\bfA_N \in \Zq^{(n+m) \times m}$ with $\overline{\bfA_{N}} \xleftarrow{\$} \Zq^{m \times m}$ and $\underline{\bfA_{N}} = \bfA$. 

    \item Compute $\lemkey_{\zerof} \leftarrow \lemkg(1^n,1^m,1^q,\bfA_N,\zerof)$, where $\lemkg$ is as defined in~\Cref{const:shift-hiding-const} and $\zerof:\{0,1\}^{\ell} \rightarrow \Zq^{n \times m}$ is such that $\zerof(x)$ outputs an all zero matrix for every $x \in \{0,1\}^{\ell}$. Parse $\lemkey_{\zerof}$ as $(pk,sk)$.
    
    \item Generate $(\ket{\psi_{\vec y}}, \vec y \in \Z_q^n) \leftarrow \mathsf{GenGauss}(\vec A,\sigma)$ with
$$
\ket{\psi_{\vec y}} \,\,= \sum_{\substack{\vec x \in \Z_q^m\\ \bfA \vec x = \vec y \Mod{q}}}\rho_{\sigma}(\vec x)\,\ket{\vec x}.
$$
\item Let $k=(pk,sk,\bfy)$, $\rho_{k} =  (pk,\ket{\psi_{\vec y}})$ and $\msk = \mathsf{td}_{\vec A}$.
    \end{enumerate}
\item The challenger sends $\rho_{k} =  (pk,\ket{\psi_{\vec y}})$ to $\algo A$.
\end{itemize}
\noindent {\bf Revocation Phase}:
\begin{itemize}
    \item The challenger sends the message \texttt{REVOKE} to $\adversary$. 
   \item $\mathcal A$ generates a (possibly entangled) bipartite quantum state $\rho_{R,\vec{\textsc{aux}}}$ in systems $\algo H_R \otimes \algo H_{\textsc{Aux}}$ with $\algo H_R= \algo H_q^m$, returns system $R$ and holds onto the auxiliary system $\textsc{Aux}$.
\item The challenger runs $\revoke(\msk,\rho_R)$, where $\rho_R$ is the reduced state in system $R$. If the outcome is $\invalid$, the challenger aborts.
\end{itemize}
\noindent {\bf Guessing Phase}:
\begin{itemize}
    \item  The challenger samples $x \leftarrow \bit^\ell$ and sends $(x,y)$ to $\adversary$, where
    \begin{itemize}
        \item If $b=0$: compute $\bfS_x$ from $sk$ as in~\Cref{claim:kg-e-correctness}. 
        Set $y=\lfloor \bfS_x \bfy \rceil_p$.
        \item If $b=1$: sample $y \leftarrow \{0,1\}^{n}$.
    \end{itemize}
    \item $\adversary$ outputs a string $b'$ and wins if $b'=b$. 
\end{itemize}
\ \\
\hline
\end{tabular}
    \caption{The revocable $\prf$ experiment $\expt^{\algo A}(1^\lambda,b)$ for \Cref{cons:prf}.}
    \label{fig:prf:expt}
    \end{center}
\end{figure}
Suppose for the sake of contradition that $\epsilon(\lambda)= 1/\poly(\lambda)$. 
Let us now introduce a sequence of hybrid experiments which will be relevant for the remainder of the proof.
\par Let $\mathsf{RevDual}=(\keygen,\enc,\dec,\revoke)$ be the $n$-bit key-revocable Dual-Regev scheme from \Cref{cons:dual-regev}. Fix $\mu=0^n$, where $\mu$ is the challenge message in the dual-Regev encryption security.\\ 

\noindent $\hybrid_0$: This is $\expt^{\adversary}(1^{\secparam},0)$ in \Cref{fig:prf:expt}.\\

\noindent $\hybrid_1$: This is the same experiment as $\expt^{\algo A}(1^{\secparam},0)$, except for the following changes:
\begin{itemize}
\item Sample a random string $r \leftarrow  \{0,1\}^{\ell}$. 
\item Run the procedure $\mathsf{RevDual}.\keygen(1^{\secparam})$ instead of $\GenTrap(1^n,1^m,q)$ and $\mathsf{GenGauss}(\vec A,\sigma)$ to obtain $(\vec A\in \Zq^{n \times m},\vec y \in \Z_q^n,\msk,\sk)$.
\item Compute $(\ct_1,\ct_2) \leftarrow \mathsf{RevDual}.\enc(\vec A,\vec y,\mu)$, where $\ct_1 \in \Zq^{n \times m}$ and $\ct_2 \in \Z_q^n$. 
\item Set $x = r \oplus \bindecomp(\ct_1)$.
\end{itemize}
\noindent The rest of the hybrid is the same as before. 
\par Note that Hybrids $\hybrid_0$ and $\hybrid_1$ are identically distributed. \\

\noindent $\hybrid_2$: This is the same experiment as before, except that the challenger now uses an alternative key-generation algorithm:
\begin{itemize}
    \item As before, run the procedure $\mathsf{RevDual}.\keygen(1^{\secparam})$ instead of $\GenTrap(1^n,1^m,q)$ and $\mathsf{GenGauss}(\vec A,\sigma)$ to obtain $(\vec A\in \Zq^{n \times m},\vec y \in \Z_q^n,\msk,\sk)$. Sample $r \leftarrow  \{0,1\}^{\ell}$.
    \item Let $H_r:\{0,1\}^{\ell} \rightarrow \Zq^{n \times m}$ be as defined in the beginning of~\Cref{sec:intlemma}. 
    \item Run the alternate algorithm $\lemkey_{H} \leftarrow  \lemkg(1^n,1^m,1^q,\bfA,H_r)$ instead of $\lemkey_{\zerof} \leftarrow  \lemkg(1^n,1^m,1^q,\bfA,\zerof)$. 
    \item Compute the ciphertext $(\ct^*_1,\ct^*_2) \leftarrow \mathsf{RevDual}.\enc(\vec A,\vec y,\mu)$, where $\ct^*_1 \in \Zq^{n \times m}$. Then, set $x^* =  r \oplus \bindecomp(\ct_1^*)$. Send $x^*$ to the adversary in the guessing phase. 
\end{itemize}
\ \\
\noindent $\hybrid_3$: This is the same hybrid as before, except that we choose $\ct_1^* \rand \Zq^{n \times m}$ and $\ct_2^* \rand \Zq^n$.\\
\ \\
$\hybrid_4$: This is the $\expt^{\adversary}(1^{\secparam},1)$ in \Cref{fig:prf:expt}.
\par Note that hybrids $\hybrid_3$ and $\hybrid_4$ are identically distributed.
\ \\
\ \\
We now show the following.

\begin{claim}\label{claim:shift-hiding}
By the shift-hiding property of $(\lemkg,\lemeval)$ in \Cref{claim:kg-e-shift-hiding}, we have that the two hybrids $\hybrid_1$ and $\hybrid_2$ are computationally indistinguishable, i.e.
$$
\hybrid_1 \approx_c \hybrid_2.
$$
\end{claim}
\begin{proof}
Suppose for the sake of contradiction that there exists a non-negligble difference in the advantage of the adversary $\algo A$ in the two hybrids $\hybrid_1$ and $\hybrid_2$.

We now design a reduction ${\cal B}$ that violates the shift-hiding property as follows. 
\begin{enumerate}
    \item Sample $r \xleftarrow{\$} \{0,1\}^{\ell}$. Send $(\zerof,H_r)$ to the challenger. 
    \item The challenger responds with $pk=\left(\bfA,\left\{ \ct_{b}^{(i,j,\tau)} \right\}_{\substack{i \in [n],j \in [m]\\ \tau \in [\lceil \log(q) \rceil],b \in \{0,1\}}} \right)$
    \item Compute $(\ket{\psi_{\vec y}}, \vec y \in \Z_q^n) \leftarrow \mathsf{GenGauss}(\vec A,\sigma)$ from the challenger.
    \item Set $\rho_k=(pk,\rho)$. 
    \item Compute $(\ct_1,\ct_2) \leftarrow \mathsf{RevDual}.\enc(\vec A,\vec y,\mu)$, where $\ct_1 \in \Zq^{n \times m}$ and $\ct_2 \in \Z_q^n$. Then, set $x^* = r \oplus \bindecomp(\ct_1)$.

    \item Compute $\eval(\rho_k,x^*)$ to obtain $y^*$ while recovering $\rho^*_k$ (using the "Almost as Good As New Lemma", \Cref{lem:almost}) such that ${\sf TD}(\rho^*_k,\rho_k) \leq \negl(\secparam)$.  
    
    \item Send $\rho_k^*$ to $\adversary$. 
    \item $\adversary$ computes a state on two registers $R$ and $\textsc{aux}$. It returns the state on the register $R$. 
    \item $\adversary$, on input the register $\textsc{aux}$ and $(x^*,y^*)$, outputs a bit $b'$. 
    \item Output $b'$. 
\end{enumerate}
\par If $pk$ is generated using $ \lemkg(1^n,1^m,q,\bfA,{\cal Z})$ then we are precisely in $\hybrid_1$ (except that $\revoke$ is not performed). Moreover, if $pk$ is generated using $ \lemkg(1^n,1^m,q,\bfA,H_r)$ then we are in the hybrid $\hybrid_2$ (except that $\revoke$ is not performed). Therefore $\algo B$ has a non-negligible advantage at distinguishing $\hybrid_1$ and $\hybrid_2$ whenever $\revoke$ outputs $\top$ on system \textsc{R}.
Using \Cref{lem:remove-revoke}, this implies that we can break the shift-hiding property with non-negligible advantage. This proves the claim.
\end{proof}

Next, we invoke the security of the $n$-bit variant of our key-revocable Dual-Regev scheme (which is implied by~\Cref{thm:security-Dual-Regev-high-revoke}) to show the following.

\begin{claim}\label{claim:DR} 
By the security of our $n$-bit key-revocable Dual-Regev encryption scheme, we have that the two hybrids $\hybrid_2$ and $\hybrid_3$ are computationally indistinguishable, i.e.
$$
\hybrid_2 \approx_c \hybrid_3.
$$
\end{claim}

\begin{proof} 
Suppose for the sake of contradiction that there exists a non-negligble difference in the advantage of $\algo A$ in the two hybrids $\hybrid_2$ and $\hybrid_3$. Using $\adversary$, we can now design a reduction ${\cal B}$ that violates the revocation security of our $n$-bit revocable Dual-Regev scheme which is implicit in~\Cref{thm:security-Dual-Regev-high-revoke}. 
\par The reduction ${\cal B}$ proceeds as follows.
\begin{enumerate}
    \item First, it receives as input $\vec A,\vec y$ and a quantum state 
$$
    \ket{\psi_{\vec y}} = \sum_{\substack{\vec x \in \Z_q^{m}\\ \bfA \vec x = \vec y}} \rho_\sigma(\vec x) \ket{\vec x}.
$$   
\item The reduction generates a quantum state $\rho_k$ as follows: 
\begin{itemize}
    \item Sample a random string $ r \rand  \bit^\ell$.
    \item Let $H_r:\{0,1\}^{\ell} \rightarrow \Zq^{n \times m}$ be as defined in the beginning of~\Cref{sec:intlemma}.

  \item Run the algorithm $\lemkey_{H} \leftarrow  \lemkg(1^n,1^m,1^q,\bfA,H_r)$ and parse $\lemkey_H$ as $(pk,sk)$.
  \item Set $\rho_k=(pk,\ket{\psi_{\bfy}})$. 
\end{itemize}
Send $\rho_k$ to $\adversary$. 
\item $\adversary$ outputs a state on two registers $R$ and $\textsc{aux}$. The register $R$ is returned. The reduction forwards the register $R$ to the challenger. 
\item The reduction then gets the challenge ciphertext $\mathsf{CT} = [\ct_1, \ct_2]^\intercal \in \Z_q^{n\times m} \times \Z_q^n$. The reduction then sets
$$
x^* = r \oplus \bindecomp(\ct_1)
$$
and sends $x^*$ to $\algo A$ in the guessing phase, together with $y=\lfloor  \bfS_{x^*} \bfy + \ct_2 \rceil_p$
which is computed using the secret key $\mathsf{SK}$ (c.f.~\Cref{claim:kg-e-correctness}).
\item $\algo A$ outputs a bit $b'$. ${\cal B}$ outputs $b'$.
\end{enumerate}
There are two cases to consider here. In the first case, we have $\mathsf{CT} = [\ct_1, \ct_2]^\intercal \in \Z_q^{n\times m} \times \Z_q^n$ is a Dual-Regev ciphertext. Here, $y=\lfloor  \bfS_{x^*} \bfy + \ct_2 \rceil_p$ precisely corresponds to the output of the pseudorandom function on $\rho_k$ and $x$. In the second case, we have $\mathsf{CT} = [\ct_1, \ct_2]^\intercal$, where $\ct_1 \xleftarrow{\$} \Zq^{n \times m}$ and $\ct_2 \xleftarrow{\$} \Zq^m$. Therefore, the resulting string $y=\lfloor  \bfS_{x^*} \bfy + \ct_2 \rceil_p$ is negligibly close (in total variation distance) to a uniform distribution on $\Z_p^m$. 
\par Putting everything together, we find that the first case corresponds precisely to $\hybrid_2$, whereas the second case corresponds to $\hybrid_3$. As a result, ${\cal B}$ violates the revocation security of our $n$-bit revocable Dual-Regev scheme which is implicit in~\Cref{thm:security-Dual-Regev-high-revoke}.This completes the proof. 
\end{proof}

\noindent Putting everything together, we have shown that
$$  \vline \, \Pr\left[ 1 \leftarrow \expt^{\adversary}(1^{\secparam},0)\ \right] - \Pr\left[ 1 \leftarrow \expt^{\adversary}(1^{\secparam},1)\ \right] \, \vline \leq \negl(\secparam).$$
\end{proof} 

Finally, we prove that \Cref{cons:prf} achieves the stronger notion of $(\negl(\lambda),1-1/\poly(\lambda),1)$-security assuming \Cref{thm:search-to-decision}.

\begin{theorem}\label{thm:security-PRF-conj}
Let $n\in \N$ and $q$ be a prime modulus with $q=2^{o(n)}$ and $m \geq 2n \log q$, each parameterized by $\lambda \in \N$. Let $\ell = n m \lceil \log q \rceil$. Let $\sqrt{8m}< \sigma <q/\sqrt{8m}$ and $\alpha \in (0,1)$ be any noise ratio with $1/\alpha= \sigma \cdot 2^{o(n)}$, and let $p \ll q$ be a sufficiently large rounding parameter with
$$
n\cdot m^{3}\sigma^2 \lceil \log q \rceil = (q/p) \cdot 2^{-o(n)}.
$$
Then, assuming \Cref{thm:search-to-decision}, our revocable $\PRF$ scheme
$(\gen,\prf,\eval,\revoke)$ defined in \Cref{cons:prf} has $(\negl(\lambda),1-1/\poly(\lambda),1)$-revocation security according to \Cref{def:revocable-PRF}.
\end{theorem}
\begin{proof}
    The proof is the same as in \Cref{thm:security-PRF}, except that we invoke \Cref{thm:security-Dual-Regev} instead of \Cref{thm:security-Dual-Regev-high-revoke} for Dual-Regev security.
\end{proof}

\printbibliography

\newpage 
\appendix

\end{document}